\newcommand{\ceil}[1]{{\left\lceil#1  \right\rceil}}
\newcommand{\cover}{\textnormal{\texttt{cover}}}
\newcommand{\LP}{\textnormal{\texttt{LP}}}
\newcommand{\vv}{\mathbf{v}}
\newcommand{\cF}{{\mathcal{F}}}
\newcommand{\cA}{{\mathcal{A}}}
\newcommand{\cB}{{\mathcal{B}}}
\newcommand{\cG}{{\mathcal{G}}}
\newcommand{\cI}{{\mathcal{I}}}
\newcommand{\cD}{{\mathcal{D}}}
\newcommand{\cR}{{\mathcal{R}}}
\newcommand{\cM}{{\mathcal{M}}}
\newcommand{\tol}{\textnormal{\texttt{tol}}}
\newcommand{\supp}{{\textnormal{supp}}}
\newcommand{\cP}{{\mathcal{P}}}
\newcommand{\cC}{{\mathcal{C}}}
\newcommand{\cS}{{\mathcal{S}}}
\newcommand{\cL}{{\mathcal{L}}}
\newcommand{\cK}{{\mathcal{K}}}
\newcommand{\bef}{{\bar{f}}}
\newcommand{\bd}{{\bar{d}}}
\newcommand{\bq}{{\bar{q}}}
\newcommand{\bmu}{{\bar{\mu}}}
\newcommand{\cU}{{\mathcal{U}}}
\newcommand{\bgam}{{\bar{\gamma}}}
\newcommand{\blam}{{\bar{\lambda}}}
\newcommand{\bx}{{\bar{x}}}
\newcommand{\bw}{{\bar{w}}}
\newcommand{\by}{{\bar{y}}}
\newcommand{\bz}{{\bar{z}}}
\newcommand{\bv}{{\bar{v}}}
\newcommand{\bu}{{\bar{u}}}
\newcommand{\bp}{{\bar{p}}}
\newcommand{\bbe}{{\bar{\beta}}}
\newcommand{\type}{{\textnormal{\texttt{type}}}}
\newcommand{\OPT}{\textnormal{OPT}}
\newcommand{\tw}{\tilde{\w}}
\newcommand{\eps}{{\varepsilon}}
\newcommand{\E}{{\mathbb{E}}}
\newcommand{\floor}[1]{\left\lfloor #1 \right\rfloor}
\newcommand{\abs}[1]{{\left| #1 \right|}}
\newcommand{\ALP}{\textnormal{Assign-LP}}
\newcommand{\w}{\mathbf{w}}
\renewcommand{\v}{\mathbf{v}}
\newtheorem{lemma}{Lemma}[section]
\newtheorem{obs}[lemma]{Observation}
\newtheorem{definition}[lemma]{Definition}
\newtheorem{theorem}[lemma]{Theorem}
\newtheorem{observation}[lemma]{Observation}
\newtheorem{claim}[lemma]{Claim}
\def \II   {{\mathcal I}}
\newcommand{\one}{\mathbbm{1}}
	\crefname{claim}{claim}{claims}
		\crefname{cor}{corollary}{corollaries}
\begin{document}

\begin{titlepage}

		\title{
			An EPTAS for Cardinality Constrained Multiple Knapsack via Iterative Randomized Rounding
		}
		
		\author{Ilan Doron-Arad\thanks{Computer Science Department, 
				Technion, Haifa, Israel. \texttt{idoron-arad@cs.technion.ac.il}}
			\and
			Ariel Kulik\thanks{Computer Science Department, 
				Technion, Haifa, Israel. \texttt{kulik@cs.technion.ac.il}}
			\and
			Hadas Shachnai\thanks{Computer Science Department, 
				Technion, Haifa, Israel. \texttt{hadas@cs.technion.ac.il}}
		}	

		\maketitle
		\thispagestyle{empty}
		
		\begin{abstract}
			
In [Math. Oper. Res., 2011], Fleischer et al.
introduced a powerful technique  for solving the generic class of {\em separable assignment problems} (SAP), in which a set of items of given values and weights needs to be packed into a set of bins subject to separable assignment constraints, so as to maximize the total value.
The approach of Fleischer at al. relies on solving a configuration LP and sampling a configuration for each bin independently based on the LP solution. 
While there is a SAP variant for which this approach yields the best possible approximation ratio, for various special cases, there are discrepancies between the approximation ratios obtained using the above approach and  the state-of-the-art approximations. This raises the following natural question: Can we do better by {\em iteratively} solving the configuration LP and sampling a few bins at a time?

To assess the potential gain from iterative randomized rounding, we consider 
as a case study one interesting SAP variant, namely, 
{\sc Uniform Cardinality Constrained Multiple Knapsack}, for which we answer this question affirmatively.
 The input is a set of items, each has a value and a weight, and a set of uniform capacity bins. The goal is to assign a subset of the items of maximum total value to the bins such that $(i)$ the capacity of any bin is not exceeded, and $(ii)$ the number of items assigned to each bin satisfies a given {\em cardinality} constraint.
While the technique of Fleischer et al. yields a $\left(1-\frac{1}{e}\right)$-approximation for the problem, we show that iterative randomized rounding leads to {\em efficient polynomial time approximation scheme (EPTAS)}, thus essentially resolving the complexity status of the problem. Our analysis of iterative randomized rounding
can be useful for solving other SAP variants.

		\end{abstract}
		\end{titlepage}

\section{Introduction}
We consider problems in the class of maximizing assignment problems with packing constraints, also known as {\sc separable assignment problems (SAP)}. 
A general problem in this class
is defined by a set of bins and a set of items to be packed in the bins. There is a value $v_{ij}$ associated with assigning item $i$ to bin $j$. We are also given a separate packing constraint for each bin $j$. The goal is to find an assignment of a subset of the items to the bins which maximizes the total value accrued. This class includes several well studied problems such as the {\sc generalized assignment problem (GAP)}.

In~\cite{fleischer2011tight}, Fleischer at al. introduced a powerful technique
for solving SAP and its variants. The technique relies on first solving a  configuration linear programming (configuration-LP) relaxation of the problem. Subsequently, 
configurations (i.e., feasible subsets of items for a single bin) are sampled independently according to a distribution specified by the LP solution to obtain an integral solution for the given instance. For many SAP variants, such as GAP, the approximation guarantee of the resulting algorithm is $\left(1-\frac{1}{e}\right)$.

Intuitively, we can do better using the following iterative randomized rounding approach: 
iteratively solve a configuration LP relaxation of the problem for the remaining items and bins and sample a {\em few} configurations based on the distribution specified by the LP solution, until all bins are used.
We note that if the LP is solved only once and  all of the bins are packed based on the solution, then we have exactly the algorithm of Fleischer et al.~\cite{fleischer2011tight}. 
This raises the following question:

\medskip
\begin{mdframed}[backgroundcolor=white!2]
	\centering
Can iterative randomized rounding improve the approximation ratio of \cite{fleischer2011tight}? 
\end{mdframed}
\medskip

As the authors of~\cite{fleischer2011tight} show, 
under standard complexity assumptions, there is a SAP variant for which their approximation ratio of $\left(1-\frac{1}{e}\right)$ is the best possible. 
However, for various special cases (such as  {\sc multiple knapsack} and GAP), there are discrepancies between the approximation guarantee obtained using the algorithm of \cite{fleischer2011tight} and  the state-of-the-art approximations. This suggests the iterative approach can potentially lead to improved approximation for some variants (compared to \cite{fleischer2011tight}). 
To assess the potential gain from iterative randomized rounding,
 we focus as a case study on one interesting SAP variant, namely,
 {\sc Uniform Cardinality Constrained Multiple Knapsack (CMK)}. Specifically, we show that iterative randomized rounding  
 is superior to the technique of \cite{fleischer2011tight} and use it  to essentially resolve the complexity status of this problem.

An input for CMK consists of a set of items, each has a value and a weight, and a set of uniform capacity bins.
The goal is to assign a subset of the items of maximum total value to the bins such that $(i)$ the total weight of items in each bin does not exceed its capacity, and $(ii)$ the number of items assigned to each bin satisfies a given cardinality constraint.\footnote{See a more formal definition in \Cref{sec:overview}.} 
CMK has real-world applications in cloud computing, as well as in manufacturing systems and radio networks 
(see \Cref{sec:appendix}).

\subsection{Related Work}

Iterative randomized rounding of configuration-LPs has been recently used for {\em vector bin packing}~\cite{KMS23}. In this problem, the goal is to pack a set of items, each given by a $d$-dimensional size vector for some $d >1$, in a minimum number of $d$-dimensional bins, where a subset of items fits in a bin if it adheres to the capacity constraints in all dimensions. We are not aware of an application of iterative randomized rounding of configuration-LPs for {\em maximization} problems.

The {\sc multiple knapsack with uniform capacities} (UMK) problem  is the special case of CMK with no cardinality constraint, or equivalently, where the cardinality constraint is larger than the total number of items. In the more general {\sc multiple knapsack} (MK) problem, the capacity of the bins may be arbitrary. In terms of approximation algorithms, UMK and MK are well understood. A {\em polynomial time approximation scheme} (PTAS) for UMK was given by Kellerer \cite{kellerer1999polynomial}. Later, Chekuri and Khanna \cite{chekuri2005polynomial} developed the first PTAS for MK and ruled out the existence of a {\em fully} PTAS (FPTAS), already for UMK with only two bins. Jansen designed more involved {\em efficient} PTAS (EPTAS) for MK~\cite{jansen2010parameterized,jansen2012fast}, thus resolving the complexity status of the problem.\footnote{We give formal definitions of approximation schemes in \Cref{sec:defs}.}

For CMK, a randomized  $\left(1-\frac{1}{e}\right)$-approximation follows from the above mentioned results of Fleischer et al.~\cite{fleischer2011tight} for SAP. More specifically,
the authors present a randomized algorithm for SAP whose approximation guarantee is
$\left(\left(1-\frac{1}{e}\right) \cdot \beta \right)$, where $\beta$ is the best approximation ratio for the single bin subproblem.\footnote{The paper~\cite{fleischer2011tight} shows that the existence of an FPTAS for the single bin subproblem, as in the case of $0/1$-knapsack with cardinality constraint, implies a $\left(1-\frac{1}{e}\right)$-approximation for the corresponding variant of SAP.}
 A slightly more efficient approximation ratio for CMK follows from a recent result of Cohen et al. \cite{CKS23} who give a randomized $\left(1-\frac{\ln (2)}{2}-\eps\right)\approx 0.653$-approximation for {\em uniform $2$-dimensional vector multiple knapsack}. In this problem, the cardinality constraint of CMK is generalized to a second knapsack constraint. %

We note that a PTAS for CMK can be obtained using ideas of~\cite{chekuri2005polynomial}. We outline the main steps.
First, item values are discretized into $O (\log n)$ value classes, where $n$ is the number of items. Then, enumeration is used to roughly determine the number of items taken from each value class. Clearly, one should take from each value class the items with smallest weights, leading to a reduced problem of packing  sufficient items from each value class in the $m$ given (uniform) bins. Packing these items in $(1+\eps) \cdot m+O(1)$ uniform bins can be done using an asymptotic FPTAS for bin packing with cardinality constraint~\cite{EL10}. Finally, the algorithm keeps the $m$ bins with highest values. We note that the running time of the enumeration step is very high. This leaves open the question whether CMK admits an EPTAS.

\subsection{Our Results}

Our main contribution is in showing that iterative randomized rounding can substantially improve the approximation guarantee of the configuration-LP rounding approach of \cite{fleischer2011tight}. The analysis is based on concentration bounds; thus,
our iterative algorithm is applied to
a slightly restricted subclass of CMK instances in which the number of bins is large w.r.t. the given error parameter. Recall that even for two bins the problem does not admit an FPTAS \cite{chekuri2005polynomial}. Hence, we do not expect the iterative approach to work for a small number of bins. More specifically, given an error parameter $\eps \in (0,0.1)$ we say that a CMK instance $\cI$ is $\eps$-{\em simple} if  (i) $m> \exp(\exp(\eps^{-30}))$, and (ii) $\eps \cdot m\in \mathbb{N}$, where $m$ is the number of bins and $\OPT(\cI)$ is the optimum value of $\cI$.\footnote{In our discussion of $\eps$-simple instances, we did not attempt to optimize the constants.} 
For clarity, we first state our algorithmic result for the subclass of $\eps$-simple CMK instances
(see \Cref{sec:algorithm} for more details). 

\begin{theorem}
	\label{thm:simple}
For every $\eps \in (0,0.1)$ and an $\eps$-simple \textnormal{CMK} instance $\cI$, iterative randomized rounding (see \Cref{alg:randomized_rounding})  returns a $(1-\eps)$-approximation for $\cI$ in time $ \left( \frac{\abs{\cI}}{\eps} \right)^{O(1)}$, where $\abs{\cI}$ is the encoding size of $\cI$. 
\end{theorem}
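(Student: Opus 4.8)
My plan is to analyze the iterative randomized rounding procedure (Algorithm 1) phase by phase, where in each phase we solve the configuration-LP for the surviving items and bins, sample a small batch of bins (an $\eps$-fraction, or some $\po(1/\eps)$-size batch), commit the sampled configurations, and recurse on the residual instance. The key quantities to track are: (a) the expected value committed in each phase, relative to the LP optimum of the current residual instance, and (b) how much the LP optimum of the residual instance degrades relative to $\OPT(\cI)$ after removing the committed items and bins. The crux of the argument is to show that the per-phase \emph{loss} is only $O(\eps^2)$ multiplicatively (so that after $O(1/\eps)$ phases the cumulative loss is $O(\eps)$), which is exactly what forces the $\eps$-simplicity hypothesis $m > \exp(\exp(\eps^{-30}))$: we need enough bins so that an $\eps$-fraction of them is still large enough for concentration to kick in at every phase, including the last.

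First I would set up the LP relaxation and a ``fractional optimum decreases slowly'' lemma: after sampling a batch of $t$ bins and removing the items they (integrally) receive, the LP value of the residual instance is, in expectation, at least $(1 - t/m)\cdot\LP(\cI) - (\text{error})$, because a random batch captures roughly a $t/m$ fraction of the LP mass and the removed items can only collide with a controlled portion of the remaining fractional solution. The subtle point here — and where I expect the real work to be — is handling \emph{item contention}: a configuration sampled for a bin in the current batch may use items that the residual LP would also like to use, so one must argue (via a union-bound / negative-association style estimate over the independently sampled configurations) that the fractional solution restricted away from the consumed items still has value close to $(1-t/m)\LP$. This is the analogue of the ``$1 - 1/e$'' loss in Fleischer et al., but because we only sample an $\eps$-fraction at a time, the loss per phase is $\approx \eps^2$ rather than a constant, and that is the whole point of iterating.

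Second, I would invoke concentration (Chernoff/Hoeffding, or a bounded-differences inequality over the independently sampled configurations within a phase) to show that the \emph{realized} committed value in each phase is, with probability $1 - \exp(-\po(1/\eps))$, within a $(1-\eps^2)$ factor of its expectation; the $\eps$-simplicity bound on $m$ guarantees the batch sizes stay $\ge \po(1/\eps) \cdot \log(1/\eps)$ throughout so the failure probabilities sum to $o(1)$. Summing the per-phase guarantees over the $\Theta(1/\eps)$ phases via a telescoping/potential argument (potential $=$ value already committed $+$ LP value of the residual instance) yields that the total committed value is at least $(1-\eps)\OPT(\cI)$ in expectation, and then a final boosting/derandomization-free repetition argument upgrades this to ``with constant probability,'' which suffices for an EPTAS. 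Throughout, feasibility is automatic: each sampled configuration is a feasible single-bin packing (respecting both the capacity and the cardinality constraint), and committed items are removed, so no bin is overfilled and no item is used twice.

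Third, for the running time I would observe that each phase solves the configuration-LP, which for CMK can be done in time $(|\cI|/\eps)^{O(1)}$ — here I would rely on the fact that the single-bin subproblem (knapsack with a cardinality constraint) admits an FPTAS, so the separation oracle for the dual LP runs in polynomial time and the ellipsoid method (or a multiplicative-weights approach) gives an approximate LP solution with polynomially many nonzero configurations; sampling a batch and updating the instance is then cheap. With $O(1/\eps)$ phases and the final repetition for probability boosting costing another $\po(1/\eps)$ factor, the total is $(|\cI|/\eps)^{O(1)}$ as claimed. The main obstacle, to reiterate, is the per-phase analysis in the second step: controlling the interaction between the independently sampled configurations of the current batch and the fractional solution that must survive to feed the next LP, and doing so with a $(1-O(\eps^2))$ loss rather than a $(1-\Omega(1))$ loss — this is precisely the technical contribution that separates iterative rounding from the one-shot rounding of \cite{fleischer2011tight}.
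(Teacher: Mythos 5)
Your outline follows the paper's high-level shape (iterate $\Theta(1/\eps)$ phases, show per-phase loss is small, use concentration, FPTAS for the single-bin subproblem to approximately solve the configuration-LP), but it has a genuine gap at precisely the point you identify as ``where I expect the real work to be.'' You propose to show that after a phase, the residual LP value degrades only by a $(1-O(\eps^2))$ factor, by a ``union-bound / negative-association style estimate'' over the independently sampled configurations. This is the naive approach that the paper explicitly argues does \emph{not} work: to certify that a candidate cover vector $\by$ is feasible for the configuration-LP with roughly the right number of bins, you would need $\by \in P(\ell)$, and the configuration polytope $P(\ell)$ has exponentially many facets. Applying a concentration bound per facet and then union-bounding over all facets gives nothing useful, because the failure probability would be multiplied by an exponential number of constraints. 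The paper's central technical device is the \emph{linear structure} (Definition 3.7 and Lemma 3.8): a set of only $\exp(\delta^{-4})$ linear inequalities, built from a linear-grouping partition of the optimal solution $S^*$ into weight classes and types, such that any rational $\by$ supported on $S^*$ and satisfying those few inequalities up to a tolerance-sized slack must lie in $P(\alpha \ell + O(\delta)\ell + \exp(\delta^{-5}))$. This is what makes the union bound tractable. Your plan is missing this idea entirely, and ``negative association'' does not substitute for it.

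A second, more subtle difference: the paper does not directly argue that the residual LP optimum decreases slowly as a random process (your proposed potential $=$ committed value $+$ residual LP value). Instead it constructs an explicit random cover vector $\bgam^j$, defined inductively with a reweighting factor $\frac{1-j\eps}{1-(j-1)\eps}\cdot \frac{1}{\Pr(i \in S_j \mid \cF_{j-1})}$ applied to the survivors, which has the exact martingale-style property $\E[\bgam^j_i \mid \cF_{j-1}] = \frac{1-j\eps}{1-(j-1)\eps}\bgam^{j-1}_i$ (Lemma 4.5). This makes $\E[\v(\bgam^{j-1})]=(1-(j-1)\eps)\OPT$ (Lemma 3.5) an equality rather than an inequality to be fought for, and it is $\bgam^{j-1}$ (not the residual LP optimum) that is fed through the linear structure to produce a witness solution for $\LP(S_{j-1},m_j)$. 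Your potential approach would have to handle the fact that the residual LP optimum is itself a random quantity whose variance is hard to control directly, and without the $\bgam^j$ construction it is unclear how you would isolate a single fixed witness to run concentration against. Finally, your asserted loss rates are a bit off from what the proof actually achieves: Lemma 3.4 drops the last $\eps^{-1/2}$ iterations and pays a $(\eps - \eps^{3/2})$ factor per phase, and the final guarantee in Lemma 3.10 is $1 - 300\eps^{1/2}$ rather than $1-\eps$ (the loss of $\eps^{1/2}$ is why Theorem 1.5 is invoked with a rescaled error); these are not per-phase $\eps^2$ losses.
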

An in depth look into the algorithm of Fleischer et al. \cite{fleischer2011tight} reveals that the approximation ratio of their algorithm on $\eps$-simple instance is no better than $\left(1-\frac{1}{e}\right)$, indicating the improved ratio in \Cref{thm:simple} stems from the use of the iterative approach. 

We give a simple reduction showing that our algorithm for $\eps$-simple instances yields a randomized EPTAS for general CMK instances. 
This essentially resolves the complexity status of CMK, since an FPTAS is ruled out \cite{chekuri2005polynomial}.

\begin{theorem}
\label{thm:main}
There is a randomized \textnormal{EPTAS} for \textnormal{CMK}. 
\end{theorem}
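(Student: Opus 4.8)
The plan is to reduce a general CMK instance to a family of $\eps$-simple instances, so that \Cref{thm:simple} can be applied as a black box. Fix an error parameter $\eps \in (0,0.1)$; we aim for a $(1-O(\eps))$-approximation in time $(\abs{\cI}/\eps)^{O(1)}$ (rescaling $\eps$ at the end). The obstacle is that an $\eps$-simple instance requires two things that a general instance may violate: the number of bins $m$ must exceed the huge constant $T_\eps := \exp(\exp(\eps^{-30}))$, and $\eps \cdot m$ must be an integer. The second condition is cosmetic: we may add at most $\lceil 1/\eps \rceil$ extra empty bins (bins of the same uniform capacity to which we simply assign no items) to round $m$ up to a multiple of $1/\eps$; this cannot decrease the optimum, and any solution for the padded instance restricts to a solution for the original instance of at least the same value, since the extra bins carry no items. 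So the real content is handling the case $m \le T_\eps$, i.e. a bounded number of bins.

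When $m \le T_\eps = O_\eps(1)$, CMK is essentially ``multiple knapsack with a cardinality constraint'' on a constant number of bins, and we invoke the existing EPTAS machinery for multiple knapsack. Concretely, I would follow the route sketched in the introduction (the PTAS of Chekuri--Khanna~\cite{chekuri2005polynomial} combined with Jansen's EPTAS for MK~\cite{jansen2010parameterized,jansen2012fast}): since $m$ is a constant, one can afford to (i) guess, up to a factor $(1+\eps)$, the ``profile'' of the optimal solution --- e.g. the value contributed by each bin rounded to powers of $(1+\eps)$, and within each bin the rounded number of items of each of the $O(\eps^{-1}\log n)$ value classes --- using $n^{O_\eps(1)}$ enumeration which is polynomial for fixed $m$; and (ii) for each guessed profile, greedily select the lightest items of each value class and check feasibility of packing them into the $m$ bins subject to the per-bin cardinality bounds, which is itself a bounded-dimensional packing problem solvable by the EPTAS for cardinality-constrained bin packing / the FPTAS of~\cite{EL10}. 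Alternatively, and more cleanly, one observes that with $m$ constant the whole problem is a special case of multiple knapsack with $m+$ (one global cardinality) structure and a direct application of Jansen's techniques~\cite{jansen2012fast} gives a $(1+\eps)$-approximation in time $f(\eps) \cdot \abs{\cI}^{O(1)}$. Either way, the running time is $f(\eps,m)\cdot \abs{\cI}^{O(1)} = g(\eps)\cdot\abs{\cI}^{O(1)}$ because $m$ is bounded by the constant $T_\eps$, which is exactly the EPTAS form.

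Putting the pieces together: given a general CMK instance $\cI$ and $\eps \in (0,0.1)$, the algorithm first tests whether $m > T_\eps$. If so, it pads $\cI$ with at most $1/\eps$ empty bins so that $\eps m \in \mathbb{N}$ and $m$ still exceeds $T_\eps$ (adding bins only helps here), obtaining an $\eps$-simple instance, runs the iterative randomized rounding algorithm of \Cref{thm:simple}, and outputs the resulting assignment restricted to the original bins; by \Cref{thm:simple} this is a $(1-\eps)$-approximation in time $(\abs{\cI}/\eps)^{O(1)}$. If instead $m \le T_\eps$, it runs the bounded-$m$ EPTAS described above, obtaining a $(1-\eps)$-approximation in time $g(\eps)\cdot\abs{\cI}^{O(1)}$. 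In both branches the guarantee is $(1-\eps)$ and the time is of the form $h(\eps)\cdot\abs{\cI}^{O(1)}$, i.e. an EPTAS. Replacing $\eps$ by $\eps/c$ for a suitable absolute constant $c$ absorbs the lower-order losses from the padding step. Since an FPTAS is ruled out already for two bins~\cite{chekuri2005polynomial}, this is best possible up to the EPTAS/FPTAS distinction, which is the content of \Cref{thm:main}.

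The main obstacle, as flagged, is cleanly establishing the bounded-$m$ case: one must verify that the cardinality constraint integrates smoothly into the existing multiple-knapsack EPTAS, i.e. that rounding item values into $O(\eps^{-1}\log n)$ classes and enumerating bin profiles still interacts correctly with a per-bin item-count bound, and that the residual feasibility check (packing chosen items into $m$ uniform bins respecting both capacities and cardinalities) can be done within the EPTAS time budget --- this is where one leans on the asymptotic FPTAS for cardinality-constrained bin packing~\cite{EL10} together with the constantly-many bins. Everything else (the padding to make $\eps m$ integral, the case split, the bookkeeping on $\eps$) is routine.
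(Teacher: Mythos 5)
Your high-level plan --- split on whether $m$ exceeds the huge threshold $T_\eps$, pad with empty bins to force $\eps m \in \mathbb{N}$, and invoke \Cref{thm:simple} in the large-$m$ branch --- matches the paper's reduction (\Cref{lem:reduction}) exactly. The problem is the bounded-$m$ branch, which you explicitly flag as ``the main obstacle'' but do not actually close, and neither of the two routes you sketch works as stated.

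Your first route (Chekuri--Khanna-style enumeration over $O(\eps^{-1}\log n)$ value classes) yields $n^{O_\eps(1)}$ or quasi-polynomial enumeration. You assert this is ``polynomial for fixed $m$,'' but that is the PTAS regime, not the EPTAS regime: an EPTAS needs running time $f(1/\eps)\cdot n^{O(1)}$ with an absolute-constant exponent on $n$, and $n^{O_\eps(1)}$ violates this even when $m$ is held fixed. The paper's introduction makes exactly this point --- the Chekuri--Khanna recipe ``gives a PTAS'' for CMK but ``the running time of the enumeration step is very high,'' which is why the EPTAS question was open. Bounding $m$ by a function of $\eps$ does not, by itself, collapse an $O(\log n)$-many-value-classes enumeration into $f(\eps)\cdot n^{O(1)}$. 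Your second route (``a direct application of Jansen's techniques'') is a citation to machinery for \emph{un}constrained multiple knapsack; whether per-bin cardinality constraints integrate into that machinery without changing the running-time profile is precisely what must be proved, and you acknowledge you have not done so.

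The paper fills this gap with its own algorithm \textsf{ConstantBins} (\Cref{thm:fewBins}), whose key ideas you would need to reproduce or replace. First, it runs a $\frac14$-approximation $\textsf{Local-Search}$ and uses it to set a \emph{profit threshold} $d = \eps\cdot\v(\textsf{Local-Search}(\cI))/m \geq \eps\cdot\OPT/(4m)$: items above this threshold are ``valuable.'' This threshold, not $\log n$-many classes, is what tames the enumeration: since every valuable item contributes at least $\eps\cdot\OPT/(4m)$, any solution contains $O(m/\eps)$ valuable items per bin, and the valuable items are grouped into only $O(m\eps^{-2})$ geometric value groups \emph{relative to} $\textsf{Local-Search}$, independent of $n$. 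Enumerating the per-bin, per-group count over $\{0,\dots,4\lfloor m/\eps\rfloor\}$ then costs $(m/\eps)^{O(m^2\eps^{-2})}$, a function of $\eps$ alone once $m\leq T_\eps$. Second, the non-valuable items are added by an LP over a partition-matroid polytope intersected with $2m$ capacity/cardinality constraints, and a basic solution has at most $4m$ fractional coordinates (via Grandoni et al.), each of value at most $d$, so dropping them loses only $O(\eps\cdot\OPT)$. Your proposal never introduces the constant-factor bootstrap or the sparse-basic-solution rounding, and without some replacement for them the bounded-$m$ case remains open.
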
  

For the proof of \Cref{thm:main} see \Cref{sec:algorithm}.

	\subsection{Technical Overview}
\label{sec:overview}

In the following we outline our algorithmic approach and its analysis. For clarity, we focus in this section on high-level ideas and omit some technical details. We start with a more formal definition of CMK. An instance of CMK consists of a set of items  $I$, a weight function $\w:I\rightarrow [0,1]$, a value function $\v:I\rightarrow \mathbb{R}_{\geq 0}$, a number of bins $m\in \mathbb{N}_{>0}$, and a cardinality constraint $k\in \mathbb{N}_{>0}$. 
A {\em solution} is a tuple $(C_1,\ldots,C_m)$  such that for all $j \in \{1,\ldots, m\}$ it holds that $C_j\subseteq I$,  $|C_j| \leq k$ and $\w(C_j) = \sum_{i \in C_j} \w(i) \leq 1$.  
The {\em value} of the solution $(C_1,\ldots,C_m)$ is $\sum_{i \in S} \v(i)$ where $S= \bigcup_{j=1}^{m} C_j$. The goal is to find a solution of maximum value.  Note that we allow the sets $C_1,\ldots, C_m$ to intersect, but if an item appears in multiple sets its value is counted only once.

\paragraph{The Algorithm.} Our algorithm applies an iterative randomized rounding approach based on a configuration-LP. 
The use of such linear program dates back to the work of Karmarkar and Karp  on bin packing \cite{karmarkar1982efficient}, and such linear programs are commonly used in approximation algorithms for resource allocation problems (e.g., \cite{BCS09,BEK16,fleischer2011tight,jansen2010parameterized,KMS23,CKS23}).

We use a {\em configuration polytope} $P(\ell)\subseteq [0,1]^{I}$, where $\by\in P(\ell)$ can be intuitively interpreted as ``there is a way to fractionally pack the items into $\ell$ bins such that each item $i\in I$ is packed $\by_i$ times''. The algorithm takes as an input a value $\eps>0$ which serves as a discretization factor and determines the approximation ratio. Our iterative approach uses~$\frac{1}{\eps}$ iterations, and each iteration packs $\eps\cdot m$ of the remaining  bins.  Therefore, at the beginning of the  $j$-th iterations, $(j-1)\cdot\eps \cdot m$ bins were packed  (in previous iterations) and $(1-(j-1)\cdot \eps)\cdot m$ bins are still empty. We use $S_j$ to denote the set of items that were not packed by the end of the $j$-th iteration (and thus are still available for packing). The main steps of the algorithm are  as follows.

\begin{enumerate}
	\item Initialize $S_0\leftarrow I$ to be all the items.
	\item For $j$ from $1$ to $\frac{1}{\eps}$ do:
	\begin{enumerate}
		\item 
		\label{item:intro_lp}
		Solve the linear program
		\begin{equation}
			\label{eq:overview_lp}
		\begin{aligned}
			&\textnormal{ max }~~&&\sum_{i\in I} \by_i \cdot\v(i) \\ 
			&\textnormal{ s.t. } && \by \in P\left( m\cdot (1-(j-1)\cdot \eps )\right) \\
			&&& \by_i = 0 ~~~~&\forall i\in I\setminus S_{j-1}\\
		\end{aligned}
		\end{equation}
		 That is, we want to obtain a maximum value using $m\cdot (1-(j-1)\cdot \eps)$ bins and only items in $S_{j-1}$. Let $\by^j$ be the solution found.

		\item 
		\label{item:intro_rounding}Apply randomized sampling (rounding) to pack items into  $\eps\cdot m$ bins according to the solution $\by^j$; update $S_j$ to be $S_{j-1}$ minus all the items packed in the current iteration. 
	\end{enumerate}
	\item Return the collection of $m$ packed bins.
\end{enumerate}
The linear program in \eqref{eq:overview_lp} can be approximated efficiently but cannot be solved exactly in polynomial time. For the purpose of this technical overview, we assume it can be solved exactly.
In \Cref{item:intro_rounding} we use
the randomized rounding technique for configuration LPs of Fleischer et al. \cite{fleischer2011tight}. The same randomized rounding technique is commonly used by other algorithms (e.g., \cite{BCS09,BEK16,KMS23,BK14}). We give the full details on the sampling process in \Cref{sec:algorithm}. While sampling of configurations based on configuration-LP has been used for several decades, its iterative application has been proposed only recently~\cite{KMS23}. Furthermore, the iterative approach has not been used previously for maximization problems.

\paragraph{High Level Analysis.} %
 Let $Q_j$ be the set of items packed in the $j$-th iteration (that is, $Q_j = S_{j}\setminus S_{j-1}$). 
In the $j$-th iteration, the linear program uses $m\cdot ( 1-(j-1)\cdot \eps)$ bins and attains value of $\sum_{i\in I }\by^j_i \cdot \v(i)$, with an average value of $\frac{\sum_{i\in I }\by^j_i \cdot \v(i)}{m\cdot ( 1-(j-1)\cdot \eps)} $ per bin. As the number of bins sampled in each iteration is {\em small}, it can be shown that   the average value per bin in the packing generated by the randomized rounding is roughly the same as the average value in the fractional solution. That is,
$$
\E\left[ \frac{ \sum_{i\in Q_j} \v(i)} {\eps \cdot m} \right]~\approx ~\E\left[\frac{ \sum_{i\in I } \by^j_i \cdot \v(i)}{m\cdot ( 1-(j-1)\cdot \eps)}\right],
$$
or equivalently,
$$
\E\left[\sum_{i\in Q_j} \v(i)\right]~\approx ~\E\left[ \eps \cdot \frac{ \sum_{i\in I } \by^j_i \cdot \v(i)}{1-(j-1)\cdot \eps} \right]. 
$$
Observe the left hand term is the expected value attained from items packed in the $j$-th iteration.  
In each iteration of the algorithm the distribution by which the bins are sampled is updated, so the algorithm  does not pack items already packed in previous iterations (by the constraints  $\by_i=0$ for $i\in I\setminus S_{j-1}$ in~\eqref{eq:overview_lp}).
 Thus, we have that $Q_1,\ldots, Q_{\eps^{-1}}$ are disjoint. 
It follows that the value of the solution returned by the algorithm is
\begin{equation}
	\label{eq:overview_sol_profit}
	\E\left[\v(I\setminus S_{\eps^{-1}})\right] \,= \,  \E\left[
	\sum_{j=1}^{\eps^{-1}} \sum_{i\in Q_j} \v(i) \right]\,\approx \,\E\left[\eps\cdot  \sum_{j=1}^{\eps^{-1}} \frac{ \sum_{i\in I } \by^j_i \cdot \v(i)}{1-(j-1)\cdot \eps}\right].
\end{equation}

Ideally, we would like the average value per bin to be (at least) $\frac{\OPT}{m}$ in each of the solutions $\by^j$,  where $\OPT$ is the value of the optimal solution of the instance. 
That is, the average value per bin in each of the iterations remains the average value per bin in the optimum. As $\by^j$ conceptually uses $m\cdot (1-(j-1)\eps)$ bins, this implies that
\begin{equation}
	\label{eq:overview_frac_value}
	\E\left[\sum_{i\in I } \by^j_i \cdot \v(i)\right] \, \gtrsim  \,  \frac{\OPT}{m}\cdot m\cdot (1-(j-1)\cdot \eps) \, =\, \OPT\cdot  \left(1-(j-1)\cdot  \eps\right),
\end{equation}
for every $j \in [\eps^{-1}]$. 
If we assume \eqref{eq:overview_frac_value} holds and plug it into \eqref{eq:overview_sol_profit}, we get that the value of the solution returned by the algorithm is
$$
\E\left[\v(I\setminus S_{\eps^{-1}})\right] \,
\approx\, \E\left[\eps\cdot \sum_{j=1}^{\eps^{-1}} \frac{ \sum_{i\in I } \by^j_i \cdot \v(i)}{1-(j-1)\cdot \eps} \right]\, \gtrsim\,  \eps \cdot \sum_{j=1}^{\eps^{-1}} \frac{\OPT \cdot (1-(j-1)\cdot \eps)}{1-(j-1)\cdot \eps} \,= \, \OPT.
$$
That is, the algorithm returns a solution of expected value close to $\OPT$, assuming \eqref{eq:overview_frac_value} holds. This leaves us with the goal of showing that \eqref{eq:overview_frac_value} holds with high probability.

\paragraph{Linear Structures and Equation~\eqref{eq:overview_frac_value}.}
To show that \eqref{eq:overview_frac_value} holds we define a random vector $\bgam^j\in [0,1]^I$ for every $j\in [\eps^{-1}]$. 
We use $\bgam^j$ to lower bound the value of the configuration-LP. We show that (i) the expected value of $\bgam^j$, $\E\left[\sum_{i\in I} \bgam^j_i\cdot \v(i)\right]$,  is $(1-(j-1)\eps)\cdot \OPT$, and (ii) with high probability $\bgam^j\in P((1+\delta )\cdot m_j)$, where $m_{j} = (1-(j-1)\cdot \eps)\cdot m$ is the number of remaining bins at the beginning of the $j$-th iteration and $\delta>0$ is small.
Once properties (i) and (ii) are shown, it follows that that $\frac{\bgam^j}{1+\delta}$ is a solution of high value for the linear program in the $j$-th iteration, and \eqref{eq:overview_frac_value} immediately follows as the algorithm finds an optimal solution in every iteration.
Property (i) is shown using a simple calculation of the expected value of the vector $\bgam^j$. Showing property (ii) is more challenging.

The polytope $P(\ell)$ can be represented via a finite set of linear constraints $\cS\subseteq \mathbb{R}^I_{\geq 0}$ by $P(\ell ) = \{\by \in [0,1]^{I} \,|\,\forall \bu \in \cS:\, \bu\cdot \by \leq \ell\}$ (the set $\cS$ is the same for every $\ell$). While $\cS$ is finite, its size is non-polynomial in the input instance.  A naive approach to show that $\bgam^j \in P( (1+\delta)m_j)$ is to  consider each constraint $\bu \in \cS$ separately, and  apply  concentration bounds to show  $\by\cdot \bu \lesssim  m_j$ with high probability. Subsequently, the union bound can be used to lower bound the probability that  $\by\cdot \bu \lesssim  m_j$ for every $\bu \in \cS$ simultaneously.  However, 
due to the large number of vectors in $\cS$, a direct application of the union bound does not lead to such useful lower bound.

We use a {\em linear structure} to overcome the above challenge. The linear structure provides an approximate representation of the configuration polytope using a small number of constraints (that is, the number of constraints only depends on $\eps$). 
As the number of constraints is reduced, we can now apply the above logic successfully $-$ use a concentration bound to show that each constraint of the linear structure holds independently with high probability, and then apply the union bound to show that  all the constraints hold simultaneously with high probability.  By the properties of the linear structure, once we show that all   constraints  hold, we are guaranteed  that $\bgam^j \in P((1+\delta)\cdot m_j)$, as stated in~(ii).

The concept of linear structure was introduced in \cite{KMS23}. It is essentially a non-constructive version of the {\em subset oblivious} algorithms used by the Round\&Approx framework of \cite{BCS09}. We construct the linear structure for CMK based on ideas from 
\cite{KMS23,BEK16}. The structure leverages the relatively simple structure of the cardinality constraint.

\paragraph{Technical Contribution.} In this paper we use for the first time iterative randomized rounding of configuration-LP to solve a maximization problem. 
As such, the paper provides the basic foundations required for the analysis of iterative randomized rounding for maximization problems. Iterative randomized rounding of a configuration-LP has been recently used in~\cite{KMS23} for solving bin packing problems. While in some parts the analysis only  requires simple adaptations of ideas from \cite{KMS23}, in other parts the adaptation is more challenging.

These challenges arise mainly due to the fact that in bin packing all the remaining items must be fully packed by the configuration-LP; however, in maximization problems the remaining items may be partially selected or not selected at all by the configuration-LP. 
Thus, the probability of an item to be packed after $j$ iterations may take different values for different items. In contrast, this probability is the same for all items in the case of bin packing.
Similarly, while in the case of bin packing all items must be packed by the configuraiton-LP in every iteration, in maximization problem there is a degree of freedom in the selection of items to be packed. This, in turn, led to a different approach for the use of the linear structure.

We note that while the paper~\cite{CKS23} deals with a generalization of CMK and uses several similar concepts (configuration LP, sampling, subset oblivious algorithms), the algorithm in \cite{CKS23} does not use an iterative approach. It relies on two separate stages: the first stage uses randomized rounding of a configuration-LP that is solved once,  and the second stage uses a combinatorial algorithm. We believe the analysis of the iterative randomized rounding algorithm presented in this paper can be used to show the technique yields an improved approximation for the Uniform $2$-dimensional Vector Multiple Knapsack ($2$d-UMK) problem considered in \cite{CKS23}. A main challenge is that 
our analysis relies on a robust linear structure, which is unlikely to exist for $2$d-UMK (as this would lead to a PTAS, contradicting the hardness results in~\cite{CKS23}). This can potentially be tackled by defining a suitable analog of the linear structure for $2$d-UMK and adapting the analysis to this new structure.

\subsection{Organization}
In \Cref{sec:defs} we give some definitions and notation. \Cref{sec:algorithm} presents our main algorithm and an outline of its analysis. In \Cref{sec:analysisFull} we give the detailed analysis (proofs of Lemmas~\ref{lem:Upper_Bound_V}, \ref{lem:gamma_value_lowerbound}, \ref{lem:value_in_terms_of_gamma} and \ref{lem:main_analysis}). \Cref{sec:structure} is devoted to the proof of \Cref{lem:structure}, and \Cref{sec:reduction} gives the proof of \Cref{lem:reduction}.

		\section{Preliminaries}
		\label{sec:defs}
		
We start with some definitions and notation.
Let $\OPT(I)$ be the value of an optimal solution for an instance $I$ of a maximization problem~$\Pi$. For $\alpha\in (0,1]$,   a solution~$x$ for the instance $I$ is  an $\alpha$-approximate solution  if its value is at least $\alpha \cdot \OPT(I)$.
For $\alpha \in (0,1]$, we say that $\cA$ is an $\alpha$-approximation algorithm
for $\Pi$ if for any instance $I$ of $\Pi$, $\cA$ outputs an $\alpha$-approximate solution for $I$. An algorithm $\cA$ is a {\em randomized} $\alpha$-approximation for $\Pi$ if for any instance $I$ of $\Pi$ it always returns a solution for $I$, 
and the solution is an $\alpha$-approximate solution with probability at least~$\frac{1}{2}$. A {\em polynomial-time approximation scheme} (PTAS) for a maximization problem $\Pi$ is a family of algorithms $(\cA_{\eps})_{\eps>0}$ such that for any $\eps>0$, $\cA_{\eps}$ is a polynomial-time $(1-\eps)$-approximation algorithm for $\Pi$. As the running time of a PTAS may be impractically high,
two restrictive classes of PTAS have been proposed in the literature: 
$(\cA_{\eps})_{\eps>0}$ is an {\em efficient} PTAS (EPTAS) if the running time of $\cA_{\eps}$ 
is of the form $f\left(\frac{1}{\eps} \right) \cdot n^{O(1)}$, where $f$ is an arbitrary function, and $n$ is the bit-length encoding size of the input instance;
$(\cA_{\eps})_{\eps>0}$ is a {\em fully} PTAS (FPTAS) if the running time of $\cA_{\eps}$ is bounded by $\left(\frac{n}{\eps}\right)^{O(1)}$. 
Given a boolean expression $\cal D$, we define $\one_{\cal D} \in \{0,1\}$ such that $\one_{\cal D} =1$ if $\cal D$ is true and $\one_{\cal D} =0$ otherwise.

We give an alternative definition of our problem that will be used in the technical sections.  
An instance of CMK
is a tuple $\cI= (I,\w,\v,m,k)$, where $I$ is a set of items, $\w:I\rightarrow [0,1]$ is the weight function, $\v:I\rightarrow \mathbb{R}_{\geq 0}$ is the value function, $m\in \mathbb{N}_{>0}$ is the number of bins, and $k\in \mathbb{N}_{>0}$ is the cardinality constraint. 
A {\em configuration}  of the instance $\cI$ is $C\subseteq I$ such that $|C|\leq k$ and $\w(C)=\sum_{i\in C} \w(i)\leq 1$. 
Let  $\cC_{\cI}$ be the set of all configurations of $\cI$, and $\cC_{\cI}(i) =\{C\in \cC~|~i\in C\}$ the set of all configurations which contain $i\in I$. When clear from the context, we simply use $\cC = \cC_{\cI}$ and $\cC(i) = \cC_{\cI}(i)$.

A {\em solution} of $\cI$ is a tuple of $m$ configurations $S = (C_1,\ldots,C_m)\in \cC^m$. 
The value of the solution $S = (C_1,\ldots,C_m)$ is $\v(S) = \v\left(\bigcup_{b\in [m]} C_b \right)$ (generally, for any set $B\subseteq A $ and a function $f:A \rightarrow \mathbb{R}_{\geq 0}$, we use $f(B) = \sum_{b \in B} f(b)$). The objective is to find a solution of maximum value. Let $\OPT(\cI)$ be 
the optimal solution value for the instance $\cI$,
and $|\cI|$ the encoding size of $\cI$. W.l.o.g., we consider a tuple with fewer than $m$ configurations to be a solution. In this case, for some $r \leq m$, the tuple $(C_1,\ldots, C_{r})\in \cC^r$ is equivalent to the solution $(C_1,\ldots, C_{r},\emptyset,\ldots, \emptyset)\in \cC^m$.

Our main algorithm, given in \Cref{sec:algorithm}, is applied to
a restricted subclass of {\em simple} instances. %
We now give a more formal definition for this subclass of instances. 
\begin{definition}
	\label{def:simple}
	Let $\eps \in (0, 0.1)$,
	We say that a CMK instance $\cI= (I,\w,\v,m,k)$ is {\em $\eps$-simple} if the following conditions hold.
	\begin{itemize}

		\item $m> \exp(\exp(\eps^{-30}))$ %
		\item $\eps \cdot m\in \mathbb{N}$. 
	\end{itemize}
\end{definition}

We give a reduction showing that our algorithm for $\eps$-simple instances yields a randomized EPTAS for general CMK instances.\footnote{In our discussion of $\eps$-simple instances, we did not attempt to optimize the constants.} This is formalized in the next lemma (we give the proof in \Cref{sec:reduction}).
\begin{lemma}
\label{lem:reduction}
Given $\eps \in (0, 0.1)$ such that $\eps^{-\frac{1}{2}}\in \mathbb{N}$, let $\cA$ be a randomized algorithm  which	returns a $(1-\eps)$-approximate solution for any $\eps$-simple CMK instance $\cI$ in time $\left(\frac{|\cI|}{\eps}\right)^{O(1)}$. Then, there is a randomized \textnormal{EPTAS} for CMK.
\end{lemma}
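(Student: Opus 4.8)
The plan is to reduce a general CMK instance to a small number of $\eps'$-simple instances (for a suitably chosen $\eps'$ derived from $\eps$), solve each with $\cA$, and combine the results. The key difficulty is that a general instance may have $m \leq \exp(\exp((\eps')^{-30}))$ bins, and — as emphasized in the introduction — for a bounded number of bins CMK does not even admit an FPTAS, so the iterative approach genuinely cannot apply; we must dispatch the ``few bins'' regime by an entirely different method. Thus the reduction naturally splits into two cases according to the size of $m$ relative to the threshold $M(\eps') := \exp(\exp((\eps')^{-30}))$, which is a constant depending only on $\eps'$ (hence only on $\eps$).

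\emph{Case 1: $m$ is large.} Here I would set $\eps' := \Theta(\eps)$ chosen so that $(\eps')^{-1/2} \in \mathbb{N}$ and $\eps' < 0.1$ (rounding $\eps$ down to the nearest such value loses only a constant factor in the error), and then enforce condition (ii) of \Cref{def:simple}, $\eps' \cdot m \in \mathbb{N}$, by discarding at most $\eps' m = O(\eps m)$ bins so that the remaining number of bins is divisible by $1/\eps'$ — equivalently, by taking $m' := \eps' \cdot \floor{m/\eps'} / \eps'$. Discarding an $O(\eps)$-fraction of the bins costs only an $O(\eps)$-fraction of the optimum value, since in any optimal solution the bins can be ordered by value and the least valuable $\eps' m$ bins carry at most an $\eps'$-fraction of the total (this uses that bins are uniform, so any configuration is feasible in any bin and value is subadditive over bins). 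Since $m$ is large, $m' > M(\eps')$ still holds, so the truncated instance $\cI'$ is $\eps'$-simple and $\OPT(\cI') \geq (1-O(\eps))\OPT(\cI)$; running $\cA$ on $\cI'$ yields a $(1-\eps')(1-O(\eps)) = (1-O(\eps))$-approximate solution for $\cI$ in time $(|\cI|/\eps)^{O(1)}$. A final rescaling of the error parameter at the outset makes the guarantee a clean $(1-\eps)$.

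\emph{Case 2: $m \leq M(\eps')$ is bounded by a constant.} Here I would invoke the existing PTAS for CMK sketched in the related-work discussion (via the ideas of Chekuri–Khanna \cite{chekuri2005polynomial}: discretize values into $O(\log n)$ classes, enumerate the profile of how many items are taken from each class, pack greedily by smallest weight within each class using the asymptotic FPTAS for cardinality-constrained bin packing \cite{EL10}, and keep the $m$ most valuable bins). Its running time is polynomial in $n$ for each fixed $m$ and $\eps$; since $m \leq M(\eps')$ is a function of $\eps$ alone, this running time is of the EPTAS form $f(1/\eps)\cdot n^{O(1)}$. Alternatively — and perhaps cleaner — one can note that with $m$ constant, a configuration-LP-free dynamic program or brute-force enumeration over the $O(\log n)$ value classes runs in $n^{g(\eps,m)} = n^{h(\eps)}$ time, which is polynomial for fixed $\eps$; either route supplies a $(1-\eps)$-approximation in the required time.

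\emph{Combining.} The algorithm first checks whether $m > M(\eps')$; if so it runs the Case 1 procedure, otherwise the Case 2 procedure. In both cases the output is a $(1-\eps)$-approximate solution (deterministically in Case 2, with probability $\geq 1/2$ in Case 1, matching the definition of a randomized EPTAS) and the running time is $f(1/\eps)\cdot n^{O(1)}$. I expect the only real subtlety to be bookkeeping the error-parameter rescaling so that the two multiplicative losses in Case 1 (the discarded bins and $\cA$'s own $(1-\eps')$ guarantee) combine to exactly $(1-\eps)$ while keeping $(\eps')^{-1/2}\in\mathbb{N}$; the case analysis itself and the constant-$m$ fallback are routine.
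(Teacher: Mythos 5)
Your Case~1 matches the paper's large-$m$ branch in spirit, and your explicit discarding of bins to enforce $\eps' m' \in \mathbb{N}$ is a reasonable, careful way to handle condition (ii) of the $\eps$-simple definition. However, your Case~2 contains a genuine gap: you invoke the Chekuri--Khanna-style PTAS and assert that, because $m$ is bounded by a function of $\eps$ alone, its running time ``is of the EPTAS form $f(1/\eps)\cdot n^{O(1)}$.'' This is false. That PTAS enumerates over $O(\log n)$ value classes (the classes are determined by the polynomial spread of item values, not by $m$), and its running time is $n^{g(\eps)}$; fixing $m$ does not remove the $\log n$ value-class count and does not collapse an $n^{g(\eps)}$ exponent into a fixed polynomial. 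You even write $n^{g(\eps,m)} = n^{h(\eps)}$ ``which is polynomial for fixed $\eps$'' --- but that is exactly the definition of a PTAS, not an EPTAS, and the distinction is the entire point of this lemma.

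The paper does not reuse the Chekuri--Khanna construction for the bounded-$m$ case; it designs a new algorithm (Section~6, \textnormal{\textsf{ConstantBins}}) with running time $\left(\frac{m}{\eps}\right)^{O(m^2 \eps^{-2})} \cdot |\cI|^{O(1)}$, which genuinely has the EPTAS form once $m$ is a function of $\eps$. Two ideas are essential and are missing from your proposal: (1) first compute a constant-factor approximation and declare an item \emph{valuable} only if its value is at least $\eps \cdot \v(\textsf{Local-Search}(\cI))/m$; this caps the ratio of max to min valuable-item value by $O(m/\eps)$, so the number of value classes for valuable items is $O(m\eps^{-2})$ --- a function of $m$ and $\eps$, not of $n$; and (2) after guessing the valuable-item profile per bin, fill in non-valuable items by an assignment LP over a partition-matroid polytope with $2m$ extra side constraints, so a basic optimal solution has at most $4m$ fractional entries (Theorem~3 of~\cite{grandoni2010approximation}), each contributing only $O(\eps\cdot\OPT/m)$ loss. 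Without these, there is no route from the Related-Work PTAS sketch to the $f(1/\eps)\cdot n^{O(1)}$ bound the lemma requires.
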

\Cref{thm:main} follows from \Cref{thm:simple} and \Cref{lem:reduction}.

		\section{The Algorithm}
		\label{sec:algorithm}
		
In this section we formally present our iterative randomized rounding algorithm for $\eps$-simple CMK instances. The algorithm relies on a linear programming (LP) relaxation of CMK that we formalize through the notion of fractional solutions.

A {\em fractional solution} for an instance $\cI= (I,\w,\v,m,k)$ is a vector $\bx\in \mathbb{R}_{\geq 0}^{\cC}$; the value $\bx_C$ represents a fractional selection of the configuration $C$ for the solution. The {\em coverage} of $\bx$ is the vector $\cover(\bx) \in \mathbb{R}_{\geq 0}^I$ defined by
$$
\forall i\in I:~~~\cover_i(\bx)\,=\, \left( \cover(\bx)\right)_i\,=\, \sum_{C\in \cC(i)} \bx_C.$$
The vector $\bx$ is {\em feasible} if $\cover(\bx) \in[0,1]^I$. The size of $\bx$ is $\|\bx\| = \sum_{C\in \cC} \bx_C$ (throughout this paper, for every vector $\bz\in \mathbb{R}^n$ we use $\|\bz\| = \sum_{i=1}^n\abs{\bz_i}$).
The {\em value} of $\by \in [0,1]^I$ is $\v(\by ) = \sum_{i\in I} \by_i \cdot \v(i)$.
 The {\em value} of $\bx$ is  the value of the cover of $\bx$, that is, $\v(\bx) =\v(\cover(\bx))$. For $\ell \in \mathbb{N}_{>0}$, let $[\ell] = \{ 1, \ldots , \ell \}$.
 
  A solution $S=(C_1,\ldots, C_m)$ for $\cI$, where $C_1,\ldots, C_m$ are disjoint and non-empty, can be encoded as a feasible fractional solution $\bx\in \{0,1\}^{C}$ defined by $\bx_{C_b} = 1$ for every $b\in [m]$, and $\bx_C=0$ for every other configuration. It is easy to verify that $\|\bx\|= m$, $\cover_i(\bx)=1$ for every $i\in S$, $\cover_i(\bx)=0$ for every $i\in I\setminus S$,  and $\v(\bx) = \v(S)$.
 
We use fractional solutions to define a linear program (LP). 
Let $K$ be a set and  $\bgam \in \mathbb{R}^K$. The {\em support} of $\bgam$ is $\supp(\bgam) = \left\{i\in K~|~\bgam_i \neq 0\right\}$. Let $\cI= (I,\w,\v,m,k)$ be a  CMK instance. 
For every set $S\subseteq I$ of remaining  items  and $\ell\in \mathbb{N}$ remaining bins, we define the configuration LP of $S$ and $\ell$ by
$$
\LP(S,\ell):~~~
\begin{aligned}
	&\textnormal{ max }~~&&\v(\bx) \\
	&\textnormal{ s.t. } && \textnormal{ $\bx$ is a feasible fractional solution for $\cI$} \\
	&&& \supp(\bx) \subseteq 2^{S}\\
	&&&\|\bx\|= \ell
	\end{aligned}
$$
That is, in $\LP(S,\ell)$ exactly $\ell$ configurations are selected\footnote{Note that $\bx_{\emptyset}$ may be greater than $1$.}, and these configurations contain only items in $S$. We can formally define the  configuration polytope $P(\ell)$ discussed in \Cref{sec:overview} via  fractional solutions by \begin{equation}
\label{eq:configuration_polytope}
P(\ell) = \left\{ \cover(\bx)\,|\, \textnormal{$\bx$ is a feasible fractional solution for $\II$ and $\|\bx\|\leq \ell$}\right\}.
\end{equation}
It can be shown that $\LP(S_j, (1-(j-1)\cdot \eps)\cdot m)$ is equivalent to the linear program in  \eqref{eq:overview_lp}.

A generalization of $\LP(S,\ell)$ for the {\em separable assignment problem} (SAP) was considered in \cite{fleischer2011tight}. Given $p_i \geq 0$ for every $i \in I$, the paper~\cite{fleischer2011tight}  shows that linear programs such as $\LP(S,\ell)$ admit an FPTAS whenever the {\em single bin problem} $-$ of finding $C\in \cC$ such that $\sum_{i\in I} p_i$ is maximized $-$ admits an FPTAS. 
As the single bin
case of CMK has an FPTAS~\cite{caprara2000approximation}, we get the following.
\begin{lemma}
	\label{lem:lp_fptas}
	There is an algorithm which given a CMK instance $\cI=(I,\w,\v,m,k)$, $S\subseteq I$, $\ell\in \mathbb{N}$  and $\eps>0$, finds a $(1-\eps)$-approximate solution for $\LP(S,\ell)$ in time $\left(\frac{|\cI|}{\eps}\right)^{O(1)}$.
\end{lemma}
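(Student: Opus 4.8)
The plan is to solve $\LP(S,\ell)$ approximately through its linear‑programming dual, using the FPTAS for the single‑bin problem as an approximate separation oracle; this is exactly the route taken by Fleischer et al.~\cite{fleischer2011tight} for SAP configuration LPs, so the cleanest formal argument is to check that $\LP(S,\ell)$ is an instance of their framework and invoke their theorem. I sketch the reduction. Since $\emptyset\in\cC$ is a configuration of value $0$, replacing the constraint $\|\bx\|=\ell$ by $\|\bx\|\le\ell$ does not change the optimum (a solution with $\|\bx\|<\ell$ can be padded with copies of $\bx_\emptyset$), and the restriction $\supp(\bx)\subseteq 2^S$ is handled by simply deleting the items in $I\setminus S$. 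Thus $\LP(S,\ell)$ is equivalent to the packing LP with a variable $\bx_C\ge 0$ for each $C\in\cC$ with $C\subseteq S$, objective $\max\sum_C \v(C)\,\bx_C$ where $\v(C)=\sum_{i\in C}\v(i)$, and constraints $\sum_C \bx_C\le\ell$ and $\sum_{C:\,i\in C}\bx_C\le 1$ for every $i\in S$. Its dual has only $|S|+1$ variables, $\mu\ge 0$ and $(\lambda_i)_{i\in S}\ge 0$, minimizes $\ell\mu+\sum_{i\in S}\lambda_i$, and has one constraint $\mu\ge\sum_{i\in C}(\v(i)-\lambda_i)$ per configuration $C\subseteq S$.

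First I would observe that separating this dual at a candidate point $(\mu,\lambda)$ amounts to deciding whether $\max_{C\in\cC,\,C\subseteq S}\sum_{i\in C}(\v(i)-\lambda_i)>\mu$; restricting to items of $S$ with $\v(i)>\lambda_i$ and assigning them profits $p_i=\v(i)-\lambda_i>0$, this is precisely cardinality‑constrained $0/1$ knapsack, which admits an FPTAS~\cite{caprara2000approximation}. Then I would run the ellipsoid method on the dual — with the standard binary search over the objective value, using a polynomial a‑priori bound on the optimum — feeding it this FPTAS as a $(1+\eps')$‑approximate separation oracle: when the oracle certifies that the best configuration has (approximate) value at most $\mu$ we treat $(\mu,\lambda)$ as almost feasible, and otherwise it returns a violated configuration $C$. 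Over the run only a polynomial‑size family $\cC'\subseteq\{C\in\cC:C\subseteq S\}$ of configurations is ever produced, and solving the primal LP restricted to the variables indexed by $\cC'$ exactly (a polynomial‑size LP, solvable in polynomial time) yields a feasible fractional solution of value at least $(1-O(\eps'))$ times the optimum of $\LP(S,\ell)$; padding it with copies of the $\emptyset$‑configuration to restore $\|\bx\|=\ell$ exactly, and setting $\eps'=\Theta(\eps)$, gives the desired $(1-\eps)$‑approximate solution in time $\left(\frac{|\cI|}{\eps}\right)^{O(1)}$.

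The main obstacle is the careful accounting of the two compounding sources of error — the FPTAS slack inside the separation oracle and the inherent slack of the ellipsoid method run with an approximate oracle — and verifying that together they amount to a single $(1+O(\eps))$ factor on the LP value (this also governs why the output is feasible rather than only approximately so). Since Fleischer et al.~\cite{fleischer2011tight} establish exactly this for general SAP configuration LPs whenever the single‑bin subproblem has an FPTAS, the intended proof is simply to instantiate their statement with the single‑bin FPTAS of Caprara et al.~\cite{caprara2000approximation}; the only routine verification is that $\LP(S,\ell)$ (equivalently the program in \eqref{eq:overview_lp}) fits their template, with the budget $\|\bx\|\le\ell$ playing the role of their global bin count and the support restriction absorbed by item deletion.
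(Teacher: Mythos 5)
Your proposal is correct and takes essentially the same route as the paper: the paper establishes this lemma purely by citation, invoking the result of Fleischer et al.~\cite{fleischer2011tight} that SAP configuration LPs admit an FPTAS whenever the single-bin subproblem does, instantiated with the cardinality-constrained knapsack FPTAS of~\cite{caprara2000approximation}. Your sketch simply unpacks the dual/ellipsoid mechanism underlying that citation, together with the (correct) observation that the equality $\|\bx\|=\ell$ can be relaxed to $\|\bx\|\le\ell$ by padding with $\bx_\emptyset$, which is precisely the intended argument.
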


Given  a fractional solution  $\bx$ such that $\|\bx\|\neq 0$,
we say that a random configuration $R\in \cC$ is {\em distributed by $\bx$}, and write $R\sim \bx$, if $\Pr(R=C) = \frac{\bx_C}{\|\bx\|}$ for all $C\in \cC$. 

The pseudocode of our algorithm for CMK is  given in 
\Cref{alg:randomized_rounding}. 
In each iteration $1 \leq j \leq \eps^{-1}$, 
the algorithm uses 
the solution $\bx^j$ for  $\LP(S_{j-1},m_j)$
to sample $\eps \cdot m$ configurations, where $S_{j-1}$ is the set of items remaining after iteration $(j-1)$,
and $m_j$ is the number of remaining (unassigned) bins.

\begin{algorithm}[h]
	\caption{\textsf{Iterative Randomized Rounding}}
	\label{alg:randomized_rounding}
	\SetKwInOut{Input}{input}
	\SetKwInOut{Output}{output}
	\SetKwInOut{Configuration}{configuration}
	
	\Input{Error parameter $\eps \in (0, 0.1)$, $\eps^{-\frac{1}{2}} \in \mathbb{N}$, and an $\eps$-simple CMK instance $\cI=(I,\w,\v,m,k)$}
	\Output{A solution for the instance}
	
	Initialize  $S_0\leftarrow I$
	
	\For{$j=1,\ldots,\eps^{-1}$ \label{randomized_rounding:loop}}{
		\label{randomized_rounding:solve}
		Find a $(1-\eps)$-approximate solution $\bx^j$  for $\LP(S_{j-1}, m_j)$, where $m_j = m\left(1-(j-1)\cdot \eps \right)$. 
		
		Sample independently $q = \eps \cdot m$ configurations $R^j_1,\ldots ,R^j_{q} \sim \bx^j$. 
		
		Update $S_j = S_{j-1} \setminus\left(\bigcup_{b=1}^{q} R^j_b \right)$.
		
	}
	Return the solution $\left(R^j_b\right)_{1\leq j \leq \eps^{-1},~1\leq b \leq q}$
	
\end{algorithm}

Consider the execution of \Cref{alg:randomized_rounding} with the input $\cI=(I,\w,\v,m,k)$ and $\eps\in (0,0.1)$ such that $\eps^{-\frac{1}{2}}\in \mathbb{N}$. 
The notations we use below, such as $\bx^j$, $S_j$, and $R^j_b$, refer to the 
variables used throughout the execution of the algorithm.
Clearly, \Cref{alg:randomized_rounding} returns a solution for~$\cI$. Furthermore, by \Cref{lem:lp_fptas}, the running time of the algorithm is polynomial in $\cI$ and $\eps^{-1}$.
 Let $V= \v\left(\bigcup_{j=1}^{\eps^{-1}} \bigcup_{b=1}^{q} R^j_b\right) = \v\left(I\setminus S_{\eps^{-1}}\right)$ be the value of the returned solution.

 \paragraph{Main Lemmas.} 
 In the following, we describe the main lemmas we prove  in order to lower bound the value of~$V$. The proofs of the lemmas are given in \Cref{sec:analysisFull,sec:structure}.
  
A simple calculation shows that the expected value of $\v(R^j_b)$, given all the samples up to (and including) iteration $(j-1)$, is $\frac{\v(\bx^j) }{m\cdot (1-(j-1)\cdot \eps)}$. To compute the expected value of 
$\v\left( \bigcup_{b=1}^{q}R^j_b\right)$, we need to take into consideration 
events in which an item $i\in I$ appears in several configurations among $R^j_1,\ldots, R^j_q$.
In \Cref{sec:upper_bound_v} we show that, since only a small number of configurations are sampled in each iteration (in comparison to the overall remaining number of bins), such events have small effect on the expected value (with the exception of the last  $\eps^{-\frac{1}{2}}$ iterations). This observation is coupled with a  concentration bound to prove the next lemma.

\begin{restatable}{lemma}{upperboundv}
	\label{lem:Upper_Bound_V}
	It holds that $$\E[V]\,=\,\E\left[\v(I\setminus S_{\eps^{-1}})\right] \,\geq \,\left( \eps-\eps^{\frac{3}{2}} \right)\cdot \sum_{j=1}^{\eps^{-1} - \eps^{-\frac{1}{2}}} \cdot \frac{\E[ \v(\bx^j)]}{1-(j-1)\eps}.$$
\end{restatable}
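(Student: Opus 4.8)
The plan is to split the returned value by iterations and estimate each piece through a conditional-expectation computation; no heavy machinery is required. Write $Q_j = S_{j-1}\setminus S_j$ for the items first packed in iteration $j$. Since $\bx^j$ is a feasible solution of $\LP(S_{j-1},m_j)$ we have $\supp(\bx^j)\subseteq 2^{S_{j-1}}$, so every sampled configuration $R^j_b$ is contained in $S_{j-1}$ and hence $Q_j = \bigcup_{b=1}^{q} R^j_b$; moreover $Q_1,\dots,Q_{\eps^{-1}}$ are pairwise disjoint, so $V = \v(I\setminus S_{\eps^{-1}}) = \sum_{j=1}^{\eps^{-1}}\v(Q_j)$. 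Discarding the nonnegative contribution of the last $\eps^{-1/2}$ iterations, it suffices to prove $\E[\v(Q_j)] \ge (\eps-\eps^{3/2})\,\E[\v(\bx^j)]/(1-(j-1)\eps)$ for every $j\in\{1,\dots,\eps^{-1}-\eps^{-1/2}\}$.

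Fix such a $j$ and let $\mathcal H_{j-1}$ be the $\sigma$-algebra of all random choices made before the sampling step of iteration $j$; conditioned on $\mathcal H_{j-1}$, the set $S_{j-1}$, the solution $\bx^j$ and the number $m_j = m(1-(j-1)\eps)$ are fixed, and $R^j_1,\dots,R^j_q$ are i.i.d.\ with $R^j_b\sim\bx^j$. For $i\in I$ put $p_i = \cover_i(\bx^j)/m_j\in[0,1]$, the chance one sample contains $i$; then $\Pr(i\in Q_j\mid\mathcal H_{j-1}) = 1-(1-p_i)^q$, so $\E[\v(Q_j)\mid\mathcal H_{j-1}] = \sum_{i\in I}\v(i)\bigl(1-(1-p_i)^q\bigr)$. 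I would use the elementary bound $1-(1-p)^q\ge qp-\tfrac{q^2}{2}p^2$ for $p\in[0,1]$ (immediate by differentiating and applying Bernoulli's inequality), which gives
\[
\E[\v(Q_j)\mid\mathcal H_{j-1}] \;\ge\; q\sum_{i\in I}\v(i)p_i \;-\; \frac{q^2}{2}\sum_{i\in I}\v(i)p_i^2 .
\]
Here $\sum_i \v(i)p_i = \v(\bx^j)/m_j$ since $\sum_i\v(i)\cover_i(\bx^j)=\v(\cover(\bx^j))=\v(\bx^j)$, and $\sum_i\v(i)p_i^2 \le \v(\bx^j)/m_j^2$ because $\cover_i(\bx^j)\le 1$ for the feasible solution $\bx^j$. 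Substituting $q=\eps m$ and $m_j = m(1-(j-1)\eps)$ yields $\E[\v(Q_j)\mid\mathcal H_{j-1}] \ge \frac{\eps\,\v(\bx^j)}{1-(j-1)\eps}\bigl(1-\frac{\eps}{2(1-(j-1)\eps)}\bigr)$.

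To conclude, I invoke the restriction $j\le\eps^{-1}-\eps^{-1/2}$: then $1-(j-1)\eps \ge \eps^{1/2}+\eps > \eps^{1/2}$, so $\frac{\eps}{2(1-(j-1)\eps)} < \frac{\eps^{1/2}}{2}$ and the parenthesized factor is at least $1-\eps^{1/2}$; hence $\E[\v(Q_j)\mid\mathcal H_{j-1}] \ge (\eps-\eps^{3/2})\,\v(\bx^j)/(1-(j-1)\eps)$. Taking expectations (legitimate since $\v(\bx^j)$ is $\mathcal H_{j-1}$-measurable and $1-(j-1)\eps$ is a constant) and summing over $j=1,\dots,\eps^{-1}-\eps^{-1/2}$ gives the lemma. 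The only delicate point is the quadratic term $\frac{q^2}{2}\sum_i\v(i)p_i^2$, which accounts for items landing in several of the $q$ configurations sampled in one iteration: it is negligible precisely because $q/m_j = \eps/(1-(j-1)\eps)$ is at most $\eps^{1/2}$ during the first $\eps^{-1}-\eps^{-1/2}$ iterations, and it is exactly this ratio growing to $\Theta(1)$ in the last $\eps^{-1/2}$ iterations (where $1-(j-1)\eps$ can be as small as $\eps$) that forces us to drop those iterations from the sum. One could alternatively control this collision loss via a concentration inequality on the number of repeated items, but the second-moment estimate above already suffices for the stated bound.
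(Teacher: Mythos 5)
Your proof is correct and takes essentially the same approach as the paper's: both decompose $V$ as the sum of $\v(S_{j-1}\setminus S_j)$ over iterations, drop the last $\eps^{-1/2}$ of them, and lower-bound the conditional expectation $\E[\v(S_{j-1}\setminus S_j)\mid\cF_{j-1}]$ by a second-order approximation of $1-(1-p)^q$ with $p=\by^j_i/m_j$. The paper reaches this bound through $(1-p)^q\le e^{-qp}$ followed by $1-e^{-x}\ge x-x^2$, whereas you use $(1-p)^q\le 1-qp+\tfrac12 q^2p^2$ in one step; the cutoff $j\le\eps^{-1}-\eps^{-1/2}$ and the resulting factor $\eps-\eps^{3/2}$ are identical.
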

\Cref{lem:Upper_Bound_V} is the formal statement of
 \eqref{eq:overview_sol_profit}. 
\Cref{lem:Upper_Bound_V}
essentially reduces the problem of deriving a lower bound for $V$ to obtaining a lower bound on $\v(\bx^j)$.

To obtain a lower bound for $\v(\bx^j)$ we use the following steps. We define random vectors $\bgam^j \in [0,1]^I$ for every $j\in [\eps^{-1}]$ such that $\v(\bgam^j)$ is  high, and there is $\bz^j$ such that $\cover(\bz^j) =\bgam^{j-1}$ and $\|\bz^j\| \approx m_j$. We scale down $\bz^j$ to obtain a solution for $\LP(S_{j-1},m_j)$ of value $\approx \v(\bgam^{j-1})$, and consequently get a lower bound for $\v(\bx^j)$. 
We use a {\em linear structure} defined below, to show the existence of $\bz^j$. We further use auxiliary random vectors $\blam^j$ to define $\bgam^j$.

Let $(\Omega,\cF,\Pr)$ be the probability space defined by the execution of the algorithm.
Define the  $\sigma$-algebras $\cF_0 =\{\emptyset, \Omega\}$ and $\cF_j=
\sigma\left(\{ R^{j'}_b~|~1\leq j'\leq j,~1\leq b\leq q\}\right)$. That is, $\cF_j$ describes events which only depend on the outcomes of the random sampling up to (and including) the $j$-th iteration  
of the algorithm. We follow the standard definition of conditional probabilities and expectations given $\sigma$-algebras (see, e.g., \cite{CT97}).

Fix an optimal solution $(C^*_1,\ldots, C^*_m)$ for the instance and let $S^*=\bigcup_{j=1}^{m} C^*_j$ be the set of items in this solution.  Also, given a set $S\subseteq I$  denote by $\one_S$ the vector $\bz \in \{0,1\}^I$ satisfying $\bz_i =1$ for $i \in S$, and $\bz_i =0$ otherwise. 

 We define $\bgam^j$ and $\blam^j$ inductively using $S^*$.  
Define $\bgam^0 = \one_{S^*}$, that is $\bgam^0_i=1$ for every $i\in S^*$ and $\bgam^0_i=0$ for every $i\in I\setminus S^*$. For every $j\in [\eps^{-1}-1]$ define  	$\blam^{j}\in \mathbb{R}_{\geq 0} ^I$  by
\begin{equation}
	\label{eq:blam_def}
\blam^{j}_i =\frac{1-j\cdot \eps}{1-(j-1)\eps}\cdot \frac{1}{\Pr(i\in S_j~|~\cF_{j-1})} \cdot \bgam^{j-1}_i
\end{equation}
for all $i\in S_{j-1}$ and $\blam^{j}_i = 0$ for $i \notin S_{j-1}$.   Intuitively, the expression   $\Pr(i\in S_j\, |\, \cF_{j-1})$ in \eqref{eq:blam_def}  is the probability that item $i$ will still be available for packing after the $j$-th iteration, where the probability is calculated at the end of the iteration $j-1$. 
 Also, for every $j\in [\eps^{-1}-1]$ define~$\bgam^{j}\in \mathbb{R}_{\geq 0} ^I$  by 
 \begin{equation}
 	\label{eq:bgam_def}
 	\bgam^j_i = \one_{i\in S_{j}} \cdot \blam^j_i~~~~~\forall i\in I.
 	\end{equation}
 Observe that $\blam^j$ is $\cF_{j-1}$-measurable random variable whereas $\bgam^j$ is $\cF_j$-measurable. Intuitively, this means that the value of $\blam^j$ is known by the end of the $(j-1)$-th iteration, while the value of $\bgam^j$ is only known known by the end of the $j$-th iteration.

The lower bound on $\v(\bgam^{j-1})$ relies on a simple calculation of expectations.  By induction it can be shown that $\E[\bgam^{j-1}_i] = (1-(j-1)\eps)\cdot \one_{i\in S^*}$. Therefore, we can prove the following. 
\begin{restatable}{lemma}{gammaval}
	\label{lem:gamma_value_lowerbound}
For every $j\in [\eps^{-1}]$ it holds that 
$$\E[\v(\bgam^{j-1})] = (1-(j-1)\eps)\cdot \v(S^*) =(1-(j-1)\eps)\cdot\OPT(\cI)$$
\end{restatable}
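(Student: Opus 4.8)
The plan is to first establish the pointwise identity
\[
\E[\bgam^{j}_i] \,=\, (1-j\eps)\cdot \one_{i\in S^*} \qquad \text{for all } i\in I,\ j\in\{0,1,\ldots,\eps^{-1}-1\},
\]
by induction on $j$, and then integrate it against $\v$. The base case $j=0$ is immediate, since $\bgam^0=\one_{S^*}$ is deterministic, so $\E[\bgam^0_i]=\one_{i\in S^*}=(1-0\cdot\eps)\one_{i\in S^*}$.

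For the inductive step the key observation is a measurability split: $\blam^j$ is $\cF_{j-1}$-measurable (it is assembled in \eqref{eq:blam_def} from $\bgam^{j-1}$, from the $\cF_{j-1}$-measurable event $\{i\in S_{j-1}\}$, and from the conditional probability $\Pr(i\in S_j\mid\cF_{j-1})$, all of which are $\cF_{j-1}$-measurable), whereas $\one_{i\in S_j}$ in \eqref{eq:bgam_def} involves only the fresh randomness of iteration $j$. Hence, conditioning on $\cF_{j-1}$ and pulling out the measurable factor,
\[
\E[\bgam^j_i\mid\cF_{j-1}] \,=\, \blam^j_i\cdot\E[\one_{i\in S_j}\mid\cF_{j-1}] \,=\, \blam^j_i\cdot\Pr(i\in S_j\mid\cF_{j-1}).
\]
For $i\in S_{j-1}$ the occurrences of $\Pr(i\in S_j\mid\cF_{j-1})$ coming from \eqref{eq:blam_def} and from the line above cancel, leaving $\frac{1-j\eps}{1-(j-1)\eps}\,\bgam^{j-1}_i$; for $i\notin S_{j-1}$ both $\blam^j_i$ and (because $\supp(\bgam^{j-1})\subseteq S_{j-1}$, as one sees from $\bgam^{j-1}_i=\one_{i\in S_{j-1}}\blam^{j-1}_i$ for $j\ge 2$ and from $\supp(\bgam^0)=S^*\subseteq I=S_0$ for $j=1$) also $\bgam^{j-1}_i$ vanish, so the same formula holds trivially. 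Thus $\E[\bgam^j_i\mid\cF_{j-1}]=\frac{1-j\eps}{1-(j-1)\eps}\,\bgam^{j-1}_i$ a.s.; taking expectations, using the tower property and the induction hypothesis yields $\E[\bgam^j_i]=\frac{1-j\eps}{1-(j-1)\eps}\cdot(1-(j-1)\eps)\one_{i\in S^*}=(1-j\eps)\one_{i\in S^*}$. Finally, by linearity of expectation over the finite set $I$,
\[
\E[\v(\bgam^{j-1})] \,=\, \sum_{i\in I}\E[\bgam^{j-1}_i]\cdot\v(i) \,=\, (1-(j-1)\eps)\sum_{i\in S^*}\v(i) \,=\, (1-(j-1)\eps)\cdot\v(S^*),
\]
and $\v(S^*)=\v\big(\bigcup_{b\in[m]}C^*_b\big)=\OPT(\cI)$ since $(C^*_1,\ldots,C^*_m)$ is an optimal solution.

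I do not expect a genuine obstacle here; the one point deserving a sentence of care is that \eqref{eq:blam_def} divides by $\Pr(i\in S_j\mid\cF_{j-1})$, so one should check this is a.s.\ strictly positive on $\{i\in S_{j-1}\}$. This holds because, conditioned on $\cF_{j-1}$, the $q=\eps m$ sampled configurations are independent and each contains $i$ with probability $\cover_i(\bx^j)/m_j$, so item $i$ survives iteration $j$ with probability $\big(1-\cover_i(\bx^j)/m_j\big)^q\ge\big(1-1/m_j\big)^q>0$, using $\cover_i(\bx^j)\le 1$ by feasibility of $\bx^j$ and $m_j=m(1-(j-1)\eps)\ge \eps m\ge 1$. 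With this in hand, the induction and the final summation are routine, and the only thing to execute cleanly is the $\cF_{j-1}$-measurability bookkeeping that makes the cancellation of $\Pr(i\in S_j\mid\cF_{j-1})$ legitimate.
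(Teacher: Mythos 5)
Your proof is correct and follows essentially the same route as the paper: you establish (inline) the conditional one-step identity $\E[\bgam^j_i\mid\cF_{j-1}]=\frac{1-j\eps}{1-(j-1)\eps}\bgam^{j-1}_i$, which the paper isolates as Lemma~\ref{lem:gamma_expectation}, then run the same tower-property induction and sum over $I$. The extra paragraph verifying that $\Pr(i\in S_j\mid\cF_{j-1})>0$ on $\{i\in S_{j-1}\}$ (so that the division in \eqref{eq:blam_def} is legitimate) is a small but reasonable addition that the paper leaves implicit.
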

We give the proof of \Cref{lem:gamma_value_lowerbound} in \Cref{sec:gamma_val}. 

Our next challenge is to show that there is a solution for $\LP(S_{j-1},m_j)$ whose cover is roughly $\bgam^{j-1}$, which can be alternatively stated as $\bgam^{j-1}\in P(\ell)$ where $\ell \approx m_j$, and $P(\ell)$ is as defined in~\eqref{eq:configuration_polytope}.  
To this end, we introduce a {\em linear structure} for  CMK. The main idea in linear structures is that they allow us to determine that $\bgam^j \in P(\ell)$ by checking if $\bgam^j$ satisfies a small number of linear inequalities.  

Given a vector $\bu \in \mathbb{R}_{\geq 0}^I$ which defines an inequality in the linear structure, we use concentration bounds to show that $\bgam^j\cdot \bu \leq \E[ \bgam^j\cdot \bu]+\xi$, where $\xi$ is an error terms. 
The concentration bounds we use only provide useful  guarantees  if the error term $\xi$ is of order of $\tol(\bu) = \max\left\{ \sum_{i\in C} \bu_i \,|\,C\in \cC\right\} $. We refer to the value $\tol(\bu)$ as the {\em tolerance} of $\bu$.  
We consequently  require the linear structure to be robust to additive errors of order of the tolerance. 
Also, we say that $S\subseteq I$ {\em can be packed into $\ell\in\mathbb{N}$ bins} if there are  $\ell$ configurations $C_1,\ldots, C_{\ell}\in \cC$ such that $\bigcup_{b=1}^{\ell} C_b = S$.

\begin{definition}[Linear Structure]
	\label{def:structure} Let $(I,\w,\v,m,k)$ be a CMK instance and $\delta>0$ a parameter. 
	Also, consider a subset $S\subseteq I$ such that $S$ can be packed in $\ell\in \mathbb{N}$ bins. A {\em $\delta$-linear structure} of $S$ is a set of vectors $\cL\subseteq \mathbb{R}^{I}_{\geq 0}$ 
	which satisfy the following property. 
	\begin{itemize}
		\item 
		Let $\by \in \left([0,1]\cap\mathbb{Q}\right)^I$, $0<\alpha<1$ and $t>0$, such that 
		\begin{enumerate}
			\item  $\supp(\by) \subseteq S$
			\item $\forall \bu \in \cL:~~~ \bu \cdot \by \leq \alpha \cdot \bu \cdot \one_{S} +t\cdot \tol(\bu)$ 
		\end{enumerate} 
		Then, there is a fractional solution  $\bx$ whose cover is $\by$ and $\|\bx\|\leq    \alpha \cdot \ell + 20\delta \ell  +(t+1)\cdot \exp(\delta^{-5})$. 
	\end{itemize} The {\em size} of the structure $\cL$ is $|\cL|$.
\end{definition}
Alternatively,  a $\delta$-linear structure guarantees for $S$ that for every $\by \in [0,1]^I$ with rational entries,  $0<\alpha <1$ and $t>0$, if $\supp(\by)\subseteq S$ and $\by$ satisfies $\abs{\cL}$ linear inequalities, then $\by\in P(\alpha \cdot \ell + 20\delta \ell + (t+1)\cdot \exp(\delta^{-5}))$.

In \Cref{sec:structure} we prove the next result.
\begin{lemma}
	\label{lem:structure}
	Given $\delta>0$, 
	let $I=(I,\w,\v,m,k)$ be a CMK instance, and $S\subseteq I$ a subset which can be packed into $\ell>\exp(\delta^{-5})$ bins. Then there is a $\delta$-linear structure $\cL$ of $S$ of size at most $\exp\left(\delta^{-4}\right)$. 
\end{lemma}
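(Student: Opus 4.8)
The plan is to build $\cL$ explicitly from a classification of the items by weight, in the spirit of the subset-oblivious / linear-structure arguments of~\cite{BCS09,KMS23,BEK16}. Fix a threshold $\theta$ that is a suitable polynomial in $\delta$ (say $\theta=\delta^{3}$), call an item \emph{big} if its weight exceeds $\theta$ and \emph{small} otherwise, and round the weight of every big item up to the nearest power of $1+\delta$; this leaves $O(\delta^{-1}\log\delta^{-1})$ big \emph{weight classes}. A \emph{bin type} is a rounded description of a single configuration: a feasible multiset of big weight classes together with a rounded residual budget (a cardinality budget and a weight budget) for the small items it may still accept. Since a configuration contains fewer than $1/\theta$ big items, the number of bin types is at most $\exp(\delta^{-4})$. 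Let $\cL$ consist of the indicator vector of each big weight class, the indicator vector of the small items, the weight vector $\w$, and one test vector per bin type; then $|\cL|\le\exp(\delta^{-4})$, which already gives the size bound.

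For the substantive direction, suppose $\by\in([0,1]\cap\mathbb{Q})^I$ with $\supp(\by)\subseteq S$, together with $\alpha\in(0,1)$ and $t>0$, satisfy $\bu\cdot\by\le\alpha\,(\bu\cdot\one_S)+t\cdot\tol(\bu)$ for all $\bu\in\cL$; we must produce a fractional solution $\bx$ with $\cover(\bx)=\by$ and $\|\bx\|\le\alpha\ell+20\delta\ell+(t+1)\exp(\delta^{-5})$. Fix a witness packing $C_1,\dots,C_\ell$ of $S$; without loss of generality the $C_b$ are pairwise disjoint, hence partition $S$. Each $C_b$ has a well-defined bin type $\tau_b$; group the witness bins by type and write $L_\tau=|\{b:\tau_b=\tau\}|$, so $\sum_\tau L_\tau=\ell$. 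Build $\bx$ by reusing this packing ``at scale $\alpha$'': for each type $\tau$ take about $\alpha L_\tau$ configurations of type $\tau$ and fractionally assign to their big-class slots, and to their residual space, the corresponding mass of $\by$. The class-indicator inequalities guarantee that the big-class capacity provided by the $\approx\alpha L_\tau$ type-$\tau$ configurations absorbs all of $\by$'s big mass up to a remainder of order $t\cdot\mathrm{poly}(\delta^{-1})$; the bin-type inequalities, together with the aggregate weight and small-cardinality inequalities, guarantee that $\by$'s small mass --- whose individual weights need not match those appearing in the witness packing --- can be fractionally poured into the residual spaces of the chosen configurations without wasting a constant fraction of that space.

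It remains to bound the losses. Rounding big weights up to powers of $1+\delta$ can overfill a configuration by a factor at most $1+O(\delta)$; repairing this by a standard rescaling costs $O(\delta\ell)$ extra configurations, which is absorbed into the $20\delta\ell$ term. The big-class remainders are handled by opening at most $t\cdot\mathrm{poly}(\delta^{-1})$ further configurations, and the small-item remainders (from the $t\cdot\tol(\w)$ and $t\cdot\tol(\one_{\mathrm{small}})$ slacks) by $O(t)$ more; rounding the residual budgets in the bin types, and replacing the fractional slot-assignments by actual configurations --- which must be subsets of $I$, and here $\by_i\le1$ lets one decompose each fractional assignment into integral choices with the same marginals --- costs a further additive term that, crudely bounded, is at most $(t+1)\exp(\delta^{-5})$; the hypothesis $\ell>\exp(\delta^{-5})$ ensures these additive terms do not interfere with the multiplicative ones. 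Summing the contributions yields $\|\bx\|\le\alpha\ell+20\delta\ell+(t+1)\exp(\delta^{-5})$, and $|\cL|\le\exp(\delta^{-4})$ follows by counting bin types.

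I expect the main obstacle to be exactly the joint packing step: reusing the witness packing disposes of the big items cleanly, but fitting $\by$'s small items into the residual cardinality-and-weight space of the chosen configurations is delicate, because the two packing constraints are coupled and the small items' weights are controlled only in aggregate --- a careless assignment wastes a constant fraction of the residual space (one can end up needing $\approx 2\alpha\ell$ bins), and preventing this is precisely what forces $\cL$ to carry one inequality per bin type rather than only the $O(\delta^{-1}\log\delta^{-1})$ class inequalities and the handful of aggregate ones. Getting this step to cost only $O(\delta\ell)$ while keeping $|\cL|\le\exp(\delta^{-4})$ is the heart of the matter; the remaining steps are bookkeeping carried out with the generous $\exp(\delta^{-5})$-sized slack.
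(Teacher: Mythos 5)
Your proposal correctly identifies the hard part --- fitting the small-item mass into the residual cardinality-and-weight space of the scaled-down witness packing --- but it does not resolve it, and your own closing paragraph admits as much. Two specific gaps remain when compared with the paper's proof.

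First, you classify big vs.\ small by plain weight with threshold $\theta = \delta^3$. The paper instead uses the \emph{adjusted weight} $\tw(i) = \w(i) + \tfrac{1}{k}$, calling $i$ large exactly when $\tw(i) \geq \delta$. This is not cosmetic: with plain weight, an item of negligible weight still consumes one of $k$ cardinality slots, so when $k$ is small every item is ``cardinality-heavy'' but counted as small, and a fractional First-Fit argument for small items breaks. The adjusted weight collapses the two residual constraints into one number --- \Cref{obs:config_adjusted_weight} ($\tw(C)\le 2$) and \Cref{obs:set_adjusted_weight} ($\tw(T)\le 1 \Rightarrow T\in\cC$) are what make \Cref{lem:fractional_first_fit} (fractional First-Fit for small-by-adjusted-weight items) work. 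Your plan, which keeps weight and cardinality as separate residual budgets per bin type, leaves you with exactly the coupled two-dimensional fitting problem you flag as the obstacle.

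Second, and more fundamentally, the paper does \emph{not} attach test vectors to bin types at all. It partitions the small items into classes $\cK_{\bp}$, one per type $\bp$, by asking which witness configuration $D_b$ each small item lives in and what $\type(D_b)$ is; it then subdivides each $\cK_{\bp}$ by (adjusted-)weight into subclasses $H_{\bp,j}$ and places in $\cL$ the vectors $\one_{H_{\bp,j}}$ and $\bw\wedge\one_{H_{\bp,j}}$. The point is \Cref{lem:class_weight_and_card}: $\w(\cK_\bp) \le \eta(\bp)(1-\w(\bp))$ and $|\cK_\bp| \le \eta(\bp)(k - \|\bp\|)$, so an $\alpha$-scaled constraint on each class relates the selected small mass \emph{of that class} to the residual capacity of $\alpha\eta(\bp)$ bins \emph{of the same type}; the assignment LP with the partition-matroid / Grandoni few-fractional-entries argument (\Cref{claim:grandoni}) then makes the fit go through with only $O(\delta\ell) + \mathrm{poly}(\delta^{-1})$ slack. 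Without this classification, ``one test vector per bin type'' has no content --- there is no canonical item set attached to a type until you use the witness packing to define one --- and aggregate small-item constraints (total weight, total cardinality) cannot prevent the fractional assignment from trying to put mass into bins whose residual space has the wrong shape. So the missing idea is precisely the $\cK_\bp$ classification, and that is where your sketch stops being a proof.
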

The above lemma is an adaptation of a construction of~\cite{KMS23} used to solve the vector bin packing problem, in which there are additional requirements for the packing of $S$.  Our adaptation leverages the relative simplicity of a cardinality constraint to omit these additional requirements.

We use \Cref{lem:structure} to show the existence of an $\eps^2$-linear structure of $S^*$, where $S^*$ is the set of items in an optimal solution.  We use the linear structure to show the existence of a fractional solution $\bz^j$ such that $\cover(\bz^j)=\bgam^{j-1}$ and $\|\bz^j\|\approx (1-(j-1)\eps)m$ for every $j\in [\eps^{1}]$. A simple scaling is then used to construct a solution for $\LP(S_{j-1},m_j)$ and establish the following lower bound on $\v(\bx^j)$.
	\begin{restatable}{lemma}{xingamma}
	\label{lem:value_in_terms_of_gamma}
	With probability at least $1-\exp(-\eps^{-20})$, it holds that 
	$$
	\forall j\in [\eps^{-1}]:~~~~
	\v(\bx^j) \,\geq \, (1-\eps)\cdot\left( 1- \frac{30\cdot \eps^2}{1-(j-1)\eps}\right)\cdot  \v(\bgam^{j-1}).
	$$
\end{restatable}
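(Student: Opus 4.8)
\textbf{Proof plan for Lemma~\ref{lem:value_in_terms_of_gamma}.}
The plan is to combine the three ingredients assembled above the statement: the value lower bound on $\bgam^{j-1}$ from \Cref{lem:gamma_value_lowerbound} (in its pathwise, pre-expectation form), the linear structure of $S^*$ from \Cref{lem:structure}, and a concentration argument that shows $\bgam^{j-1}$ satisfies, with high probability, all the inequalities of the linear structure with a small slack. First I would fix $\delta = \eps^2$ and apply \Cref{lem:structure} to the set $S^*$, which can be packed into $m$ bins; this is legitimate since $\cI$ is $\eps$-simple, so $m > \exp(\exp(\eps^{-30})) > \exp(\eps^{-10}) = \exp(\delta^{-5})$. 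This yields an $\eps^2$-linear structure $\cL$ of $S^*$ of size at most $\exp(\eps^{-8})$.

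Next I would handle the concentration step, which is the heart of the argument. Fix $j \in [\eps^{-1}]$ and a vector $\bu \in \cL$. Recall $\bgam^{j-1}_i = \one_{i \in S_{j-1}} \cdot \blam^{j-1}_i$ where $\blam^{j-1}$ is $\cF_{j-2}$-measurable and the event $\{i \in S_{j-1}\}$ depends on the sampling in iteration $j-1$. Conditioning on $\cF_{j-2}$, the quantity $\bu \cdot \bgam^{j-1} = \sum_{i} \bu_i \blam^{j-1}_i \one_{i \in S_{j-1}}$ is a function of the $q = \eps m$ independently sampled configurations $R^{j-1}_1, \dots, R^{j-1}_q$; changing one sampled configuration changes the sum by at most $\max_{C \in \cC}\sum_{i \in C}\bu_i \blam^{j-1}_i \le (\text{a bound in terms of } \tol(\bu))$, after using that $\blam^{j-1}_i$ is bounded by something like $\eps^{-2}$ times the inverse survival probability, which I would control using that survival probabilities are bounded away from $0$ (each item survives an iteration with probability at least, roughly, $1 - \eps$, and there are at most $\eps^{-1}$ iterations). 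A bounded-differences (McDiarmid) inequality then gives $\Pr(\bu \cdot \bgam^{j-1} > \E[\bu \cdot \bgam^{j-1} \mid \cF_{j-2}] + \xi \mid \cF_{j-2}) \le \exp(-\Omega(\xi^2/(q \cdot \tol(\bu)^2 \cdot \mathrm{poly}(\eps^{-1}))))$. Choosing $\xi$ of order $\sqrt{q}\cdot \tol(\bu)\cdot \mathrm{poly}(\eps^{-1})$ — i.e. $t \cdot \tol(\bu)$ with $t = \mathrm{poly}(\eps^{-1})\sqrt{m}$ — makes this failure probability at most, say, $\exp(-\eps^{-30})$ for each fixed $(j, \bu)$. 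Then I would union-bound over all $j \in [\eps^{-1}]$ and all $\bu \in \cL$: the number of pairs is at most $\eps^{-1}\cdot\exp(\eps^{-8}) \le \exp(\eps^{-9})$, so the total failure probability is at most $\exp(\eps^{-9}) \cdot \exp(-\eps^{-30}) \le \exp(-\eps^{-20})$, as required. I also need to relate $\E[\bu \cdot \bgam^{j-1} \mid \cF_{j-2}]$ to $\alpha \cdot \bu \cdot \one_{S^*}$ with $\alpha = 1 - (j-1)\eps$; this follows from the identity (established by the induction used for \Cref{lem:gamma_value_lowerbound}) that $\E[\bgam^{j-1}_i \mid \cF_{j-2}]$ equals $(1 - (j-1)\eps)\cdot \one_{i \in S^*}$ up to the $\cF_{j-2}$-conditioning — more precisely I would use that $\E[\one_{i \in S_{j-1}} \mid \cF_{j-2}] \cdot \blam^{j-1}_i = \frac{1-(j-1)\eps}{1-(j-2)\eps}\bgam^{j-2}_i$, and then iterate the bound down through previous $\sigma$-algebras, picking up one slack term $t_\ell \cdot \tol(\bu)$ per level; the total slack remains $O(\mathrm{poly}(\eps^{-1})\sqrt{m}\cdot\tol(\bu))$, which is still $t \cdot \tol(\bu)$ for an acceptable $t$.

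Then I would invoke the linear structure: on the high-probability event, $\by = \bgam^{j-1}$ (which has rational entries, since all data and survival probabilities are rational, and support contained in $S^*$) satisfies $\bu \cdot \by \le \alpha\cdot \bu \cdot \one_{S^*} + t\cdot\tol(\bu)$ for every $\bu \in \cL$, with $\alpha = 1 - (j-1)\eps$. \Cref{def:structure} with $\delta = \eps^2$ then produces a fractional solution $\bz^j$ with $\cover(\bz^j) = \bgam^{j-1}$ and $\|\bz^j\| \le \alpha \ell + 20\eps^2\ell + (t+1)\exp(\eps^{-10})$ where $\ell = m$. Using $m > \exp(\exp(\eps^{-30}))$ to absorb the additive term $(t+1)\exp(\eps^{-10})$ into $\eps^2 m \cdot (1-(j-1)\eps)$ — here $1 - (j-1)\eps \ge \eps$ so this lower bound on the $\alpha\ell$ part is at least $\eps m$, comfortably larger than the additive term — I conclude $\|\bz^j\| \le (1-(j-1)\eps)\cdot m + 30\eps^2 \cdot m \le (1-(j-1)\eps)\cdot m \cdot \left(1 + \frac{30\eps^2}{1-(j-1)\eps}\right)$. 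Finally I would scale: set $\bz'^j = \bz^j / \left(1 + \frac{30\eps^2}{1-(j-1)\eps}\right)$, which is a feasible fractional solution with $\|\bz'^j\| \le m_j$ and support in $S_{j-1}$ (after padding with $\emptyset$-configurations to make $\|\bz'^j\| = m_j$ exactly, which does not change the cover), hence a feasible point of $\LP(S_{j-1}, m_j)$; its value is $\v(\bgam^{j-1})/\left(1 + \frac{30\eps^2}{1-(j-1)\eps}\right) \ge \left(1 - \frac{30\eps^2}{1-(j-1)\eps}\right)\v(\bgam^{j-1})$. Since $\bx^j$ is a $(1-\eps)$-approximate solution to $\LP(S_{j-1},m_j)$, we get $\v(\bx^j) \ge (1-\eps)\left(1 - \frac{30\eps^2}{1-(j-1)\eps}\right)\v(\bgam^{j-1})$, which is exactly the claimed bound.

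\textbf{Main obstacle.} The delicate part is the concentration/union-bound bookkeeping: I must verify that the per-step bounded difference is genuinely $O(\tol(\bu)\cdot\mathrm{poly}(\eps^{-1}))$, which requires uniform lower bounds on the conditional survival probabilities $\Pr(i \in S_j \mid \cF_{j-1})$ (so that the factor $1/\Pr(i\in S_j \mid \cF_{j-1})$ in $\blam^j$, and its products over iterations, stay polynomial in $\eps^{-1}$), and that after telescoping the conditional-expectation identity down through $\eps^{-1}$ levels the accumulated slack still fits the tolerance budget allowed by \Cref{def:structure}. I expect this to be exactly where the $\eps$-simplicity hypothesis $m > \exp(\exp(\eps^{-30}))$ gets used — it makes $\sqrt{m}\cdot\mathrm{poly}(\eps^{-1})$ small relative to $m$, so the additive errors are lower-order, and it makes the failure probability $\exp(-\eps^{-30})$ dominate the $\exp(\eps^{-9})$ union-bound cost. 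Everything else (rationality of $\bgam^{j-1}$, the scaling, padding to exact size, the $(1-\eps)$-approximation of the LP) is routine.
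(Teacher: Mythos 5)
Your plan matches the paper's proof essentially step by step: the same application of \Cref{lem:structure} with $\delta=\eps^2$ to $S^*$ (packable into $\ell=m$ bins), the same per-iteration conditional concentration of $\bu\cdot\bgam^{j}$ around $\E[\bu\cdot\bgam^{j}\mid\cF_{j-1}]$ followed by a telescoping argument (the paper's \Cref{claim:auxCV}) and a union bound over all $j\in[\eps^{-1}]$ and $\bu\in\cL$, and the same final scaling step to turn $\bz^j$ into a feasible point of $\LP(S_{j-1},m_j)$ and compare with the $(1-\eps)$-approximate $\bx^j$. The only cosmetic divergences are that the paper uses the self-bounding-function inequality (\Cref{lem:generalized_concentration} together with \Cref{lem:boundFunc}) where you propose conditional McDiarmid — both give an $\exp\left(-\Omega(t^2/q)\right)$ tail — and that the paper sharpens your $\mathrm{poly}(\eps^{-1})$ bound on $\blam^{j}_i$ to the exact bound $\blam^j_i\le 1$ (\Cref{lem:ZO}), which follows from precisely the survival-probability bound $\Pr(i\in S_j\mid\cF_{j-1})\ge\one_{i\in S_{j-1}}\frac{1-j\eps}{1-(j-1)\eps}$ (\Cref{lem:survival_prob}) you allude to, so the per-coordinate increment is cleanly bounded by $\tol(\bu)$ and the bookkeeping you flag as the main obstacle goes through without any extra $\mathrm{poly}(\eps^{-1})$ factors.
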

We give the proof of \Cref{lem:value_in_terms_of_gamma} in \Cref{sec:in_gamma}. Together, \Cref{lem:value_in_terms_of_gamma} and \Cref{lem:gamma_value_lowerbound} essentially give the formal proof of \eqref{eq:overview_frac_value}. 

Finally, using \Cref{lem:Upper_Bound_V,lem:value_in_terms_of_gamma,lem:gamma_value_lowerbound}, we obtain the next result, whose proof is given in \Cref{sec:combine}.
\begin{restatable}{lemma}{approximation}
	\label{lem:main_analysis}
$\E[V]\,\geq \, (1-300\cdot \eps^{\frac{1}{2}})\cdot \OPT(\II)$.
\end{restatable}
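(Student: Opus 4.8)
The plan is to chain together the three preceding lemmas. First I would combine Lemmas~\ref{lem:value_in_terms_of_gamma} and~\ref{lem:gamma_value_lowerbound} to get a lower bound on $\E[\v(\bx^j)]$. Lemma~\ref{lem:value_in_terms_of_gamma} holds on an event $\cE$ of probability at least $1-\exp(-\eps^{-20})$, and on the complement we only know $\v(\bx^j)\geq 0$. So I would write $\E[\v(\bx^j)] \geq \E[\v(\bx^j)\cdot \one_{\cE}] \geq (1-\eps)\left(1-\tfrac{30\eps^2}{1-(j-1)\eps}\right)\E[\v(\bgam^{j-1})\cdot\one_{\cE}]$, and then bound $\E[\v(\bgam^{j-1})\cdot\one_{\cE}]\geq \E[\v(\bgam^{j-1})] - \Pr(\overline{\cE})\cdot \max\v(\bgam^{j-1})$. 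Since $\bgam^{j-1}\in[0,1]^I$, its value is at most $\v(I)\leq |\cI|\cdot 2^{|\cI|}$ (a crude bound suffices), and $\Pr(\overline{\cE})\leq \exp(-\eps^{-20})$ kills this term entirely — the loss is negligible, say at most $\exp(-\eps^{-10})\cdot\OPT$ once we note $\OPT>0$ in the nontrivial case, or we just absorb it into the $\eps^{1/2}$ slack. Hence $\E[\v(\bx^j)] \geq (1-\eps)\left(1-\tfrac{30\eps^2}{1-(j-1)\eps}\right)(1-(j-1)\eps)\cdot\OPT(\cI) - (\text{negligible})$.

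Next I would plug this into Lemma~\ref{lem:Upper_Bound_V}. We get
\[
\E[V] \;\geq\; (\eps-\eps^{3/2})\sum_{j=1}^{\eps^{-1}-\eps^{-1/2}} \frac{\E[\v(\bx^j)]}{1-(j-1)\eps}.
\]
For each term in the sum, $j$ ranges only up to $\eps^{-1}-\eps^{-1/2}$, so $1-(j-1)\eps \geq \eps^{1/2}$, which means the correction factor $1-\tfrac{30\eps^2}{1-(j-1)\eps}$ is at least $1-30\eps^{3/2}$. Dividing the lower bound on $\E[\v(\bx^j)]$ by $(1-(j-1)\eps)$ cancels that factor cleanly, leaving each summand at least $(1-\eps)(1-30\eps^{3/2})\cdot\OPT(\cI)$ minus a negligible term. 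The number of summands is $\eps^{-1}-\eps^{-1/2}$, so
\[
\E[V] \;\geq\; (\eps-\eps^{3/2})\cdot(\eps^{-1}-\eps^{-1/2})\cdot(1-\eps)(1-30\eps^{3/2})\cdot\OPT(\cI) - (\text{negligible}).
\]
Now $(\eps-\eps^{3/2})(\eps^{-1}-\eps^{-1/2}) = (1-\eps^{1/2})^2 \geq 1-2\eps^{1/2}$, and multiplying out $(1-2\eps^{1/2})(1-\eps)(1-30\eps^{3/2}) \geq 1 - 2\eps^{1/2} - \eps - 30\eps^{3/2} \geq 1 - 300\eps^{1/2}$ for $\eps\in(0,0.1)$. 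So the final bound $\E[V]\geq(1-300\eps^{1/2})\OPT(\cI)$ follows, with the negligible error term easily swallowed by the generous constant $300$.

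This is essentially a routine calculation once the three lemmas are in hand; there is no real obstacle. The one point requiring a little care is handling the event $\overline{\cE}$ where Lemma~\ref{lem:value_in_terms_of_gamma} may fail: one must check that $\v(\bx^j)$ and $\v(\bgam^{j-1})$ are bounded (by something like $\v(I)$) so that multiplying by $\Pr(\overline{\cE})\leq\exp(-\eps^{-20})$ yields a quantity dwarfed by $\eps^{1/2}\cdot\OPT$ — this uses that $\OPT>0$ whenever the instance is nontrivial (if $\OPT=0$ the lemma is vacuous). The other mild subtlety is that the sum in Lemma~\ref{lem:Upper_Bound_V} stops at $\eps^{-1}-\eps^{-1/2}$ rather than $\eps^{-1}$; but this is exactly what makes the denominators $1-(j-1)\eps$ bounded below by $\eps^{1/2}$, which is what we need to control the $\tfrac{30\eps^2}{1-(j-1)\eps}$ correction. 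Both of these are anticipated by the shape of the constants in the lemma statements, so the proof is just bookkeeping.
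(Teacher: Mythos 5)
Your overall structure matches the paper's: apply Lemma~\ref{lem:Upper_Bound_V}, restrict to the high-probability event of Lemma~\ref{lem:value_in_terms_of_gamma}, plug in Lemma~\ref{lem:gamma_value_lowerbound}, and do arithmetic. The arithmetic is fine (in fact your bound $\frac{30\eps^2}{1-(j-1)\eps}\leq 30\eps^{3/2}$ is tighter than the paper's $60\eps^{3/2}$).

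However, there is a genuine gap in how you handle the complementary event $\overline{\cE}$. You bound $\max\v(\bgam^{j-1})\leq\v(I)\leq|\cI|\cdot 2^{|\cI|}$ and then assert this is ``killed'' by $\Pr(\overline{\cE})\leq\exp(-\eps^{-20})$, yielding a loss of at most $\exp(-\eps^{-10})\cdot\OPT$. This does not follow: $\v(I)$ can be arbitrarily larger than $\OPT(\cI)$ (e.g., many items of huge value and weight $>1$ that fit in no configuration), and the instance size $|\cI|$ is not controlled by $\eps$, so $\exp(-\eps^{-20})\cdot\v(I)$ cannot be bounded by any $\eps$-dependent multiple of $\OPT$. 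The ``absorb it into the $\eps^{1/2}$ slack'' alternative has the same problem, since the slack is measured in units of $\OPT$.

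The fix is short and is exactly what the paper does: use the much stronger bound $\v(\bgam^{j-1})\leq\OPT(\cI)$, which holds because $\supp(\bgam^{j-1})\subseteq S^*$ (by construction $\bgam^0=\one_{S^*}$ and the support can only shrink) and $\bgam^{j-1}\in[0,1]^I$ (Lemma~\ref{lem:ZO}), so $\v(\bgam^{j-1})\leq\v(\one_{S^*})=\v(S^*)=\OPT(\cI)$. With this, the error from $\overline{\cE}$ is at most $\exp(-\eps^{-20})\cdot\eps^{-1/2}\cdot\OPT$ after accounting for the $\eps^{-1}$ summands and the $\frac{1}{1-(j-1)\eps}\leq\eps^{-1/2}$ weights, which is indeed negligible, and your calculation then goes through.
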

\Cref{thm:simple} follows directly from \Cref{lem:main_analysis}.

\section{The Analysis}
\label{sec:analysisFull}
Consider an execution of \Cref{alg:randomized_rounding} with the input $\cI = (I,\w,\v,m,k)$ and $\eps>0$. We use the notation and definitions as given in \Cref{sec:algorithm}. Also,  let $\by^j = \cover(\bx^j)$ be the coverage of $\bx^j$. Observe $\bx^j$ and $\by^j$ are $\cF_{j-1}$-measurable. That is, their values are determined by the outcomes of the samples up to (and including) the $j-1$ iteration. 
As in \Cref{sec:algorithm} we let  $(C^*_1,\ldots, C^*_m)$ be an optimal solution for the instance $\cI$. We define $S^* =\bigcup_{b=1}^{m} C^*_b$ and $\OPT= \vv(S^*)=\OPT(\cI)$.

\subsection{Concentration Bounds}
\label{sec:concentration_bounds}

Before we give the proofs of \Cref{lem:Upper_Bound_V,lem:value_in_terms_of_gamma,lem:gamma_value_lowerbound,lem:main_analysis}, we need to introduce some concentration bounds for {\em self-bounding functions}.
\begin{definition}
	\label{def:self_bounding}
	A non-negative function $f:\mathcal{X}^n\rightarrow \mathbb{R}_{\geq0}$ is called self-bounding if there exist $n$ functions $f_1,\ldots,f_n:\mathcal{X}^{n-1}\rightarrow \mathcal{R}$ such that for all $x=(x_1,\ldots,x_n)\in\mathcal{X}^n$, $$\begin{aligned}&0\leq f(x)-f_t(x^{(t)}) \leq 1, \textnormal{~~~~~~~and~~}\\ &\sum_{t=1}^{n}\left(f(x)-f_t(x^{(t)})\right) \leq f(x),
	\end{aligned}
	$$ where $x^{(t)}=(x_1,\ldots,x_{t-1},x_{t+1},\ldots,x_n)\in\mathcal{X}^{n-1}$ is obtained by dropping the $t$-th component of $x$. 
\end{definition}

We rely on the following concentration bound due to Boucheron, Lugosi and Massart~\cite{BCS09}.
\begin{lemma}
	\label{lem:selfBoundLem}
	Let $f:\mathcal{X}^n \rightarrow \mathbb{R}_{\geq0}$ be a self-bounding function and let $X_1,\ldots,X_n\in \mathcal{X}$ be independent random variables. Define $Z=f(X_1,\ldots,X_n)$. Then the following holds:
	\begin{enumerate}
		\item $\Pr\left(Z\geq \E[Z]+t\right)\leq \exp\left(-\frac{t^2}{2\cdot\E[Z]+\frac{t}{3}}\right)$, for every $t\geq0$.
		\item $\Pr\left(Z\leq \E[Z]-t\right)\leq \exp\left(-\frac{t^2}{2\cdot\E[Z]}\right)$, for every $t>0$.
	\end{enumerate}
\end{lemma}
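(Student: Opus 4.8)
\emph{Proof plan.} This is the self-bounding concentration inequality of Boucheron, Lugosi and Massart, so one legitimate option is simply to invoke it from the cited reference; for completeness I sketch the standard entropy-method proof. Write $v=\E[Z]$, $Z_t=f_t(X^{(t)})$, and $\phi(x)=e^{x}-x-1\ge 0$. Everything reduces to the single moment-generating-function estimate
\begin{equation}
\label{eq:logmgf}
\log \E\!\left[e^{\lambda(Z-v)}\right]\,\le\, v\,\phi(\lambda)\qquad\text{for every }\lambda\in\mathbb{R}.
\end{equation}
Given \eqref{eq:logmgf}, the two tails follow from the Chernoff inequality together with the elementary estimates $\phi(\lambda)\le \tfrac12\lambda^{2}$ for $\lambda\le 0$ and $\phi(\lambda)\le \frac{\lambda^{2}/2}{1-\lambda/3}$ for $0\le\lambda<3$: for $\lambda<0$, $\Pr(Z\le v-t)\le \exp(\lambda t+v\phi(\lambda))\le\exp(\lambda t+v\lambda^{2}/2)$, minimised at $\lambda=-t/v$, giving $\exp(-t^{2}/(2v))$; for $0<\lambda<3$, $\Pr(Z\ge v+t)\le\exp(-\lambda t+v\phi(\lambda))$, and the usual Bernstein optimisation yields the stated upper-tail bound.

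\emph{Tensorisation and the per-coordinate bound.} For a nonnegative integrable $W$ set $\mathrm{Ent}(W)=\E[W\log W]-\E[W]\log\E[W]$, and let $\mathrm{Ent}^{(t)}$ be the same functional applied to the randomness of $X_t$ only, with $X^{(t)}$ frozen. By independence of $X_1,\dots,X_n$ entropy is subadditive, so with $F(\lambda)=\E[e^{\lambda Z}]$,
\begin{equation}
\label{eq:tensor}
\lambda F'(\lambda)-F(\lambda)\log F(\lambda)\,=\,\mathrm{Ent}\!\left(e^{\lambda Z}\right)\,\le\,\sum_{t=1}^{n}\E\!\left[\mathrm{Ent}^{(t)}\!\left(e^{\lambda Z}\right)\right].
\end{equation}
To bound one term, condition on $X^{(t)}$ (so $Z_t$ is a constant) and apply the variational inequality $\mathrm{Ent}(W)\le\E[W\log(W/u)]-\E[W]+u$, valid for any constant $u>0$, with $W=e^{\lambda Z}$ and $u=e^{\lambda Z_t}$. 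Writing $s=Z-Z_t\in[0,1]$ and simplifying gives $\mathrm{Ent}^{(t)}(e^{\lambda Z})\le\E^{(t)}[e^{\lambda Z}\,\phi(-\lambda s)]$; since $s\mapsto\phi(-\lambda s)$ is convex and vanishes at $0$, we have $\phi(-\lambda s)\le s\,\phi(-\lambda)$ on $[0,1]$, hence $\mathrm{Ent}^{(t)}(e^{\lambda Z})\le\phi(-\lambda)\,\E^{(t)}[e^{\lambda Z}(Z-Z_t)]$.

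\emph{Differential inequality and its solution.} Plugging this into \eqref{eq:tensor}, taking expectations and using the self-bounding relation $\sum_t(Z-Z_t)\le Z$ (with $e^{\lambda Z}\ge 0$), we obtain $\lambda F'(\lambda)-F(\lambda)\log F(\lambda)\le\phi(-\lambda)\,\E[Ze^{\lambda Z}]=\phi(-\lambda)F'(\lambda)$. Dividing by $F$ and writing $G=\log F$ gives $(\lambda-\phi(-\lambda))G'(\lambda)\le G(\lambda)$, i.e. $(1-e^{-\lambda})G'(\lambda)\le G(\lambda)$ with $G(0)=0$, $G'(0)=v$; equivalently, for $\bar G(\lambda)=\log\E[e^{\lambda(Z-v)}]=G(\lambda)-\lambda v$,
\[
(1-e^{-\lambda})\,\bar G'(\lambda)-\bar G(\lambda)\,\le\, v\,\phi(-\lambda),\qquad \bar G(0)=\bar G'(0)=0 .
\]
The map $\lambda\mapsto v\,\phi(\lambda)$ satisfies this relation with equality and the same initial data, so a Gronwall/comparison argument — using the explicit integrating factor $(e^{\lambda}-1)^{-1}$ on each of $(0,\infty)$ and $(-\infty,0)$, and $\bar G(\lambda)=O(\lambda^{2})$ near $0$ — yields $\bar G(\lambda)\le v\,\phi(\lambda)$ for all $\lambda$, which is \eqref{eq:logmgf}. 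Combined with the Chernoff step above this proves both parts.

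\emph{Main obstacle.} The two steps needing genuine care are the per-coordinate estimate (choosing the reference point $u=e^{\lambda Z_t}$ and extracting the clean factor $\phi(-\lambda)(Z-Z_t)$ from the variational bound via convexity of $\phi$) and the comparison step for the differential inequality, in particular its behaviour at $\lambda=0$ where $1-e^{-\lambda}$ vanishes; there one uses $\bar G'(0)=0$ together with the variance bound $\mathrm{Var}(Z)\le v$, which is itself a consequence of the self-bounding property. Everything else is routine manipulation of the convex function $\phi$.
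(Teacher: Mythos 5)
The paper does not prove this lemma; it simply cites Boucheron, Lugosi and Massart via \cite{BCS09}, so your primary option (invoke the reference) coincides with what the paper does. Your entropy-method sketch is a correct rendering of the BLM argument: the tensorisation of entropy, the variational bound with reference point $u=e^{\lambda Z_t}$, the convexity estimate $\phi(-\lambda s)\le s\,\phi(-\lambda)$ for $s\in[0,1]$, and the differential inequality $(1-e^{-\lambda})G'(\lambda)\le G(\lambda)$ with solution $\bar{G}(\lambda)\le v\,\phi(\lambda)$ are exactly the standard steps.

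One caveat: the Chernoff step applied to $\log\E[e^{\lambda(Z-v)}]\le v\,\phi(\lambda)$, using $\phi(\lambda)\le\frac{\lambda^2/2}{1-\lambda/3}$ and the usual Bernstein optimisation, yields the upper-tail bound $\exp\!\left(-\frac{t^2}{2\E[Z]+\frac{2t}{3}}\right)$ rather than $\exp\!\left(-\frac{t^2}{2\E[Z]+\frac{t}{3}}\right)$ as stated in the lemma. The latter is a strictly stronger claim and does not follow from your sketch, nor from any bound of the form $\phi(\lambda)\le\frac{\lambda^2/2}{1-c\lambda}$ with $c<\tfrac13$ (the $\lambda^3$ coefficient of $\phi$ already forces $c\ge\tfrac13$). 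This is almost certainly a typo in the paper's statement, since BLM prove the $\tfrac{2t}{3}$ version. It is harmless for the paper, which uses only the lower-tail bound (item 2), where the two versions agree and which your sketch derives correctly.
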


The setting considered in \cite{BCS09} can be trivially extended to a setting in which the random variable are conditionally independent on a $\sigma$-algebra $\cG$ (see \cite{CT97} for the definition of conditional independence) and  the function $f$ itself is a $\cG$-measurable random function.
This is formally stated in the next lemma.
\begin{lemma}
	\label{lem:generalized_concentration}
	Let $(\Omega, \cF, \Pr)$ be a finite probability space and let $\cG\subseteq \cF$ be a $\sigma$-algebra. Let $D$ be a finite set of self-bounding function from $\chi^\ell$ to $\mathbb{R}_{\geq 0}$ and let $f\in D$ be a $\cG$-measurable random function. Also, let $ X_1,\ldots, X_{\ell} \in \chi$ be random variables which are conditionally independent given $\cG$.  Define $Z=f(X_1,\ldots,X_n)$. Then the following holds:
	\begin{enumerate}
		\item $\Pr\left(  Z\geq \E[Z\,|\,\cG]+t\,|\,\cG\right)\leq \exp\left(-\frac{t^2}{2\cdot \E[Z\,|\,\cG]+\frac{t}{3}}\right)$, for every $t\geq0$.
		\item $\Pr\left(Z\leq \E[Z\,|\,\cG]-t\,|\,\cG\right)\leq \exp\left(-\frac{t^2}{2\cdot\E[Z\,|\,\cG]}\right)$ for every $t\geq 0$.
	\end{enumerate}
\end{lemma}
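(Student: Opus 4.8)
The plan is to reduce the conditional statement to the unconditional Lemma~\ref{lem:selfBoundLem} by working atom by atom on the finite $\sigma$-algebra $\cG$. Since $(\Omega,\cF,\Pr)$ is finite, $\cG$ is generated by a finite partition $\{A_1,\ldots,A_r\}$ of $\Omega$ into atoms; we may discard the atoms of probability zero. For each atom $A_s$ with $\Pr(A_s)>0$ I would consider the conditional probability space $(A_s,\cF|_{A_s},\Pr(\cdot\mid A_s))$. On $A_s$, conditioning on $\cG$ coincides with conditioning on $A_s$: in particular $\E[Z\mid \cG]$ restricted to $A_s$ equals $\E[Z\mid A_s]$, and $\Pr(\cdot\mid\cG)$ restricted to $A_s$ equals $\Pr(\cdot\mid A_s)$.

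First I would use the hypotheses on $f$ and the $X_i$'s to identify the structure on a fixed atom. Because $f$ is $\cG$-measurable, it is constant on $A_s$, say $f\equiv f^{(s)}$ there, and $f^{(s)}\in D$ is a (deterministic) self-bounding function. Because $X_1,\ldots,X_\ell$ are conditionally independent given $\cG$, their restrictions to $A_s$ are independent under $\Pr(\cdot\mid A_s)$. Hence on $A_s$ the variable $Z=f(X_1,\ldots,X_\ell)$ is exactly $f^{(s)}(X_1,\ldots,X_\ell)$ with $f^{(s)}$ self-bounding and the arguments independent $-$ precisely the setting of Lemma~\ref{lem:selfBoundLem}, applied inside $(A_s,\cF|_{A_s},\Pr(\cdot\mid A_s))$.

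Applying Lemma~\ref{lem:selfBoundLem}(1) there gives, for every $t\ge 0$,
$$\Pr\!\left(Z\ge \E[Z\mid A_s]+t \;\middle|\; A_s\right)\le \exp\!\left(-\frac{t^2}{2\,\E[Z\mid A_s]+\frac{t}{3}}\right),$$
and Lemma~\ref{lem:selfBoundLem}(2) gives the analogous lower-tail bound with denominator $2\,\E[Z\mid A_s]$. Since $\E[Z\mid\cG]=\E[Z\mid A_s]$ and $\Pr(\cdot\mid\cG)=\Pr(\cdot\mid A_s)$ on $A_s$, these inequalities are exactly the asserted bounds on $A_s$; as they hold on every atom of positive probability, they hold $\Pr$-almost surely, which is the statement of the lemma. (For $t=0$ in item~1 the right-hand side is the trivial bound $1$.)

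I do not expect a genuine obstacle here: the only points requiring care are the bookkeeping that a $\cG$-measurable function is constant on atoms, that ``conditionally independent given $\cG$'' unwinds to independence under each $\Pr(\cdot\mid A_s)$, and the harmless handling of null atoms $-$ none of which is technically deep. The content of the lemma is inherited verbatim from Lemma~\ref{lem:selfBoundLem}, and the finiteness of $\Omega$ (hence of the partition generating $\cG$) is exactly what makes the atomwise argument legitimate.
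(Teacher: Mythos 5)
Your argument is correct. The paper does not actually spell out a proof of this lemma---it only remarks that the unconditional bound of Lemma~\ref{lem:selfBoundLem} "can be trivially extended" to the conditional setting---and the atomwise reduction you give is precisely the standard way to make that remark rigorous in a finite probability space: a $\cG$-measurable random function is constant on each atom of $\cG$, conditional independence given $\cG$ unwinds to genuine independence under $\Pr(\cdot\mid A_s)$ on each atom $A_s$, and then Lemma~\ref{lem:selfBoundLem} applies verbatim inside each conditional space. Your handling of null atoms and the $t=0$ case is also fine, so the proposal supplies exactly the details the authors left implicit.
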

The generalization in \Cref{lem:generalized_concentration} is required since the variables $R^j_1,\ldots, R^j_q$ are dependent for $q>1$ while being conditionally independent given the variables $R^{j'}_{b}$ for every $j'<j$ and $b\in [q]$.
The following construction for self-bounding function was shown in \cite{CKS23}. 
\begin{lemma}
	\label{lem:boundFunc}
	Let $\mathcal{I}=(I,\w,\v,m,k)$ be a CMK instance, and $h:I\rightarrow \mathbb{R}_{\geq0}$. For some $\ell \in \mathbb{N}_{>0}$ define $f:\mathcal{C}^{\ell}\to\mathbb{R}_{\geq 0}$ by $f(C_1,\ldots,C_\ell)=\frac{h(\bigcup_{i\in [\ell]}C_i)}{\eta}$  where
	$\eta \geq \max_{C\in\mathcal{C}}h(C)$.
	Then $f$ is  self-bounding.
\end{lemma}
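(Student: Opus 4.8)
The plan is to write down explicitly the witness functions $f_1,\ldots,f_\ell$ demanded by \Cref{def:self_bounding} and then verify the two required inequalities directly. (Throughout one may assume $\eta>0$; if $\eta=0$ then $h\equiv 0$, $f\equiv 0$, and the statement is trivial.) The natural choice is to take $f_t$ to be $f$ evaluated on the tuple with its $t$-th coordinate deleted: for $C^{(t)}=(C_1,\ldots,C_{t-1},C_{t+1},\ldots,C_\ell)\in\cC^{\ell-1}$, set
$$
f_t\bigl(C^{(t)}\bigr)\,=\,\frac{1}{\eta}\,h\!\left(\bigcup_{i\in[\ell]\setminus\{t\}}C_i\right).
$$
Fixing a tuple $C=(C_1,\ldots,C_\ell)$ and writing $A_t = C_t\setminus\bigcup_{i\neq t}C_i$ for the set of items that appear in $C_t$ but in no other configuration, additivity of $h$ over items yields the key identity $f(C)-f_t(C^{(t)})=\tfrac{1}{\eta}\sum_{b\in A_t}h(b)=\tfrac{1}{\eta}\,h(A_t)$, and everything reduces to reasoning about the sets $A_t$.

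From this identity the pointwise condition $0\le f(C)-f_t(C^{(t)})\le 1$ is immediate: non-negativity follows from $h\ge 0$, and the upper bound from $A_t\subseteq C_t\in\cC$, so that $h(A_t)\le h(C_t)\le\max_{C\in\cC}h(C)\le\eta$.

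For the self-bounding sum condition the only substantive point is that the sets $A_1,\ldots,A_\ell$ are pairwise disjoint: an item in $A_t$ lies in $C_t$ but in no $C_{t'}$ with $t'\neq t$, hence cannot belong to $A_{t'}$; and clearly $\bigcup_t A_t\subseteq\bigcup_i C_i$. Combining this with additivity and non-negativity of $h$ gives
$$
\sum_{t=1}^{\ell}\bigl(f(C)-f_t(C^{(t)})\bigr)\,=\,\frac{1}{\eta}\sum_{t=1}^{\ell}h(A_t)\,=\,\frac{1}{\eta}\,h\!\left(\bigcup_{t=1}^{\ell}A_t\right)\,\le\,\frac{1}{\eta}\,h\!\left(\bigcup_{i=1}^{\ell}C_i\right)\,=\,f(C),
$$
which is exactly what is needed. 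I do not expect any genuine obstacle here: the entire content of the argument is the disjointness of the ``unique-contribution'' sets $A_t$, with the remaining steps being routine manipulation of the additive, non-negative function $h$ and the normalizing constant $\eta$.
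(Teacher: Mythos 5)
Your proof is correct. Note that the paper itself does not prove \Cref{lem:boundFunc} but only cites it as a construction shown in~\cite{CKS23}; your argument — choosing $f_t$ to be $f$ with the $t$-th coordinate dropped, expressing $f(C)-f_t(C^{(t)})=\tfrac{1}{\eta}h(A_t)$ via the unique-contribution sets $A_t=C_t\setminus\bigcup_{i\neq t}C_i$, and invoking their pairwise disjointness together with $A_t\subseteq C_t$ and $\eta\geq\max_{C\in\cC}h(C)$ — is the standard self-contained proof of this fact and matches what one would expect the cited source to do.
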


\subsection{The proof of \Cref{lem:Upper_Bound_V}}
\label{sec:upper_bound_v}
The first step towards the proof of \Cref{lem:Upper_Bound_V} is to show a lower bound on the probability of an item to appear in one of the sampled configurations $R^j_1,\ldots, R^j_q$ in terms of $\by^j_i$.
\begin{lemma}
	\label{lem:selection_item_prob}
	For every $i\in I$ and $j\in \left[\eps^{-1}\right]$ it holds that 
	$\Pr\left( i\in S_{j-1}\setminus S_j ~\middle|~\cF_{j-1} \right) \geq 1-\exp\left(- \eps \cdot \frac{\by^j_i}{1-(j-1)\eps}\right)$.
\end{lemma}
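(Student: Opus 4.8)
The statement I need to prove is Lemma~\ref{lem:selection_item_prob}: for every item $i$ and iteration $j$,
$$
\Pr\left( i\in S_{j-1}\setminus S_j \mid \cF_{j-1}\right) \geq 1-\exp\left(-\eps\cdot \frac{\by^j_i}{1-(j-1)\eps}\right).
$$

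Let me think about what's going on. In iteration $j$, we sample $q = \eps m$ configurations $R^j_1, \ldots, R^j_q$ i.i.d. from $\bx^j$, where $\bx^j$ has $\|\bx^j\| = m_j = m(1-(j-1)\eps)$. The item $i$ is removed from $S_{j-1}$ (i.e. $i \in S_{j-1}\setminus S_j$) iff $i$ appears in at least one of these configurations — assuming $i \in S_{j-1}$ in the first place, which holds since $\supp(\bx^j) \subseteq 2^{S_{j-1}}$ means $\by^j_i = 0$ whenever $i \notin S_{j-1}$, so the bound is trivial otherwise.

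**The key steps.**

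First I would condition on $\cF_{j-1}$, so that $\bx^j$, $\by^j$, $S_{j-1}$, and $m_j$ are all fixed (they are $\cF_{j-1}$-measurable). If $i \notin S_{j-1}$, then $\by^j_i = 0$ and the right-hand side is $1 - e^0 = 0$, so the inequality holds trivially; hence assume $i \in S_{j-1}$. For a single sample $R \sim \bx^j$, the probability that $i \in R$ is
$$
\Pr(i \in R^j_b \mid \cF_{j-1}) = \frac{\sum_{C \in \cC(i)} \bx^j_C}{\|\bx^j\|} = \frac{\cover_i(\bx^j)}{m_j} = \frac{\by^j_i}{m(1-(j-1)\eps)}.
$$
Since the $q$ samples are (conditionally) independent, the probability that $i$ appears in none of them is $\left(1 - \frac{\by^j_i}{m(1-(j-1)\eps)}\right)^{q}$, so
$$
\Pr(i \in S_{j-1}\setminus S_j \mid \cF_{j-1}) = 1 - \left(1 - \frac{\by^j_i}{m(1-(j-1)\eps)}\right)^{\eps m}.
$$
Then I would apply the elementary inequality $1 - x \leq e^{-x}$ (valid for all real $x$, and here $x = \by^j_i / (m(1-(j-1)\eps)) \in [0,1]$) raised to the power $\eps m$:
$$
\left(1 - \frac{\by^j_i}{m(1-(j-1)\eps)}\right)^{\eps m} \leq \exp\left(-\eps m \cdot \frac{\by^j_i}{m(1-(j-1)\eps)}\right) = \exp\left(-\eps \cdot \frac{\by^j_i}{1-(j-1)\eps}\right),
$$
which gives the claimed bound after negating and adding $1$. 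Taking expectations (or rather, noting the inequality holds pointwise on each atom of $\cF_{j-1}$) completes the proof.

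**Where the difficulty lies.**

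Honestly, this lemma is essentially a one-line computation once the definitions are unwound — there is no real obstacle. The only points requiring a little care are: (i) checking that $\sum_{C\in\cC(i)}\bx^j_C / \|\bx^j\|$ is indeed the correct per-sample inclusion probability, which follows directly from the definition of $R \sim \bx^j$ and of $\cover_i$; (ii) making sure the degenerate case $i \notin S_{j-1}$ (equivalently $\by^j_i = 0$) is handled, so that the formula is not vacuous; and (iii) being careful that all the relevant quantities are $\cF_{j-1}$-measurable so that the conditioning is legitimate and the $q$ samples are conditionally independent given $\cF_{j-1}$ — this is exactly the conditional-independence structure already set up before Lemma~\ref{lem:generalized_concentration}. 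I would present it in that order: unwind the conditioning, reduce to the single-sample probability, multiply over the $q$ independent samples, and finish with $1-x \le e^{-x}$.
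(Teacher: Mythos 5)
Your proof is correct and follows essentially the same route as the paper's: condition on $\cF_{j-1}$, use conditional independence of the $q$ samples to write the complement as a product, compute the per-sample inclusion probability as $\by^j_i/m_j$, and finish with an elementary exponential bound. The only cosmetic difference is that you invoke $1-x\le e^{-x}$ directly (and explicitly handle the degenerate case $\by^j_i=0$), whereas the paper rewrites the expression and applies $(1-1/x)^x\le e^{-1}$, which implicitly assumes $\by^j_i>0$; your version is in fact slightly cleaner on that point.
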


\begin{proof}
	By a simple calculation,
	\begin{equation}
		\label{eq:11112}
		\begin{aligned}
		\Pr\left( i\in S_{j-1}\setminus S_j ~\middle|~\cF_{j-1} \right) ={} & 	\Pr\left( i\in \bigcup_{b=1}^{q} R^j_b ~\middle|~\cF_{j-1} \right) \\
		={} & 1- \Pr\left( i\notin \bigcup_{b=1}^{q} R^j_b ~\middle|~\cF_{j-1} \right)\\
		={} & 1- \prod_{b=1}^{q} \Pr\left( i\notin R^j_b~\middle|~\cF_{j-1} \right)\\
			={} & 1- \prod_{b=1}^{q} \left(1-\Pr\left( i\in R^j_b~\middle|~\cF_{j-1} \right) \right)\\
				={} & 1- \prod_{b=1}^{q} \left(1-\Pr\left(R^j_b \in \cC(i)~\middle|~\cF_{j-1} \right) \right).
		\end{aligned}
	\end{equation} 
	The third equality holds as $R^j_1,\ldots, R^j_q$ are conditionally independent given $\cF_{j-1}$.
	Therefore, by \eqref{eq:11112} and since the configurations are distributed by $\bx^j$ we have 
	\begin{equation*}
	\label{eq:1}
	\begin{aligned}
		\Pr\left( i\in S_{j-1}\setminus S_j ~\middle|~\cF_{j-1} \right) ={} 
		& 1- \prod_{b=1}^{q} \left(1-\Pr\left(R^j_b \in \cC(i)~\middle|~\cF_{j-1} \right) \right)\\
		={}& 1- \left(1-\frac{\sum_{C \in \cC(i)} \bx^j_C}{\|\bx^j\|}\right)^{q}\\
		={} & 1- \left(1-\frac{\by^j_i}{m \cdot \left(1-(j-1)\cdot \eps \right)}\right)^{\eps \cdot m}\\
			={} & 1-  \left(  \left(1-\frac{\by^j_i}{m \cdot \left(1-(j-1)\cdot \eps \right)}\right)^{\frac{m \cdot \left(1-(j-1)\cdot \eps \right)}{\by^j_i}} \right)^{\frac{\eps \cdot \by^j_i}{\left(1-(j-1)\cdot \eps \right)}}\\
				\geq{} & 1-  \left( e^{-1} \right)^{\frac{\eps \cdot \by^j_i}{\left(1-(j-1)\cdot \eps \right)}}\\
				={} & 1- \exp\left(-\frac{\eps \cdot \by^j_i}{\left(1-(j-1)\cdot \eps \right)}\right).\\
	\end{aligned}
\end{equation*} The inequality holds since $(1-\frac{1}{x})^x \leq \frac{1}{e}$ for all $x \geq 1$. 
\end{proof}

The next lemma uses	\Cref{lem:selection_item_prob} to lower bound the total value of sampled configurations in the $j$-th iteration. 
\begin{lemma}
	\label{lem:value_to_LP}
	For all $j\in \left[\eps^{-1}-\eps^{-\frac{1}{2}}\right]$ it holds that 
	$$
	\E\left[\v(S_{j-1}\setminus S_j ) ~|~\cF_{j-1} \right] \geq \v(\bx^j) \cdot \left(\eps-\eps^{\frac{3}{2}} \right) \frac{1}{1-(j-1)\eps}.
	$$
\end{lemma}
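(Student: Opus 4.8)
The plan is to expand the conditional expectation over individual items, substitute the per-item probability bound of \Cref{lem:selection_item_prob}, and then linearize the resulting exponential with an elementary inequality. The restriction $j\le \eps^{-1}-\eps^{-\frac12}$ enters exactly to keep the exponent bounded by $1$, where the cheap linearization is tight enough.

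Concretely, first I would use nonnegativity and additivity of $\v$ together with the fact that $S_{j-1}\setminus S_j$ is an $\cF_j$-measurable random subset of $I$ to write
$$
\E\left[\v(S_{j-1}\setminus S_j)\mid \cF_{j-1}\right]=\sum_{i\in I}\v(i)\cdot\Pr\left(i\in S_{j-1}\setminus S_j\mid \cF_{j-1}\right),
$$
and then bound each probability from below by \Cref{lem:selection_item_prob}, obtaining the lower bound $\sum_{i\in I}\v(i)\bigl(1-\exp(-a_j\by^j_i)\bigr)$ with $a_j:=\frac{\eps}{1-(j-1)\eps}$. The second step is to observe that $j\le \eps^{-1}-\eps^{-\frac12}$ gives $(j-1)\eps\le 1-\eps^{\frac12}-\eps$, hence $1-(j-1)\eps\ge \eps^{\frac12}$, so $a_j\le \eps^{\frac12}<1$ (as $\eps<0.1$); and since $\bx^j$ is a feasible fractional solution, $\by^j_i=\cover_i(\bx^j)\in[0,1]$, so the exponent $a_j\by^j_i$ always lies in $[0,1]$. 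Now I would invoke the elementary bound $1-e^{-t}\ge t-\tfrac{t^2}{2}=t\bigl(1-\tfrac t2\bigr)$, valid for all $t\ge 0$ (it follows from twice differentiating $t\mapsto 1-e^{-t}-t+\tfrac{t^2}{2}$), applied with $t=a_j\by^j_i\le a_j\le\eps^{\frac12}$, to get per item
$$
1-\exp(-a_j\by^j_i)\ \ge\ a_j\by^j_i\left(1-\tfrac{a_j\by^j_i}{2}\right)\ \ge\ a_j\by^j_i\left(1-\tfrac{\eps^{\frac12}}{2}\right)\ \ge\ a_j\by^j_i\left(1-\eps^{\frac12}\right).
$$
Multiplying by $\v(i)\ge 0$, summing over $i\in I$, and recalling $\v(\bx^j)=\sum_{i\in I}\v(i)\by^j_i$ yields
$$
\E\left[\v(S_{j-1}\setminus S_j)\mid \cF_{j-1}\right]\ \ge\ a_j\left(1-\eps^{\frac12}\right)\v(\bx^j)\ =\ \bigl(\eps-\eps^{\frac32}\bigr)\cdot\frac{1}{1-(j-1)\eps}\cdot\v(\bx^j),
$$
which is the claimed inequality.

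I do not expect a real obstacle here: the only point requiring care is ensuring the exponent $a_j\by^j_i$ stays at most $1$ so that the linearization $1-e^{-t}\ge t(1-t/2)$ is good enough, and this is precisely what the hypothesis $j\le \eps^{-1}-\eps^{-\frac12}$ provides (the remaining $\eps^{-\frac12}$ iterations, where $1-(j-1)\eps$ is too small and $a_j$ may exceed $1$, are excluded from this lemma and absorbed in the proof of \Cref{lem:Upper_Bound_V}). It is also worth recording explicitly that $\by^j_i\le 1$, which is exactly the feasibility of the fractional solution $\bx^j$.
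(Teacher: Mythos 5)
Your proof is correct and follows essentially the same path as the paper's: expand the conditional expectation item by item, apply \Cref{lem:selection_item_prob}, and linearize $1-e^{-x}$. The only cosmetic difference is that you use the sharper bound $1-e^{-t}\ge t-t^2/2$ while the paper uses $1-e^{-x}\ge x-x^2$; both, combined with $\by^j_i\le 1$ and the restriction $j\le\eps^{-1}-\eps^{-1/2}$, deliver the same $(\eps-\eps^{3/2})$ factor.
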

\begin{proof} By \Cref{lem:selection_item_prob} we get
	\begin{equation}
		\label{eq:ex_j}
		\begin{aligned}
			\E\left[\v(S_{j-1}\setminus S_j ) ~|~\cF_{j-1} \right] = {} & 	\sum_{i \in I} \v(i) \cdot \Pr\left( i\in S_{j-1}\setminus S_j ~\middle|~\cF_{j-1} \right) \\
			\geq {} &  	\sum_{i \in I} \v(i) \cdot \left( 1-\exp\left(- \eps \cdot \frac{\by^j_i}{1-(j-1)\eps}\right) \right)\\
			\geq{} & \sum_{i \in I} \v(i) \cdot \left( \eps \cdot \frac{\by^j_i}{1-(j-1)\eps}-\left(\eps \cdot \frac{\by^j_i}{1-(j-1)\eps}\right)^2\right) \\
				={} & \sum_{i \in I} \v(i) \cdot \left( \eps \cdot \frac{\by^j_i}{1-(j-1)\eps} \cdot \left( 1-\eps \cdot \frac{\by^j_i}{1-(j-1)\eps}\right) \right).\\
		\end{aligned}
	\end{equation} 
	The second inequality follows from $1-\exp(-x) \geq x-x^2$ for all $x\geq 0$. By \eqref{eq:ex_j} we have	\begin{equation*}
	\begin{aligned}
		\E\left[\v(S_{j-1}\setminus S_j ) ~|~\cF_{j-1} \right] 
							\geq{} & \sum_{i \in I} \v(i) \cdot \left( \eps \cdot \frac{\by^j_i}{1-(j-1)\eps} \cdot \left(1- \eps \cdot \frac{1}{1-(\eps^{-1}-\eps^{-\frac{1}{2}}-1)\eps} \right)\right)\\
								={} & \sum_{i \in I} \v(i) \cdot \left( \eps \cdot \frac{\by^j_i}{1-(j-1)\eps} \cdot \left(1-  \frac{\eps}{\eps+\eps^{\frac{1}{2}}} \right)\right)\\
									={} & \frac{1}{1-(j-1)\eps} \cdot \left(\eps-  \frac{\eps^2}{\eps+\eps^{\frac{1}{2}}} \right) \cdot \sum_{i \in I} \v(i) \cdot \by^j_i\\
									={} &  \frac{1}{1-(j-1)\eps} \cdot \left(\eps-  \frac{1}{\eps^{-1}+\eps^{-\frac{3}{2}}} \right) \cdot \v(\bx^j)\\
										\geq{} &  \v(\bx^j) \cdot \left(\eps-\eps^{\frac{3}{2}} \right) \frac{1}{1-(j-1)\eps}. 
	\end{aligned}
\end{equation*} The first inequality holds  since   $j \leq \eps^{-1}-\eps^{-\frac{1}{2}}$ and since $\bx^j$ is a feasible solution for $\LP(S, m_j)$; thus, $\by^j \in [0,1]^I$. 
\end{proof}

The proof of \Cref{lem:Upper_Bound_V} follows from \Cref{lem:value_to_LP} and 
\upperboundv*
\begin{proof}
$$
\begin{aligned}
	\E[V]\, &=\, \E\left[\sum_{j=1}^{\eps^{-1}} \v(S_{j-1}\setminus S_j)\right] \\
	&=\, \E\left[~ \sum_{j=1}^{\eps^{-1}}\E\left[ \v(S_{j-1}\setminus S_j)\,\middle|\, \cF_{j-1}\right]~\right] \\
	&\geq \,  \E\left[~ \sum_{j=1}^{\eps^{-1}-\eps^{-\frac{1}{2}}} \E\left[\v(S_{j-1}\setminus S_j)\,\middle|\, \cF_{j-1}\right]~\right] \\
	&\geq \,   \E\left[~ \sum_{j=1}^{\eps^{-1}-\eps^{-\frac{1}{2}}} \v(\bx^j) \cdot \left(\eps-\eps^{\frac{3}{2}} \right) \frac{1}{1-(j-1)\eps}~\right] \\
	&=\,  \left(\eps-\eps^{\frac{3}{2}} \right) \sum_{j=1}^{\eps^{-1}-\eps^{-\frac{1}{2}}}  \frac{ \E[\bv(\bx^j)]}{ 1-(j-1)\eps}.
\end{aligned}
$$
The second equality follows from the tower property. Note that the range of $j$ changes in the first inequality. The second inequality is due to \Cref{lem:value_to_LP}.
\end{proof}

\subsection{The proof of \Cref{lem:value_in_terms_of_gamma}}
\label{sec:in_gamma}

While the proof of \Cref{lem:Upper_Bound_V} required us to lower bound the probability that items are selected by the sampled configurations, in order to prove \Cref{lem:value_in_terms_of_gamma} we show the opposite -- we  upper bound the probability that items are selected by the sampled configuration. Since non-selected items can be used in the solution of $\LP(S_{j-1},m_j)$  in later iterations, these upper bounds are useful to show  lower bounds on the value of $\LP(S_{j-1},m_j)$.
\begin{lemma}
	\label{lem:survival_prob}
	For every $i\in I$ and $j\in [\eps^{-1}]$ it holds that $$\Pr(i\in S_j~|~\cF_{j-1} ) \geq \one_{i\in S_{j-1}} \cdot \left( 1-\frac{\eps\cdot \by^j_i}{1-(j-1)\eps}\right) \geq \one_{i\in S_{j-1}} \cdot \frac{1-j\eps}{1-(j-1)\eps} .$$ 
	\end{lemma}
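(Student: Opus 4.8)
The plan is to compute $\Pr(i\in S_j \mid \cF_{j-1})$ directly using the definition of the update rule $S_j = S_{j-1}\setminus\left(\bigcup_{b=1}^q R^j_b\right)$, and then bound the resulting expression from below. First I would handle the trivial case $i\notin S_{j-1}$: then $i\notin S_j$ as well, so both sides of the claimed inequality are $0$ (since $\one_{i\in S_{j-1}}=0$), and there is nothing to prove. So assume $i\in S_{j-1}$. In that case $i\in S_j$ exactly when $i\notin R^j_b$ for every $b\in[q]$. Using the conditional independence of $R^j_1,\ldots,R^j_q$ given $\cF_{j-1}$ and the fact that each $R^j_b\sim\bx^j$, I would write
$$
\Pr(i\in S_j \mid \cF_{j-1}) = \prod_{b=1}^{q}\bigl(1-\Pr(R^j_b\in\cC(i)\mid\cF_{j-1})\bigr) = \left(1-\frac{\by^j_i}{m\cdot(1-(j-1)\eps)}\right)^{\eps\cdot m},
$$
exactly mirroring the computation already carried out in the proof of Lemma~\ref{lem:selection_item_prob}.

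For the first inequality, I would apply Bernoulli's inequality $(1-x)^n \ge 1-nx$ (valid for $x\in[0,1]$, $n\ge 1$), which is applicable here because $\by^j_i\in[0,1]$ and $m\cdot(1-(j-1)\eps)\ge 1$, so the base lies in $[0,1]$. This gives
$$
\left(1-\frac{\by^j_i}{m\cdot(1-(j-1)\eps)}\right)^{\eps\cdot m} \ge 1 - \eps\cdot m\cdot\frac{\by^j_i}{m\cdot(1-(j-1)\eps)} = 1-\frac{\eps\cdot\by^j_i}{1-(j-1)\eps},
$$
which is the first claimed bound (with the $\one_{i\in S_{j-1}}$ factor equal to $1$ in this case). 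For the second inequality, since $\bx^j$ is feasible for $\LP(S_{j-1},m_j)$ we have $\by^j=\cover(\bx^j)\in[0,1]^I$, hence $\by^j_i\le 1$, so
$$
1-\frac{\eps\cdot\by^j_i}{1-(j-1)\eps} \ge 1-\frac{\eps}{1-(j-1)\eps} = \frac{1-(j-1)\eps-\eps}{1-(j-1)\eps} = \frac{1-j\eps}{1-(j-1)\eps},
$$
completing the proof.

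I do not anticipate a genuine obstacle here — the lemma is essentially a one-line consequence of the same conditional-independence computation used for Lemma~\ref{lem:selection_item_prob}, combined with Bernoulli's inequality in place of the $(1-1/x)^x\le 1/e$ bound. The only points requiring a moment's care are: (i) correctly separating the $i\notin S_{j-1}$ case so that the indicator $\one_{i\in S_{j-1}}$ is accounted for, and (ii) checking that $\eps^{-1}\ge j$ guarantees $1-(j-1)\eps\ge\eps>0$ so that all denominators are positive and the final fraction is well-defined and nonnegative.
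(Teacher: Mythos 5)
Your proof is correct and follows essentially the same approach as the paper; the only difference is in the final bounding step, where the paper applies the union bound to $\Pr\bigl(i\in\bigcup_{b=1}^q R^j_b\,\big|\,\cF_{j-1}\bigr)\le\sum_{b}\Pr(i\in R^j_b\,|\,\cF_{j-1})$, while you use the exact product $\prod_b\bigl(1-\Pr(R^j_b\in\cC(i)\,|\,\cF_{j-1})\bigr)$ (via conditional independence, as in Lemma~\ref{lem:selection_item_prob}) followed by Bernoulli's inequality $(1-x)^n\ge 1-nx$. Both routes yield the identical intermediate bound $1-\tfrac{\eps\by^j_i}{1-(j-1)\eps}$; the paper's union-bound step is marginally more parsimonious in that it does not rely on independence, but the two arguments are otherwise the same.
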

	
	\begin{proof} We use the following inequality. 
			\begin{equation}
			\label{eq:111112}
			\begin{aligned}
				\Pr(i\in S_j~|~\cF_{j-1} ) 
				={} & 	\Pr\left( i\in S_{j-1} \setminus \bigcup_{b=1}^{q} R^j_b ~\middle|~\cF_{j-1} \right) \\
					={} & 	\one_{i\in S_{j-1}}  \cdot \Pr\left(i \notin \bigcup_{b=1}^{q} R^j_b ~\middle|~\cF_{j-1}\right) \\
						={} & 	\one_{i\in S_{j-1}}  \cdot \left(1-\Pr\left(i \in \bigcup_{b=1}^{q} R^j_b ~\middle|~\cF_{j-1}\right) \right)\\
							\geq{} & 	\one_{i\in S_{j-1}}  \cdot \left(1-\sum_{b=1}^{q} \Pr\left(i \in  R^j_b ~\middle|~\cF_{j-1}\right) \right)\\
			\end{aligned}
		\end{equation} 
	The inequality uses the union bound. Therefore, by \eqref{eq:111112} and since the configurations are selected based on the distribution $\bx^j$ we have 
	\begin{equation*}
		\label{eq:1}
		\begin{aligned}
			\Pr\left( i\in S_j ~\middle|~\cF_{j-1} \right) \geq{} & \one_{i\in S_{j-1}}  \cdot \left(1-\frac{\eps \cdot m \cdot \sum_{C \in \cC(i)} \bx^j_C}{\|\bx^j\|}\right)\\
				={} & \one_{i\in S_{j-1}}  \cdot \left(1-\frac{\eps \cdot m \cdot \by^j_i}{m \cdot \left(1-(j-1)\cdot \eps \right)}\right)\\
				={} & \one_{i\in S_{j-1}}  \cdot \left(1-\frac{\eps \cdot \by^j_i}{ 1-(j-1)\cdot \eps }\right)\\
				\geq{} & \one_{i\in S_{j-1}}  \cdot \left(1-\frac{\eps}{ 1-(j-1)\cdot \eps }\right)\\
						={} &	\one_{i\in S_{j-1}} \cdot \frac{1-j\eps}{1-(j-1)\eps}. 
		\end{aligned}
	\end{equation*} %
The last inequality holds since the coverage of the feasible solution $\bx^j$ satisfies $\by^j \in [0,1]^I$. 
	\end{proof}

Recall the vectors $\bgam^j$ and $\blam^j$ has been define in \eqref{eq:blam_def} and \eqref{eq:bgam_def} (\Cref{sec:algorithm}). For readability purposes, we repeat these definitions.
	Define $\bgam^0 = \one_{S^*}$ (recall $S^*$ is the set of items in an optimal solution). For every $j\in [\eps^{-1}-1]$ define  	$\blam^{j}\in \mathbb{R}_{\geq 0} ^I$  by
	$$
	\blam^{j}_i =\frac{1-j\cdot \eps}{1-(j-1)\eps}\cdot \frac{1}{\Pr(i\in S_j~|~\cF_{j-1})} \cdot \bgam^{j-1}_i$$
	for all $i\in S_{j-1}$ and $\blam^{j}_i = 0$ for $i \notin S_{j-1}$. Also, for every $j\in [\eps^{-1}-1]$ define  	$\bgam^{j}\in \mathbb{R}_{\geq 0} ^I$  by $\bgam^j_i = \one_{i\in S_{j}} \cdot \blam^j_i$ for all $i\in I$	.  It holds  that $\blam^j$ is $\cF_{j-1}$-measurable random variable while $\bgam^j$ is $\cF_j$-measurable.
	
	\begin{lemma}
		\label{lem:gamma_expectation}
	For every $i\in I$ and $j\in [\eps^{-1}-1]$ it holds that 
	$\E\left[ \bgam^j_i ~|~\cF_{j-1}\right] = \frac{1-j\cdot \eps}{1-(j-1)\cdot \eps}\cdot \bgam^{j-1}_i$.
	\end{lemma}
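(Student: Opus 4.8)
The plan is to prove \Cref{lem:gamma_expectation} by a direct computation, exploiting the fact that $\blam^j$ is $\cF_{j-1}$-measurable while $\one_{i \in S_j}$ is not. Before the computation I would record two preliminary facts. The first is that $\supp(\bgam^{j}) \subseteq S_{j}$ for every $j \geq 0$: this is immediate for $j=0$ since $\bgam^0 = \one_{S^*}$ and $S^* \subseteq I = S_0$, and for $j \geq 1$ it follows from the definition $\bgam^j_i = \one_{i\in S_j}\cdot\blam^j_i$. The second is that $\blam^j$ is well-defined and $\cF_{j-1}$-measurable: it is assembled from $S_{j-1}$, from $\bgam^{j-1}$, and from the conditional probabilities $\Pr(i \in S_j \mid \cF_{j-1})$, all of which are $\cF_{j-1}$-measurable; and by \Cref{lem:survival_prob}, combined with $j \leq \eps^{-1}-1$ (so that $1 - j\eps \geq \eps > 0$), we have $\Pr(i \in S_j \mid \cF_{j-1}) > 0$ for every $i \in S_{j-1}$, so the quotient defining $\blam^j_i$ on $S_{j-1}$ causes no division by zero.

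With these in place, I would split into two cases. If $i \notin S_{j-1}$, then $\bgam^{j-1}_i = 0$ by the first fact and $\blam^j_i = 0$ by definition, so $\bgam^j_i = \one_{i \in S_j}\cdot \blam^j_i = 0$ and the claimed identity reads $0 = 0$. If $i \in S_{j-1}$, then, pulling the $\cF_{j-1}$-measurable quantity $\blam^j_i$ out of the conditional expectation,
\[
\E\!\left[\bgam^j_i \mid \cF_{j-1}\right] \;=\; \E\!\left[\one_{i \in S_j}\cdot \blam^j_i \mid \cF_{j-1}\right] \;=\; \blam^j_i\cdot \Pr\!\left(i \in S_j \mid \cF_{j-1}\right),
\]
and substituting the definition of $\blam^j_i$ the factor $\Pr(i \in S_j \mid \cF_{j-1})$ cancels (it is nonzero by the second preliminary fact), leaving $\E[\bgam^j_i \mid \cF_{j-1}] = \frac{1-j\eps}{1-(j-1)\eps}\cdot \bgam^{j-1}_i$, as wanted. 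The special case $j=1$ is covered automatically, since $\cF_0 = \{\emptyset,\Omega\}$ and conditioning on $\cF_0$ is unconditional.

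There is no genuine obstacle in this lemma; the only places that require attention are the well-posedness of the definition of $\blam^j$ (where \Cref{lem:survival_prob} and the restriction $j \le \eps^{-1}-1$ are exactly what rule out a vanishing denominator) and the measurability bookkeeping that lets $\blam^j_i$ leave the conditional expectation cleanly. This identity will subsequently be iterated — taking total expectations over $j$ — to yield $\E[\bgam^{j-1}_i] = (1-(j-1)\eps)\cdot\one_{i\in S^*}$, which is what \Cref{lem:gamma_value_lowerbound} needs.
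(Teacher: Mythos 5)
Your proof is correct and follows essentially the same route as the paper's: pull the $\cF_{j-1}$-measurable $\blam^j_i$ out of the conditional expectation, split on $i \in S_{j-1}$ versus $i \notin S_{j-1}$, and cancel the survival probability in the first case while observing both sides vanish in the second. The only difference is that you make explicit two details the paper leaves implicit — that $\supp(\bgam^{j-1}) \subseteq S_{j-1}$ and that \Cref{lem:survival_prob} together with $j \le \eps^{-1}-1$ rules out division by zero in the definition of $\blam^j$ — which is a small but welcome increase in care.
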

	
	\begin{proof} By the definition of the vectors $\bgam^j$ and $\blam^j$ we have 
		\begin{equation}
			\label{eq:sB2}
			\begin{aligned}
				\E\left[ \bgam^j_i ~|~\cF_{j-1}\right] ={} & \E\left[ \one_{i\in S_{j}} \cdot \blam^j_i ~|~\cF_{j-1}\right] \\
				={} & \Pr \left( i \in S_j ~|~\cF_{j-1}\right) \cdot \blam^j_i\\
			={} & \begin{cases}
					\Pr \left( i \in S_j ~|~\cF_{j-1}\right)  \cdot \frac{1-j\cdot \eps}{1-(j-1)\eps}\cdot \frac{1}{\Pr(i\in S_j~|~\cF_{j-1})} \cdot \bgam^{j-1}_i & ~~~~~~~~~~\text{if } i \in S_{j-1}, \\
				\Pr \left( i \in S_j ~|~\cF_{j-1}\right) \cdot 0 & ~~~~~~~~~~ \text{if } i \notin S_{j-1}
			\end{cases}\\
			={} & \begin{cases}
		\frac{1-j\cdot \eps}{1-(j-1)\eps} \cdot \bgam^{j-1}_i & ~~~~~~~~~~\text{if } i \in S_{j-1}, \\
			\frac{1-j\cdot \eps}{1-(j-1)\eps} \cdot \bgam^{j-1}_i & ~~~~~~~~~~ \text{if } i \notin S_{j-1}
		\end{cases}\\
	={} & 	\frac{1-j\cdot \eps}{1-(j-1)\eps} \cdot \bgam^{j-1}_i
			\end{aligned}
		\end{equation} The fourth equality holds since if $ i\notin S_j$ then $\bgam^{j-1}_i = 0$ by definition. 
	\end{proof}

	\begin{lemma}
	\label{lem:ZO}
	For all $j\in \left[1,\ldots, \eps^{-1}-1\right]$ and $i \in I$ it holds that $\blam^j_i,\bgam^j_i \leq 1$.  
\end{lemma}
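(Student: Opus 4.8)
The plan is to prove the bound $\blam^j_i,\bgam^j_i \le 1$ by induction on $j$, exploiting the survival probability lower bound of \Cref{lem:survival_prob} together with the base case $\bgam^0 = \one_{S^*}\in\{0,1\}^I$. Note first that it suffices to bound $\blam^j_i$, since $\bgam^j_i = \one_{i\in S_j}\cdot \blam^j_i \le \blam^j_i$ for every $i$; and $\blam^j_i$ is nonzero only for $i\in S_{j-1}$, so we may restrict attention to such $i$. For the base case considerations, observe $\bgam^0_i \le 1$ trivially.

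For the inductive step, fix $j\in[\eps^{-1}-1]$ and assume $\bgam^{j-1}_i \le 1$ for all $i$. For $i\in S_{j-1}$ we have, by definition \eqref{eq:blam_def},
\[
\blam^{j}_i \;=\; \frac{1-j\eps}{1-(j-1)\eps}\cdot \frac{1}{\Pr(i\in S_j\mid \cF_{j-1})}\cdot \bgam^{j-1}_i .
\]
By the second inequality in \Cref{lem:survival_prob}, since $i\in S_{j-1}$ we have $\Pr(i\in S_j\mid \cF_{j-1}) \ge \frac{1-j\eps}{1-(j-1)\eps}$, hence the factor $\frac{1-j\eps}{1-(j-1)\eps}\cdot \frac{1}{\Pr(i\in S_j\mid\cF_{j-1})}$ is at most $1$. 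Combined with the inductive hypothesis $\bgam^{j-1}_i\le 1$, this gives $\blam^j_i \le 1$, and therefore $\bgam^j_i = \one_{i\in S_j}\cdot\blam^j_i \le 1$ as well. (One should also note $1-j\eps>0$ throughout the relevant range $j\le \eps^{-1}-1$, so the ratios are well defined and nonnegative; and $\Pr(i\in S_j\mid\cF_{j-1})>0$ for $i\in S_{j-1}$ by the same lemma, so the division is legitimate.)

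I do not expect any real obstacle here — the only subtlety is being careful that the conditional probability in the denominator is strictly positive (guaranteed by \Cref{lem:survival_prob} for $i\in S_{j-1}$, using $j\le\eps^{-1}-1$ so that $1-j\eps>0$) and that the induction is set up so that the hypothesis "$\bgam^{j-1}_i\le 1$" is exactly what feeds the bound on $\blam^j_i$. The argument is essentially a one-line cancellation once \Cref{lem:survival_prob} is invoked in the right direction.
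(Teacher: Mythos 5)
Your proposal is correct and follows essentially the same route as the paper: induction on $j$, using the second inequality of \Cref{lem:survival_prob} to show $\Pr(i\in S_j\mid\cF_{j-1})\ge \frac{1-j\eps}{1-(j-1)\eps}$ for $i\in S_{j-1}$, which makes the prefactor in the definition of $\blam^j_i$ at most $1$, and then combining with the inductive bound $\bgam^{j-1}_i\le 1$. Your streamlining (observing that $\bgam^j_i\le\blam^j_i$ so only $\blam^j$ needs bounding, and folding the base case into $\bgam^0\in\{0,1\}^I$ rather than treating $j=1$ separately) is a tidier presentation but not a genuinely different argument.
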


\begin{proof}
	We prove the claim by induction on $j$. Let $j = 1$. Then,%
	 \begin{equation}
		\label{eq:Bq1}
		\blam^{1}_i =\frac{1-j\cdot \eps}{1-(j-1)\eps}\cdot \frac{1}{\Pr(i\in S_j~|~\cF_{j-1})} \cdot \bgam^{j-1}_i \leq \frac{1-j\cdot \eps}{1-(j-1)\eps}\cdot \frac{1}{ \frac{1-j\eps}{1-(j-1)\eps}} \cdot \bgam^{j-1}_i = \bgam^{j-1}_i \leq 1.
	\end{equation} The first inequality follows from \Cref{lem:survival_prob}. The last inequality holds since by definition it holds that $\bgam^0 \in \{0,1\}^I \subseteq [0,1]$. Therefore, by \eqref{eq:Bq1} it follows that $\bgam^1_i = \one_{i\in S_{1}} \cdot \blam^1_i \leq 1$. Assume that for some $j\in \left[2,\ldots, \eps^{-1}-1\right]$ it holds that $\blam^{j-1}_i,\bgam^{j-1}_i \leq 1$.  Then, in case $i\in S_{j-1}$ it holds that
 \begin{equation}
	\label{eq:Bq2}
	\blam^{j}_i =\frac{1-j\cdot \eps}{1-(j-1)\eps}\cdot \frac{1}{\Pr(i\in S_j~|~\cF_{j-1})} \cdot \bgam^{j-1}_i \leq \frac{1-j\cdot \eps}{1-(j-1)\eps}\cdot \frac{1}{ \frac{1-j\eps}{1-(j-1)\eps}} \cdot \bgam^{j-1}_i = \bgam^{j-1}_i \leq 1.
\end{equation} The first inequality follows from \Cref{lem:survival_prob}. The last inequality holds since by the assumption of the induction. Also, in case $i\notin S_{j-1}$ then $\blam^j_1=0\leq 1$.  Thus, $\blam^j_1 \leq 1$ in both cases.
It  follows that $\bgam^j_i = \one_{i\in S_{j}} \cdot \blam^j_i \leq 1$. 
\end{proof}

We can use \Cref{lem:generalized_concentration} to show a concentration property of the vectors $\bgam^j$. 
	\begin{lemma}
		\label{lem:step_concentration_upperbound}
		Let $\bu\in \mathbb{R}^I_{\geq 0 }$, $j\in [\eps^{-1}]$ and $t>0$. Also, assume that $\tol(\bu)\neq 0$. Then,
		$$
		\Pr\left(\bu \cdot \bgam^j \geq \E[\bu \cdot\bgam^j \,|\,\cF_{j-1}] +t\cdot \tol (\bu)  \right)\leq \exp\left( -\frac{t^2}{2\cdot \eps \cdot m}\right)
		$$
	\end{lemma}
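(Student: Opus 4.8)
The plan is to rewrite $\bu\cdot\bgam^j$ in terms of a \emph{modular} set function of the configurations $R^j_1,\dots,R^j_q$ sampled in the $j$-th iteration, thereby turning the desired upper-tail estimate into a lower-tail estimate that fits the self-bounding framework. Recall $\bgam^j_i=\one_{i\in S_j}\cdot\blam^j_i$ and that $\blam^j_i=0$ whenever $i\notin S_{j-1}$; since $S_j=S_{j-1}\setminus\bigcup_{b=1}^q R^j_b$, for $i\in S_{j-1}$ we have $\one_{i\in S_j}=1-\one_{i\in\bigcup_b R^j_b}$. Hence, writing $W=\sum_{i\in S_{j-1}}\bu_i\blam^j_i$ (an $\cF_{j-1}$-measurable quantity) and
$$
g(R^j_1,\dots,R^j_q)=\sum_{i\in S_{j-1}\cap\bigcup_{b=1}^q R^j_b}\bu_i\blam^j_i,
$$
we get $\bu\cdot\bgam^j=W-g$. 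Taking conditional expectations gives $\E[\bu\cdot\bgam^j\mid\cF_{j-1}]=W-\E[g\mid\cF_{j-1}]$, so the event $\{\bu\cdot\bgam^j\ge\E[\bu\cdot\bgam^j\mid\cF_{j-1}]+t\cdot\tol(\bu)\}$ coincides with the event $\{g\le\E[g\mid\cF_{j-1}]-t\cdot\tol(\bu)\}$.

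Next I would set $Z=g/\tol(\bu)$ (legitimate since $\tol(\bu)\neq 0$) and argue that $Z$ is a self-bounding function of $R^j_1,\dots,R^j_q$. Indeed $g(R^j_1,\dots,R^j_q)=h\bigl(\bigcup_{b} R^j_b\bigr)$ for the item function $h(i)=\bu_i\,\blam^j_i\,\one_{i\in S_{j-1}}$ extended additively to sets; by \Cref{lem:ZO} we have $\blam^j_i\le 1$, so $\max_{C\in\cC}h(C)\le\max_{C\in\cC}\sum_{i\in C}\bu_i=\tol(\bu)$, and \Cref{lem:boundFunc} (applied with $\eta=\tol(\bu)$) shows that $Z=h(\bigcup_b R^j_b)/\tol(\bu)$ is self-bounding. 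Note that $h$, and hence this self-bounding function, depends on the $\cF_{j-1}$-measurable data $\blam^j$ and $S_{j-1}$, while $R^j_1,\dots,R^j_q$ are conditionally independent given $\cF_{j-1}$; this is precisely why \Cref{lem:generalized_concentration} rather than \Cref{lem:selfBoundLem} is needed. Applying the lower-tail bound of \Cref{lem:generalized_concentration}(2) with $\cG=\cF_{j-1}$ and deviation $t$ yields
$$
\Pr\bigl(Z\le\E[Z\mid\cF_{j-1}]-t\ \bigm|\ \cF_{j-1}\bigr)\le\exp\!\left(-\frac{t^2}{2\cdot\E[Z\mid\cF_{j-1}]}\right).
$$
The last ingredient is $\E[Z\mid\cF_{j-1}]\le\eps\cdot m$, which in fact holds \emph{deterministically}: $g\le\sum_{b=1}^q\sum_{i\in R^j_b}\bu_i\blam^j_i\le\sum_{b=1}^q\tol(\bu)=q\cdot\tol(\bu)=\eps m\cdot\tol(\bu)$ (using $\blam^j_i\le 1$ and $R^j_b\in\cC$), so $Z\le\eps m$ pointwise. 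Since $x\mapsto\exp(-t^2/(2x))$ is increasing in $x>0$, the right-hand side above is at most $\exp(-t^2/(2\eps m))$; and if $\E[Z\mid\cF_{j-1}]=0$ then $Z=0$ conditionally a.s.\ and the conditional probability in question is $0$, so the bound is trivial.

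Finally I would translate back: the conditional estimate $\Pr\bigl(\bu\cdot\bgam^j\ge\E[\bu\cdot\bgam^j\mid\cF_{j-1}]+t\cdot\tol(\bu)\bigm|\cF_{j-1}\bigr)\le\exp(-t^2/(2\eps m))$ holds surely, so taking expectations and invoking the tower property yields the claimed unconditional bound. The only genuinely delicate points are (i) realizing that the assertion is really a \emph{lower}-tail statement about the complementary modular function $g$, so that the self-bounding machinery of \Cref{lem:boundFunc} applies, and (ii) the measurability bookkeeping that the self-bounding function carries $\cF_{j-1}$-measurable randomness; the rest is a direct computation.
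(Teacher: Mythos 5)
Your proof is correct and follows essentially the same route as the paper: rewrite $\bu\cdot\bgam^j$ as an $\cF_{j-1}$-measurable quantity minus a modular function of the sampled configurations, invoke \Cref{lem:ZO} so that \Cref{lem:boundFunc} makes the scaled modular term self-bounding, apply \Cref{lem:generalized_concentration}(2) conditionally, and then use the deterministic bound $Z\leq q=\eps m$ together with monotonicity of $x\mapsto\exp(-t^2/(2x))$ before taking expectations. Your explicit handling of the degenerate case $\E[Z\mid\cF_{j-1}]=0$ is a small clarification the paper leaves implicit, but the argument is otherwise identical.
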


\begin{proof}
	Define $T = \sum_{i \in S_{j-1}} \bu \cdot \blam^j_i$; observe that $T$ is $\cF_{j-1}$-measurable. 
	We can express $\bu\cdot \bgam^j$ using $T$ as follows:
	\begin{equation}
		\label{eq:S1}
		\begin{aligned}
		\bu \cdot \bgam^j ={} &  \sum_{i \in I} 	\bu_i \cdot \bgam^j_i \\
		={} &  \sum_{i \in I}  \one_{i\in S_{j}} \cdot \bu_i \cdot \blam^j_i\\
			={} &  \sum_{i \in S_{j-1}}\bu_i \cdot \blam^j_i-\sum_{i \in S_{j-1} \setminus S_j} \bu_i \cdot \blam^j_i\\
		={} & T-\sum_{i \in S_{j-1} \setminus S_j} \bu_i \cdot \blam^j_i\\
				={} &  T-\sum_{i \in \bigcup_{b=1}^{q} R^j_b} \bu_i \cdot \blam^j_i.
		\end{aligned}
	\end{equation} 
Given a vector $\bbe \in [0,1]^I$ we define the following functions. First, define the function $h_{\bbe}:I \rightarrow \mathbb{R}_{\geq 0}$
by  $h_{\bbe}(i) = \bu_i \cdot \bbe_i$ for for all $i \in I$. Next, define the function $f_{\bbe} :\cC^{q} \rightarrow \mathbb{R}_{\geq 0}$ 
by 
$$f_{\bbe}(X) = \frac{h_{\bbe}\left( \bigcup^q_{b=1} C_b \right)}{\tol (\bu)}$$
for all $X = (C_1,\ldots,C_q) \in \cC^q$.

	Let $\bbe \in [0,1]^I$. We show next that $f_{\bbe}$ is a self-bounding function. Observe that \begin{equation}
		\label{eq:Be}
		\max\left\{ h_{\bbe}(C)~|~C\in\cC\right\} = \max\left\{ \sum_{i\in C} \bu_i \cdot \bbe_i ~|~C\in\cC\right\} \leq \max\left\{ \sum_{i\in C} \bu_i~|~C\in\cC\right\} = \tol(\bu).
	\end{equation} The inequality holds since $\bbe \in [0,1]^I$. Therefore, by \eqref{eq:Be} and by \Cref{lem:boundFunc} it holds that $f_{\bbe}$ is a self-bounding function. 
 Observe that $\blam^{j} \in [0,1]^I$ by \Cref{lem:ZO}; thus, $f_{\blam^j}$ and $h_{\blam^{j}}$ are well defined. Therefore, 
\begin{equation}
	\label{eq:St}
	\begin{aligned}
	\sum_{i \in \bigcup_{b=1}^{q} R^j_b} \bu_i \cdot \blam^j_i ={} & \sum_{i \in \bigcup_{b=1}^{q} R^j_b}  h_{\blam^j}(i) \\
	={} & h_{\blam^j} \left(  \bigcup_{b=1}^{q} R^j_b  \right) \\
	={} & \tol (\bu) \cdot \frac{h_{\blam^j} \left(  \bigcup_{b=1}^{q} R^j_b  \right) }{\tol (\bu)} \\
	={} &  \tol (\bu) \cdot f_{\blam^j}\left( R^j_1,\ldots,R^j_q\right).
	\end{aligned}
\end{equation} For short, let $\cR^j = (R^j_1,\ldots,R^j_q)$. By \eqref{eq:S1} and \eqref{eq:St} we have,
\begin{equation}
	\label{eq:S2}
		\bu \cdot \bgam^j =  T-\tol (\bu) \cdot f_{\blam^j}\left(\cR^j\right).
\end{equation} 
Hence, 
\begin{equation}
	\label{eq:B31}
	\begin{aligned}
	{} & \Pr\left(\bu \cdot \bgam^j \geq \E[\bu \cdot\bgam^j \,|\,\cF_{j-1}] +t\cdot \tol (\bu)  ~\bigg|~\cF_{j-1}\right)\\
	={} & 	\Pr\Bigg(T-\tol (\bu) \cdot f_{\blam^j}\left( \cR^j \right) \geq \E \left[ T-\tol (\bu) \cdot f_{\blam^j}\left(\cR^j\right) \,|\,\cF_{j-1}\right] +t\cdot \tol (\bu)  ~\Bigg|~\cF_{j-1}\Bigg)\\
		={} & 	\Pr\Bigg(\tol (\bu) \cdot f_{\blam^j}\left(\cR^j\right) \leq \E \left[\tol (\bu) \cdot f_{\blam^j}\left(\cR^j\right) \,|\,\cF_{j-1}\right]-t\cdot \tol (\bu)  ~\Bigg|~\cF_{j-1}\Bigg)\\
			={} & 	\Pr\Bigg(f_{\blam^j}\left(\cR^j\right) \leq \E \left[f_{\blam^j}\left(\cR^j\right) \,|\,\cF_{j-1}\right]-t~\Bigg|~\cF_{j-1}\Bigg)\\
			\leq{} & \exp \left( -\frac{t^2}{2 \cdot \E \left[f_{\blam^j}\left(\cR^j\right) \,|\,\cF_{j-1} \right]} \right).
	\end{aligned}
\end{equation} The first equality follows from \eqref{eq:S2}. The second equality holds since $T$ is $\cF_{j-1}$-measurable; thus, it holds that $\E \left[T~|~ \cF_{j-1}\right] = T$. The inequality holds by applying the concentration bound from \Cref{lem:generalized_concentration}. We now bound $f_{\blam^j}\left(\cR^j\right)$. \begin{equation}
\label{eq:Bf}
f_{\blam^j}\left(\cR^j\right) = \frac{h_{\blam^j} \left( \bigcup^q_{b=1} R^j_b \right)}{\tol(\bu)} \leq \frac{\sum^q_{b=1} h_{\blam^j} \left( R^j_b \right)}{\tol(\bu)} \leq \frac{\sum^q_{b=1} \max_{C \in \cC} h_{\blam^j} \left( C \right)}{\tol(\bu)} \leq \frac{\sum^q_{b=1} \tol(\bu)}{\tol(\bu)} = q. %
\end{equation} The last inequality follows by \eqref{eq:Be}. Therefore, 
\begin{equation*}
	\begin{aligned}
					 \Pr\left(\bu \cdot \bgam^j \geq \E[\bu \cdot\bgam^j \,|\,\cF_{j-1}] +t\cdot \tol (\bu)  ~\bigg|~\cF_{j-1}\right) \leq \exp \left( -\frac{t^2}{2 \cdot \E \left[f_{\blam^j}\left(\cR^j\right) \,|\,\cF_{j-1} \right]} \right)
					 \leq \exp \left( -\frac{t^2}{2 \cdot q} \right). 
	\end{aligned}
\end{equation*} Thus, we can get the same bound unconditional on $\cF_{j-1}$: 
\begin{equation*}
	\begin{aligned}
		\Pr\left(\bu \cdot \bgam^j \geq \E[\bu \cdot\bgam^j \,|\,\cF_{j-1}] +t\cdot \tol (\bu)\right) ={} & 	\E \left[\Pr\left(\bu \cdot \bgam^j \geq \E[\bu \cdot\bgam^j \,|\,\cF_{j-1}] +t\cdot \tol (\bu)~\big|~\cF_{j-1}\right) \right]  \\
		\leq{} & \E \left[\exp \left( -\frac{t^2}{2 \cdot q} \right) \right]\\
		={} & \exp \left( -\frac{t^2}{2 \cdot q} \right) 
	\end{aligned}
\end{equation*}
\end{proof}

	By inductive application  of \Cref{lem:step_concentration_upperbound} and  using \Cref{lem:gamma_expectation} we can show the following.
	\begin{lemma}
		\label{lem:concentration_vector}
	Let $\bu \in \mathbb{R}^I_{\geq 0 }$ and $t>0$. Then,
	$$
	\Pr\left(\forall j\in[\eps^{-1}]:~~ \bu \cdot \bgam^{j-1}  \leq \left( 1- (j-1) \cdot \eps\right) \cdot \bu \cdot \one_{S^*}  +\frac{t}{\eps} \cdot \tol(\bu) \right)  \geq 1-\frac{1}{\eps} \cdot \exp\left( -\frac{t^2}{q}\right)
	$$
	\end{lemma}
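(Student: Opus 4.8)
The plan is to prove the statement by a union bound over $j \in [\eps^{-1}]$, where for each fixed $j$ we apply \Cref{lem:step_concentration_upperbound} and then unwind the conditional expectation $\E[\bu \cdot \bgam^{j-1} \mid \cF_{j-2}]$ recursively via \Cref{lem:gamma_expectation}. Concretely, fix $j$ and consider the event $E_j := \{ \bu \cdot \bgam^{j-1} \geq \E[\bu \cdot \bgam^{j-1} \mid \cF_{j-2}] + \frac{t}{\eps}\cdot \tol(\bu)\}$. By \Cref{lem:step_concentration_upperbound} (applied with index $j-1$ and with the deviation parameter set to $t/\eps$), we have $\Pr(E_j) \leq \exp(-\frac{(t/\eps)^2}{2\eps m}) = \exp(-\frac{t^2}{2\eps^3 m})$. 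I would then note that $q = \eps m$ and $m > \exp(\exp(\eps^{-30}))$ by $\eps$-simplicity, so $\eps^3 m \geq q$ comfortably after absorbing constants; this lets us replace the bound by the cleaner $\exp(-t^2/q)$ claimed in the statement. (If the intended bound is literally $\exp(-t^2/q)$, one checks $\frac{t^2}{2\eps^3 m} \geq \frac{t^2}{q} = \frac{t^2}{\eps m}$ iff $\eps^2 \geq 2$, which is false — so more likely the intended reading is that after the union bound over the $\eps^{-1}$ values of $j$ one picks up a factor $\eps^{-1}$ and the exponent constant is adjusted; I would state the computation in whichever normalization the paper uses and note that the crude $\eps$-simplicity assumption gives plenty of slack.)

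Next, the key deterministic identity: I claim $\E[\bu \cdot \bgam^{j-1} \mid \cF_{j-2}] = \frac{1-(j-1)\eps}{1-(j-2)\eps}\cdot \bu \cdot \bgam^{j-2}$ by linearity together with \Cref{lem:gamma_expectation} (applied at index $j-1$), since $\bu \cdot \bgam^{j-1} = \sum_i \bu_i \bgam^{j-1}_i$ and $\E[\bgam^{j-1}_i \mid \cF_{j-2}] = \frac{1-(j-1)\eps}{1-(j-2)\eps}\bgam^{j-2}_i$. Iterating this backwards (telescoping the ratios) and using $\bgam^0 = \one_{S^*}$ yields $\E[\bu \cdot \bgam^{j-1} \mid \cF_0] = (1-(j-1)\eps)\cdot \bu \cdot \one_{S^*}$. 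The plan is to show by a downward induction on the iteration index that, on the event $\bigcap_{j'\le j}\overline{E_{j'}}$, we have $\bu\cdot\bgam^{j-1} \le (1-(j-1)\eps)\cdot \bu\cdot\one_{S^*} + \frac{t}{\eps}\cdot \tol(\bu)\cdot(\text{something bounded by }1)$: on $\overline{E_j}$, $\bu\cdot\bgam^{j-1} \le \E[\bu\cdot\bgam^{j-1}\mid\cF_{j-2}] + \frac{t}{\eps}\tol(\bu) = \frac{1-(j-1)\eps}{1-(j-2)\eps}\,\bu\cdot\bgam^{j-2} + \frac{t}{\eps}\tol(\bu)$, and then substitute the inductive bound on $\bu\cdot\bgam^{j-2}$. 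The multiplicative factor $\frac{1-(j-1)\eps}{1-(j-2)\eps} < 1$ contracts the previously accumulated error term, so the geometric series of error contributions stays bounded by $\frac{t}{\eps}\tol(\bu)$ up to constants — in fact one wants the final accumulated slack to remain $O(\tfrac{t}{\eps}\tol(\bu))$, matching the statement.

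Finally, a union bound over the $\eps^{-1}$ bad events $E_1,\dots,E_{\eps^{-1}}$ gives $\Pr(\bigcup_j E_j) \le \eps^{-1}\exp(-t^2/q)$ (after the constant-absorption step above), so with probability at least $1 - \tfrac1\eps\exp(-t^2/q)$ all the bounds $\bu\cdot\bgam^{j-1}\le (1-(j-1)\eps)\bu\cdot\one_{S^*} + \tfrac{t}{\eps}\tol(\bu)$ hold simultaneously, which is the claim. I expect the main obstacle to be bookkeeping the accumulated error term through the induction: naively each of the up-to-$\eps^{-1}$ steps could add $\frac{t}{\eps}\tol(\bu)$, giving $\frac{t}{\eps^2}\tol(\bu)$, which is worse than stated. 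The fix is to exploit the contraction factor $\frac{1-(j-1)\eps}{1-(j-2)\eps}$: when errors are discounted by these factors the telescoped sum of $\prod$(ratios)$\cdot \frac{t}{\eps}\tol(\bu)$ collapses because $\sum_{j'} \frac{1-(j-1)\eps}{1-(j'-1)\eps}\cdot \eps = \sum_{j'}\frac{1-(j-1)\eps}{1-(j'-1)\eps}\cdot\eps \le 1$ (a Riemann-sum estimate, using $1-(j-1)\eps \le 1-(j'-1)\eps$), so the total error stays $\frac{t}{\eps}\tol(\bu)$. I would be careful to set up the induction hypothesis in the normalized form $\frac{\bu\cdot\bgam^{j-1}}{1-(j-1)\eps} \le \bu\cdot\one_{S^*} + (\text{error})/(1-(j-1)\eps)$ so that the recursion becomes additive rather than multiplicative, making the telescoping transparent.
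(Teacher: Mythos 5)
Your overall plan — per-step concentration via Lemma~\ref{lem:step_concentration_upperbound}, unwind the one-step recursion from Lemma~\ref{lem:gamma_expectation}, union bound over $j$ — is the right skeleton and is the approach the paper takes. But there is a genuine gap in how you allocate the deviation budget, and the fix you propose does not actually close it.

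You apply Lemma~\ref{lem:step_concentration_upperbound} with per-step deviation $\frac{t}{\eps}\cdot\tol(\bu)$, so the per-step bad events are $E_j=\{\bu\cdot\bgam^{j-1}\geq \E[\bu\cdot\bgam^{j-1}\mid\cF_{j-2}] +\frac{t}{\eps}\tol(\bu)\}$. You correctly observe the recursion $A_j=\frac{1-(j-1)\eps}{1-(j-2)\eps}A_{j-1}+\frac{t}{\eps}\tol(\bu)$ for the accumulated slack, and you correctly worry that naively this gives $A_j=\Theta\!\left(\frac{t}{\eps^2}\right)\tol(\bu)$. Your proposed rescue — that the contraction factors collapse the telescoped sum — does not work, and the justification you write reveals why: you bound $\sum_{j'}\frac{1-(j-1)\eps}{1-(j'-1)\eps}\cdot\eps\le 1$, but the factor of $\eps$ inside the sum is not present in the recursion. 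The correct telescoped expression is $A_j=\frac{t}{\eps}\tol(\bu)\sum_{j'=0}^{j-2}\frac{1-(j-1)\eps}{1-j'\eps}$, whose bracket is bounded only by $j-1\le\eps^{-1}$, giving $A_j\le\frac{t}{\eps^2}\tol(\bu)$ — exactly the loss you feared. The contraction does not save a factor of $\eps$; it only shows the sum is at most $(j-1)$ rather than, say, something larger.

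The actual fix is simpler and goes the other way: apply Lemma~\ref{lem:step_concentration_upperbound} with per-step deviation $t\cdot\tol(\bu)$ (not $\tfrac{t}{\eps}\cdot\tol(\bu)$). Then each step of the induction adds only $t\cdot\tol(\bu)$ to the slack, the same telescoping argument gives $\bu\cdot\bgam^{j-1}\le(1-(j-1)\eps)\,\bu\cdot\one_{S^*}+(j-1)\,t\,\tol(\bu)$, and since $j-1\le\eps^{-1}$ the accumulated error is bounded by $\frac{t}{\eps}\tol(\bu)$ as required. The per-step failure probability is then $\exp\!\left(-\frac{t^2}{2q}\right)$ and a union bound over the $\eps^{-1}$ values of $j$ gives the stated bound (up to the harmless constant-factor bookkeeping you already flagged). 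This is exactly what the paper does (its Claim~\ref{claim:auxCV} establishes the deterministic implication $\bu\cdot\bgam^{j}<\E[\bu\cdot\bgam^j\mid\cF_{j-1}]+t\,\tol(\bu)\ \forall j\ \Rightarrow\ \bu\cdot\bgam^{j-1}\le(1-(j-1)\eps)\bu\cdot\one_{S^*}+(j-1)t\,\tol(\bu)\ \forall j$). In short: you should be spending $t$, not $t/\eps$, at each step; the factor $\eps^{-1}$ comes from the number of steps, not from each step.
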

	
	\begin{proof}
		We use the following auxiliary claim. 
		\begin{claim}
			\label{claim:auxCV}
			If for all $ j\in[\eps^{-1}]$ it holds that $\bu \cdot \bgam^j < \E \left[ \bu \cdot \bgam^j ~|~\cF_{j-1} \right]+t \cdot \tol(\bu)$ Then for all $j\in[\eps^{-1}]$ \begin{equation}
				\label{eq:jj}
				\bu \cdot \bgam^{j-1}  \leq \left( 1- (j-1) \cdot \eps\right) \cdot \bu \cdot \one_{S^*}  + (j-1) \cdot t \cdot \tol(\bu). 
			\end{equation}
		\end{claim}
	\begin{claimproof}
		Assume that for all $ j\in[\eps^{-1}]$ it holds that $\bu \cdot \bgam^j < \E \left[ \bu \cdot \bgam^j ~|~\cF_{j-1} \right]+t \cdot \tol(\bu)$. We prove that \eqref{eq:jj} holds for all $j\in[\eps^{-1}]$ by induction on $j$. For the base case, let $j = 1$. Recall that $\bgam^0 = \one_{S^*}$; thus, $$	\bu \cdot \bgam^{j-1} =   \bu \cdot \one_{S^*} = \left( 1- (1-1) \cdot \eps\right) \cdot \bu \cdot \one_{S^*}  + (1-1) \cdot t \cdot \tol(\bu).$$
		For the step of the induction, assume that \eqref{eq:jj} holds for some $j\in[\eps^{-1}]$. Consider \eqref{eq:jj} for $j+1$:
		\begin{equation}
			\label{eq:Nt}
			\begin{aligned}
					\bu \cdot \bgam^{j} < {} & \E \left[ \bu \cdot \bgam^j ~|~\cF_{j-1} \right]+t \cdot \tol(\bu)\\
					={} &  \E \left[ \sum_{i \in I} \bu_i \cdot \bgam^j_i ~|~\cF_{j-1} \right]+t \cdot \tol(\bu)\\
						={} &  \sum_{i \in I} \E \left[  \bu_i \cdot \bgam^j_i ~|~\cF_{j-1} \right]+t \cdot \tol(\bu)\\
							={} &  \sum_{i \in I} \bu_i \cdot \E \left[   \bgam^j_i ~|~\cF_{j-1} \right]+t \cdot \tol(\bu)\\
					={} &   \sum_{i \in I} \frac{1-j\cdot \eps}{1-(j-1)\cdot \eps}\cdot \bu_i \cdot \bgam^{j-1}_i+t \cdot \tol(\bu)\\
			\end{aligned}
		\end{equation} The  inequality holds by the assumption at the beginning of the claim. The fourth equality follows by \Cref{lem:gamma_expectation}. therefore, by rewriting the last expression in \eqref{eq:Nt}: 
	\begin{equation*}
		\begin{aligned}
				\bu \cdot \bgam^{j} <{} &   \frac{1-j\cdot \eps}{1-(j-1)\cdot \eps}\cdot \bu \cdot \bgam^{j-1}+t \cdot \tol(\bu)\\
		\leq{} &  \frac{1-j\cdot \eps}{1-(j-1)\cdot \eps}\cdot \bigg(  \left( 1- (j-1) \cdot \eps\right) \cdot \bu \cdot \one_{S^*}  + (j-1) \cdot t \cdot \tol(\bu) \bigg)+t \cdot \tol(\bu)\\
		={} &     \left( 1-j\cdot \eps\right) \cdot \bu \cdot \one_{S^*}  + \left(\frac{1-j\cdot \eps}{1-(j-1)\cdot \eps}\cdot (j-1)+1\right) \cdot  t \cdot \tol(\bu)\\
		\leq{} &     \left( 1-j\cdot \eps\right) \cdot \bu \cdot \one_{S^*}  + j \cdot t \cdot \tol(\bu)\\
		\end{aligned}
	\end{equation*} The second inequality follows from the induction hypothesis.
		\end{claimproof}
	
	Therefore, by \Cref{claim:auxCV} we have 
	\begin{equation}
		\label{eq:Nt22}
		\begin{aligned}
		{} &	\Pr\left(\forall j\in[\eps^{-1}]:~~ \bu \cdot \bgam^{j-1}  \leq \left( 1- (j-1) \cdot \eps\right) \cdot \bu \cdot \one_{S^*}  +\frac{t}{\eps} \cdot \tol(\bu) \right) \\ 
		\geq{} &  \Pr\bigg(\forall j\in[\eps^{-1}]:~~ \bu \cdot \bgam^{j-1}  \leq \left( 1- (j-1) \cdot \eps\right) \cdot \bu \cdot \one_{S^*}  +(j-1) \cdot t\cdot \tol(\bu) \bigg) \\
				\geq{} &  \Pr\bigg(\forall j \in \left[ \eps^{-1} \right]:~~\bu \cdot \bgam^{j-1}  <  \E \left[ \bu \cdot \bgam^j ~|~\cF_{j-1} \right]+t \cdot \tol(\bu) \bigg) \\
					\geq{} &  1-\Pr\bigg(\exists j \in \left[ \eps^{-1} \right]:~~\bu \cdot \bgam^{j-1}  \geq \E \left[ \bu \cdot \bgam^j ~|~\cF_{j-1} \right]+t \cdot \tol(\bu) \bigg) \\
						\geq{} &  1- \sum^{\eps^{-1}}_{j = 1} \Pr\bigg(\bu \cdot \bgam^{j-1}  \geq \E \left[ \bu \cdot \bgam^j ~|~\cF_{j-1} \right]+t \cdot \tol(\bu) \bigg)
		\end{aligned}
	\end{equation} The second inequality holds by \Cref{claim:auxCV}. The last inequality follows from the union bound. Thus, by \eqref{eq:Nt22} and \Cref{lem:step_concentration_upperbound} it follows that:
	\begin{equation*}
	\label{eq:Nt2}
	\begin{aligned}
			\Pr\left(\forall j\in[\eps^{-1}]:~~ \bu \cdot \bgam^{j-1}  \leq \left( 1- (j-1) \cdot \eps\right) \cdot \bu \cdot \one_{S^*}  +\frac{t}{\eps} \cdot \tol(\bu) \right)
		\geq{} & 1-\sum^{\eps^{-1}}_{j = 1} \frac{1}{\eps} \cdot \exp\left( -\frac{t^2}{q}\right)\\
		={} & 1-\frac{1}{\eps} \cdot \exp\left( -\frac{t^2}{q}\right). 
	\end{aligned}
\end{equation*}
	\end{proof}
	By \Cref{lem:structure} there is an  $\eps^2$-linear structure $\cL$ of the optimal solution $S^*$ of $\cI$ such that $|\cL| \leq \exp(-\eps^{-8})$. 
	The following is an immediate consequence of \Cref{lem:concentration_vector}. 
	\begin{lemma}
		\label{lem:concentration_structure}
	For every $t>0$ it holds that 
	$$\Pr\left( \forall \bu \in \cL, \, j\in [\eps^{-1}]:~ \bu\cdot \bgam^{j-1} \leq \left(1-(j-1)\cdot \eps\right) \cdot \bu \cdot \one_{S^*} +\frac{t}{\eps} \cdot\ \tol(\bu)\right) \geq 1- \frac{|\cL|}{\eps} \cdot \exp\left( -\frac{t^2}{q}\right).$$
	\end{lemma}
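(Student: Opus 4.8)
The plan is to obtain \Cref{lem:concentration_structure} directly from the single-vector tail bound \Cref{lem:concentration_vector} via a union bound over the constraint vectors of the linear structure. Fix $t>0$. For a fixed $\bu\in\cL$, \Cref{lem:concentration_vector} already guarantees that, with probability at least $1-\frac{1}{\eps}\exp(-t^2/q)$, the inequality $\bu\cdot\bgam^{j-1}\leq (1-(j-1)\eps)\,\bu\cdot\one_{S^*}+\frac{t}{\eps}\,\tol(\bu)$ holds simultaneously for all $j\in[\eps^{-1}]$; equivalently, the complementary event that some $j\in[\eps^{-1}]$ violates this inequality has probability at most $\frac{1}{\eps}\exp(-t^2/q)$.

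Before invoking \Cref{lem:concentration_vector} I would dispose of the degenerate case $\tol(\bu)=0$, which is needed because \Cref{lem:step_concentration_upperbound} (on which \Cref{lem:concentration_vector} rests) assumes $\tol(\bu)\neq 0$. If $\tol(\bu)=0$ then $\sum_{i\in C}\bu_i=0$ for every $C\in\cC$; since each item of $S^*$ lies in some configuration of the fixed optimal solution and $\bu\geq\mathbf{0}$, this forces $\bu_i=0$ for all $i\in S^*$. As $\supp(\bgam^{j-1})\subseteq\supp(\bgam^0)=S^*$ for every $j$ (immediate from the recursive definition of $\bgam^j$ and $\blam^j$, since $\blam^j_i=0$ whenever $\bgam^{j-1}_i=0$), we get $\bu\cdot\bgam^{j-1}=0=(1-(j-1)\eps)\,\bu\cdot\one_{S^*}$, so the desired inequality holds with probability $1$ for such $\bu$, which therefore contribute nothing to the failure probability.

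For the remaining vectors, a union bound over the at most $|\cL|$ vectors of $\cL$ gives that the probability of the inequality failing for \emph{some} pair $(\bu,j)$ with $\bu\in\cL$, $j\in[\eps^{-1}]$, is at most $|\cL|\cdot\frac{1}{\eps}\exp(-t^2/q)=\frac{|\cL|}{\eps}\exp(-t^2/q)$; passing to the complement yields the claimed bound. The substance of the argument does not lie in this lemma --- which is essentially a one-line union bound --- but in the fact that $|\cL|$ depends only on $\eps$ (here $|\cL|\leq\exp(\eps^{-4})$ via \Cref{lem:structure} with $\delta=\eps^2$) and that \Cref{lem:concentration_vector} has already absorbed the inductive chaining across the $\eps^{-1}$ iterations; the only point in the present proof requiring a moment's care is the $\tol(\bu)=0$ corner case flagged above.
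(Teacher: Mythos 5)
Your proposal is correct and follows the same route as the paper's proof: a union bound over $\bu\in\cL$, invoking \Cref{lem:concentration_vector} for each fixed $\bu$ to control the failure probability across all $j\in[\eps^{-1}]$. Your handling of the $\tol(\bu)=0$ corner case (which the paper's proof of \Cref{lem:concentration_vector} implicitly requires, since \Cref{lem:step_concentration_upperbound} assumes $\tol(\bu)\neq 0$, but does not spell out) is a small extra bit of rigor rather than a different approach — indeed, since every singleton $\{i\}$ is a configuration, $\tol(\bu)=0$ forces $\bu=\mathbf{0}$, which makes the inequality trivial.
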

	
	\begin{proof}
		\begin{equation}
			\label{eq:sB2}
			\begin{aligned}
		&{}	\Pr\left( \forall \bu \in \cL, \, j\in [\eps^{-1}]:~ \bu\cdot \bgam^{j-1} \leq \left(1-(j-1)\cdot \eps\right) \cdot \bu \cdot \one_{S^*} +\frac{t}{\eps} \cdot\ \tol(\bu)\right) \\
		={} & 1- \Pr\left( \exists \bu \in \cL, \, j\in [\eps^{-1}]:~ \bu\cdot \bgam^{j-1} > \left(1-(j-1)\cdot \eps\right) \cdot \bu \cdot \one_{S^*} +\frac{t}{\eps} \cdot\ \tol(\bu)\right) \\
			\geq{} & 1- \sum_{\bu \in \cL}\Pr\left(\exists j\in [\eps^{-1}]:~ \bu\cdot \bgam^{j-1} > \left(1-(j-1)\cdot \eps\right) \cdot \bu \cdot \one_{S^*} +\frac{t}{\eps} \cdot\ \tol(\bu)\right) \\
				\geq{} & 1- \sum_{\bu \in \cL} \frac{1}{\eps} \cdot \exp\left( -\frac{t^2}{q}\right) \\
				={} & 1- \frac{|\cL|}{\eps} \cdot \exp\left( -\frac{t^2}{q}\right). 
			\end{aligned}
		\end{equation} The first inequality holds by the union bound. The second inequality follows from \Cref{lem:concentration_vector}. 
	\end{proof}
	
	We use  \Cref{lem:concentration_structure}
	and the properties of the linear structure $\cL$
	to provide a lower bound for $\v(\bx^j)$ in terms of $\by^{j-1}$.

	\xingamma*
\begin{proof}
	Define $t=\eps^{10}\cdot \exp(-\eps^{-10})$. Also, define the event
	$$
	\Gamma \,=\, \left\{\forall \bu \in \cL, \, j\in [\eps^{-1}]:~ \bu\cdot \bgam^{j-1} \leq \left(1-(j-1)\cdot \eps\right) \cdot \bu \cdot \one_{S^*} +\frac{t}{\eps} \cdot\ \tol(\bu) \right\}.
	$$
	
	By \Cref{lem:concentration_structure} it holds that $\Pr(\Gamma)\geq 1-\frac{\abs{\cL}}{\eps} \cdot \exp\left( -\frac{t^2}{q}\right)$. It also holds that,	
	$$
	\begin{aligned}
		\frac{\abs{\cL}}{\eps} \cdot \exp\left( -\frac{t^2}{q}\right) \, &\leq \, \frac{\exp(\eps^{-8}) }{\eps  } \cdot \exp\left(- \frac{\eps^{20}\cdot \exp(-2\cdot \eps^{-10})\cdot m^2}{\eps \cdot m }\right)\\
		&=\,   \frac{\exp(\eps^{-8}) }{\eps  } \cdot \exp\left(- \eps^{19}\cdot \exp(-2\cdot \eps^{-10})\cdot m\right)\\
		&\leq \, \frac{\exp(\eps^{-8}) }{\eps  } \cdot \exp\left(- \eps^{19}\cdot \exp(-2\cdot \eps^{-10})\cdot \exp\left(\exp(\eps^{-30})\right)\right)\\
		&\leq \,\exp(-\eps^{-20}).
\end{aligned}
		$$
	The first inequality holds as $\abs{\cL}\leq \exp(\eps^{-8})$, the second inequality uses $m\geq \exp(\exp(\eps^{-30}))$ as $\cI$ is an $\eps$-simple instance (\Cref{def:simple}). 
	Thus $\Pr(\Gamma)\geq 1-\exp(-\eps^{-20})$.
	
	We  assume that $\Gamma$ occurs from this point on in the proof.
 Since $\cL$ is a linear structure, for every $j\in[\eps^{-1}]$ there is a fractional solution $\bz^j$ such that  $\cover(\bz^j) =\bgam^{j-1}$ and 
	\begin{equation}
		\label{eq:zsize}
	\begin{aligned}
	\|\bz^j\| \,&\leq \, (1-(j-1)\eps)\cdot m+20\eps^2\cdot m +\left(\frac{t}{\eps}+1\right)\cdot \exp(\eps^{-10})\\
	&=\, (1-(j-1)\eps)\cdot m+20\eps^2\cdot m +\left(\frac{\eps^{10}\cdot \exp(-\eps^{-10})}{\eps}+1\right)\cdot\exp(\eps^{-10})\\
	&\leq \, (1-(j-1)\eps)\cdot m+30\cdot \eps^2\cdot m, 
	\end{aligned}
	\end{equation}
	where the last inequality holds as $m\geq \exp(\exp(\eps^{-30}))$.
	For every $j\in [\eps^{-1}]$ define  $$\bbe^j = \bz^j \cdot \frac{1-(j-1)\eps}{1-(j-1)\eps+30\cdot \eps^2}.$$ 
	By \eqref{eq:zsize}, for every $j\in [\eps^{-1}$] it holds that $\|\bbe^j\| \leq (1-(j-1)\eps)\cdot m=m_j$ (recall $m_j$ is defined in \Cref{randomized_rounding:solve} of \Cref{alg:randomized_rounding}). Furthermore, 
	\begin{equation}
		\label{eq:beta_cover}\cover(\bbe^j)\,=\, \cover(\bz^j) \cdot \frac{1-(j-1)\eps}{1-(j-1)\eps+30\cdot \eps^2}\,=\,\bgam^{j-1} \cdot \frac{1-(j-1)\eps}{1-(j-1)\eps+30\cdot \eps^2}.
		\end{equation} 
		For every $j\in[\eps^{-1}]$, by \eqref{eq:beta_cover} it holds that $$\supp(\cover(\bbe^j))\,\subseteq \,\supp(\cover(\bz^j))\,\subseteq\, \supp(\bgam^{j-1})\,\subseteq\, S_{j-1},$$
		 and therefore $\supp(\bbe^j) \subseteq 2^{S_{j-1}}$.  Thus, $\bbe^j$  is a solution for $\LP(S_{j-1},m_j)$ for every $j\in [\eps^{-1}$]. Since $\bx^j$ is a $(1-\eps)$-approximate  solution  for $\LP(S_{j-1},m_j)$ we conclude that 
		\begin{equation*}
		 \begin{aligned}
		\v(\bx^j) \,&\geq (1-\eps)\cdot \v(\bbe^j) \\&
		= \,(1-\eps)\cdot \v\left( \bgam^{j-1} \cdot \frac{1-(j-1)\eps}{1-(j-1)\eps+30\cdot \eps^2}\right)  \\
		& =\,  (1-\eps)\cdot \frac{1-(j-1)\eps}{1-(j-1)\eps+30\cdot \eps^2}\cdot  \v(\bgam^{j-1})\\
		&=\,  (1-\eps)\cdot\left( 1- \frac{30\cdot \eps^2}{1-(j-1)\eps+30\cdot \eps^2}\right)\cdot  \v(\bgam^{j-1})\\
		&\geq \, (1-\eps)\cdot\left( 1- \frac{30\cdot \eps^2}{1-(j-1)\eps}\right)\cdot  \v(\bgam^{j-1}),
		\end{aligned}
		\end{equation*}
		for every $j\in [\eps^{-1}]$, 
	where the first equality follows from \eqref{eq:beta_cover}. Since we assume $\Gamma$ occurs, we have
	$$
	\Pr\left( \forall j\in [\eps^{-1}]:~~
	\v(\bx^j) \,\geq \, (1-\eps)\cdot\left( 1- \frac{30\cdot \eps^2}{1-(j-1)\eps}\right)\cdot  \v(\bgam^{j-1})\right) \,\geq \, \Pr(\Gamma)\,\geq 1-\exp(-\eps^{-20}).
	$$
\end{proof}

	\subsubsection{Proof of~\Cref{lem:gamma_value_lowerbound}}
	\label{sec:gamma_val}
	
	The proof of \Cref{lem:gamma_value_lowerbound}  is a simple consequence  \Cref{lem:gamma_expectation}.
	\gammaval*
	\begin{proof}
		We prove the following claim by induction over $j$. 
		\begin{claim}
			\label{claim:gamma_exp_uncond}
			For every $j\in [\eps^{-1}]$ and $i\in I$ is holds that $\E[\bgam^{j-1}_i ] \,=\, (1-(j-1)\cdot \eps)\cdot \bgam^0_i$.
		\end{claim}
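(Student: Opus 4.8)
The plan is to prove Claim~\ref{claim:gamma_exp_uncond} by a straightforward induction on $j$ using the tower property together with the conditional identity of \Cref{lem:gamma_expectation}, and then obtain \Cref{lem:gamma_value_lowerbound} itself from the claim by linearity of expectation. Note first that $\eps^{-1} = (\eps^{-1/2})^2 \in \mathbb{N}$, so the induction runs over the well-defined finite index set $[\eps^{-1}]$.

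For the base case $j=1$, the vector $\bgam^0 = \one_{S^*}$ is deterministic, so $\E[\bgam^0_i] = \bgam^0_i = (1 - (1-1)\cdot\eps)\cdot\bgam^0_i$, as required. For the inductive step, I would fix $j \in \{1,\ldots,\eps^{-1}-1\}$, assume $\E[\bgam^{j-1}_i] = (1-(j-1)\eps)\cdot\bgam^0_i$, and compute
\[
\E[\bgam^j_i] \;=\; \E\!\left[\,\E[\bgam^j_i \mid \cF_{j-1}]\,\right] \;=\; \E\!\left[\frac{1-j\eps}{1-(j-1)\eps}\cdot \bgam^{j-1}_i\right] \;=\; \frac{1-j\eps}{1-(j-1)\eps}\cdot\E[\bgam^{j-1}_i],
\]
where the first equality is the tower property and the second is \Cref{lem:gamma_expectation} (applicable precisely because $j \le \eps^{-1}-1$; here the factor $\frac{1-j\eps}{1-(j-1)\eps}$ is $\cF_{j-1}$-measurable, in fact constant, so it comes out of the conditional expectation). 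Plugging in the induction hypothesis, the right-hand side equals $\frac{1-j\eps}{1-(j-1)\eps}\cdot(1-(j-1)\eps)\cdot\bgam^0_i = (1-j\eps)\cdot\bgam^0_i$, which is exactly the claim for index $j+1$. This completes the induction and proves Claim~\ref{claim:gamma_exp_uncond}.

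Finally, the lemma follows by linearity of expectation: for every $j\in[\eps^{-1}]$,
\[
\E[\v(\bgam^{j-1})] \;=\; \E\!\left[\sum_{i\in I}\v(i)\cdot\bgam^{j-1}_i\right] \;=\; \sum_{i\in I}\v(i)\cdot\E[\bgam^{j-1}_i] \;=\; (1-(j-1)\eps)\cdot\sum_{i\in I}\v(i)\cdot\bgam^0_i,
\]
and since $\bgam^0 = \one_{S^*}$ the last sum equals $\sum_{i\in S^*}\v(i) = \v(S^*) = \OPT(\cI)$ by the choice of the optimal solution $(C^*_1,\ldots,C^*_m)$. I do not expect any genuine obstacle here; the only points needing a word of care are that $\bgam^0$ is deterministic (so the base case needs no probabilistic argument), that the index range is chosen so that \Cref{lem:gamma_expectation} applies in the inductive step, and that the expectations are finite (the probability space $(\Omega,\cF,\Pr)$ is finite and, by \Cref{lem:ZO}, the entries $\bgam^{j-1}_i$ are bounded by $1$).
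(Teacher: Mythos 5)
Your proof is correct and follows essentially the same route as the paper: the base case uses that $\bgam^0=\one_{S^*}$ is deterministic, the inductive step applies the tower property together with Lemma~\ref{lem:gamma_expectation}, and the lemma itself then follows by linearity of expectation. The only cosmetic difference is a harmless reindexing (you step from $j$ to $j+1$ rather than from $j-1$ to $j$).
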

		\begin{claimproof}~

			\noindent{\bf Base case:} for $j=0$ and every $i\in I$, as the vector $\bgam^0=\one_{S^*}$ is not random, we have $\E[\bgam^0_i] =\bgam^0_i$. 
			
			\noindent{\bf Induction Step:} Assume the induction hypothesis holds for $j-1$ and let $i\in I$. Then
			$$
			\begin{aligned}
				\E[\bgam^{j-1}_i]\, &=\, \E\left[\, \E\left[\bgam^{j-1}_i\,|\,\cF_{j-2}\right]\,\right] \, \\
				&= \,  \E\left[\,  \frac{1-(j-1)\cdot \eps}{1-(j-2)\cdot \eps} \cdot \bgam^{j-2}_i\,\right] \, \\
				&= \, \frac{1-(j-1)\cdot \eps}{1-(j-2)\cdot \eps} 
			\cdot \left(1-(j-2)\cdot \eps \right) \cdot \bgam^0_i \,\\
			&= \, \left(1-(j-1)\cdot \eps \right) \cdot \bgam^0_i.
			\end{aligned}
			$$
			The second equality follows from \Cref{lem:gamma_expectation}, and the third equality follows from the induction hypothesis. 
		\end{claimproof}
		By \Cref{claim:gamma_exp_uncond} we have
		$$
		\begin{aligned}
		\E[\v(\bgam^{j-1})] \, &=\, \E\left[ \sum_{i\in I} \v(i)\cdot \bgam^{j-1}_i\right] \, \\
		&= \, \sum_{i\in I} (1-(j-1)\cdot \eps)\cdot \bgam^0_i \cdot \v(i) \,\\
		&=\, (1-(j-1)\cdot \eps)\cdot \sum_{i\in S^*} \v(i) \,\\
		&=\, (1-(j-1)\cdot \eps)\cdot \v(S^* )\, \\
		&=\,(1-(j-1)\cdot \eps)\cdot \OPT.
		\end{aligned}
		$$
	\end{proof}

	\subsubsection{Approximation Ratio}
	\label{sec:combine}
	Recall that $V=\v\left( \bigcup_{j=1}^{\eps^{-1}} \bigcup_{ b=1}^{q} R^j_b\right)$ is the value of the solution returned by \Cref{alg:randomized_rounding}.
	The results in \Cref{lem:Upper_Bound_V,lem:value_in_terms_of_gamma,lem:gamma_value_lowerbound} are put together to derive the next lower bound on $V$.

	\approximation*
	\begin{proof}
	Define the event
	$$
	\begin{aligned}
	B\,&=\,\left\{ \forall j\in [\eps^{-1}]:~
	\v(\bx^j) \,\geq \, (1-\eps)\cdot\left( 1- \frac{30\cdot \eps^2}{1-(j-1)\eps}\right)\cdot  \v(\bgam^{j-1})\right\}.
	\end{aligned}
$$
	By \Cref{lem:value_in_terms_of_gamma,} it holds that  $\Pr(B)\geq 1-\exp(-\eps^{-20})$. Observe that $\one_B$ is a random variable whose value is $1$ when $B$ occurs, and $0$ otherwise.

	By \Cref{lem:Upper_Bound_V} it holds that 
	\begin{equation}
		\label{eq:approx_V1}
	\begin{aligned}
		\E\left[V\right]\,&\geq \, \E\left[ (\eps-\eps^{\frac{3}{2}})\cdot \sum_{j=1}^{\eps^{-1} - \eps^{-\frac{1}{2}}}  \frac{ \v(\bx^j)}{1-(j-1)\eps}\right]\\
		&\geq \, \E\left[\one_B\cdot  (\eps-\eps^{\frac{3}{2}})\cdot \sum_{j=1}^{\eps^{-1} - \eps^{-\frac{1}{2}}}  \frac{ \v(\bx^j)}{1-(j-1)\eps}\right]\\
		&\geq \, \E\left[\one_B\cdot  (\eps-\eps^{\frac{3}{2}})\cdot \sum_{j=1}^{\eps^{-1} - \eps^{-\frac{1}{2}}}  \frac{ (1-\eps)\left( 1-\frac{30\cdot \eps^2}{1-(j-1)\cdot \eps} \right)\cdot \v(\bgam^{j-1})}{1-(j-1)\eps}\right],\\
	\end{aligned}
	\end{equation}
	where the last inequality follows from the definition of $B$.  
	For every $j\in [\eps^{-1}-\eps^{-\frac{1}{2}}]$ it holds that 
	\begin{equation*}
		\label{eq:j_simplification}
	\frac{30\cdot \eps^{2}}{1-(j-1)\cdot \eps}\,\leq \, 	\frac{30\cdot \eps^{2}}{1-(\eps^{-1}-\eps^{-\frac{1}{2} }-1)\cdot \eps}\, = \,\frac{30\cdot \eps^{2}}{\eps^{\frac{1}{2} }- \eps} \, \leq 60\cdot \eps^{\frac{3}{2}}.
	\end{equation*}
	Plugging the above into \eqref{eq:approx_V1}, we get
	\begin{equation}
		\label{eq:approx_V2}
	\begin{aligned}
		\E\left[V\right]\,
		&\geq \, \E\left[\one_B\cdot  (\eps-\eps^{\frac{3}{2}})\cdot \sum_{j=1}^{\eps^{-1} - \eps^{-\frac{1}{2}}}  \frac{ (1-\eps)\left( 1-\frac{30\cdot \eps^2}{1-(j-1)\cdot \eps} \right)\cdot \v(\bgam^{j-1})}{1-(j-1)\eps}\right],\\
		&\geq \, \E\left[\one_B\cdot  (\eps-\eps^{\frac{3}{2}})\cdot \sum_{j=1}^{\eps^{-1} - \eps^{-\frac{1}{2}}}  \frac{ (1-\eps)\left( 1-60\cdot \eps^{\frac{3}{2}} \right)\cdot \v(\bgam^{j-1})}{1-(j-1)\eps}\right],\\
		&\geq \, \E\left[\one_B\cdot \eps  (1-100\cdot \eps^{\frac{1}{2}})\cdot \sum_{j=1}^{\eps^{-1} - \eps^{-\frac{1}{2}}}  \frac{  \v(\bgam^{j-1})}{1-(j-1)\eps}\right],\\
		&=\, \E\left[ \eps  (1-100\cdot \eps^{\frac{1}{2}})\cdot \sum_{j=1}^{\eps^{-1} - \eps^{-\frac{1}{2}}}  \frac{  \v(\bgam^{j-1})}{1-(j-1)\eps}\right] \\
		&~~~~~~~~-  \E\left[(1-\one_B)\cdot \eps  (1-100\cdot \eps^{\frac{1}{2}})\cdot \sum_{j=1}^{\eps^{-1} - \eps^{-\frac{1}{2}}}  \frac{  \v(\bgam^{j-1})}{1-(j-1)\eps}\right]
	\end{aligned}
	\end{equation}
	
	By \Cref{lem:gamma_value_lowerbound}, it holds that
	 \begin{equation}
	 	\label{eq:approx_V3}
	 	\begin{aligned}
	 &\,\E\left[ \eps  (1-100\cdot \eps^{\frac{1}{2}})\cdot \sum_{j=1}^{\eps^{-1} - \eps^{-\frac{1}{2}}}  \frac{  \v(\bgam^{j-1})}{1-(j-1)\eps}\right] \,\\
	 =&\,	 \eps  (1-100\cdot \eps^{\frac{1}{2}})\cdot \sum_{j=1}^{\eps^{-1} - \eps^{-\frac{1}{2}}}  \frac{  (1-(j-1)\cdot \eps)\cdot \OPT(\II))}{1-(j-1)\eps}\\
	 =& \,	 \eps  (1-100\cdot \eps^{\frac{1}{2}}) \left(\eps^{-1} - \eps^{-\frac{1}{2}}\right)\cdot \OPT(\II)\\
	 \geq & \left(1-200\cdot \eps^{\frac{1}{2}}\right) \cdot \OPT(\II).
	 \end{aligned}
	 \end{equation}
	
	By the definition of $\bgam^j$ is holds that $\supp(\bgam^j)\subseteq S^*$ and $\bgam^j\in [0,1]$; therefore, $\v(\bgam^j )\leq \v(S^*) = \OPT(\II)$. Hence, 
	$$
	\eps  (1-100\cdot \eps^{\frac{1}{2}})\cdot  \sum_{j=1}^{\eps^{-1} - \eps^{-\frac{1}{2}}}  \frac{  \v(\bgam^{j-1})}{1-(j-1)\eps}\, \leq \,
	\eps\cdot \sum_{j=1}^{\eps^{-1} - \eps^{-\frac{1}{2}}}  \frac{  \OPT(\II)}{\eps^{\frac{1}{2}}}
	\leq \, \eps^{-\frac{1}{2}}\cdot \OPT(\II)	,
	$$
	where the first inequlity holds as $(1-100\cdot \eps^{\frac{1}{2}})\leq 1$,
	 $(1-(j-1)\eps)\geq \eps^{\frac{1}{2}}$ for $1\leq j \leq \eps^{-1}-\eps^{-\frac{1}{2}}$ and $\v(\bgam^j )\leq \v(S^*) = \OPT(\II)$. Thus,
		\begin{equation}
		\label{eq:approx_V4}
		\begin{aligned}
		\E\left[(1-\one_B)\cdot \eps  (1-100\cdot \eps^{\frac{1}{2}})\cdot \sum_{j=1}^{\eps^{-1} - \eps^{-\frac{1}{2}}}  \frac{  \v(\bgam^{j-1})}{1-(j-1)\eps}\right] \,&\leq \, \left(1-\Pr(B)\right) \cdot \eps^{-\frac{1}{2}}\cdot \OPT(\II) \,\\ &\leq \, \exp(-\eps^{-20})\cdot \eps^{-\frac{1}{2}}\cdot \OPT .
	\end{aligned}
		\end{equation}

By \eqref{eq:approx_V2}, \eqref{eq:approx_V3} and \eqref{eq:approx_V4}, we get
	\begin{equation}
	\begin{aligned}
		\E\left[V\right]\,
		&\geq \, \E\left[ \eps  (1-100\cdot \eps^{\frac{1}{2}})\cdot \sum_{j=1}^{\eps^{-1} - \eps^{-\frac{1}{2}}}  \frac{  \v(\bgam^{j-1})}{1-(j-1)\eps}\right] \\
		&~~~~~~~~-  \E\left[(1-\one_B)\cdot \eps  (1-100\cdot \eps^{\frac{1}{2}})\cdot \sum_{j=1}^{\eps^{-1} - \eps^{-\frac{1}{2}}}  \frac{  \v(\bgam^{j-1})}{1-(j-1)\eps}\right]\\
		&\geq (1-200\cdot \eps^{\frac{1}{2}})\cdot \OPT(\II) - \exp(-\eps^{-20})\cdot \eps^{-\frac{1}{2}} \cdot \OPT(\II)\\
			&\geq (1-300 \cdot \eps^{\frac{1}{2}})\cdot \OPT(\II).
	\end{aligned}
\end{equation}

	\end{proof}

		\section{Linear Structure}
		\label{sec:structure}
		
This section is devoted to the proof of \Cref{lem:structure}. 
Let $\cI = (I,\w,\v,m,k)$ be a CMK instance, $S\subseteq I$ be a subset of items which can be packed into $\ell$ bins and $0<\delta<0.1$ such that $\delta^{-1}\in \mathbb{N}$. We assume $\cI$, $\ell$ and $\delta$ are fixed throughout  the whole of  \Cref{sec:structure}. Our objective is to construct a linear structure (\Cref{def:structure}).

The construction is similar to a construction in \cite{KMS23} which deals with the more generic setting of two-dimensional (vector) bins, which is, in turn, inspired by a construction from \cite{BEK16}.
The construction in \cite{KMS23} required the items in $S$ to be packed into $\ell$ configurations with an additional property (``slack'').
Our construction leverages  the simplified structure of a single bin with a cardinality constraint to alleviate the need for this additional property.

We classify items into {\em large} and {\em small} using the concept of adjusted weights which takes into account both the weight of an item and the cardinality constraint.
Define the {\em adjusted weight} of an item $i\in I$ by $\tw(i) = w(i) + \frac{1}{k}$.  The following observations are immediate consequences of the definition of a configuration. 
\begin{obs}
	\label{obs:config_adjusted_weight}
	For every $C\in \cC$ it holds that $\tw(C)\leq 2$.
\end{obs}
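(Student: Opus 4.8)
The plan is to simply unwind the definition of the adjusted weight and apply the two defining inequalities of a configuration. Concretely, for an arbitrary $C\in\cC$ I would write
\[
\tw(C)=\sum_{i\in C}\tw(i)=\sum_{i\in C}\left(\w(i)+\frac1k\right)=\w(C)+\frac{|C|}{k}.
\]
Now invoke the definition of a configuration: $\w(C)=\sum_{i\in C}\w(i)\le 1$ and $|C|\le k$. The first inequality bounds the term $\w(C)$ by $1$, the second bounds $|C|/k$ by $1$, and summing the two bounds yields $\tw(C)\le 2$, which is exactly the claim.

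There is essentially no obstacle here: the statement is an immediate consequence of the definitions, and the only point requiring (minimal) care is to use \emph{both} constraints that define a configuration — the weight bound $\w(C)\le1$ and the cardinality bound $|C|\le k$ — rather than just one of them. Conceptually, the observation records that, measured in adjusted weight, a single bin has capacity at most $2$; this is the cardinality-constrained analogue of the trivial fact that an ordinary knapsack bin has total (ordinary) weight at most $1$. It is precisely this uniform bound on $\tw(C)$ that will later let us treat the cardinality constraint and the weight constraint on a par when constructing the linear structure — classifying items as large or small according to $\tw$, and charging the total adjusted weight of the small items in a bin against a fixed budget of $2$.
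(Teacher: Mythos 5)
Your proof is correct and is exactly the argument the paper has in mind: the observation is stated in the paper without a proof as an immediate consequence of the definitions, and your derivation $\tw(C)=\w(C)+|C|/k\le 1+1=2$ is the standard one-line justification.
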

\begin{obs}
	\label{obs:set_adjusted_weight}
	For every $T\subseteq I$, if $\tw(T)\leq 1$ then $T\in \cC$.
\end{obs}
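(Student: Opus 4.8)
\textbf{Proof proposal for Observation~\ref{obs:set_adjusted_weight}.}

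The plan is to simply unfold the definition of the adjusted weight and split the resulting sum into its two natural pieces. First I would write
$$
\tw(T)\,=\,\sum_{i\in T}\tw(i)\,=\,\sum_{i\in T}\left(\w(i)+\tfrac{1}{k}\right)\,=\,\w(T)+\frac{|T|}{k},
$$
which is an identity for any finite $T\subseteq I$. Now assume $\tw(T)\le 1$, i.e.\ $\w(T)+\frac{|T|}{k}\le 1$.

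From this inequality I would extract the two conditions defining membership in $\cC$. Since $\w:I\to[0,1]$ is non-negative, $\w(T)=\sum_{i\in T}\w(i)\ge 0$, so dropping the non-negative term $\w(T)$ gives $\frac{|T|}{k}\le 1$, hence $|T|\le k$. Symmetrically, since $|T|\ge 0$ and $k\in\mathbb{N}_{>0}$, dropping the non-negative term $\frac{|T|}{k}$ gives $\w(T)\le 1$. Therefore $T$ satisfies both $|T|\le k$ and $\w(T)\le 1$, which is exactly the definition of a configuration, so $T\in\cC$, as claimed.

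There is no real obstacle here: the statement is an immediate consequence of the additive form $\tw(T)=\w(T)+|T|/k$ together with the non-negativity of the weight function, and the only thing to be careful about is recording that both summands on the right are non-negative so that each can be bounded by the whole. (Observation~\ref{obs:config_adjusted_weight} is the analogous one-line computation in the other direction: $\tw(C)=\w(C)+|C|/k\le 1+1=2$ whenever $C\in\cC$.)
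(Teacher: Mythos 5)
Your proof is correct and is the natural (essentially the only) argument: unfold $\tw(T)=\w(T)+|T|/k$ and use non-negativity of each summand to extract the two defining conditions $\w(T)\le 1$ and $|T|\le k$. The paper states this observation without proof, treating it as immediate from the definitions, and your derivation fills in exactly that gap.
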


The construction of the linear structure relies on a classification of the items into {\em large} and {\em small}  items based on the adjusted weight. We say an item $i\in I$ is {\em large} if $\tw(i) \geq \delta$, otherwise the item is {\em small}. Let $L=\{i\in S~|~\tw(i)\geq \delta\}$ be the set of large items in $S$. Note that if $k\leq \frac{1}{\delta}$ then all the items are large.  The next observation follows from \Cref{obs:config_adjusted_weight}.
\begin{obs}
	\label{lem:large_items_in_conf}
For every $C\in \cC$ it holds that $|C\cap L|\leq \frac{2}{\delta}$.
\end{obs}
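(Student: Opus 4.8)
The plan is to obtain this bound as an immediate corollary of \Cref{obs:config_adjusted_weight}. First I would unwind the definition of the set $L$: every $i \in L$ is a \emph{large} item, meaning $\tw(i) \ge \delta$. Fix an arbitrary configuration $C \in \cC$. Since adjusted weights are nonnegative, the total adjusted weight of $C$ is at least the total adjusted weight of its large items, i.e. $\tw(C) \ge \tw(C \cap L) = \sum_{i \in C \cap L} \tw(i) \ge \delta \cdot |C \cap L|$. On the other hand, \Cref{obs:config_adjusted_weight} gives $\tw(C) \le 2$. Chaining the two inequalities yields $\delta \cdot |C \cap L| \le 2$, and dividing by $\delta$ gives $|C \cap L| \le \frac{2}{\delta}$, which is exactly the claim.

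There is essentially no obstacle to overcome here; the statement is a one-line consequence of the already-established fact that any configuration has total adjusted weight at most $2$ (which in turn just re-expresses the two defining constraints $|C| \le k$ and $\w(C) \le 1$ via $\tw(i) = \w(i) + \tfrac1k$). The only mild point of care is to compare $|C \cap L|$ with $\tw(C \cap L)$ rather than with $\tw(C)$ directly; these differ only by the contribution of the small items of $C$, which is nonnegative, so the inequality $\tw(C) \ge \delta \cdot |C \cap L|$ still holds in the needed direction. No probabilistic tools, no appeal to the linear-structure machinery, and no properties of $S$ or $\ell$ are required — only \Cref{obs:config_adjusted_weight} and the definition of a large item.
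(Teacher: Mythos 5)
Your argument is correct and is exactly the route the paper intends: the text states that the observation "follows from \Cref{obs:config_adjusted_weight}," and the one-line chain $2 \ge \tw(C) \ge \tw(C\cap L) \ge \delta\,|C\cap L|$ is the standard way to derive it. No differences from the paper's (implicit) proof.
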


The construction of the linear structure is done in three phases. The first phase deals with the large items and follows from a  standard application of the linear grouping technique of \cite{FL81}. The second phase generates a classification of the small items based on the linear grouping of the first phase. Together, the two phases are used to derive a variant of a linear structure which is insufficient for our application. The third phase refines the initial classification of the small items and leads to the final linear structure.

	We use the following two simple lemmas that construct  fractional solutions within the proofs. The first lemma is a fractional variant of First-Fit, and the second lemma only uses configurations of a single item. 
	 The proofs of the lemmas are given in \Cref{sec:defered_const}
\begin{restatable}{lemma}{fractionalfirstfit}
	\label{lem:fractional_first_fit}
	Let $\by \in \left([0,1]\cap \mathbb{Q}\right)^I$ such that  $\supp(\by)\subseteq S\setminus L$. Then there is a fractional solution $\bx$ such that $\cover(\bx) = \by$ and $\|\bx\| \leq 2\cdot \sum_{i\in I} \by_i\cdot \tw(i)  + 1$. 
\end{restatable}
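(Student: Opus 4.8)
The plan is to build $\bx$ by a greedy ``peeling'' procedure, the fractional analogue of Next‑Fit/First‑Fit applied with respect to the adjusted weight $\tw$. I would maintain a residual‑demand vector $\bd\in[0,1]^I$, initialized to $\bd=\by$, and a solution $\bx\in\mathbb{R}_{\geq 0}^{\cC}$, initialized to $\mathbf 0$, with the running invariant that the coverage accumulated so far equals $\by-\bd$. Write $R=\supp(\bd)$; since $R\subseteq\supp(\by)\subseteq S\setminus L$, every $i\in R$ is \emph{small}, i.e.\ $\tw(i)<\delta<\tfrac12$. Then iterate: if $\tw(R)>1$, pick an inclusion‑maximal $C\subseteq R$ with $\tw(C)\le 1$ (possible since $\tw(\emptyset)=0$), let $w=\min_{i\in C}\bd_i>0$, add $w$ to $\bx_C$, and subtract $w$ from $\bd_i$ for every $i\in C$. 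If instead $\tw(R)\le 1$, then $R$ and all its subsets are configurations by \Cref{obs:set_adjusted_weight}; order $R=\{j_1,\dots,j_r\}$ with $\bd_{j_1}\le\dots\le\bd_{j_r}$, add $\bd_{j_s}-\bd_{j_{s-1}}$ (with $\bd_{j_0}:=0$) to $\bx_{\{j_s,\dots,j_r\}}$ for $s=1,\dots,r$, set $\bd\gets\mathbf 0$, and halt.

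The verification splits into three routine parts. \emph{Feasibility and validity:} every configuration ever used is a subset of $R$ with adjusted weight $\le 1$, hence lies in $\cC$ by \Cref{obs:set_adjusted_weight}, and $\bx\ge\mathbf 0$; since $\cover(\bx)=\by\in[0,1]^I$, $\bx$ is feasible. \emph{Correctness of the cover:} the invariant ``coverage so far $=\by-\bd$'' survives each ``$\tw(R)>1$'' step, and in the halting step item $j_s$ gets additional coverage $\bd_{j_1}+(\bd_{j_2}-\bd_{j_1})+\dots+(\bd_{j_s}-\bd_{j_{s-1}})=\bd_{j_s}$, bringing its total to $\by_{j_s}$; hence $\cover(\bx)=\by$. \emph{Termination:} every ``$\tw(R)>1$'' step zeroes the residual demand of at least one item (any minimizer of $\bd_i$ over $C$), so $|R|$ strictly decreases; since $R=\emptyset$ would give $\tw(R)=0\le 1$, after finitely many steps we reach $\tw(R)\le 1$ and halt. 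The only point that genuinely needs care is matching $\by$ \emph{exactly}: this is why a configuration is peeled with weight equal to the current minimum residual demand (clearing at least one coordinate), and why the last batch of items — which need not fill a configuration even halfway — is handled separately; that last batch is exactly the source of the additive $+1$.

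For the size bound, the key observation is that in any ``$\tw(R)>1$'' step the chosen $C$ is \emph{heavy}: $R\setminus C\neq\emptyset$ (otherwise $\tw(C)=\tw(R)>1$), so maximality gives $\tw(C)>1-\tw(i)>1-\delta>\tfrac12$ for some $i\in R\setminus C$. Then I would use the identity
\[
W=\sum_{i\in I}\by_i\,\tw(i)=\sum_{i\in I}\tw(i)\!\!\sum_{C\in\cC(i)}\!\!\bx_C=\sum_{C\in\cC}\bx_C\,\tw(C),
\]
and split $\bx_C=\bx^{(1)}_C+\bx^{(2)}_C$ into the weight placed on $C$ during ``$\tw(R)>1$'' steps versus during the halting step. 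Since $\sum_C\bx^{(1)}_C\tw(C)+\sum_C\bx^{(2)}_C\tw(C)=W$ with both summands nonnegative, $\sum_C\bx^{(1)}_C\tw(C)\le W$; and since $\tw(C)>\tfrac12$ whenever $\bx^{(1)}_C>0$, $\sum_C\bx^{(1)}_C\le 2\sum_C\bx^{(1)}_C\tw(C)\le 2W$. The halting step contributes $\sum_C\bx^{(2)}_C=\bd_{j_r}\le\max_i\by_i\le 1$ by telescoping. Adding, $\|\bx\|\le 2W+1=2\sum_{i\in I}\by_i\tw(i)+1$, as required. I do not expect a real obstacle here: the heart of the bound is the elementary fact that a maximal configuration of small items is more than half full, and the only thing demanding attention is the bookkeeping that realizes the exact fractional cover $\by$.
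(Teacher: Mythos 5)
Your proof is correct and takes a genuinely different route from the paper's, though both rest on the same geometric core. The paper first clears denominators: it picks $N\in\mathbb{N}$ with $N\by_i\in\mathbb{N}$ for all $i$, runs an integral first-fit procedure that places $N\by_i$ copies of each small item into a growing collection of configurations $A_1,\dots,A_s$, observes that every configuration \emph{not} containing the last item that forced an increase in $s$ must satisfy $\tw(A_b)>1-\delta\geq \tfrac12$ (leaving at most $N$ exceptional configurations), and then rescales by $1/N$ to get $\bx$ with $\|\bx\|=s/N\leq 2\sum_i\by_i\tw(i)+1$. You avoid the discretization entirely: you peel an inclusion-maximal small-item configuration with mass equal to the minimum residual demand on it, so each heavy step zeroes at least one coordinate (giving termination) and produces a configuration with $\tw(C)>\tfrac12$, while the telescoping decomposition of the final light batch costs at most $\max_i\by_i\leq 1$ in size. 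Both proofs hinge on the same fact — a maximal configuration of small items is more than half-full in adjusted weight, via \Cref{obs:set_adjusted_weight} — but the accounting for the additive $+1$ differs: the paper charges it to the at most $N$ configurations containing the critical item, you charge it to the telescoping mass of the last fractional batch. Your version trades the common-denominator step (and the downward-closure argument ensuring the infeasibility $A_b\cup\{i\}\notin\cC$ persists to the end of the procedure) for a slightly more delicate fractional peeling schedule; both give exactly the claimed bound.
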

\begin{restatable}{lemma}{itemperbin}
	\label{lem:item_per_bin}
	For every $\by\in [0,1]^I$ there is a fractional solution $\bx$ such that $\cover(\bx)=\by$ and~$\|\bx\|=\|\by\|$.
\end{restatable}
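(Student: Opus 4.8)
The plan is to realise $\by$ using only \emph{singleton} configurations. The one fact to record up front is that $\{i\}\in\cC$ for every $i\in I$: since $k\in\mathbb{N}_{>0}$ we have $|\{i\}|=1\leq k$, and since the weight function maps into $[0,1]$ we have $\w(\{i\})=\w(i)\leq 1$, so $\{i\}$ meets both requirements in the definition of a configuration. (This is exactly where the cardinality constraint is harmless — a singleton is automatically feasible.)

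Given this, I would define $\bx\in\mathbb{R}_{\geq 0}^{\cC}$ by setting $\bx_{\{i\}}=\by_i$ for every $i\in I$, and $\bx_C=0$ for every other configuration $C$ (in particular $\bx_\emptyset=0$ and $\bx_C=0$ whenever $|C|\geq 2$). All entries are nonnegative because $\by\in[0,1]^I$. For the coverage, fix $i\in I$: among the configurations in $\cC(i)$, only $\{i\}$ carries a nonzero $\bx$-value, so $\cover_i(\bx)=\sum_{C\in\cC(i)}\bx_C=\bx_{\{i\}}=\by_i$. Hence $\cover(\bx)=\by\in[0,1]^I$, which simultaneously gives the required identity $\cover(\bx)=\by$ and the feasibility of $\bx$.

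For the size, $\|\bx\|=\sum_{C\in\cC}\bx_C=\sum_{i\in I}\bx_{\{i\}}=\sum_{i\in I}\by_i=\|\by\|$, where the last step uses the paper's convention $\|\bz\|=\sum_i|\bz_i|$ together with $\by\geq 0$. There is no genuine obstacle here: the entire content is the observation that singletons are always configurations, after which the two identities are pure bookkeeping; the only point worth spelling out carefully is that one must zero out the empty configuration and every configuration of size at least $2$, so that no item is covered more than once.
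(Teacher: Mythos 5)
Your proof is correct and is essentially identical to the paper's: both define $\bx$ by assigning weight $\by_i$ to each singleton $\{i\}$ and zero to every other configuration. You spell out slightly more explicitly why singletons are always configurations, but the construction and the verification are the same.
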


\subsection{Large Items}
\label{sec:grouping}
Recall $L$ is the set of large items in $S$ and let $L=\{i_1,\ldots, i_r\}$ where $r=|L|$ and assume the items in $L$ are sorted in non-decreasing order according to their adjusted weight. That is, $\tw(i_1)\geq \tw(i_2)\geq \ldots \geq \tw(i_r)$.  This also implies that $\w(i_1)\geq \w(i_1)\geq \ldots \geq \w(i_r)$. We apply standard linear grouping~\cite{FL81} on $L$.

We partition $L$ into $\delta^{-2}$ groups $G_1,\ldots, G_{\delta^{-2}}$. The first group consists of the first $s=\ceil{\delta^2\cdot  r}$ items in $L$, the next group consists of the subsequent $s$ items, and so forth. Formally, 
$$
\forall j\in [\delta^{-2}]:~~~~G_j = \{ i_p~|~ s(j-1)+1\leq p \leq s\cdot (j-1) + s,~p\leq r \}.
$$
Assuming $r$ is sufficiently large, it holds that $|G_j|=s$ for every $1\leq j\leq \delta^{-2}-1$, and $|G_{\delta^{-2}}|\leq s$.
The following observation holds independently of the value of $r$. 
\begin{obs} 
	\label{obs:groups_size}
$s\geq |G_1|\geq |G_2|\geq \ldots \geq |G_{\delta^{-2}}|$.
\end{obs}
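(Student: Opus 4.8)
The plan is to read off the sizes $|G_j|$ directly from the definition of the groups and then reduce the whole chain of inequalities to the monotonicity of a single real function. By construction $G_j = \{i_p \mid s(j-1)+1 \le p \le sj \text{ and } p \le r\}$, so $G_j$ consists exactly of those items whose index lies in the integer interval $[\,s(j-1)+1,\ \min\{sj,r\}\,]$. Hence
\[
|G_j| \;=\; \max\bigl\{0,\ \min\{sj,r\} - s(j-1)\bigr\} \;=\; \max\bigl\{0,\ \min\{s,\ r-s(j-1)\}\bigr\}.
\]
In particular $|G_1| = \min\{s,r\} \le s$, which already yields the first inequality $s \ge |G_1|$.

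For the remaining inequalities I would note that it suffices to prove $|G_j| \ge |G_{j+1}|$ for every $1 \le j \le \delta^{-2}-1$; the full chain then follows by transitivity. Set $g(t) = \max\bigl\{0,\min\{s,t\}\bigr\}$ for $t \in \mathbb{R}$, so that $|G_j| = g\bigl(r-s(j-1)\bigr)$ by the displayed identity. The function $g$ is non-decreasing, and $r - s(j-1) \ge r - sj$, so $|G_j| = g\bigl(r-s(j-1)\bigr) \ge g\bigl(r-sj\bigr) = |G_{j+1}|$, which is exactly the claim.

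I do not expect a genuine obstacle here; the only point that needs care is that when $r$ is not a multiple of $s$ — or when $\delta^{-2}s > r$, so that some of the last groups are strictly smaller than $s$ or even empty — the naive identity $|G_j| = s$ fails. Writing the size uniformly as $\max\bigl\{0,\min\{s,r-s(j-1)\}\bigr\}$ absorbs all these boundary cases at once, after which the statement is just the monotonicity of $t \mapsto \max\bigl\{0,\min\{s,t\}\bigr\}$, consistent with the remark that the observation holds independently of the value of $r$.
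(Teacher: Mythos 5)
Your proof is correct. The paper states this as an Observation with no accompanying proof, so there is nothing to compare against; your argument, reducing the chain of inequalities to the monotonicity of $t \mapsto \max\{0,\min\{s,t\}\}$ via the closed form $|G_j| = \max\{0,\min\{s,\,r-s(j-1)\}\}$, is a clean and complete way to verify it, and correctly handles the boundary cases where the last groups are short or empty.
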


Let $D_1,\ldots,D_{\ell}\in \cC$ be $\ell$ disjoint configurations such that $\bigcup_{b\in [\ell]} D_{b}= S$. Recall the existence of such configurations is an assumption of the \Cref{lem:structure}. It therefore holds that 
$$
r\,=\,\abs{L}\,=\,\sum_{b\in [\ell]  } \abs{D_b\cap L} \,\leq \, \sum_{b\in [\ell]} 2\cdot\delta^{-1} \, =\, 2\cdot \delta^{-1}\cdot \ell
$$
therefore 
\begin{equation}
	\label{eq:s_bound}
s\,=\,\ceil {\delta^2\cdot r} \, \leq \, \ceil{ \delta^2 \cdot 2 \cdot \delta^{-1} \cdot \ell } \, =\, \ceil{2\cdot \delta \cdot \ell} \,\leq 2\cdot \delta \cdot \ell +1.
\end{equation}

\subsection{Types and Small Items}

The {\em type} of a configuration $C\in \cC$ is the vector $\bp \in \mathbb{N}^{\delta^{-2}}$ defined by $\bp_j = |G_j \cap C|$ for all $j\in [\delta^{-2}]$.  That is, $\bp_j$ is the number of items  in $C$ from the $j$-th group $G_j$.  We also define $\|\bp \| = \sum_{j=1}^{\delta^{-2}} \bp_j$ for every $\bp \in \mathbb{R}_{\geq 0}^{\delta^{-2}}$. 
We use $\type(C)$ to denote the type of $C$. We also use $\type_j(C)$ to refer to the $j$-th entry of the vector $\type(C)$.
Since $G_1,\ldots, G_{\delta^{-2}}$ is a partition of $L$, the definition of types implies the following observation.
\begin{obs}
	\label{obs:type_card}
	For every $C\in \cC$ it holds that $\abs{C\cap L }= \| \bp \| $. 
\end{obs}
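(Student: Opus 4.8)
The plan is simply to unwind the two definitions involved and invoke that $G_1,\ldots,G_{\delta^{-2}}$ is a partition of $L$. First I would observe that, by distributivity of intersection over union, $C\cap L = C\cap \bigcup_{j\in[\delta^{-2}]} G_j = \bigcup_{j\in[\delta^{-2}]} (C\cap G_j)$. Since the groups $G_1,\ldots,G_{\delta^{-2}}$ are pairwise disjoint (they are consecutive, truncated blocks of the sorted list $i_1,\ldots,i_r$ defined in \Cref{sec:grouping}), the sets $C\cap G_j$ are pairwise disjoint as well, so additivity of cardinality over disjoint unions gives $|C\cap L| = \sum_{j\in[\delta^{-2}]} |C\cap G_j|$.

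Next I would plug in the definition of the type vector $\bp=\type(C)$, namely $\bp_j = |G_j\cap C|$ for each $j\in[\delta^{-2}]$, to rewrite the right-hand side as $\sum_{j\in[\delta^{-2}]} \bp_j$, which is exactly $\|\bp\|$ by the definition of $\|\cdot\|$ on $\mathbb{R}_{\geq 0}^{\delta^{-2}}$ given just above the statement. Chaining these equalities yields $|C\cap L| = \|\bp\|$, which is the claim.

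There is no genuine obstacle here; the statement is an immediate bookkeeping consequence of the definitions. The only point that deserves a moment's attention is to confirm that $G_1,\ldots,G_{\delta^{-2}}$ really do partition $L$ (i.e.\ cover $L$ and are pairwise disjoint), which is clear from their construction in \Cref{sec:grouping}, where each item $i_p$ of $L$ lies in exactly the block indexed by $j=\lceil p/s\rceil$. With that in hand, both the distributivity step and the additivity-of-cardinality step are justified, and the observation follows.
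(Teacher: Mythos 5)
Your proof is correct and is the natural elaboration of the paper's one-line justification ("Since $G_1,\ldots,G_{\delta^{-2}}$ is a partition of $L$, the definition of types implies..."). You unwind the same two facts — that the groups partition $L$ and that $\bp_j = |G_j\cap C|$ — so the approach is essentially identical.
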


For every $j\in [\delta^{-2}]$ we define the {\em rounded weight} of $G_j$ by $\w_j = \min_{i\in G_j} \w(i)$ if $G_j\neq \emptyset$ and $\w_j = 0 $ if $G_j = \emptyset$.
The following observation holds since the weight of every item in $G_j$ is at least the weight of every item in $G_{j+1}$ .
\begin{obs}
\label{obs:shifting_argument}
For every $j\in \{2,\ldots, \delta^{-2}\}$ and $i\in G_j$ it holds that $\w(i)\leq \w_{j-1}$.
\end{obs}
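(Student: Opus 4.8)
The plan is to read the conclusion directly off two facts already in place: the items of $L$ were listed in an order along which the adjusted weights, and hence also the ordinary weights $\w$, are non‑increasing; and each group $G_t$ is a block of consecutive indices, with the blocks appearing in increasing order of index. Concretely, $G_t=\{i_p \mid s(t-1)+1\le p\le st,\ p\le r\}$, so every index occurring in $G_{j-1}$ is strictly smaller than every index occurring in $G_j$. I would phrase the whole argument as an index comparison followed by an appeal to monotonicity of $\w$ along $i_1,\dots,i_r$.

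First I would dispose of the degenerate case. If $G_{j-1}=\emptyset$, then because the groups are filled in increasing order of index, $G_j$ is empty as well; there is then no $i\in G_j$ and the statement is vacuous. So I may assume $G_{j-1}\neq\emptyset$, and hence $\w_{j-1}=\min_{i'\in G_{j-1}}\w(i')$ is a minimum over a nonempty set, attained at some item of $G_{j-1}$.

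Then, fixing $i=i_p\in G_j$ (so $p\ge s(j-1)+1$) and an arbitrary $i'=i_{p'}\in G_{j-1}$ (so $p'\le s(j-1)<p$), I would invoke the sorting $\tw(i_1)\ge\cdots\ge\tw(i_r)$ together with $\tw(i)=\w(i)+\frac{1}{k}$, which shows that $\w$ is non‑increasing along $i_1,\dots,i_r$; therefore $\w(i')=\w(i_{p'})\ge\w(i_p)=\w(i)$. Taking the minimum over $i'\in G_{j-1}$ yields $\w_{j-1}\ge\w(i)$, which is exactly the claim. There is no genuine obstacle here; the only points worth stating explicitly are the empty‑group reduction above and the fact that $\w$ inherits the monotone ordering from $\tw$ because the two differ by the fixed quantity $\frac{1}{k}$.
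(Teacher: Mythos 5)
Your proof is correct and follows the same reasoning the paper gives (the paper states it in one line: consecutive groups are monotone in weight because the items were listed in non-increasing order of $\tw$, hence of $\w$). You are a bit more careful than the paper in explicitly handling the empty-group case, which is a reasonable addition given that $\w_{j-1}$ is defined to be $0$ when $G_{j-1}=\emptyset$; your observation that $G_{j-1}=\emptyset$ forces $G_j=\emptyset$ (so the claim is vacuous) closes that loophole cleanly.
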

 We use the rounded weights to associate a weight with types. 
Let $\cP = \{ \type(C)~|~C\in \cC\}$ be the set of all possible types of  configurations.  The {\em weight} of a type $\bp \in \cP$ is $\w(\bp) = \sum_{j=1}^{\delta^{-2}} \bp_j \cdot \w_j$.  The weight of a type is a lower bound for the weight of large items in a configuration of the same type. 
\begin{lemma}
	\label{lem:type_weight}
Let $C\in \cC$ and $\bp  = \type(C)$. Then $\w(C\cap L ) \geq \w(\bp)$. 
\end{lemma}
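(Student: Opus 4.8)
The plan is to split $C\cap L$ along the partition $G_1,\dots,G_{\delta^{-2}}$ of $L$ and bound each piece by the corresponding rounded weight. First I would write
$$
\w(C\cap L)\;=\;\sum_{j=1}^{\delta^{-2}}\w\bigl(C\cap G_j\bigr),
$$
which is valid because $G_1,\dots,G_{\delta^{-2}}$ partition $L$, so $C\cap L$ is the disjoint union of the sets $C\cap G_j$.

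Next, for a fixed $j$ with $G_j\neq\emptyset$, every item $i\in C\cap G_j\subseteq G_j$ satisfies $\w(i)\ge \min_{i'\in G_j}\w(i')=\w_j$ by the definition of the rounded weight $\w_j$. Hence
$$
\w\bigl(C\cap G_j\bigr)\;=\;\sum_{i\in C\cap G_j}\w(i)\;\ge\;\abs{C\cap G_j}\cdot \w_j\;=\;\bp_j\cdot \w_j ,
$$
where the last equality is the definition $\bp_j=\type_j(C)=\abs{G_j\cap C}$. When $G_j=\emptyset$ we have $C\cap G_j=\emptyset$, so $\bp_j=0=\w_j$ and the inequality $\w(C\cap G_j)\ge \bp_j\cdot\w_j$ holds trivially (both sides are $0$).

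Finally I would sum these bounds over $j\in[\delta^{-2}]$ and invoke the definition $\w(\bp)=\sum_{j=1}^{\delta^{-2}}\bp_j\cdot\w_j$ to conclude $\w(C\cap L)\ge \w(\bp)$. There is no real obstacle here: the statement is a direct consequence of the fact that items are sorted by non-increasing weight within the grouping and that $\w_j$ is chosen as the \emph{minimum} weight in $G_j$; the only point worth a word is the degenerate case of an empty group, handled above.
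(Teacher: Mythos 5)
Your proof is correct and follows essentially the same route as the paper's: decompose $C\cap L$ along the partition $G_1,\dots,G_{\delta^{-2}}$, lower-bound each item's weight by $\w_j$, and sum. The explicit treatment of the empty-group case is a slight elaboration but does not change the argument.
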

\begin{proof}
By the definition of types and rounded weights we have,
$$
\w(C\cap L)  \,=\, \sum_{j=1}^{\delta^{-2}} \w(C\cap G_j ) \,=\,  \sum_{j=1}^{\delta^{-2}} \sum_{~i\in C\cap G_j~} \w(i ) \,\geq\,   \sum_{j=1}^{\delta^{-2}} \sum_{~i\in C\cap G_j~} \w_j \,=\, \sum_{j=1}^{\delta^{-2}} \bp_j\cdot  \w_j 
\,=\,\w(\bp).
$$
\end{proof}
We can also use the types to upper bound the weight and cardinality of small items in a configuration.
\begin{lemma}
	\label{lem:small_items}
	Let $C\in \cC$ and $\bp = \type(C)$. Then $\w(C\setminus L ) \leq 1- \w(\bp) $ and $|C\setminus L| \leq k - \|\bp\|$. 
\end{lemma}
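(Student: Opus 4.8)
The plan is to split the configuration $C$ into its large part $C\cap L$ and its small part $C\setminus L$, and then obtain both inequalities by subtracting an (already established) lower bound on the large part from the global budget constraints that every configuration satisfies. No new fractional construction or grouping argument is needed here; the work has effectively been done in \Cref{lem:type_weight} and \Cref{obs:type_card}.

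First I would treat the weight bound. Since $C\in\cC$, by the definition of a configuration $\w(C)\le 1$, and clearly $\w(C)=\w(C\cap L)+\w(C\setminus L)$ because $L$ induces a partition of $C$ into large and small items. Applying \Cref{lem:type_weight}, which states $\w(C\cap L)\ge \w(\bp)$ for $\bp=\type(C)$, gives
$$\w(C\setminus L)=\w(C)-\w(C\cap L)\le 1-\w(\bp).$$
For the cardinality bound I would argue analogously: again $C\in\cC$ forces $|C|\le k$, and $|C|=|C\cap L|+|C\setminus L|$; by \Cref{obs:type_card} we have $|C\cap L|=\|\bp\|$, so $|C\setminus L|=|C|-|C\cap L|\le k-\|\bp\|$.

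There is no real obstacle here: the only nontrivial ingredient is that the type weight $\w(\bp)=\sum_j \bp_j\w_j$, defined via the rounded group weights, genuinely underestimates the true weight $\w(C\cap L)$ of the large items of $C$ — and that is exactly \Cref{lem:type_weight}, already in hand. Everything else is the two-line substitution above, using only that a configuration has weight at most $1$ and cardinality at most $k$.
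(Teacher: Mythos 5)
Your proof is correct and matches the paper's argument exactly: both split $C$ into $C\cap L$ and $C\setminus L$, use the configuration budget constraints $\w(C)\le 1$ and $|C|\le k$, and then invoke \Cref{lem:type_weight} and \Cref{obs:type_card} respectively to bound the large-item contribution. There is nothing to add.
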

\begin{proof}
Since $C$ is a configuration it holds that $ \w(C\cap L) + \w(C\setminus L) = \w(C) \leq 1$. Therefore $$\w(C\setminus L )\,\leq\,1 - w(C\cap L)\, \leq \,1- \w(\bp ),$$ 
where the last inequality follows from \Cref{lem:type_weight}.

Similarly, since $C$ is a configuration is holds that $|C\cap L| + |C\setminus L| =|C|\leq k$. Thus, 
$$
|C\setminus L| \,\leq\, k -|C\cap L|  \,=\, k-\|\bp\|,
$$
where the  equality follows from \Cref{obs:type_card}.
\end{proof}

We  use a  simple combinatorial argument to attain a naive  bound on the size of $\cP$.
\begin{lemma}
	\label{lem:num_types}
	$|\cP|\leq \exp(\delta^{-3})$
\end{lemma}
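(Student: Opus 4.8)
The plan is to bound $|\cP|$ by counting the possible type vectors directly. A type is a vector $\bp\in\mathbb{N}^{\delta^{-2}}$ with $\bp_j = |G_j\cap C|$ for some configuration $C$. First I would observe that by \Cref{lem:large_items_in_conf} (i.e.\ $|C\cap L|\le 2/\delta$ for every $C\in\cC$) together with \Cref{obs:type_card}, we have $\|\bp\| = |C\cap L|\le 2\delta^{-1}$ for every $\bp\in\cP$; in particular each coordinate $\bp_j$ lies in $\{0,1,\dots,\lfloor 2\delta^{-1}\rfloor\}$, and there are $\delta^{-2}$ coordinates. A crude bound then gives
$$
|\cP|\;\le\;\bigl(\lfloor 2\delta^{-1}\rfloor+1\bigr)^{\delta^{-2}}\;\le\;\bigl(3\delta^{-1}\bigr)^{\delta^{-2}},
$$
using $\delta<0.1$. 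Taking logarithms, $\ln|\cP|\le \delta^{-2}\ln(3\delta^{-1})$, and since $\ln(3\delta^{-1})\le \delta^{-1}$ for all small $\delta$ (indeed $\ln x\le x$, and here $3\delta^{-1}\le \delta^{-1}\cdot\delta^{-1}$ is even weaker), we get $\ln|\cP|\le \delta^{-3}$, i.e.\ $|\cP|\le\exp(\delta^{-3})$.

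Alternatively, and perhaps cleaner, one can count type vectors with $\|\bp\|\le 2\delta^{-1}$ by a stars-and-bars argument: the number of vectors in $\mathbb{N}^{\delta^{-2}}$ with coordinate sum at most $2\delta^{-1}$ is $\binom{\delta^{-2}+\lfloor 2\delta^{-1}\rfloor}{\lfloor 2\delta^{-1}\rfloor}\le 2^{\delta^{-2}+2\delta^{-1}}\le 2^{2\delta^{-2}}$, whose logarithm is $2\delta^{-2}\ln 2\le \delta^{-3}$ for $\delta<0.1$. Either route reduces the lemma to a one-line estimate of a binomial coefficient or a power, together with the key structural input $\|\bp\|=|C\cap L|\le 2\delta^{-1}$ coming from the cardinality/adjusted-weight bound.

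There is no real obstacle here; the only thing to be slightly careful about is making sure the chosen crude inequality actually closes with the stated constant $\delta^{-3}$ (as opposed to, say, $\delta^{-3}\log\delta^{-1}$), which is why I would use the generous bound $\ln(3\delta^{-1})\le\delta^{-1}$ valid for $\delta<0.1$, or equivalently absorb everything into $2^{2\delta^{-2}}\le e^{\delta^{-3}}$. Since the paper explicitly states it did not try to optimize constants, any such loose estimate suffices, and the proof is a short direct computation once \Cref{lem:large_items_in_conf} and \Cref{obs:type_card} are invoked.
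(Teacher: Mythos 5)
Your first argument is essentially the paper's proof: bound each coordinate $\bp_j$ by $|C\cap L|\le 2\delta^{-1}$ (via \Cref{lem:large_items_in_conf}), count the at most $(1+2\delta^{-1})^{\delta^{-2}}$ resulting vectors, and close with a generous log estimate valid for $\delta<0.1$. The alternative stars-and-bars count is a small variant and equally fine, but not a different route in any substantive sense.
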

\begin{proof}
	By \Cref{lem:large_items_in_conf}, for every $C\in \cC$ it holds that $|C\cap L|\leq \frac{2}{\delta}$.
	Let $\bp \in \cP$, then  there is $C\in \cC$ such that $\type(C)=\bp$.
	Thus, for every $j\in [\delta^{-2}]$ we have $\bp_j  = |C\cap G_j|\leq |C\cap L| \leq 2\cdot \delta^{-1}$. 
	Therefore $\cP\subseteq \{a\in \mathbb{N}_{\geq 0}~|~a\leq 2\cdot \delta^{-1}\}^{\delta^{-2}}$, and thus
	$$
	|\cP|\subseteq \left|\{a\in \mathbb{N}_{\geq 0}~|~a\leq 2\cdot \delta^{-1}\}^{\delta^{-2}} \right| \leq  \left(1+2\cdot \delta^{-1} \right)^{\delta^{-2}}  \leq \left(\delta^{-2} \right)^{\delta^{-2}}  = \exp\left(\delta^{-2} \cdot \ln \left(\delta^{-2} \right)\right)\leq \exp(\delta^{-3}). 
	$$
\end{proof}

We use the types to generate a classification of the small items in $S$ into groups. Recall $D_1,\ldots ,D_{\ell} \in \cC$ are $\ell$ disjoint configurations such that $\bigcup_{b=1}^{\ell} D_b = S$	. We partition $S\setminus L$,  the small items in $S$, into classes based on the type of the configuration among $D_1,\ldots, D_m$  to which an item belongs. Formally, for every $\bp \in \cP$ we define the class  $\cK_{\bp}\subseteq S\setminus L$ by
$$\cK_{\bp} =\left\{i\in S\setminus L~|~\exists b\in [\ell]:~\type(D_b) = \bp \textnormal{ and } i\in D_b\right\}.$$

Define $\eta:\cP\rightarrow \mathbb{N}$ by $\eta(\bp) = \left| \left\{ b\in [\ell ]~|~\type(D_b) = \bp\right\}\right|$. That is, $\eta(\bp)$ is the number of configurations among $D_1,\ldots, D_{\ell}$ of type $\bp$.
\begin{lemma}
	\label{lem:class_weight_and_card}
	For every $\bp \in \cP$ it holds that $\w(\cK_{\bp}) \leq \eta(\bp)\cdot (1-\w(\bp))$ and $\abs{\cK_{\bp}}\leq \eta(\bp)\cdot (k-\|\bp\|)$. 
\end{lemma}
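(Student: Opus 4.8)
The plan is to unwind the definition of the class $\cK_{\bp}$ and reduce the statement to a per-configuration bound that is already available. Recall that $D_1,\ldots,D_\ell \in \cC$ are the \emph{disjoint} configurations with $\bigcup_{b=1}^{\ell} D_b = S$ guaranteed by the hypothesis of \Cref{lem:structure}, and that $\cK_{\bp} = \{ i \in S \setminus L \mid \exists b \in [\ell]:\ \type(D_b) = \bp \text{ and } i \in D_b \}$. First I would observe that, since the $D_b$ are disjoint, $\cK_{\bp}$ is exactly the disjoint union $\bigsqcup_{b :\ \type(D_b)=\bp}(D_b \setminus L)$ over the index set $B_{\bp} := \{ b \in [\ell] \mid \type(D_b) = \bp \}$, and that by definition $|B_{\bp}| = \eta(\bp)$.

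Next I would invoke \Cref{lem:small_items}, which, applied to each $C = D_b$ with $b \in B_{\bp}$ (so $\type(D_b) = \bp$), yields $\w(D_b \setminus L) \leq 1 - \w(\bp)$ and $|D_b \setminus L| \leq k - \|\bp\|$. Summing the weight inequality over $b \in B_{\bp}$ and using the disjointness of the sets $D_b \setminus L$ gives
$$
\w(\cK_{\bp}) \;=\; \sum_{b \in B_{\bp}} \w(D_b \setminus L) \;\leq\; \sum_{b \in B_{\bp}} \bigl(1 - \w(\bp)\bigr) \;=\; \eta(\bp)\cdot\bigl(1 - \w(\bp)\bigr),
$$
and likewise summing the cardinality inequality gives $|\cK_{\bp}| = \sum_{b \in B_{\bp}} |D_b \setminus L| \leq \eta(\bp)\cdot(k - \|\bp\|)$. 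This completes the proof.

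There is no real obstacle here: the only thing to be careful about is bookkeeping — making sure that disjointness of $D_1,\ldots,D_\ell$ is actually used to turn the union defining $\cK_{\bp}$ into a sum (both for $\w(\cdot)$ and for $|\cdot|$), and that the index set $B_{\bp}$ is identified with the $\eta(\bp)$ configurations of type $\bp$ so that the count matches. Everything else is a one-line application of \Cref{lem:small_items}.
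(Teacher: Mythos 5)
Your proof is correct and follows exactly the paper's argument: identify the index set of configurations of type $\bp$ (you call it $B_{\bp}$, the paper calls it $Q$), use disjointness of $D_1,\ldots,D_\ell$ to write $\cK_{\bp}$ as a disjoint union, apply \Cref{lem:small_items} to each $D_b$ in that set, and sum.
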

\begin{proof}
Let $Q = \{b\in [\ell]~|~\type(D_b)=\bp\}$ be the set of indices  of configuration of type $\bp$.
 Then $\cK_{\bp} = \bigcup_{b\in Q} D_b\setminus L$ and $\abs{Q}=\eta(\bp)$. Therefore, by \Cref{lem:small_items} we have,
$$
\w(\cK_{\bp}) \,=\, \sum_{b\in Q} \w(D_b\setminus L) \,\leq \, \sum_{b\in Q} (1-\w(\bp )) \,=\,  \eta(\bp)\cdot (1-\w(\bp))
$$ 
and 
$$
\abs{\cK_{\bp}} \,=\, \sum_{b\in Q} \abs{D_b\setminus L} \,\leq \, \sum_{b\in Q} (k-\|\bp\|) \,=\,  \eta(\bp)\cdot (k-\|\bp\|)
$$ 
\end{proof}

The mapping $\eta$ also satisfies the following property.
\begin{lemma}
	\label{lem:eta_to_groups}
For every $j\in [\delta^{-2}]$ it holds that $\sum_{\bp \in \cP} \eta(\bp)\cdot \bp_j = \abs{G_j}$.
\end{lemma}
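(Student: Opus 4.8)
The plan is a one-line double-counting (Fubini) argument, with no auxiliary constructions needed. First I would fix $j\in[\delta^{-2}]$ and recall that $D_1,\ldots,D_\ell\in\cC$ are pairwise disjoint configurations with $\bigcup_{b=1}^\ell D_b = S$ — their existence is exactly the standing hypothesis of \Cref{lem:structure} — and that $G_j\subseteq L\subseteq S$. Consequently the sets $G_j\cap D_1,\ldots,G_j\cap D_\ell$ partition $G_j$, so each item of $G_j$ is counted in exactly one $D_b$, and therefore $\abs{G_j}=\sum_{b=1}^\ell\abs{G_j\cap D_b}=\sum_{b=1}^\ell\type_j(D_b)$, the last equality being the definition $\type_j(D_b)=\abs{G_j\cap D_b}$.

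The second step is to reindex this sum by the type of each configuration. Since $D_b\in\cC$ we have $\type(D_b)\in\cP$ for every $b$, so I would group the indices $b\in[\ell]$ according to the value $\type(D_b)$; within the class of indices with $\type(D_b)=\bp$ the summand $\type_j(D_b)$ equals the constant $\bp_j$, and there are exactly $\eta(\bp)=\abs{\{b\in[\ell]:\type(D_b)=\bp\}}$ such indices. Hence
\[
\sum_{b=1}^\ell\type_j(D_b)\;=\;\sum_{\bp\in\cP}\ \sum_{\substack{b\in[\ell]:\ \type(D_b)=\bp}}\bp_j\;=\;\sum_{\bp\in\cP}\eta(\bp)\cdot\bp_j,
\]
and combining with the previous display gives $\abs{G_j}=\sum_{\bp\in\cP}\eta(\bp)\cdot\bp_j$, which is the claim.

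There is essentially no obstacle here; the only point requiring care is to invoke the containment $G_j\subseteq L\subseteq S$ together with the disjointness of the $D_b$'s, so that each large item of $G_j$ contributes to precisely one configuration in the first step — everything else is bookkeeping.
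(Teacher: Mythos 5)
Your proof is correct and takes essentially the same route as the paper: both partition $G_j$ across the disjoint configurations $D_1,\ldots,D_\ell$, write $\abs{G_j}=\sum_{b=1}^\ell \type_j(D_b)$, and then reindex the sum by type using the definition of $\eta$. The only cosmetic difference is that you spell out the grouping step slightly more explicitly.
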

\begin{proof}
Since  every item in $S$ belongs to exactly one of the configurations $D_1,\ldots, D_{\ell}$  and $G_j\subseteq S$ we have,
$$
\abs{G_j} \,=\, \sum_{b=1}^{\ell} \abs{D_b\cap G_j} \,= \, \sum_{b=1}^{\ell} \type_j(D_b) \,=\,  \sum_{\bp \in \cP} \sum_{~b\in [\ell] \textnormal{ s.t. } \type(D_b) = \bp~} \bp_j \,=\, \sum_{\bp\in\cP} \eta(\bp)\cdot \bp_j.$$
\end{proof}

\subsection{Scaled selection  and a Weak Variant of a Linear Structure}

We use the classification of items in to groups and classes to show that if  $\by\in [0,1]^{I}$ selects an $\alpha$-fraction of every group $G_j$ by cardinality  and $\alpha$-fraction of each class $\cK_{\bp}$  by weight and cardinality, then there is a fractional solution $\bx$ such that $\cover(\bx)=\by$ and $\|\bx\|\approx \alpha\cdot \ell$. 
The following definition formalizes the notion of fractional selection of items from groups and classes.

\begin{definition}[$\alpha$-scaled]
	\label{def:compliance}
	Let $0\leq \alpha\leq 1$ and $\by \in [0,1]^{I}$. We say $\by$ is {\em $\alpha$-scaled}  if the following conditions hold.
	\begin{itemize}
		\item $\supp(\by) \subseteq S$. 
		\item For every $j\in [\delta^{-2}]$ it holds that $\sum_{i\in G_j} \by_i \leq \alpha \cdot \abs{G_j}$. 
		\item For every type $\bp \in \cP$ it holds that $\sum_{i\in \cK_{\bp} } \by_i \leq \alpha \cdot \abs{ \cK_{\bp}}$ and $\sum_{i\in \cK_{\bp} } \by_i \cdot \w(i)\leq \alpha \cdot \w\left(\cK_{\bp}\right)$.
	\end{itemize}
\end{definition}

The next lemma can be seen as a weak variant of a linear structure.
\begin{lemma}
	\label{lem:weak_structure}
	Let $\by\in \left([0,1]\cap \mathbb{Q}\right)^{I}$ such that $\by$ is $\alpha$-scaled. Then there a fractional solution $\bx$ such that $\cover(\bx)=\by$ and $\| \bx\| \leq \alpha \cdot (1+10\delta)\cdot \ell +\exp(\delta^{-4})$. 
\end{lemma}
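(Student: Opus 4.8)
The plan is to build the fractional solution $\bx$ in two independent pieces — one handling the large items and one handling the small items — and then combine them. For the large items, I would use the linear-grouping structure: since $\by$ selects at most an $\alpha$-fraction of each group $G_j$ by cardinality, I can "shift" the fractional mass of each group $G_j$ (for $j\geq 2$) onto the slots occupied by the next heavier group $G_{j-1}$ in the reference packing $D_1,\dots,D_\ell$, losing only the first group $G_1$, whose total cardinality is $s \leq 2\delta\ell + 1$ by \eqref{eq:s_bound}. Concretely, for each reference configuration $D_b$ of type $\bp = \type(D_b)$, I create a fractional configuration that, in group $G_j$, uses items of $G_{j-1}\cap\by$ (weighted so the total $\by$-mass of $G_{j-1}$ is distributed across the $\eta(\bp)\cdot\bp_j$ available slots summed over all types, which works out by \Cref{lem:eta_to_groups}). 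Since $\w(i) \leq \w_{j-1}$ for $i\in G_j$ by \Cref{obs:shifting_argument}, the weight of the shifted items in slot $b$ is at most $\w(\bp)$ for type-$\bp$ configurations, leaving room for the small items; and the cardinality is preserved slot-by-slot, so each resulting multiset is still a configuration. This uses roughly $\alpha\cdot\ell$ fractional bins plus a correction of $2\alpha\delta\ell$-ish from the ceilings and the lost group $G_1$, which I would pack separately via \Cref{lem:item_per_bin} using $\|\by|_{G_1}\| \leq \alpha s \leq \alpha(2\delta\ell+1)$ singleton bins.

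Next I would handle the small items. For each type $\bp\in\cP$, the class $\cK_{\bp}$ has $\by$-mass at most $\alpha\cdot|\cK_{\bp}| \leq \alpha\cdot\eta(\bp)(k-\|\bp\|)$ by cardinality and at most $\alpha\cdot\w(\cK_{\bp}) \leq \alpha\cdot\eta(\bp)(1-\w(\bp))$ by weight, using \Cref{lem:class_weight_and_card}. I would pack the $\by$-selected small items of $\cK_{\bp}$ into the slack left inside the $\eta(\bp)$ type-$\bp$ fractional configurations constructed in the large-item step: each such slot has cardinality room $k - \|\bp\|$ and weight room $1 - \w(\bp)$ after the large items, and the total demand across the $\eta(\bp)$ slots is at most an $\alpha$-fraction of the total supply in both coordinates. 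A fractional First-Fit–type argument (in the spirit of \Cref{lem:fractional_first_fit}, but respecting both the weight budget and the cardinality budget of each slot) lets me fit almost all of $\cK_{\bp}\cap\by$ into these $\alpha\cdot\eta(\bp)$-worth of slots, with only $O(1)$ extra bins of overflow per type. Summing the overflow over all types gives an additive term $O(|\cP|) \leq \exp(\delta^{-3}) \leq \exp(\delta^{-4})$ by \Cref{lem:num_types}; combined with the $G_1$ correction and the ceiling losses this yields $\|\bx\| \leq \alpha(1+10\delta)\ell + \exp(\delta^{-4})$.

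**The main obstacle** I anticipate is making the two-coordinate First-Fit packing of the small items genuinely work inside the large-item skeleton: I need the leftover slack in the type-$\bp$ fractional configurations to be simultaneously sufficient in weight and in cardinality, and I must be careful that the large-item shifting argument actually leaves weight exactly $\geq 1-\w(\bp)$ and cardinality exactly $k-\|\bp\|$ in each type-$\bp$ slot — the weight bound relies crucially on \Cref{obs:shifting_argument} (heavier group shifts into lighter group's position), and if I get the direction of the shift wrong the weight budget is violated. A secondary subtlety is bookkeeping the "$1+10\delta$" factor: the losses are (i) the group $G_1$, contributing $\lesssim 2\alpha\delta\ell$ via singletons since small items from $G_1$... wait, $G_1 \subseteq L$, so these are large singletons, each fitting in its own bin since $\tw(i)\leq 2$ is not enough — but a single large item always forms a configuration by \Cref{obs:set_adjusted_weight} only if $\tw(i)\leq 1$; since a singleton $\{i\}$ has $\w(i)\leq 1$ and $|\{i\}|=1\leq k$, it is trivially a configuration — and (ii) the $\delta^{-2}$ ceilings from the group sizes, each costing at most $1$ bin, i.e.\ $\delta^{-2} \leq \exp(\delta^{-4})$, absorbed into the additive term. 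Once these are tallied carefully against $\ell > \exp(\delta^{-5})$, the multiplicative slack $10\delta$ comfortably covers $2\delta$ plus lower-order terms.
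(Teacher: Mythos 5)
Your large-item step matches the paper's proof in spirit: you shift each group $G_j$ (for $j\ge 2$) onto the placeholder slots of $G_{j-1}$ in a set of $\approx \alpha\cdot\ell$ fractional configurations, exactly as the paper does with its shifting procedure, and you discard $G_1$ and a small set of shifting failures by packing them in singleton bins. Modulo a small notational slip (at one point you say "uses items of $G_{j-1}$" when you mean $G_j$'s items go into $G_{j-1}$'s slots), this half of the plan is sound and follows the paper.

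The small-item step, however, has a genuine gap. You invoke "a fractional First-Fit–type argument respecting both the weight budget and the cardinality budget of each slot" and claim it fits almost all of $\cK_{\bp}\cap\supp(\by)$ into the $\approx\alpha\eta(\bp)$ available type-$\bp$ slots with only $O(1)$ overflow per type. This is precisely the step where the paper does something much more structured, and the structure is needed. The known fractional First-Fit argument (\Cref{lem:fractional_first_fit}) only works for packing into \emph{fresh} bins: it hinges on the adjusted weight $\tw = \w + 1/k$ collapsing the two constraints into one, so that any bin that refuses an item has $\tw \geq 1-\delta$. Once a slot is partly filled with large items of type $\bp$, its remaining room is $1-\w(\bp)$ in weight and $k-\|\bp\|$ in cardinality, and these two residuals are not controlled by a single $\tw$-style potential; the "nearly full or acceptable" dichotomy no longer holds, and a naive First-Fit can in principle leave slots that are saturated in one coordinate but nearly empty in the other. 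The paper avoids this entirely by casting the small-item assignment as an LP over a partition matroid polytope intersected with $2M$ additional inequalities (the $\ALP$), constructing an explicit feasible fractional solution of the required value (using exactly the $\alpha$-scaled bounds you quote from \Cref{lem:class_weight_and_card}), and then invoking the result of \cite{grandoni2010approximation} that a basic optimal solution of such an LP has at most $2M$ fractional entries. The integral entries are added to the $A_b$'s directly, and only the small fractional residue — whose total adjusted weight is $O(\delta\cdot\alpha\ell + \delta|\cP|)$, not $O(|\cP|)$ as you estimate — is handed off to the fresh-bin fractional First-Fit lemma. That matroid basic-solution argument is the load-bearing technical ingredient here, and your proposal has no substitute for it, so the plan as written would not go through.
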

\begin{proof}
	
	Let $N\in \mathbb{N}_{>0}$ such that $N\cdot \by_i	\in \mathbb{N}_{\geq 0}$ for every $i\in I$ and $\alpha \cdot N \in \mathbb{N}$. The construction of $\bx$ goes through two major steps. The first step generates $M=\alpha \cdot N \cdot \ell + N\abs{\cP}$ configurations $A_1,\ldots ,A_M$ such that every large item $i\in L$ appears in  $N\cdot \by_i$ configuration , with the exception of a few large items. 
  The second step  uses a linear program to add small items to $A_1,\ldots, A_M$, generating new configurations  $A^*_1,\ldots, A^*_M$  such that every small item $i\in S\setminus L$ appears in  $N\cdot \by_i$ configuration , with the exception of a few small items. The configurations $A^*_1,\ldots, A^*_M$ are used to generate the final fractional solution~$\bx$.

We begin with a  procedure which can be viewed as a variant of the {\em shifting}  step of linear grouping~\cite{FL81}. The procedure  conceptually shifts  between items the of the  groups $G_1,\ldots, G_{\eps^{-2}} $. 
Let $\xi:[M] \rightarrow \cP$ be an arbitrary function such that $\abs{\xi^{-1}(\bp) } = \alpha \cdot N \cdot \eta(\bp)+N$ for every $\bp \in \cP$. Such a function exists since $\sum_{\bp \in \cP} \eta(\bp)  ~=~\ell$.   That is, the  function maps $\alpha \cdot N \cdot \eta(\bp)+N$ indices to the type $\bp$.  The procedure maintains $M$ sets (which are in fact configurations)  $A_1,\ldots, A_{M} $ initialized with the empty set and a set $R$ of items whose allocation failed. The configuration $A_b$ serves as a placeholder for up to $\xi_{j-1}(b)$ items from the group $G_j$. 
The procedure iterates over the items  $i\in L\setminus G_1$  and attempts to allocate $i$  to $N\cdot \by_i$ of the configurations $A_1,\ldots, A_{M}$ while maintaining the constraint $|A_b\cap G_j| \leq \xi_{j-1}(b)$ for every $b\in [M]$ and $j\in \{2,\ldots, \delta^{-2}\}$.  If the allocation failed, the item is added to the set $R$.

More formally, the procedure is: 
\begin{itemize}
	\item 
 Initialize $A_1,\ldots, A_{M}\leftarrow \emptyset$ and $R\leftarrow \emptyset$. 
 \item For every $j=2,\ldots, \delta^{-2}$ do:
 \begin{itemize}
 \item 
 For every item $i\in G_j$ do:
 \begin{itemize}
 	\item Find a set of indices  $B\subseteq [M]$ 
 	such that $\abs{B} =   N \cdot \by_i$ and $\abs{A_b \cap G_j} <\xi_{j-1}(b) $ for every $b\in B$.
 	\item If such a set $B$ was found, update $A_b \leftarrow A_b\cup \{i\}$ for every $b\in B$. 
 	\item If such a set $B$ does not exist, update $R\leftarrow R\cup \{i\}$.
 	\end{itemize}
\end{itemize}
\end{itemize}
Let $A_1,\ldots, A_{M} $ and $R$ be the values of the variables at the end of the procedure.  
The next observation is an immediate consequence of the construction of $A_1,\ldots, A_M$. 
\begin{observation}
	\label{obs:large_items}
Every item $i\in L\setminus\left( G_1 \cup R\right)$ is contained in exactly $N\cdot \by_i$  of the sets $A_1,\ldots, A_{M}$. That is, $\abs{\{b\in [M]~|~i\in A_b\}}=N\cdot \by_i$. Furthermore $A_b\cap R= A_b\cap G_1=\emptyset$ for every $b\in [M]$.
\end{observation}

The next claim is used to bound the number of items in $R$. 
\begin{claim}
	\label{claim:RcapGj}
	For every $j\in \{2,\ldots, \delta^{-2}\}$ it holds that $\abs{R\cap G_j } \leq 2\cdot \delta^{-1}$.
\end{claim}
\begin{claimproof}
	In case $R\cap G_j =\emptyset$ the claim trivially holds. We henceforth assume $R\cap G_j \neq \emptyset$.
	Let $i^*\in R\cap G_j$ such that $\by_{i^*}$ is minimal.  Also, let $B=\left\{ b\in [M]~|~\abs{A_b \cap G_j } <\xi_{j-1}(b) \right\}$, the set of configurations which have remaining placeholders for items in $G_j$. As $i^*\in R$ it must hold  that $ \abs{B} \leq N\cdot \by_{i^*}$. 	
	Since~$\by$ is $\alpha$-scaled we have,
	\begin{equation}
	\label{eq:shifting_first}
	\begin{aligned}
	\alpha \cdot N\cdot  \abs{G_j} \,&\geq\, \sum_{i\in G_j} N \cdot \by_i \\
	&= \, \sum_{b\in [M]} \abs{A_b  \cap  G_j}  +\sum_{i\in G_j \cap R} N\cdot \by_i
	\\
	&=\, \sum_{ b\in [M] \setminus B}  \abs{A_b\cap G_j}+ \sum_{b\in B} \abs{A_b\cap G_{j}} +  \sum_{i\in G_j \cap R} N\cdot \by_i\\
	&\geq \,  \sum_{ b\in [M] \setminus B}  \xi_{j-1}(b) + \sum_{b\in B} \left( \xi_{j-1}(b) - 2\cdot \delta^{-1}\right) +  \sum_{i\in G_j \cap R} N\cdot \by_i\\
	&\geq \, \sum_{b\in [M] } \xi_{j-1}(b) -2\cdot \abs{B} \cdot \delta^{-1} + N\cdot \abs{R\cap G_j}\cdot \by_{i^*} .
	\end{aligned}
	\end{equation}
	The first equality holds as each item $i\in G_j\setminus R$ is contained in exactly $N\cdot\by_i$ of the sets~$A_1,\ldots, A_{M}$ and items in $R$ are not contained in any of the sets $A_1,\ldots, A_{M}$.
	The second inequality  follows from the definition of $B$ and since $\abs{A_b\cap G_j} \leq \xi_{j-1} (b)\leq 2\cdot \delta^{-1}$ for every $b\in[M]$. The last inequality holds since $\by_i\geq \by_{i^*}$ for every $i\in G_j\cap R $ by the selection of $i^*$.  
	
	Furthermore,
	\begin{equation}
		\label{eq:shifting_second}
	 \sum_{b\in [\alpha \cdot N\cdot \ell ] } \xi_{j-1}(b)  \,=\, \sum_{\bp \in \cP} \abs{\xi^{-1} (\bp )} \cdot \bp_{j-1}  \,\geq\,\sum_{\bp \in \cP} \alpha \cdot N \cdot \eta(\bp ) \cdot \bp_{j-1}  \,=\, \alpha \cdot N \cdot \abs{G_{j-1} }\,\geq \, \alpha \cdot N \cdot \abs{G_j},
	\end{equation}
	where the first inequality follows from the definition of $\xi$,  the last equality follows from \Cref{lem:eta_to_groups} and the last inequality follows from \Cref{obs:groups_size}. 
	
	By \eqref{eq:shifting_first},  \eqref{eq:shifting_second}  and $\abs{B}\leq N\cdot \by_{i^*}$ we have
	$$
	\begin{aligned}
	\alpha \cdot N \cdot \abs{G_j} \,&\geq\,
	 \sum_{b\in [\alpha \cdot N\cdot \ell ] } \xi_{j-1}(b) -2\cdot \abs{B} \cdot \delta^{-1} + N\cdot \abs{R\cap G_j}\cdot \by_{i^*} \\ &\geq\, 
	  \alpha \cdot N \cdot \abs{G_j} -2\cdot N \cdot \by^*_{i}\cdot \delta^{-1} + N\cdot \abs{R\cap G_j} \cdot \by_{i^*}.
	  \end{aligned}
 	$$ 
 	By rearranging the terms we have, $\abs{R \cap G_j } \leq 2\cdot \delta^{-1}$.
\end{claimproof}

Since $R\subseteq \bigcup_{j=2}^{\delta^{-2}} {R\cap G_j}$, \Cref{claim:RcapGj} implies that \begin{equation}
	\label{eq:Rsize}
	|R|\leq 2\cdot \delta^{-3} \leq \delta^{-4}.
	\end{equation}
The  items in $R$ will be handled later using \Cref{lem:item_per_bin}.

The next claim will be later used to show that the sets $A_1,\ldots, A_M$ have enough vacant capacity to add the small items in $\supp(\by)$.
\begin{claim}
	\label{claim:A_prop}
For every $b\in [M]$ it holds that $\abs{A_b}\leq \| \xi(b)\|$ and $\w(A_b)\leq \w\left(\xi(b)\right)$.	
\end{claim}
\begin{claimproof}
Since $A_b\subseteq L\setminus G_1$ it holds that
$$
\abs{A_b} \,=\,\sum_{j=2}^{\delta^{-2}} \abs{A_b\cap G_j} \,\leq \, \sum_{j=2}^{\delta^{-2}} \xi_{j-1}(b)\, \leq \, \sum_{j=1}^{\delta^{-2}} \xi_{j}(b)\, = \,\|\xi(b)\|.
$$
Similarly,
$$
\w\left(A_b\right) \,=\,\sum_{j=2}^{\delta^{-2}} \sum_{i\in A_b\cap G_j} \w(i)  \,\leq \, \sum_{j=2}^{\delta^{-2}} \sum_{\,i\in A_b\cap G_j\,}  \w_{j-1}\, \leq \,
\sum_{j=2}^{\delta^{-2}} \xi_{j-1}(b)\cdot\w_{j-1}
\, \leq\, 
\sum_{j=1}^{\delta^{-2}} \xi_{j}(b)\cdot \w_{j}
\,=\,
\w\left(\xi(b)\right),
$$
where the first inequality follows from \Cref{obs:shifting_argument} and the second inequality holds as $\abs{A_b\cap G_j}\leq \xi_{j-1}(b)$.
\end{claimproof}

The next step in the construction of $\bx$ is  to add every item $i\in S\setminus L$  to $N\cdot \by_i$ of the sets $A_1,\ldots,A_{M}$ while ensuring the resulting sets are configurations. 
The assignment utilizes a linear program defined by the polytope of a partition matroid and two linear inequalities per set $A_b$. 
 We rely on a known property \cite{grandoni2010approximation}  that basic solutions for such linear programs have few fractional entries.

The ground set of the matroid is $E=S\setminus L\times [M]$. The element $(i,b)\in E$ represents an assignment of $i\in S\setminus L$ to the set $A_b$.
The {\em partition} of $i\in S\setminus L$ is $T_i=\left\{ (i,b)~|~b\in [M]\right\}$, and its elements  can be  intuitively viewed  as $M$ copies of $i$.
 The independent sets of the matroid contain  at most $N\cdot \by_i$ copies of $i$. That is, 
   $$\cM = \left\{ Z\subseteq E~\middle|~ \forall i\in S\setminus L:~\abs{Z\cap T_i} \leq N\cdot \by_i \right\}.$$
   It is well known that $(E,\cM)$ is a matroid, and specifically a 
   {\em partition matroid} (we refer the reader to, e.g.,  \cite{schrijver2003combinatorial} for a formal definition of matroids). 
   
  Let $P_{\cM}$ be the matroid  polytope of $(E,\cM)$. That is, $P_{\cM}$ is the convex hull of the characteristic vectors of $\cM$. It can be easily observed that 
  \begin{equation}
  	\label{eq:partition_poly}
  P_{\cM} =  \left\{ \bz\ \in[0,1]^{ E}~\middle|~\forall i\in S\setminus L:~~~
  	\sum_{b\in [M]} \bz_{i,b} = N\cdot \by_i  
  \right\}.
  \end{equation}

We consider a linear program whose  feasible region is the intersection of $P_{\cM}$ with $2\cdot M$ linear inequalites.
\begin{equation}
	\label{eq:assignLP}
\ALP~~~~~~~
\begin{aligned}
	&\textnormal{ max }~~&&\sum_{(i,b)\in E} \bz_{i,b} \\
	&\textnormal{ s.t. } &&  \bz\in P_{\cM}\\
	&&&\sum_{i\in S\setminus L} \bz_{i,b}+\abs{A_b} \leq k && \forall b\in [M]\\
	&&&\sum_{i\in S\setminus L} \bz_{i,b}\cdot \w(i)+\w(A_b)  \leq 1 && \forall b\in [M]
\end{aligned}
\end{equation}
The value of $\bz_{i,b}$ is viewed as a fractional assignment of $i\in S\setminus L$ to the set $A_b$. The constraint  $\bz\in P_{\cM}$ ensures each element in $E$ is assigned at most once, and  that the total assignment of an item $i\in I$ is at most   $N\cdot \by_i$.
 The  additional constraints of  $\ALP$ require that the total number of items in $A_b$ together with the fractionally assigned items is at most $k$, and that the total weight of items in $A_b$ and the fractionally assigned items is at most $1$.
 
 In \cite{grandoni2010approximation} the authors proved that in a  basic solution of a linear program defined by a matroid polytope and additional $\beta$ linear inequalities, the sum of fractional entries  (that is, entries in the interval $(0,1)$) is at most $\beta$. 
  As $\ALP$ is defined by a matorid polytope and $2\cdot M$ additional inequalities we attain the following claim as a consequence of \cite{grandoni2010approximation}.
 \begin{claim}
 	\label{claim:grandoni}
 	Let $\bz$ be a basic optimum solution for $\ALP$. Then $\sum_{(i,b)\in E \textnormal{ s.t. } 0<\bz_{i,b}<1} \bz_{i,b} \leq 2\cdot M$.
 \end{claim}

 The next claim uses the properties of $\by$ as an $\alpha$-scaled vector -- $\sum_{i\in \cK_{\bp} } \by_i \leq \alpha \cdot \abs{\cK_{\bp}}$ and $\sum_{i\in \cK_{\bp}} \by_i \cdot \w(i) \leq \alpha\cdot  \w(\cK_{\bp})$ for every $\bp\in \cP$ -- to lower bound the optimum of $\ALP$.
	\begin{claim}
		\label{claim:ALP}
 	The value of an optimal  solution for $\ALP$ is $N \cdot\sum_{i\in S\setminus L} \by_i$. Furthermore, if $\bz$ is an optimal solution for $\ALP$ then $\sum_{b\in [M] } \bz_{i,b} = N\cdot \by_i$ for every $i\in S\setminus L$.
	\end{claim}
	\begin{claimproof}		
	We  lower bound the value of the optimum by constructing a solution for $\ALP$.
	Define $\bz \in \mathbb{R}^{E}$ by 
		\begin{equation}
			\label{eq:nonempty_zdef}
		\bz_{i,b} \,=\,\begin{cases} 
			\frac{\by_i}{\alpha \cdot \eta(\xi(b)) +1} & i\in \cK_{\xi(b)} \\
			0 & i\notin \cK_{\xi(b)} 
			\end{cases}
		\end{equation}
		for every $(i,b)\in E$. In particular, $\bz_{i,b} =0$ if $i$ does not belong to the class $\cK_{\bp}$ where $\bp$ it the type associated by $\xi$ with the bin $b$.
	It trivially holds that $\bz_{i,b}\geq 0$ for every $(i,b)\in E$. Furthermore, for every $(i,b)\in E$, if $i\in \cK_{\xi(b)}$ then 
	$$
	\bz_{i,b}\, =\, \frac{\by_i}{\alpha \cdot \eta(\xi(b))+1} \,\leq\, \frac{\by_i}{1} \,\leq\, 1,$$
	and if $i\notin \cK_{\xi(b)}$ then $\bz_{i,b}=0\leq 1$. Therefore $\bz \in [0,1]^{E}$.
	
	Let  $i\in S\setminus L$ and let  $\bp\in \cP$ be the unique type  such that $i\in \cK_{\bp}$.  Then it holds that 
	\begin{equation}
		\label{eq:bz_partition}
	\sum_{b\in [M]} \bz_{i,b} \,=\, \sum_{b\in \xi^{-1}(\bp) } \frac{\by_i}{\alpha \cdot \eta(\bp) +1}\,=\, 
	 \frac{\abs{\xi^{-1}(\bp) } \cdot\by_i}{\alpha \cdot \eta(\bp) +1}  \, = \,  \frac{(\alpha \cdot N \cdot \eta(\bp)+N) \cdot\by_i}{\alpha \cdot \eta(\bp)+1} \,=\,N\cdot \by_i,
	\end{equation}
	where the first equality follows from \eqref{eq:nonempty_zdef}, and the third holds by the definition of $\xi$. Following \eqref{eq:partition_poly} we showed that $\bz\in \cP_{\cM}$.
	
	For every $b\in [M]$ it holds that 
	\begin{equation}
		\label{eq:linear_first}
		\begin{aligned}
	\sum_{i\in S\setminus L} \bz_{i,b}\,&=\, \sum_{i\in \cK_{\xi(b)} } \frac{\by_i}{\alpha \cdot \eta(\xi(b)) + 1 }\\
	&\leq\, \frac{\alpha \cdot \abs{\cK_{\xi(b)}}}{\alpha \cdot \eta(\xi(b)) + 1}\\
	&\leq \, \frac{\alpha \cdot \eta(\xi(b)) \cdot (k-\|\xi(b)\|) }{\alpha \cdot \eta(\xi(b)) +1} \\&
	\leq\, k-\|\xi(b)\| \\
	& \leq\, k-\abs{A_b},
	\end{aligned}
	\end{equation}
	where the first inequality holds since $\by$ is $\alpha$-scaled, the second inequality follows from \Cref{lem:class_weight_and_card}, and the last inequality follows from \Cref{claim:A_prop}. Similarly, for every $b\in [M]$ it holds that 
	\begin{equation}
		\label{eq:linear_second}
	\begin{aligned}
	\sum_{i\in S\setminus L } \bz_{i,b} \cdot \w(i)&=\, \sum_{i \in \cK_{\xi(b)}} \frac{\by_i}{\alpha \cdot \eta(\xi(b))+1} \cdot \w(i)\\
	&\leq \,  \frac{\alpha \cdot \w(\cK_{\xi(b)})} { \alpha\cdot \eta(\xi(b)) +1} \\
	& \leq \, \frac{ \alpha \cdot \eta(\xi(b)) \left(1-\w(\xi(b))\right) }{\alpha \cdot \eta(\xi(b))+1}\\
	&\leq \, 1-\w(\xi(b)) \\
	&\leq \, 1-\w(A_b).
	\end{aligned}
	\end{equation}
	The first inequality follows from $\sum_{i\in \cK_{\xi(b)}} \by_i \cdot \w(i) \leq \alpha \cdot \w\left( \cK_{\xi(b)}\right)$, as $\by$ is $\alpha$-scaled. 
	The second inequality follows from \Cref{lem:class_weight_and_card} and the last inequality follows from \Cref{claim:A_prop}. By~\eqref{eq:linear_first},~\eqref{eq:linear_second}   and since $\bz\in P_{\cM}$ it holds that $\bz$ is a feasible solution for $\ALP$ \eqref{eq:assignLP}. 
	
	The value of $\bz$ as a solution for $\ALP$ is 
	\begin{equation}
		\label{eq:z_value}
	\sum_{ (i,b) \in E} \bz_{i,b} \, = \, \sum_{i\in S\setminus L } \sum_{b\in [M]} \bz_{i,b}\, = \, \sum_{i\in S\setminus L } N\cdot \by_i \, = \,N\cdot \sum_{i\in S\setminus L }  \by_i ,
\end{equation}
	where the second equality follows from \eqref{eq:bz_partition}.

	Let $\bz'$  be an optimal solution for  $\ALP$ then
	\begin{equation}
		\label{eq:ALP_opt}
	N\cdot \sum_{i\in S\setminus L }  \by_i \,\leq\,\sum_{ (i,b) \in E} \bz'_{i,b} \, = \, \sum_{i\in S\setminus L } \sum_{b\in [M]} \bz'_{i,b}\, \leq \, \sum_{i\in S\setminus L } N\cdot \by_i \, = \,N\cdot \sum_{i\in S\setminus L }  \by_i .
	\end{equation}
	The first inequality holds since $\bz$ is a solution, while $\bz'$ is an optimal solution, and due to \eqref{eq:z_value}. The second inequality follows from 
	$\sum_{b\in [M]}\bz'_{i,b}  \leq N\cdot \by_i$ as $\bz'\in P_{\cM}$ (see \eqref{eq:partition_poly}). Thus, 
	$\sum_{(i,b)\in E} \bz'_{i,b}=N\cdot \sum_{i\in S\setminus L }  \by_i $. That is, the value of an optimal solution for $\ALP$ is $N\cdot \sum_{i\in S\setminus L }  \by_i$. Furthermore, if there is $i^*\in S\setminus L$ such that $\sum_{b\in [M]} \bz_{i^*,b}< N\cdot \by_{i^*}$ then 
	$$
	\sum_{ (i,b) \in E} \bz'_{i,b} \, = \, \sum_{i\in S\setminus L } \sum_{b\in [M]} \bz'_{i,b}\, < \, \sum_{i\in S\setminus L } N\cdot \by_i \, = \,N\cdot \sum_{i\in S\setminus L }  \by_i 
	$$
	contradicting \eqref{eq:ALP_opt}. Thus $\sum_{b\in [M]} \bz'_{i,b}= N\cdot \by_i$ for every $i\in S\setminus L$.
	
	\end{claimproof}

Let $\bz^*$ be a basic optimal solution for $\ALP$. Define 
\begin{equation}
\label{eq:Fdef}
F= \{ (i,b)\in E~|~0<\bz_{i,b}<1\}
\end{equation} be the set of  fractional entries of $\bz^*$. By \Cref{claim:grandoni} it holds that 
\begin{equation}
	\label{eq:Fbound}
	\sum_{(i,b)\in F } \bz^*_{i,b}\leq 2\cdot M.
	\end{equation}
Also, for every $i\in S\setminus L$ define 
\begin{equation}
	\label{eq:betadef}
	\beta_i ~=~\frac{1}{N}\cdot\sum_{b\in [M]} \floor{\bz^*_{i,b}}
\end{equation}
to be the total integral assignment of $i$ by $\bz^*$.

For every $b\in [M]$ define 
\begin{equation}
	\label{eq:Astar_def}
A^*_b = A_b\cup \left\{i\in S\setminus L~\middle|~\bz^*_{i,b}=1\right\}.
\end{equation}

That is, in $A^*_1,\ldots, A^*_M$ we add to the set  $A_b$ the items fully assigned to  the $b$-th configuration by~$\bz^*$.
\begin{claim}
For every $b\in [M]$ it holds that $A^*_b\in \cC$.
\end{claim}
\begin{claimproof}
For every $b\in [M]$ it holds that,
$$
\abs{A^*_b} \,=\, \abs{A_b} + \abs{\{i\in S\setminus L ~|~\bz^*_{i,b}=1\}}\,\leq \abs{A_b} + \sum_{i\in S\setminus L } \bz^*_{i,b} \,\leq \, k,
$$
where the second inequality holds as $\bz^*$ is a feasible solution for $\ALP$.  By a similar argument, for every $b\in [M]$ we have
$$
\w(A^*_b) \,=\,\w(A^*_b)  +\w\left( \{i\in S\setminus L ~|~\bz^*_{i,b}=1\}\right)  \, \leq \, \w(A^*_b) +\sum_{i\in S\setminus L}\bz^*_{i,b} \cdot \w(i) \,\leq\, 1.
$$
Therefore, $A^*_b\in \cC$ for every $b\in [M]$.
\end{claimproof}

We use the configurations $A^*_1,\ldots, A^*_M$ to define an intermediary fractional solution $\bx^*\in \mathbb{R}^{\cC}_{\geq 0}$ by $\bx^*_C = \frac{1}{N}\cdot \abs{ \left\{ b\in [M]~|~A^*_b = C\right\}}$ for every $C\in \cC$. 
It holds that $\|\bx\| = \frac{M}{N}$. Furthermore,
for every $i\in I$ it holds that 
\begin{equation}
	\label{eq:xstar_cover}
	\cover_i(\bx^*) \, = \, \sum_{C\in \cC(i)} \bx^*_C \, = \,\sum_{C\in \cC(i) } \frac{1}{N} \cdot \abs{ \left\{ b\in [M]~|~A^*_b = C\right\}} \, = \, \frac{1}{N}\cdot  \abs{\{b\in [M]~|~i\in A^*_b\}}.
	\end{equation}
\begin{claim}
	\label{claim:xstar_cover}
	For every $i\in L\setminus( G_1 \cup R)$   it holds that $\cover_i(\bx^*) = \by_i$, for every $i\in G_1\cup R$ it holds that $\cover_i(\bx^*)=0$, for every 
	$i\in S\setminus L$ it holds that $\cover_i(\bx^*) = \beta_i$, and for every $i\in I\setminus S$ it holds that $\cover_i(\bx^*) = 0$.
\end{claim}
\begin{claimproof}
	For every item  $i\in L\setminus( G_1\cup R)$  it  holds that 
	$$
	\cover_i(\bx^*) = \frac{1}{N}\cdot  \abs{\{b\in [M]~|~i\in A^*_b\}} =\,\frac{1}{N}\cdot  \abs{\{b\in [M]~|~i\in A_b\}}
	\,=
	\,\frac{1}{N}\cdot N\cdot \by_i \\
	\,=\, \by_i,
	$$
	where the first equality follows from \eqref{eq:xstar_cover}, 
	 the second equality holds since $A^*_b \cap L  = A_b$, and the third equality follows from \Cref{obs:large_items}.
	 By the same arguments, for every $i\in G_1\cup R$ it holds that 
	$$
	\cover_i(\bx^*) = \frac{1}{N}\cdot  \abs{\{b\in [M]~|~i\in A^*_b\}} =\,\frac{1}{N}\cdot  \abs{\{b\in [M]~|~i\in A_b\}}
	\,=
	\,0.
	$$
	
	For every $i\in S\setminus L$ it holds  that
	$$
	\begin{aligned}
	\cover_i(\bx^*) &=\,\frac{1}{N} \cdot \abs{\{b\in [M]~|~i\in A^*_b\}} \\
	&= \,\frac{1}{N} \cdot \abs{\{b\in [M]~|~\bz^*_{i,b}=1\}} 
	\\
	&=\, \frac{1}{N}\sum_{b\in [M]} \floor{\bz^*_{i,b}}\\
	&=\, \beta_i
	\end{aligned}
	$$
	where the first equality is by \eqref{eq:xstar_cover}, the second equality follows from \eqref{eq:Astar_def} and $i\notin A_b$ as $A_b\subseteq L$.
	
	Finally, $A^*_b\subseteq S$ for every $b\in [M]$,  thus by \eqref{eq:xstar_cover} it holds that $\cover_i(\bx^*)=0=\by_i$ for every $i\in I\setminus S$.
\end{claimproof}

Define $\bq \in \mathbb{R}^{I}$ by $\bq_i = \by_i - \beta_i$ for $i\in S\setminus L$ and $\bq_i=0$ for $i\in I\setminus  (S\setminus L)$. Clearly, $\supp(\bq)\subseteq S\setminus L$. Furthermore, for every $i\in S\setminus L$ it holds that $\bq_i \leq \by_i \leq 1$ and 
$$
\bq_i \,=\, \by_i -\beta_i \,=\, \by_i -\frac{1}{N}\sum_{b\in [M]}\floor{\bz^*_{i,b} } \geq \by_i -\frac{1}{N}\sum_{b\in [M]}\bz^*_{i,b} \,=\, \by_i-\by_i \,=\,0,
	$$
	where the third equality follows from \Cref{claim:ALP}. 
Thus $\bq\in [0,1]^I$. Furthermore, since $\by\in \mathbb{Q}^I$ and $\beta_i\in \mathbb{Q}$ for every $i\in I$ by \eqref{eq:betadef} it follows that $\bq\in \mathbb{Q}^I$. 
 Hence, by  \Cref{lem:fractional_first_fit} there is fractional solution $\bef$ such that $\cover(\bef)=\bq$ and $\|\bef\| \,\leq \,2\cdot \sum_{i\in I} \bq_i \cdot \tw(i) +1 $.
 We therefore have,
\begin{equation}
	\label{eq:fsize}
	\begin{aligned}
\|\bef\| \,&\leq \, 2\cdot \sum_{i\in I} \bq_i \cdot \tw(i) +1\\ 
&=\, 2\cdot \sum_{i\in S\setminus L} \left(\by_i - \beta_i \right) \cdot \tw(i) +1\\
&=\, 2\cdot \sum_{i\in S\setminus L} \left(\frac{1}{N}\cdot \sum_{b\in [M]} \bz^*_{i,b} - \frac{1}{N}\cdot \sum_{b\in [M]} \floor{\bz^*_{i,b}} \right) \cdot \tw(i) +1\\
&=\, \frac{2}{N} \sum_{(i,b)\in F} \bz^*_{i,b} \cdot \tw(i) +1\\
&\leq\, \frac{2}{N} \sum_{(i,b)\in F} \bz^*_{i,b} \cdot \delta +1\\
&\leq\, \frac{2}{N}\cdot \delta \cdot 2\cdot M +1\\
&=\, 4\cdot \delta\cdot \frac{M}{N}+1.
\end{aligned}
\end{equation}
The second equality follows from \Cref{claim:ALP} and \eqref{eq:betadef},  the third equality follows from the definition of $F$ in \eqref{eq:Fdef}, the second inequality holds as $\tw(i)\leq \delta$ for items in $S\setminus L$, and the third inequality follows from~\eqref{eq:Fbound}.

Finally, by \Cref{lem:item_per_bin} there is a fractional solution $\bd$ such that $\cover(\bd) = \by\wedge \one_{G_1\cup R}$ and $\|\bd\| =\| \by \wedge \one_{G_1\cup R}\|$. 
By \eqref{eq:Rsize} it holds that $|R|\leq \delta^{-4}$, and by \Cref{obs:groups_size} and \eqref{eq:s_bound} it holds that $\abs{G_1} \leq s\leq 2\cdot \delta \cdot \ell +1$. Therefore,
\begin{equation}
	\label{eq:dsize}
	\|\bd\| \,\leq \,\| \by \wedge \one_{G_1\cup R}\| \,\leq\,
	\sum_{i\in G_1} \by_i + \sum_{i\in R} \by_i \, \leq \, \alpha \cdot |G_1| +|R| \, \leq \, 
	 2\cdot\alpha \cdot  \delta \cdot \ell +1+ \delta^{-4},
	\end{equation}
	where the third inequality used the fact that $\by$ is $\alpha$-scaled.

Define $\bx = \bx^* + \bef + \bd$.  By the definition of $\bx^*$, \eqref{eq:fsize} and \eqref{eq:dsize} it holds that 
\begin{equation}
	\label{eq:xnorm}
\begin{aligned}
\|\bx\|\,&=\, \|\bx^*\| + \|\bef\|+\|\bd\| \\
&\leq \, \frac{M}{N} + 4\cdot \delta \cdot \frac{M}{N} +1 + 2\cdot\alpha \cdot  \delta \cdot \ell +1+\delta^{-4} \\
&= \, \alpha \ell + \abs{\cP} +4\cdot \delta \left(\alpha \ell +\abs{P} \right)  +2 +2\cdot \alpha\cdot \delta \cdot \ell +\delta^{-4}\\
&\leq\, \alpha \cdot (1+10\delta)\cdot \ell +\exp(\delta^{-4}).
\end{aligned}
\end{equation}
The first inequality follows from $\|\bx^*\| \leq \frac{M}{N}$, \eqref{eq:fsize} and \eqref{eq:dsize}. The second equality substitutes $M$ with $\alpha \cdot N\cdot \ell +N\cdot \abs{\cP}$, and the last inequality follows from \Cref{lem:num_types}.

To show that $\cover_i(\bx)  = \by_i$ for every $i\in I$ we consider the following cases.
\begin{itemize}
	\item
For every $i\in L\setminus (G_1\cup R)$ it holds that  $\cover_i(\bx^*)=\by_i$ (\Cref{claim:xstar_cover}) and $\cover_i(\bef)=\cover_i(\bd)=0$, thus 
$
\cover_i(\bx)\, = \, \by_i$. 
\item For every $i\in G_1\cup R$ it holds that $\cover_i(\bx) = 0$ (\Cref{claim:xstar_cover}), $\cover_i(\bd) = \by_i$ and $\cover_i(\bef) =0$, thus $
\cover_i(\bx)\, = \, \by_i$. 
\item For every  $i \in S\setminus L$ it holds that $\cover_i(\bx^*) =\beta_i$, $\cover_i(\bef) = \bq_i= \by- \beta_i $ and $\cover_i(\bd)=0$. Therefore $\cover_i (\bx) =\beta_i+\by_i -\beta_i =\by_i$.
\item For every $i\in I\setminus S$ it holds that $\cover_i(\bx^*)=\cover_i(\bef)=\cover_i(\bd)=0$. Thus $\cover_i(\bx)=0=\by_i$ since $\supp(\by_i) \subseteq S$. 
\end{itemize}
Overall, we showed that $\cover(\bx)=\by$, together with \eqref{eq:xnorm} this completes the proof of the lemma.
\end{proof}

\subsection{Refinement for the Small Items and the Final Structure}

To complete the construction of the liner  structure, we need  to further partition the classes of small items into sub-classes.

We say a type $\bp \in \cP$ is {\em degenerate} if $\tw\left(\cK_{\bp}\right) \leq  \ell \cdot \frac{\delta^3}{\abs{\cP}}$, otherwise it is {\em non-degenerate}. Let $\cD\subseteq \cP$ be the set of all degenerate types. 
We will define a set of vectors for every group $G_j$  and class $\cK_{\bp}$. The final structure $\cL$ is the union of those sets of vectors. 
For every degenerate type $\bp\in \cD $ we define the set of vectors to be $\cL_{\bp}=\emptyset$. 

For every non-degenerate type $\bp \in \cP\setminus \cD$ we further partition the items in $\cK_{\bp}$ into subclasses. Let $r_{\bp}=\abs{\cK_{\bp}}$
 and  $\cK_{\bp}=\{i_{\bp,1}\ldots, i_{\bp,r_{\bp}}\}$ where $\w(i_{\bp,1})\geq \ldots \geq \w(i_{\bp,r_{\bp}})$. That is, the items are sorted in a non-increasing order by weights. 
Define $h_{\bp,0}=0$ and  
\begin{equation}
	\label{eq:subclass_def}
\forall j\in [\delta^{-2}]:~~~~
h_{\bp,j} = \min\left\{ s\in [r_{\bp}]~\middle|~\sum_{s' =1}^{s} \tw(i_{\bp,s'}) \,\geq\, j\cdot \delta^2\cdot  \tw\left( \cK_{\bp}\right)\right\}.
\end{equation}
Also,  define  $H_{\bp, j} =\left\{i_{\bp,s}~\middle|~ h_{\bp,j-1}<s\leq h_{\bp,j}\right\}$, the $j$-th subclass of $\cK_{\bp}$. The subclasses partition $\cK_{\bp}$ into subclasses of roughly equal adjusted weight. 
\begin{lemma}
	\label{lem:adjusted_subclass_weight}
For every $\bp\in \cP\setminus \cD$ and $j\in [\delta^{-2}]$ it holds that 
$$
\frac{\delta^{6}}{\abs{\cP}} \cdot \ell \,\leq\,\delta^2\cdot \tw\left(\cK_{\bp}\right)-\delta\,\leq\,  \tw\left(H_{\bp,j}\right)\,\leq\, \delta^2\cdot \tw\left(\cK_{\bp}\right)+\delta 
$$
\end{lemma}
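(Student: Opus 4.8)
The plan is to unpack the definition~\eqref{eq:subclass_def} of the thresholds $h_{\bp,j}$ and translate the inequality defining $h_{\bp,j}$ into bounds on $\tw(H_{\bp,j})$, where $H_{\bp,j}=\{i_{\bp,s}\mid h_{\bp,j-1}<s\le h_{\bp,j}\}$. First I would record the two consequences of the minimality in~\eqref{eq:subclass_def}: on the one hand $\sum_{s=1}^{h_{\bp,j}}\tw(i_{\bp,s})\ge j\cdot\delta^2\cdot\tw(\cK_{\bp})$, and on the other hand (if $h_{\bp,j}\ge 1$, so the index one below it is a valid non-attaining index) $\sum_{s=1}^{h_{\bp,j}-1}\tw(i_{\bp,s})< j\cdot\delta^2\cdot\tw(\cK_{\bp})$. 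Subtracting the analogous statements for $j$ and $j-1$ gives $\tw(H_{\bp,j})=\sum_{s=h_{\bp,j-1}+1}^{h_{\bp,j}}\tw(i_{\bp,s})$ squeezed between $j\delta^2\tw(\cK_{\bp})-\big(\text{partial sum up to }h_{\bp,j-1}\big)$ from below and $\big(\text{partial sum up to }h_{\bp,j}\big)-(j-1)\delta^2\tw(\cK_{\bp})$ from above. Using that each partial sum up to $h_{\bp,j'}$ lies within $[\,j'\delta^2\tw(\cK_{\bp})\,,\,j'\delta^2\tw(\cK_{\bp})+\tw(i_{\bp,h_{\bp,j'}})\,]$ — the left by the defining inequality, the right because the index $h_{\bp,j'}-1$ fails to attain the threshold so adding one more item $i_{\bp,h_{\bp,j'}}$ overshoots by at most its own adjusted weight — and noting $\tw(i)<\delta+\tfrac1k\le 2\delta$, actually $\tw(i)\le \delta$ more carefully for small items but here I only need $\tw(i_{\bp,h_{\bp,j'}})\le \delta$ since $\cK_{\bp}\subseteq S\setminus L$ so every item is small, i.e. $\tw(i)<\delta$. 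That yields $\delta^2\tw(\cK_{\bp})-\delta\le\tw(H_{\bp,j})\le\delta^2\tw(\cK_{\bp})+\delta$, the two middle inequalities of the claim.

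For the leftmost inequality, $\tfrac{\delta^6}{|\cP|}\ell\le\delta^2\tw(\cK_{\bp})-\delta$, I would invoke that $\bp$ is \emph{non-degenerate}, which by definition means $\tw(\cK_{\bp})>\ell\cdot\tfrac{\delta^3}{|\cP|}$. Hence $\delta^2\tw(\cK_{\bp})>\ell\cdot\tfrac{\delta^5}{|\cP|}$. It then remains a routine estimate to absorb the additive $-\delta$: using $\ell>\exp(\delta^{-5})$ (the hypothesis of \Cref{lem:structure}) and $|\cP|\le\exp(\delta^{-3})$ (\Cref{lem:num_types}), the quantity $\ell\cdot\tfrac{\delta^5}{|\cP|}$ is enormous compared to $\delta$ and to $\ell\cdot\tfrac{\delta^6}{|\cP|}$, so $\delta^2\tw(\cK_{\bp})-\delta\ge \tfrac12\cdot\ell\tfrac{\delta^5}{|\cP|}\ge \ell\tfrac{\delta^6}{|\cP|}$ once $\delta<0.1$; I would just write out this chain of inequalities with the explicit bounds on $\ell$ and $|\cP|$.

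One small technical point I would be careful about: the upper bound $\tw(H_{\bp,j})\le\delta^2\tw(\cK_{\bp})+\delta$ requires knowing $H_{\bp,j}\neq\emptyset$ isn't an issue (empty set has adjusted weight $0$, which is $\le\delta^2\tw(\cK_{\bp})+\delta$ trivially), but the lower bound $\tw(H_{\bp,j})\ge\delta^2\tw(\cK_{\bp})-\delta>0$ implicitly asserts the subclass is non-empty and the thresholds $h_{\bp,0}<h_{\bp,1}<\cdots<h_{\bp,\delta^{-2}}$ are strictly increasing and well-defined, i.e. $h_{\bp,\delta^{-2}}\le r_{\bp}$ so the $\min$ in~\eqref{eq:subclass_def} ranges over a non-empty set. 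This holds because $\delta^{-2}\cdot\delta^2\tw(\cK_{\bp})=\tw(\cK_{\bp})=\sum_{s=1}^{r_{\bp}}\tw(i_{\bp,s})$, so the full sum reaches exactly the top threshold, and strict increase follows since each $\delta^2\tw(\cK_{\bp})$ step exceeds the maximum single adjusted weight $\delta$ (again using non-degeneracy and the size bounds to guarantee $\delta^2\tw(\cK_{\bp})>\delta$). I would dispatch this well-definedness remark first, then do the squeeze, then the non-degeneracy estimate. The main obstacle is bookkeeping of the additive $\pm\delta$ slack from the "one overshooting item" and verifying it is dominated by the gap $\delta^2\tw(\cK_{\bp})$ using the quantitative lower bound $\ell>\exp(\delta^{-5})$ — purely a matter of being careful with constants, no conceptual difficulty.
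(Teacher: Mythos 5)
Your proposal is correct and follows essentially the same route as the paper's proof: unpack the two inequalities inherent in the minimality in~\eqref{eq:subclass_def}, squeeze $\tw(H_{\bp,j})$ as a difference of consecutive partial sums with the additive slack $\tw(i_{\bp,h_{\bp,\cdot}}) < \delta$ coming from the smallness of items in $\cK_{\bp}$, and then invoke non-degeneracy together with $\ell > \exp(\delta^{-5})$ and $|\cP| \le \exp(\delta^{-3})$ to absorb the $-\delta$ slack into the leftmost bound. The only differences are cosmetic: the paper handles $j=1$ and $j>1$ as separate cases for the lower bound rather than via a unified interval, and your explicit well-definedness/strict-increase remark is an optional extra the paper leaves implicit.
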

\begin{proof}
For every $j'\in[\delta^{-2}]$, by \eqref{eq:subclass_def}, it holds that 
\begin{equation}
	\label{eq:subclass_excluding}
	\sum_{s=1}^{h_{\bp,j'}-1}\tw(i_{\bp,s})\,<\,j'\cdot \delta^{2} \cdot \tw\left(\cK_{\bp}\right). 
\end{equation}
Therefore,
\begin{equation}
	\label{eq:subclass_weight_upper}
\begin{aligned}
\tw\left(H_{\bp,j}\right)\,&=\, \sum_{s=h_{\bp,j-1}+1}^{h_{\bp,j}} \tw(i_{\bp,s})\\
& =\, \sum_{s=1}^{h_{\bp,j}-1} \tw(i_{\bp,s}) + \tw(i_{\bp, h_{\bp,j}}) -\sum_{s=1}^{h_{\bp,j-1}} \tw(i_{\bp,s})\\
&\leq \, j\cdot \delta^{2} \cdot \tw\left(\cK_{\bp}\right)  + \delta -(j-1)\cdot \delta^{2} \cdot\tw\left(\cK_{\bp}\right) \\
&= \delta^2\cdot \tw\left(\cK_{\bp}\right) + \delta,
\end{aligned}
\end{equation}
where the inequality follows from \eqref{eq:subclass_def}, \eqref{eq:subclass_excluding} and since $\tw(i_{\bp, h_{\bp,j}})\leq \delta$ as $i_{\bp, h_{\bp,j}}$ is  a small item. 

To establish a lower bound on $\tw(H_{\bp,j})$ we consider two cases.
\begin{itemize}
	\item
In case $j=1$ then 
$$\tw(H_{\bp,j}) \,=\, \sum_{s=1}^{h_{\bp,j}} \tw(i_{\bp,s}) \, \geq \,1\cdot \delta^{2}\cdot \tw\left(\cK_{\bp}\right)\, \geq \, \delta^{2}\cdot \tw\left(\cK_{\bp}\right)  - \delta$$
by \eqref{eq:subclass_def}.
\item 
In case $j>1$ it holds that 
\begin{equation*}
	\begin{aligned}
		\tw\left(H_{\bp,j}\right)\,&=\, \sum_{s=h_{\bp,j-1}+1}^{h_{\bp,j}} \tw(i_{\bp,s})\\
		& =\, \sum_{s=1}^{h_{\bp,j}} \tw(i_{\bp,s})  -\sum_{s=1}^{h_{\bp,j-1}-1} \tw(i_{\bp,s})-\tw\left( i_{\bp, h_{\bp,j-1} }\right)\\
		&\geq \, j\cdot \delta^{2} \cdot \tw\left(\cK_{\bp}\right)-(j-1)\cdot \delta^{2} \cdot\tw\left(\cK_{\bp}\right) -\delta \\
		&=\, \delta^2\cdot \tw\left(\cK_{\bp}\right) - \delta,
	\end{aligned}
\end{equation*}
where the inequality follows from \eqref{eq:subclass_def}, \eqref{eq:subclass_excluding} and since $\tw(i_{\bp, h_{\bp,j-1}})\leq \delta$ as $i_{\bp, h_{\bp,j-1}}$ is s a small item. 
\end{itemize}
In both cases,
\begin{equation}
	\label{eq:subclass_weight_lower}
\tw(H_{\bp,j})\,\geq \,\delta^2 \cdot \tw\left(\cK_{\bp}\right) -\delta
\, \geq \,  \delta^{2} \cdot \frac{\delta^3}{\abs{\cP}} \cdot \ell -\delta
\, \geq \,  \frac{\delta^5}{\abs{\cP}} \cdot \ell - \frac{\delta^6}{\abs{\cP}} \cdot \ell
 \, \geq \, \frac{\delta^6}{\abs{\cP}} \cdot \ell ,
\end{equation}
the second inequality holds as $\bp$ is non-degenerate, and the third inequality follows from $\ell \geq  \frac{\abs{\cP}}{\delta^5}$.  The lemma  follows from \eqref{eq:subclass_weight_upper} and \eqref{eq:subclass_weight_lower}.
\end{proof}

Define $\bw \in \mathbb{R}^I_{\geq 0}$ by $\bw_i =\w(i)$ for all $i\in I$.
Also, for any two vector $\bgam,\blam \in \mathbb{R}^{I}$ we use $\bgam \wedge \blam$ to denote their element-wise minimum. That is $\bgam \wedge \blam=\bmu$  where $\bmu_{i} =\min(\bgam_i,\blam_i)$ for all $i\in I$.
 The set of vectors associated with the class $\bp \in \cP\setminus \cD$ is $$\cL_{\bp} =\left\{\one_{H_{\bp,j} } ~|~j\in [\delta^{-2}] \right\}\,\cup\,\left\{\bw\wedge\one_{H_{\bp,j} } ~|~j\in [\delta^{-2}] \right\}.$$
That is, each subclass $H_{\bp,j}$ contributes two vectors. One that limits its cardinality and another which limits its weight.
 Also, for every $\bp\in \cP\setminus \cD$ and $j\in [\delta^{-2}]$ define $\w_{\bp, j} = \min_{i\in H_{\bp,j} } \w(i)$ to be the minimal weight of an item in the subclass $H_{\bp,j}$. By the definition of the subclasses it holds that $\w_{\bp,1} \geq \w_{\bp,2}\geq \ldots\geq \w_{\bp,\delta^{-2}}$ and $\w(i)\leq \w_{\bp,j-1}$ for every $j\in \{2,\ldots, \delta^{-2}\}$ and $i\in H_{\bp,j}$.
We observe some basic  properties of the vectors in $\cL_{\bp}$.
\begin{lemma}
	\label{lem:weighted_tol}
Let $\bp\in \cP\setminus \cD$ and $j\in \{2,\ldots, \delta^{-2}\}$. Then $\tol\left( \bw \wedge \one_{H_{\bp,j}} \right) \leq \w(H_{\bp,j-1})\cdot 2\cdot  \frac{\abs{\cP}\cdot\delta^{-6}}{\ell}$.
\end{lemma}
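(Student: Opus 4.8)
The plan is to first simplify $\tol(\bw\wedge\one_{H_{\bp,j}})$ to a quantity of the form $\max_{C\in\cC}\w(C\cap H_{\bp,j})$, then bound it from above in two complementary ways, and finally combine these with the lower bound on $\tw(H_{\bp,j-1})$ coming from \Cref{lem:adjusted_subclass_weight} through a short two-case argument.

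Since $\w(i)\in[0,1]$ for every $i\in I$, the element-wise minimum satisfies $(\bw\wedge\one_{H_{\bp,j}})_i=\w(i)$ when $i\in H_{\bp,j}$ and $(\bw\wedge\one_{H_{\bp,j}})_i=0$ otherwise; hence $\tol(\bw\wedge\one_{H_{\bp,j}})=\max_{C\in\cC}\w(C\cap H_{\bp,j})$. For any $C\in\cC$ I would record two estimates: (i) $\w(C\cap H_{\bp,j})\le\w(C)\le1$; and (ii) since $\abs{C}\le k$ and, for $j\ge 2$, every $i\in H_{\bp,j}$ has $\w(i)\le\w_{\bp,j-1}$ (the monotonicity of the subclass weights noted just before the lemma), $\w(C\cap H_{\bp,j})\le\abs{C\cap H_{\bp,j}}\cdot\w_{\bp,j-1}\le k\cdot\w_{\bp,j-1}$. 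Combining, $\tol(\bw\wedge\one_{H_{\bp,j}})\le\min\{1,\,k\cdot\w_{\bp,j-1}\}$.

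On the other side I would use $\w(H_{\bp,j-1})\ge\abs{H_{\bp,j-1}}\cdot\w_{\bp,j-1}$ together with the identity $\tw(H_{\bp,j-1})=\w(H_{\bp,j-1})+\abs{H_{\bp,j-1}}/k$ and the bound $\tw(H_{\bp,j-1})\ge\frac{\delta^6}{\abs{\cP}}\cdot\ell$, which is \Cref{lem:adjusted_subclass_weight} applied to the index $j-1$ (legitimate since $j\in\{2,\dots,\delta^{-2}\}$ forces $j-1\in[\delta^{-2}]$). Writing $\rho=\frac{\delta^6}{\abs{\cP}}\cdot\ell$, the target inequality reads $\tol(\bw\wedge\one_{H_{\bp,j}})\le\tfrac{2}{\rho}\cdot\w(H_{\bp,j-1})$. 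If $\w(H_{\bp,j-1})\ge\rho/2$, then $\tfrac{2}{\rho}\cdot\w(H_{\bp,j-1})\ge1\ge\tol(\bw\wedge\one_{H_{\bp,j}})$ by estimate (i). Otherwise, from $\w(H_{\bp,j-1})+\abs{H_{\bp,j-1}}/k=\tw(H_{\bp,j-1})\ge\rho$ we get $\abs{H_{\bp,j-1}}/k>\rho/2$, hence $\abs{H_{\bp,j-1}}>k\rho/2$, so $\w(H_{\bp,j-1})\ge\abs{H_{\bp,j-1}}\cdot\w_{\bp,j-1}>\tfrac{k\rho}{2}\cdot\w_{\bp,j-1}$, and multiplying by $\tfrac{2}{\rho}$ gives $\tfrac{2}{\rho}\cdot\w(H_{\bp,j-1})>k\cdot\w_{\bp,j-1}\ge\tol(\bw\wedge\one_{H_{\bp,j}})$ by estimate (ii). Either way the lemma follows.

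I do not anticipate a genuine obstacle here; it is essentially careful bookkeeping with adjusted weights. The one point that needs attention is that a single upper bound on $\tol$ does not suffice: the bound $\tol\le1$ is wasteful when $H_{\bp,j-1}$ consists of many very light items (so $\w(H_{\bp,j-1})$ is far below $\ell$), while the bound $\tol\le k\w_{\bp,j-1}$ is wasteful when $\w_{\bp,j-1}$ is of order $\delta$; splitting according to whether the adjusted weight $\tw(H_{\bp,j-1})$ is dominated by its true-weight part $\w(H_{\bp,j-1})$ or by its cardinality part $\abs{H_{\bp,j-1}}/k$ is exactly what makes both regimes go through. I would also flag the trivial edge case $\w_{\bp,j-1}=0$, in which every item of $H_{\bp,j}$ has zero weight, $\tol(\bw\wedge\one_{H_{\bp,j}})=0$, and the claimed inequality holds trivially.
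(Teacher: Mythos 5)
Your proof is correct, and it uses exactly the same ingredients as the paper's: the two upper bounds $\tol(\bw\wedge\one_{H_{\bp,j}})\le 1$ and $\tol(\bw\wedge\one_{H_{\bp,j}})\le k\cdot\w_{\bp,j-1}$, the lower bound $\w(H_{\bp,j-1})\ge\abs{H_{\bp,j-1}}\cdot\w_{\bp,j-1}$, the decomposition $\tw(H_{\bp,j-1})=\w(H_{\bp,j-1})+\abs{H_{\bp,j-1}}/k$, and the bound $\tw(H_{\bp,j-1})\ge\frac{\delta^{6}}{\abs{\cP}}\cdot\ell$ from \Cref{lem:adjusted_subclass_weight}. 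The only organizational difference is that the paper collapses the two bounds on $\tol$ into one expression via the algebraic inequality $\min\{1,x\}\le 2\cdot(x^{-1}+1)^{-1}$ and then rearranges, whereas you split into two cases according to which of the two summands of $\tw(H_{\bp,j-1})$ dominates; your version is arguably a bit more transparent since it makes explicit which of the two $\tol$-bounds is being used in each regime, while the paper's version avoids the case split at the cost of one opaque elementary inequality. Either way the combinatorial content is identical, so this is not a genuinely different route.
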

\begin{proof}
	By the definition of $\tol$ it holds that 
	$$
	\tol\left( \bw \wedge \one_{H_{\bp,j}} \right) \, = \, \max\left\{ \w(C\cap H_{\bp,j})  ~|~C\in \cC\right\}\,\leq \, 1,
	$$
	and 
	$$
	\tol\left( \bw \wedge \one_{H_{\bp,j}} \right) \, = \, \max\left\{ \w(C\cap H_{\bp,j})  ~|~C\in \cC\right\}\,\leq \, k \cdot \w_{\bp,j-1},
	$$
	where the last inequality holds since $C\cap H_{\bp, j}$ is a set of cardinality at most $k$, and each item in it has weight at most $\w_{\bp,j-1}$. Therefore,
	\begin{equation}
		\label{eq:weighted_tol}
	 \tol\left( \bw \wedge \one_{H_{\bp,j}} \right) \, \leq\,\min \left\{ 1, k\cdot \w_{\bp, j-1}\right\}\,\leq \,2\cdot \left(\frac{1}{k\cdot \w_{\bp,j-1}}+1 \right)^{-1},
	\end{equation}
	where the last inequality holds since $\min\{1,x\} \leq 2\cdot \left( x^{-1}+1\right)^{-1}$ for every $x\geq 0$, where, with a slight abuse of notation, we assume $\frac{1}{0} = \infty$ and $\frac{1}{\infty}=0$ in case $x=0$.
	
	We can also use $\tw(H_{\bp,j-1})$ to lower bound $\w(H_{\bp,j-1})$. It holds that 
	$$
	\begin{aligned}
	\tw(H_{\bp,j-1}) &=\, \frac{\abs{H_{\bp,j-1}}}{k} + \w(H_{\bp,j-1}) \\
	& \leq \, \frac{1}{k}\cdot \frac{\w(H_{\bp, j-1})}{\w_{\bp,j-1}} +\w(H_{\bp,j-1}) \\
	&=\,  \w(H_{\bp,j-1})\cdot \left( \frac{1}{k\cdot \w_{\bp,j-1}}+1\right) \\
	& \leq\,  \w(H_{\bp,j-1}) \cdot \frac{2}{\tol\left( \bw\wedge \one_{H_{\bp,h}}\right)}
	\end{aligned}
	$$
	where the first inequality holds as each item in $\tw(H_{\bp,j-1})$ has weight at least $\w_{\bp,j-1}$ and the last inequality follows from \eqref{eq:weighted_tol}. Therefore, 
	$$
	\tol\left(\bw \wedge \one_{H_{\bp,j}}\right)  \, \leq\, \frac{2\cdot \w(H_{\bp, j-1})}{\tw(H_{\bp,j-1})}\, \leq   \w(H_{\bp,j-1})\cdot2\cdot \frac{\abs{\cP}\cdot\delta^{-6}}{\ell},
	$$
	where the last inequality follows from \Cref{lem:adjusted_subclass_weight}. 
\end{proof}

Similarly to  \Cref{lem:weighted_tol}, we can obtain an upper bound on $\tol\left( \one_{H_{\bp, j}} \right)$.
\begin{lemma}
		\label{lem:cardinality_tol}
	Let $\bp\in \cP\setminus \cD$ and $j\in \{1,\ldots, \delta^{-2}-1\}$. Then $\tol\left( \one_{H_{\bp,j}} \right) \leq \abs{H_{\bp,j+1}}\cdot 2\cdot  \frac{\abs{\cP}\cdot\delta^{-6}}{\ell}$.
\end{lemma}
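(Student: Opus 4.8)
## Proof plan for Lemma~\ref{lem:cardinality_tol}

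\textbf{Setup and strategy.} The claim is the cardinality analogue of \Cref{lem:weighted_tol}, so I would mirror that proof almost line for line, swapping the roles of weight and cardinality and replacing the weight bound $\w(C)\le 1$ by the cardinality bound $\abs{C}\le k$. The quantity to bound is $\tol\left(\one_{H_{\bp,j}}\right) = \max\{\,\abs{C\cap H_{\bp,j}}\mid C\in\cC\,\}$, i.e.\ the largest number of items of the subclass $H_{\bp,j}$ that can coexist in a single configuration. I would first get two cheap upper bounds on this, then use $\tw\left(H_{\bp,j+1}\right)$ as a lower-bounding device to convert one of them into the stated form, and finally invoke \Cref{lem:adjusted_subclass_weight} on the index $j+1$.

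\textbf{Key steps.} First, two trivial upper bounds: since $C\cap H_{\bp,j}\subseteq C$ and $C$ is a configuration, $\abs{C\cap H_{\bp,j}}\le k$; and since the items $H_{\bp,j}$ are sorted so that every item in $H_{\bp,j}$ has weight at least $\w_{\bp,j}$ (the minimal weight in that subclass), a configuration contains at most $1/\w_{\bp,j}$ of them by the weight constraint $\w(C)\le 1$. Hence $\tol\left(\one_{H_{\bp,j}}\right)\le \min\{k,\,1/\w_{\bp,j}\}\le 2\left(k^{-1}+\w_{\bp,j}\right)^{-1}$, using the elementary inequality $\min\{a,b\}\le 2(a^{-1}+b^{-1})^{-1}$ for $a,b\ge 0$ (with the usual $1/0=\infty$, $1/\infty=0$ conventions, exactly as in \Cref{lem:weighted_tol}). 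Next, I would lower bound $\tw\left(H_{\bp,j+1}\right)$ in terms of this tolerance: by definition $\tw\left(H_{\bp,j+1}\right)=\frac{\abs{H_{\bp,j+1}}}{k}+\w\left(H_{\bp,j+1}\right)$; since every item in $H_{\bp,j+1}$ has weight at most $\w_{\bp,j}$ (the minimal weight in the \emph{previous} subclass $H_{\bp,j}$, using the monotone ordering of subclass weights), we get $\w\left(H_{\bp,j+1}\right)\le \abs{H_{\bp,j+1}}\cdot\w_{\bp,j}$, so
$$
\tw\left(H_{\bp,j+1}\right)\le \frac{\abs{H_{\bp,j+1}}}{k}+\abs{H_{\bp,j+1}}\cdot\w_{\bp,j}
=\abs{H_{\bp,j+1}}\left(\frac{1}{k}+\w_{\bp,j}\right)
\le \abs{H_{\bp,j+1}}\cdot\frac{2}{\tol\left(\one_{H_{\bp,j}}\right)}.
$$
Rearranging gives $\tol\left(\one_{H_{\bp,j}}\right)\le \frac{2\,\abs{H_{\bp,j+1}}}{\tw\left(H_{\bp,j+1}\right)}$, and then plugging in the lower bound $\tw\left(H_{\bp,j+1}\right)\ge \frac{\delta^6}{\abs{\cP}}\,\ell$ from \Cref{lem:adjusted_subclass_weight} (valid since $j+1\in[\delta^{-2}]$, which is why the statement restricts to $j\le \delta^{-2}-1$) yields $\tol\left(\one_{H_{\bp,j}}\right)\le \abs{H_{\bp,j+1}}\cdot 2\cdot\frac{\abs{\cP}\cdot\delta^{-6}}{\ell}$, as desired.

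\textbf{Main obstacle.} There is no serious obstacle; the only points demanding care are bookkeeping ones. I must make sure the weight-ordering direction is right: the subclasses $H_{\bp,1},\dots,H_{\bp,\delta^{-2}}$ are listed in non-increasing weight, so $\w_{\bp,1}\ge\cdots\ge\w_{\bp,\delta^{-2}}$ and every item of $H_{\bp,j+1}$ has weight $\le \w_{\bp,j}$ — this is exactly the "shifting" inequality stated just before \Cref{lem:weighted_tol}, and it is what lets the bound for $H_{\bp,j}$ (a cardinality object) be controlled by $\abs{H_{\bp,j+1}}$. I also need the minimum in the $\min\{k,1/\w_{\bp,j}\}$ step to handle the degenerate possibility $\w_{\bp,j}=0$ gracefully, which the convention $1/0=\infty$ already does (and in that case $\tol\le k$ trivially). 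Finally, one should note that $H_{\bp,j+1}$ is nonempty for $j+1\in[\delta^{-2}]$ and $\bp$ non-degenerate, since by \Cref{lem:adjusted_subclass_weight} it has positive adjusted weight; so $\w_{\bp,j+1}$ and the quantities above are well defined. Everything else is the same elementary manipulation used in \Cref{lem:weighted_tol}.
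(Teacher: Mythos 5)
Your proposal follows the paper's proof essentially line for line: the two trivial upper bounds $\tol(\one_{H_{\bp,j}}) \leq k$ and $\tol(\one_{H_{\bp,j}}) \leq 1/\w_{\bp,j}$, the harmonic-mean combination $\min\{k,1/\w_{\bp,j}\} \leq 2\left(k^{-1}+\w_{\bp,j}\right)^{-1}$ (the paper writes the equivalent $2k(1+k\w_{\bp,j})^{-1}$), the lower bound on $\tw(H_{\bp,j+1})$ via the ordering $\w(i)\leq\w_{\bp,j}$ for $i\in H_{\bp,j+1}$, and the final invocation of \Cref{lem:adjusted_subclass_weight} at index $j+1$. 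The proof is correct and the approach is the same.
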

\begin{proof}
It holds that 
$$
\tol(\one_{H_{\bp,j}}) \,=\, \max\left\{\abs{C\cap H_{\bp,j}}~|~C\in \cC\right\}\, \leq k,
$$
and 
$$
\tol(\one_{H_{\bp,j}}) \,=\, \max\left\{\abs{C\cap H_{\bp,j}}~|~C\in \cC\right\}\, \leq \frac{1}{\w_{\bp,j}},
$$
where the last inequality holds since $\w(C\cap H_{\bp,j})\leq 1$ and each item in $C\cap H_{\bp,j}$ is of weight at least $\w_{\bp,j}$.
Therefore,
\begin{equation}
\label{eq:cardinality_tol}
\tol(\one_{H_{\bp,j}})	\,\leq \,
 \min\left\{ k, \frac{1}{\w_{\bp,j}}\right\}  \,=\, k \cdot \min\left\{ 1, \,\frac{1}{k\cdot \w_{\bp,j}}\right\} \, \leq \, k\cdot 2 \cdot \left( 1+ k\cdot \w_{\bp,j}\right)^{-1},
\end{equation}
where the inequality follows from $\min\{1,x\}\leq 2\cdot \left(1+x^{-1}\right)^{-1}$ for $x\geq 0$ as in the proof of \Cref{lem:weighted_tol}.

It also holds that 
$$
\begin{aligned}
\tw(H_{\bp,j+1}) \,&=\, \frac{\abs{H_{\bp,j+1}}}{k} + \w(H_{\bp, j+1}) \\
& \leq \,  \frac{\abs{H_{\bp,j+1}}}{k} + \abs{H_{\bp,j+1}}  \cdot \w_{\bp, j} \\
& =\, 
 \frac{\abs{H_{\bp,j+1}}}{k }\cdot \left( 1+k\cdot \w_{\bp,j} \right) \\
 &\leq \, \frac{\abs{H_{\bp,j+1}}}{k }\cdot  \frac{2\cdot k }{\tol(\one_{H_{\bp,j}})},
 \end{aligned}
$$
where the first inequality holds since the weight of each item in $H_{\bp,j+1}$ is at most $\w_{\bp,j}$ and the last inequality follows from \eqref{eq:cardinality_tol}. By rearranging the terms and \Cref{lem:adjusted_subclass_weight} we have,
$$
\tol(\one_{H_{\bp,j}})\, \leq \, \frac{2\cdot \abs{H_{\bp,j+1}}}{\tw(H_{\bp, j+1})} \, \leq \,\abs{H_{\bp,j+1}}\cdot2\cdot \frac{\abs{\cP}\cdot\delta^{-6}}{\ell}.
$$
\end{proof}
\Cref{lem:weighted_tol,lem:cardinality_tol} are used in the proof of the following lemma.
\begin{lemma}
	\label{lem:subclass_sums}
	Let $\bp\in \cP\setminus \cD$ , $\by \in [0,1]^{I}$, $t>0$ and $0<\alpha \leq 1$  such that $\by \cdot \bu \leq  \alpha \cdot \one_{S}\cdot \bu +t\cdot \tol(\bu)$ for every $\bu \in\cL_{\bp}$. Then 
	$$\sum_{j=2}^{\delta^{-2}} \sum_{\,i\in H_{\bp, j }\,} \by_i \cdot \w(i) \,\leq\, \w\left( \cK_{\bp}\right)\cdot \left( \alpha + 2\cdot \delta^{-6}\cdot \abs{\cP} \cdot \frac{t}{\ell} \right)
	$$
	and 
	$$\sum_{j=1}^{\delta^{-2}-1} \sum_{\,i\in H_{\bp, j }\,} \by_i \,\leq\, \abs{ \cK_{\bp}}\cdot \left( \alpha + 2\cdot \delta^{-6}\cdot \abs{\cP} \cdot \frac{t}{\ell} \right).
	$$
\end{lemma}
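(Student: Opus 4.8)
The plan is to prove each of the two inequalities by a \emph{shifting argument} over the subclasses $H_{\bp,1},\ldots,H_{\bp,\delta^{-2}}$, in the spirit of classical linear grouping, where the role of ``the weight (resp.\ cardinality) of the previous group'' is played by the tolerance bounds of \Cref{lem:weighted_tol} and \Cref{lem:cardinality_tol}. Throughout, the starting point is that the hypothesis of the lemma gives, for every vector $\bu\in\cL_{\bp}$, the inequality $\by\cdot\bu\le\alpha\cdot\one_S\cdot\bu+t\cdot\tol(\bu)$, and that $\cL_{\bp}=\{\one_{H_{\bp,j}}\}_{j\in[\delta^{-2}]}\cup\{\bw\wedge\one_{H_{\bp,j}}\}_{j\in[\delta^{-2}]}$.

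For the weighted inequality I would first observe that $\bw\wedge\one_{H_{\bp,j}}$ is just $\bw$ restricted to $H_{\bp,j}$ (using $\w(i)\le 1$), so that $\by\cdot(\bw\wedge\one_{H_{\bp,j}})=\sum_{i\in H_{\bp,j}}\by_i\w(i)$ while $\one_S\cdot(\bw\wedge\one_{H_{\bp,j}})=\w(H_{\bp,j})$ since $H_{\bp,j}\subseteq\cK_{\bp}\subseteq S$. Feeding $\bu=\bw\wedge\one_{H_{\bp,j}}$ into the hypothesis yields $\sum_{i\in H_{\bp,j}}\by_i\w(i)\le\alpha\,\w(H_{\bp,j})+t\cdot\tol(\bw\wedge\one_{H_{\bp,j}})$ for every $j$. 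For $j\ge 2$ I then apply \Cref{lem:weighted_tol} to replace the tolerance by $2\delta^{-6}\abs{\cP}\ell^{-1}\cdot\w(H_{\bp,j-1})$; this index restriction is exactly why the sum in the statement begins at $j=2$. Summing over $j=2,\ldots,\delta^{-2}$ and using that the subclasses partition $\cK_{\bp}$, both $\sum_{j=2}^{\delta^{-2}}\w(H_{\bp,j})\le\w(\cK_{\bp})$ and $\sum_{j=2}^{\delta^{-2}}\w(H_{\bp,j-1})=\sum_{j=1}^{\delta^{-2}-1}\w(H_{\bp,j})\le\w(\cK_{\bp})$, which gives the claimed bound $\w(\cK_{\bp})\bigl(\alpha+2\delta^{-6}\abs{\cP}\,t/\ell\bigr)$.

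The cardinality inequality is the symmetric computation: taking $\bu=\one_{H_{\bp,j}}$ gives $\sum_{i\in H_{\bp,j}}\by_i\le\alpha\abs{H_{\bp,j}}+t\cdot\tol(\one_{H_{\bp,j}})$ (here $\one_S\cdot\one_{H_{\bp,j}}=\abs{H_{\bp,j}}$ again because $H_{\bp,j}\subseteq S$); then for $j\le\delta^{-2}-1$ one applies \Cref{lem:cardinality_tol} to bound $\tol(\one_{H_{\bp,j}})$ by $2\delta^{-6}\abs{\cP}\ell^{-1}\cdot\abs{H_{\bp,j+1}}$, which is why this sum is truncated at $\delta^{-2}-1$. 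Summing over $j=1,\ldots,\delta^{-2}-1$ and using $\sum_j\abs{H_{\bp,j}}\le\abs{\cK_{\bp}}$ and $\sum_j\abs{H_{\bp,j+1}}=\sum_{j=2}^{\delta^{-2}}\abs{H_{\bp,j}}\le\abs{\cK_{\bp}}$ finishes the argument.

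I expect no genuine obstacle here; the computation is essentially routine linear-algebra bookkeeping. The one thing to be careful about is tracking which subclass index appears in each error term, so that the ``shifted'' contributions $\w(H_{\bp,j-1})$ (resp.\ $\abs{H_{\bp,j+1}}$) still telescope into a \emph{single} copy of $\w(\cK_{\bp})$ (resp.\ $\abs{\cK_{\bp}}$) rather than accruing a spurious $\delta^{-2}$ factor — which is precisely the reason the two sums in the statement, together with \Cref{lem:weighted_tol} and \Cref{lem:cardinality_tol}, carry the index restrictions they do.
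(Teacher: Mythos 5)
Your proof is correct and follows essentially the same route as the paper: plug each $\bu\in\cL_{\bp}$ into the hypothesis, replace the tolerance via \Cref{lem:weighted_tol} (resp.\ \Cref{lem:cardinality_tol}) using the shifted index, and observe that the two resulting sums each telescope to a single copy of $\w(\cK_{\bp})$ (resp.\ $\abs{\cK_{\bp}}$). The only cosmetic difference is that the paper bounds the shifted sum by enlarging the index range to $j=1,\ldots,\delta^{-2}$ before factoring, whereas you bound the two pieces by $\w(\cK_{\bp})$ separately; the computation is the same.
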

\begin{proof}
	Consider the following equation:
	$$
	\begin{aligned}
		\sum_{j=2}^{\delta^{-2}} \sum_{\,i\in H_{\bp, j }\,} \by_i \cdot \w(i) \,&=\, \sum_{j=2}^{\delta^{-2}}  \left(\bw \wedge \one_{H_{\bp,j}}\right)  \cdot \by \\
		&\leq \, 
		\sum_{j=2}^{\delta^{-2}}   \left( \alpha \cdot \left(\bw \wedge \one_{H_{\bp,j}}\right)  \cdot \one_{S} + t \cdot \tol\left(\bw \wedge \one_{H_{\bp,j}}\right)   \right) \\
		&\leq \, 		\sum_{j=2}^{\delta^{-2}}   \left( \alpha \cdot \w\left(H_{\bp,j} \right)+ t \cdot \w(H_{\bp,j-1})\cdot2\cdot \frac{\abs{\cP}\cdot\delta^{-6}}{\ell} \right)  \\
		&\leq \, \sum_{j=1}^{\delta^{-2}}   \left( \alpha \cdot \w\left(H_{\bp,j} \right)+ t \cdot \w(H_{\bp,j})\cdot2\cdot  \frac{\abs{\cP}\cdot\delta^{-6}}{\ell} \right)\\
		&=\, \left( \alpha +  \frac{t}{\ell}\cdot2 \cdot  \abs{\cP}\cdot\delta^{-6}\right) \cdot \sum_{j=1}^{\delta^{-2}} \w(H_{\bp,j})\\
		&= \,\left( \alpha +  \frac{t}{\ell}\cdot2 \cdot  \abs{\cP}\cdot\delta^{-6}\right) \cdot  \w\left( \cK_{\bp}\right).
			\end{aligned}
$$
The first inequality holds as $\bw\wedge \one_{H_{\bp,j}}\in \cL_{\bp}$ and by the conditions of the lemma. The second inequality follows from \Cref{lem:weighted_tol}. Observe that after the third inequality the sum begins with $j=1$ and not $j=2$ as before.  

Similarly, 
$$
\begin{aligned}
\sum_{j=1}^{\delta^{-2}-1} \sum_{i\in H_{\bp ,j} } \by_i  \,&= \, \sum_{j=1}^{\delta^{-2}-1} \one_{H_{\bp,j} } \cdot \by\\
& \leq \, \sum_{j=1}^{\delta^{-2}-1} \left( \alpha \cdot \one_{H_{\bp,j}} \cdot \one_{S} + t\cdot \tol(\one_{H_{\bp,j}})\right)\\
&\leq \, \sum_{j=1}^{\delta^{-2}-1} \left( \alpha \cdot \abs{{H_{\bp,j}} } + t\cdot \abs{H_{\bp,j+1}}\cdot2\cdot \frac{\abs{\cP}\cdot\delta^{-6}}{\ell} \right)\\ 
&\leq \, \sum_{j=1}^{\delta^{-2}} \left(\alpha \cdot \abs{{H_{\bp,j}} } + t\cdot \abs{H_{\bp,j}}\cdot2\cdot \frac{\abs{\cP}\cdot\delta^{-6}}{\ell} \right)\\
& =\,\left( \alpha + \frac{t}{\ell}\cdot 2 \cdot \delta^{-6}\cdot \abs{\cP}\right) \cdot \sum_{j=1}^{\delta^{-2}} \abs{H_{\bp, j}}\\
&=\, \left( \alpha + \frac{t}{\ell}\cdot 2 \cdot \delta^{-6}\cdot \abs{\cP}\right) \cdot  \abs{\cK_{\bp}} .
\end{aligned}
$$
The first inequality holds as $\one_{H_{\bp,j}}\in \cL_{\bp}$. The second inequality  follows from \Cref{lem:cardinality_tol}. The third inequality changes the range of $j$. 
\end{proof}

To complete the construction of the linear structure, for every $j\in [\delta^{-2}]$ define the set of vectors associated with the group $G_j$ by $\cL_j = \{\one_{G_j}\}$ (recall the groups are defined in \Cref{sec:grouping}). The linear structure is $$\cL= \left(\bigcup_{\bp \in \cP} \cL_{\bp}  \right) \cup \left(\bigcup_{j=1}^{\delta^{-2}} \cL_j \right).$$
Since $\abs{\cL_{\bp}}\leq 2\cdot \delta^{-2}$ for every $\bp \in \cP$ and $\abs{\cL_j}\leq 1$  for every $j\in [\delta^{-2}]$, it holds that 
$$
\abs{\cL} \,\leq \,  2\cdot \abs{\cP }\cdot \delta^{-2}+\delta^{-2} \, \leq 2\cdot \delta^{-2}\cdot \exp\left( \delta^{-3}\right) +\delta^{-2} \, \leq \exp\left(\delta^{-4}\right),
$$
where the second inequality follows from \Cref{lem:num_types}.

The next lemma completes the proof of \Cref{lem:structure}.
\begin{lemma}
Let $\by \in \left( [0,1]^I \cap \mathbb{Q}\right)^{I}$, $t>0$ and $0\leq\alpha \leq 1$  such $\supp(\by)\subseteq S$ and $\by \cdot \bu \leq \alpha \cdot \one_{S} \cdot \bu  + t\cdot \tol(\bu)$ for every $\bu\in \cL$. Then there is a factional solution $\bx$ such that $\cover(\bx)=\by$ and $$\|\bx\|\leq \alpha \cdot \ell + 20 \cdot \delta\cdot \ell + t\cdot \exp(\delta^{-5})+\exp(\delta^{-5})$$
\end{lemma}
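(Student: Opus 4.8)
The plan is to split the given vector $\by$ as $\by=\by^{(0)}+\by^{(1)}+\by^{(2)}$, where $\by^{(0)}$ is an ``almost $\alpha$-scaled'' core that is handled by the weak structure of \Cref{lem:weak_structure}, and $\by^{(1)},\by^{(2)}$ are residuals on small and large items respectively, handled cheaply by \Cref{lem:fractional_first_fit} and \Cref{lem:item_per_bin}. The point of the decomposition is that the residuals will have total size $O(\delta^2\ell)+O(t\delta^{-3})+\exp(\delta^{-4})$, which is absorbed into the target error terms $20\delta\ell$ and $(t+1)\exp(\delta^{-5})$.

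\emph{Constructing the decomposition.} First I would isolate the large-item residual. Since $\one_{G_j}\in\cL$ and $\one_{G_j}\cdot\one_S=\abs{G_j}$ (as $G_j\subseteq L\subseteq S$), the hypothesis gives $\sum_{i\in G_j}\by_i\le\alpha\abs{G_j}+t\cdot\tol(\one_{G_j})$, and $\tol(\one_{G_j})=\max_{C\in\cC}\abs{C\cap G_j}\le 2\delta^{-1}$ by \Cref{lem:large_items_in_conf}. Hence in each group the $\by$-mass exceeds $\alpha\abs{G_j}$ by at most $2t\delta^{-1}$; I let $\by^{(2)}$ be a \emph{rational} vector with $\supp(\by^{(2)})\subseteq L$ that removes this excess in each group (rounding the threshold down to a rational value costs at most $1$ extra per group), so that $\sum_{i\in G_j}(\by_i-\by^{(2)}_i)\le\alpha\abs{G_j}$ for all $j$ and $\|\by^{(2)}\|\le 2t\delta^{-3}+\delta^{-2}$. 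Next I let $\by^{(1)}$ be $\by$ restricted to the ``boundary'' set of small items
\[
R \,=\, \Big(\textstyle\bigcup_{\bp\in\cD}\cK_{\bp}\Big)\,\cup\,\Big(\textstyle\bigcup_{\bp\in\cP\setminus\cD}\big(H_{\bp,1}\cup H_{\bp,\delta^{-2}}\big)\Big),
\]
and set $\by^{(0)}=\by-\by^{(1)}-\by^{(2)}$ (so $\by^{(0)}$ is rational, lies in $[0,1]^I$, and has $\supp(\by^{(0)})\subseteq S$).

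\emph{The core is $\alpha'$-scaled.} The key claim is that $\by^{(0)}$ is $\alpha'$-scaled in the sense of \Cref{def:compliance} with $\alpha'=\alpha+2\delta^{-6}\cdot\abs{\cP}\cdot t/\ell$. The group condition $\sum_{i\in G_j}\by^{(0)}_i\le\alpha\abs{G_j}\le\alpha'\abs{G_j}$ holds by construction of $\by^{(2)}$. For a non-degenerate type $\bp$, $\by^{(0)}$ restricted to $\cK_{\bp}$ is supported on $\bigcup_{j=2}^{\delta^{-2}-1}H_{\bp,j}$, so the two inequalities of \Cref{lem:subclass_sums} — which bound exactly $\sum_{j=2}^{\delta^{-2}}$ of the weighted sums and $\sum_{j=1}^{\delta^{-2}-1}$ of the cardinality sums, which is precisely why the boundary subclasses $H_{\bp,1}$ and $H_{\bp,\delta^{-2}}$ were dropped into $\by^{(1)}$ — yield $\sum_{i\in\cK_{\bp}}\by^{(0)}_i\le\alpha'\abs{\cK_{\bp}}$ and $\sum_{i\in\cK_{\bp}}\by^{(0)}_i\w(i)\le\alpha'\w(\cK_{\bp})$; for degenerate $\bp$ these sums vanish. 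Applying \Cref{lem:weak_structure} to $\by^{(0)}$ gives $\bx^{(0)}$ with $\cover(\bx^{(0)})=\by^{(0)}$ and, using $\alpha\le 1$, $\abs{\cP}\le\exp(\delta^{-3})$ (\Cref{lem:num_types}) and $\delta<0.1$,
\[
\|\bx^{(0)}\|\,\le\,\alpha'(1+10\delta)\ell+\exp(\delta^{-4})\,\le\,\alpha\ell+10\delta\ell+t\exp(\delta^{-4})+\exp(\delta^{-4}).
\]

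\emph{Bounding the residuals and assembling.} For $\by^{(2)}$, \Cref{lem:item_per_bin} produces $\bx^{(2)}$ with $\|\bx^{(2)}\|=\|\by^{(2)}\|\le 2t\delta^{-3}+\delta^{-2}$. For $\by^{(1)}$, which is supported on $S\setminus L$, I would bound $\tw(R)$: by non-degeneracy $\sum_{\bp\in\cD}\tw(\cK_{\bp})\le\abs{\cD}\cdot\delta^3\ell/\abs{\cP}\le\delta^3\ell$, and by \Cref{lem:adjusted_subclass_weight}, $\tw(H_{\bp,1})+\tw(H_{\bp,\delta^{-2}})\le 2\delta^2\tw(\cK_{\bp})+2\delta$, so summing over non-degenerate $\bp$ and using that the classes partition $S\setminus L$ together with $\tw(S)=\sum_{b}\tw(D_b)\le 2\ell$ (\Cref{obs:config_adjusted_weight}), $\abs{\cP}\le\exp(\delta^{-3})$ and $\ell>\exp(\delta^{-5})$, one gets $\tw(R)\le 6\delta^2\ell$; then \Cref{lem:fractional_first_fit} gives $\bx^{(1)}$ with $\cover(\bx^{(1)})=\by^{(1)}$ and $\|\bx^{(1)}\|\le 2\tw(R)+1\le 12\delta^2\ell+1$. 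Finally $\bx=\bx^{(0)}+\bx^{(1)}+\bx^{(2)}$ has $\cover(\bx)=\by$ by linearity of $\cover$, and adding the three bounds while absorbing $12\delta^2\ell\le 10\delta\ell$, $t\exp(\delta^{-4})+2t\delta^{-3}\le t\exp(\delta^{-5})$, and $\exp(\delta^{-4})+\delta^{-2}+1\le\exp(\delta^{-5})$ (all valid for $\delta<0.1$), yields $\|\bx\|\le\alpha\ell+20\delta\ell+t\exp(\delta^{-5})+\exp(\delta^{-5})$. The main technical obstacle is twofold: (a) verifying that $\by^{(0)}$ is genuinely $\alpha'$-scaled, which forces the choice of exactly which subclasses go into the residual — the telescoping in \Cref{lem:subclass_sums} misses $H_{\bp,1}$ on the weight side and $H_{\bp,\delta^{-2}}$ on the cardinality side; and (b) checking that the residual masses (the group tolerance $2t\delta^{-1}$, the degenerate classes, and the boundary subclasses) all fit inside the $20\delta\ell$ and $(t+1)\exp(\delta^{-5})$ budget, which crucially uses $\ell\ge\exp(\delta^{-5})$ to dominate the $\delta\abs{\cP}\approx\delta\exp(\delta^{-3})$ terms.
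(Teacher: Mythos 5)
Your proof is correct, and the high-level architecture matches the paper's: decompose $\by$ into an almost-$\alpha$-scaled core handled by \Cref{lem:weak_structure} plus residuals handled by \Cref{lem:fractional_first_fit} and \Cref{lem:item_per_bin}. The genuine difference is in how the group constraints are tamed. The paper declares a group $G_j$ \emph{unsaturated} when $\abs{G_j}\le\delta^4\ell$, moves the mass on unsaturated groups wholesale into an item-per-bin residual of size at most $\delta^2\ell$, and for saturated groups absorbs the tolerance term $t\cdot\tol(\one_{G_j})$ into the enlarged scaling factor $\alpha'$ using $\abs{G_j}>\delta^4\ell$. You instead subtract from each group exactly the excess $\by$-mass beyond $\alpha\abs{G_j}$ — bounded by $t\cdot\tol(\one_{G_j})\le 2t\delta^{-1}$ per group, hence $2t\delta^{-3}+\delta^{-2}$ in total after rational rounding — which makes the group condition hold with $\alpha$ itself rather than $\alpha'$. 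Both arguments fit the budget, but they route the group error term differently: the paper's unsaturated-group residual scales as $\delta^2\ell$ and is charged to $20\delta\ell$, whereas your excess residual scales as $t\delta^{-3}$ and is charged to $t\cdot\exp(\delta^{-5})$. Your version avoids the saturation dichotomy entirely, which is arguably cleaner; it also merges the paper's two fractional-first-fit residuals (degenerate classes $\bd$ and boundary subclasses $\bq$) into a single vector $\by^{(1)}$, a cosmetic simplification. One small point worth noting — shared with the paper's own proof, so not a defect of your argument — is that $\alpha'=\alpha+2\delta^{-6}\abs{\cP}t/\ell$ may exceed $1$ or be irrational, whereas \Cref{def:compliance} and the proof of \Cref{lem:weak_structure} are stated for rational $\alpha\in[0,1]$; in practice the weak-structure argument tolerates this, but a fully pedantic write-up would either cap $\alpha'$ at $1$ or note that the proof extends.
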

\begin{proof}
To prove the lemma we generate several fractional solutions whose cover is $\by\wedge \one_{Z}$ for several disjoint sets $Z\subseteq S$. The sum of those solutions forms the final solution $\bx$. 

Define $D= \bigcup_{\bp \in \cD} \cK_{\bp}$ to be the set of items in classes of degenerate types. It holds that $\supp\left(\by \wedge \one_{D}\right)\subseteq S\setminus L$, therefore by \Cref{lem:fractional_first_fit} there is a fractional solution $\bd$ such that $\cover(\bd)=\by \wedge \one_{D}$ and $\|{\bd}\| \leq 2\cdot  \sum_{i\in I} \left(\by \wedge \one_D\right)_i\cdot \tw(i)+1$. It holds that 
$$
 \sum_{i\in I} \left(\by \wedge \one_D\right)_i\cdot \tw(i)\,= \, \sum_{i\in D} \by_i\cdot \tw(i)\,\leq  \,\sum_{i\in D}  \tw(i)\,
=\,\sum_{\bp\in \cD}\tw(\cK_{\bp}) \, \leq \, \abs{\cD}\cdot \frac{ \delta^3}{\abs{\cP}}\cdot \ell \,\leq\, \delta^{3}\cdot \ell,
$$
where the second inequality holds as $\tw(\cK_{\bp})\leq \frac{\delta^{3}}{\abs{\cP}} \cdot \ell$ when $\bp$ is degenerate, and the last inequality holds since $\cD\subseteq \cP$. Therefore 
\begin{equation}
	\label{eq:bd_size}
	\|{\bd}\| \,\leq\,2\cdot  \sum_{i\in I} \left(\by \wedge \one_D\right)_i\cdot \tw(i)+1\,\leq \, 2\cdot \delta^{3}\cdot \ell +1.
\end{equation}

Define $Q= \bigcup_{\bp\in \cP\setminus \cD}\left( H_{\bp,1}\cup H_{\bp,\delta^{-2}}\right)$, the set of items in the first and last sub-classes of non-degenerate types. It holds that $\supp(\by\cap \one_{Q})\subseteq S\setminus L $, therefore by \Cref{lem:fractional_first_fit} there is a fractional solution $\bq$ such that $\cover(\bq)=\by\wedge \one_{Q}$ and  $\|\bq\| \leq 2\cdot \sum_{i\in Q} \left(\by\wedge \one_{Q}\right)_i \cdot \tw(i)+1$. By \Cref{lem:adjusted_subclass_weight} we have,
$$
	\begin{aligned}
 \sum_{i\in Q} \left(\by\wedge \one_{Q}\right)_i \cdot \tw(i) \,&\leq\, \sum_{\bp\in \cP\setminus \cD} \left(\tw(H_{\bp,1} ) + \tw(H_{\bp,\delta^{-2}})\right) \\
 &\leq \, \sum_{\bp\in \cP\setminus \cD} \left( 2\cdot \delta^{2} \cdot \tw(\cK_{\bp})+ 2\cdot\delta \right) \\
 &\leq\, 2\cdot \delta^{2}\cdot \tw(S)+2\cdot \delta \cdot \abs{\cP}\\
 &\leq\, 4\cdot \delta^{2}\cdot \ell +\abs{\cP}.
 \end{aligned}
$$
The third inequality holds since the classes $\cK_{\bp}$ are disjoint substs of $S$, and the last inequality holds as $S$ can be packed into $\ell$ bins and due to  \Cref{obs:config_adjusted_weight}.  Therefore,
\begin{equation}
	\label{eq:bq_size}
	\|\bq\|\,\leq \, 2\cdot \sum_{i\in Q} \left(\by\wedge \one_{Q}\right)_i \cdot \tw(i)+1\,\leq\,8\cdot \delta^{2}\cdot \ell +2\cdot \abs{\cP}+1.
\end{equation}

Next, for $j\in [\delta^{-2}]$ we say that the group $G_j$ is {\em unsaturated}  if $\abs{G_j}\leq \delta^{4}\cdot \ell$. Let $\cU=\{j\in [\delta^{-2}]~|~\textnormal{$G_j$ is unsaturated}\}$ be the set of indices of unsaturated groups, and let $U=\bigcup_{j\in \cU} G_j$ be the set of items in unsaturated groups.  By \Cref{lem:item_per_bin} there is a fractional solution $\bmu$ such that $\cover(\bmu) =\by \wedge \one_{U}$ and
\begin{equation}
	\label{eq:bmu_size}
\|\bmu\| \,=\,\sum_{i\in I} \left(\by \wedge \one_{U} \right)_i \, \leq \, \abs{U}\, \leq \, \abs{\cU}\cdot \delta^{4}\cdot \ell \,\leq \,\delta^{2}\cdot \ell.
\end{equation}
The second inequality holds since there are at most $\delta^{4}\cdot \ell$ items in each unsaturated group $G_j$ where $j\in \cU$, and the last inequality holds since $\abs{\cU}\leq \delta^{-2}$.

Finally, define $\by^* = \by - \by \wedge\one_{D} -\by\wedge \one_{Q} - \by \wedge \one_{U}$.  We will show that  $\by^*$ is $\left( \alpha + 2\cdot \delta^{-6}\cdot \abs{\cP} \cdot \frac{t}{\ell}\ \right)$-scaled (\Cref{def:compliance}), and then apply \Cref{lem:weak_structure}.
\begin{itemize}
	\item
It holds that $\supp(\by^*)\subseteq \supp(\by)\subseteq S$. 
\item For every $j\in [\delta^{-2}]$, if $j$ is unsaturated ($j\in \cU$) then $\sum_{i\in G_j} \by^*_i = 0$. Otherwise $j\notin \cU$ and therefore,
$$
\begin{aligned}
\sum_{i\in G_j } \by^*_i \,&=\,\sum_{i\in G_j} \by_i \\
& =\, \by \cdot \one_{G_j}\\&
 \leq \, \alpha \cdot \abs{G_j} +t\cdot \tol(\one_{G_j}) \\
&\leq \, \abs{G_j}\cdot \left( \alpha +t\cdot \frac{2\cdot \delta^{-1}}{\abs{G_j}}\right) \\
& \leq \,
\abs{G_j}\cdot \left( \alpha + t\cdot \frac{2\cdot \delta^{-1}}{\delta^{4}\cdot \ell} \right)\\
& \leq \,
\abs{G_j}\cdot \left( \alpha + 2\cdot \delta^{-6}\cdot \abs{\cP} \cdot \frac{t}{\ell} \right).\\
\end{aligned}
$$
The first inequality holds since $\one_{G_j} \in \cL$, the second  inequality holds since $\tol(\one_{G_j})\leq 2\cdot\delta^{-1}$ as a consequence of \Cref{lem:large_items_in_conf}, the third inequality follows from $\abs{G_j}>\delta^{4}\cdot \ell$.
\item 
For every $\bp \in \cP$, in case $\bp$ is degenerate ($\bp\in\cD$)  then $\by^*_i =0$  for every $i\in \cK_{\bp}\subseteq D$. Thus, $\sum_{i\in \cK_{\bp}} \by^*_i = \sum_{i \in \cK_{\bp}} \by^*_i \cdot \w(i) = 0$. In case $\bp$ is non degenerate, it holds that 
$$
\sum_{i\in \cK_{\bp}} \by^*_i \,=\, \sum_{j=2}^{\delta^{-2}-1} \sum_{i\in H_{\bp, j}}\by_i \,\leq \,\abs{ \cK_{\bp}}\cdot \left( \alpha + 2\cdot \delta^{-6}\cdot \abs{\cP} \cdot \frac{t}{\ell}\ \right),
$$
where the equality holds since $H_{\bp,1},H_{\bp,\delta^{-2}}\subseteq Q$, and the inequality follows from \Cref{lem:subclass_sums} as $\cL_{\bp}\subseteq \cL$. By the same arguments, 
$$
\sum_{i\in \cK_{\bp}} \by^*_i \cdot \w(i)\,=\, \sum_{j=2}^{\delta^{-2}-1} \sum_{i\in H_{\bp, j}}\by_i \cdot \w(i)
\,\leq \,\w\left( \cK_{\bp}\right)\cdot \left( \alpha + 2\cdot \delta^{-6}\cdot \abs{\cP} \cdot\frac{t}{\ell} \right).
$$
\end{itemize}
Overall, we showed that $\by^*$ is $\left( \alpha + 2\cdot \delta^{-6}\cdot \abs{\cP} \cdot \frac{t}{\ell}\ \right)$-scaled. Hence, by \Cref{lem:weak_structure} there is a fractional solution $\bx^*$ such that $\cover(\bx^*)=\by^*$ and 
\begin{equation}
	\label{eq:xstar_size}
\|\bx^*\| \leq \left( \alpha + 2\cdot \delta^{-6}\cdot \abs{\cP} \cdot \frac{t}{\ell}\ \right)\cdot (1+10\delta) \cdot \ell+\exp(\delta^{-4}).
\end{equation}

Define, $\bx = \bx^*+\bd+\bq+\bmu$. Then,
$$
\cover(\bx)\,=\,\cover(\bx^*)+\cover(\bd)+\cover(\bq)+\cover(\bmu )\,=\, \by^* +\by\wedge \one_{D}+\by\wedge \one_{Q}+\by\wedge \one_{U}\,=\,\by, 
$$
where the last equality follows from the definition of $\by^*$. Finally, 
$$
\begin{aligned}
\|\bx\| \,&=\,\|\bx^*\| +\|\bd\|+\|\bq\|+\|\bmu\|\\
&\leq  \left( \alpha + 2\cdot \delta^{-6}\cdot \abs{\cP} \cdot \frac{t}{\ell}\ \right)\cdot (1+10\delta)\cdot \ell +\exp(\delta^{-4}) + 2 \cdot \delta^3\cdot \ell +1 +8\cdot \delta^2\cdot \ell +2 \cdot \abs{\cP} +1+ \delta^{2}\cdot \ell \\
&\leq \alpha \cdot \ell + 20 \cdot \delta \cdot \ell + t\cdot 4\cdot \delta^{-6}\cdot \abs{\cP}+\exp(\delta^{-4})+2\abs{\cP}+2\\
&\leq \alpha \cdot \ell + 20 \cdot \delta \cdot \ell + t\cdot \delta^{-7}\cdot \exp\left( \delta^{-3}\right)+\exp(\delta^{-4})+2 \cdot \exp\left( \delta^{-3}\right)+2\\
&\leq \alpha \cdot \ell + 20\delta + t\cdot \exp(\delta^{-2}) \cdot \exp(\delta^{-3})+\exp(\delta^{-5})\\
&= \alpha \cdot \ell + 20\delta + t\cdot \exp(\delta^{-5})+\exp(\delta^{-5}),
\end{aligned}
$$
where the first inequality follows from \eqref{eq:xstar_size}, \eqref{eq:bd_size}, \eqref{eq:bq_size}, \eqref{eq:bmu_size} and the third inequality follows from \Cref{lem:num_types}. 
\end{proof}

 \subsection{Deferred Proofs of the Simple Constructions}
 
 \label{sec:defered_const}

We are left to prove \Cref{lem:fractional_first_fit} and \Cref{lem:item_per_bin} which were stated without a proof.

\fractionalfirstfit*
\begin{proof}	
	Let  $N\in \mathbb{N}_{>0}$ be a number such that 
	$N\cdot \by_i\in \mathbb{N}_{\geq 0}$ for every $i\in I$. 
	Consider the following procedure which maintains a collection of configurations $A_1,\ldots, A_s$, initialized with $s=0$.  The procedure iterates over the items in $i\in\supp(\by)$ and attempts to add $i$ to $N\cdot \by_i$ configurations among $A_1,\ldots, A_s$. If $i$ cannot be added to sufficiently many configurations then $s$ is increased and  a minimal number of extra configurations are added (initialized by $\emptyset$) so that $i$ can be added to $N\cdot \by_i$ configurations. More concretely, consider the following procedure. 
	
	\begin{enumerate}
		\item Initialize an empty collection $\cR \leftarrow ()$ of configurations. 
		\item For every $i\in\supp(\by)$:
		\begin{enumerate}
			\item If there are configurations $A_1,\ldots, A_{N \cdot \by_i}$ in $\cR$ such that $A_b \cup \{i\} \in \cC$ for all $b \in \left[ N \cdot \by_i \right]$, then remove $A_1,\ldots, A_{N \cdot \by_i}$ from $\cR$ and add $A_1 \cup \{i\},\ldots, A_{N \cdot \by_i} \cup \{i\}$ to $\cR$.  
			\item Otherwise, let $A_1,\ldots, A_{c}$ be all configurations in $\cR$ such that $A_b \cup \{i\} \in \cC$ for all $b \in \left[ c \right]$. Then, remove $A_1,\ldots, A_{c}$ from $\cR$, add $A_1 \cup \{i\},\ldots, A_{c} \cup \{i\}$ to $\cR$, and add extra $N 
		\cdot \by_i -c$ configurations $A'_1,\ldots, A'_{N 
			\cdot \by_i -c}$ to $\cR$ such that $A'_b = \{i\}$ for all $b \in \left[  N 
		\cdot \by_i -c\right]$.  
		\end{enumerate}
	\end{enumerate}
	
	Let $A_1,\ldots, A_s$ be the configurations in $\cR$ at the end of the procedure. Also, let $i$ be the  item in the last iteration in which $s$, the number of configurations in $\cR$, has been increased. 
	Let $B=\{b\in [s]~|~i\in A_b\}$ be the set of indices of  configurations which contain $i$. Since the procedure adds $i$ to $N\cdot \by_i$ configurations it follows that $\abs{B} = N\cdot \by_i\leq N$.  For every $b\in [s]\setminus B$ it holds that $A_b\cup \{i\} \notin \cC$ (otherwise $i$ would have been added to $A_b$ and  the increase in $s$ would have been smaller). In particular, by \Cref{obs:set_adjusted_weight}, we have $\tw(A_b)\geq 1-\delta \geq \frac{1}{2}$  for every~$b\in [s]\setminus B$, as $i\in S\setminus L$. Thus, \begin{equation}
		\label{eq:frac_ff_first}
		\sum_{b\in [s]} \tw(A_b)\,\geq\, \sum_{b\in [s]\setminus B} \tw(A_b) \,\geq\, \sum_{b\in [s]\setminus B}  \frac{1}{2} \,=\,\frac{1}{2}\cdot\left(s-|B|\right)\,\geq \frac{1}{2}\cdot (s-N).
	\end{equation}
	On the other hand, since each item $i\in I$ appears in exactly $N\cdot \by_i$ configurations among $A_1,\ldots, A_s$, we also have
	\begin{equation}
		\label{eq:frac_ff_second}
		\sum_{b\in [s]} \tw(A_b)\, =\, \sum_{i\in I} N\cdot \by_i\cdot \tw(i)\,=\, N\cdot \sum_{i\in I}  \tw(i) \cdot \by_i.
	\end{equation}
	By \eqref{eq:frac_ff_first} and \eqref{eq:frac_ff_second} we have \begin{equation}
		\label{eq:frac_ff_bound_on_s}
		\frac{s}{N} \leq 2\cdot \sum_{i\in I}\tw(i)\cdot \by_i +1.\end{equation}
	
	Define a fractional solution $\bx\in \mathbb{R}_{\geq 0}^{\cC}$ by $\bx_{C} = \frac{1}{N}\cdot \abs{\{b\in[s]~|~A_b=C\} }$ for every $C\in\cC$. That is, $\bx_C$ is the number of times $C$ appears in $A_1,\ldots, A_s$ divided by $N$.  It trivially holds that 
	$$
	\|\bx\|  \,=\, \sum_{C\in \cC}\bx_C \,=\, \sum_{C\in \cC} \frac{ \abs{\{b\in[s]~|~A_b=C\} }}{N} \,=\, \frac{s}{N} \,\leq \, 2\cdot \sum_{i\in I} \tw(i)\cdot \by_i+1,$$
	where the last inequality follows from \eqref{eq:frac_ff_bound_on_s}.
	Also, for every $i\in I$ it holds that 
	$$
	\cover_i(\bx) \,=\, \sum_{C\in \cC(i)} \bx_C = \, \sum_{C\in \cC(i)}  \frac{ \abs{\{b\in[s]~|~A_b=C\} }}{N}\,=\, \frac{\abs{\{b\in[s]~|~i\in A_b\}  }}{N} \,=\,  \frac{N\cdot \by_i}{N} \,=\, \by_i,
	$$ 
	where the fourth equality holds as $N\cdot \by_i$ configurations among $A_1,\ldots, A_s$ contain $i$.
	
\end{proof}

\itemperbin*
\begin{proof}
	Define $\bx\in \mathbb{R}^{\cC}_{\geq 0}$ by $\bx_{\{i\}}=\by_i$ for every $i\in I$  and $\bx_C= 0$ for every $C\in \cC$ such that $\abs{C}\neq 1$. By the construction, 
	for every $i\in I$ it holds that
	$	\cover_i(\bx) =\by_i 	$
	and $\|\bx\| = \sum_{i \in I} \bx_{\{i\}} = \|\by\|$.
\end{proof}

\section{Reduction to $\eps$-Simple Instances}
\label{sec:reduction}
In this section we give the proof of \Cref{lem:reduction}. The proof bears some similarity to the proof of Lemma 2.3 in~\cite{CKS23}. We consider two cases, depending on the number of bins in the given CMK instance. First, for the case where the number of bins is relatively small, we prove the following.

\begin{lemma}
	\label{thm:fewBins}
	There is an algorithm $\mathcal{G}$ that given a \textnormal{CMK} instance $\cI= (I,\w,\v,m,k)$ and an error parameter $\eps>0$, returns in time $\left(\frac{m}{\eps}\right)^{ O \left( m^2 \cdot {\eps}^{-2}\right) }  \cdot |\cI|^{O(1)}$ a $(1-\eps)$-approximation for $\cI$. 
\end{lemma}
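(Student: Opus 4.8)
The plan is to combine value rounding, exhaustive guessing of a constant‑size ``placement profile'' for the high‑value items, and an LP relaxation for the low‑value ones; throughout, $m$ and $\eps^{-1}$ are treated as parameters, so any running time of the form $(m/\eps)^{\mathrm{poly}(m/\eps)}\cdot|\cI|^{O(1)}$ is acceptable. Fix an optimal solution $(C^*_1,\dots,C^*_m)$ of $\cI=(I,\w,\v,m,k)$ and write $S^*=\bigcup_{b}C^*_b$; I may assume $\eps<\frac12$ and $\OPT(\cI)>0$, the case $\OPT(\cI)=0$ being trivial. First I would guess a value $V$ with $\OPT(\cI)\le V\le(1+\eps)\OPT(\cI)$: since weights lie in $[0,1]$ and $k\ge1$ the single most valuable item fits alone in a bin, so $\OPT(\cI)\in[v_{\max},\,mk\cdot v_{\max}]$ with $v_{\max}=\max_i\v(i)$, and it suffices to try the $O(\eps^{-1}\log(mk))$ powers of $1+\eps$ times $v_{\max}$ in this range. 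Fixing a valid guess, I set $T=\eps V/m$, call an item \emph{large} if $\v(i)>T$ and \emph{small} otherwise, and round each large value down to the nearest power of $1+\eps$ times $T$. Large values lie in $(T,V]$, so there are only $L=O(\eps^{-1}\log(m/\eps))$ distinct rounded large‑value classes, and a $(1-\eps)$‑approximate solution of the value‑rounded instance re‑evaluated with the true values is a $(1-O(\eps))$‑approximation of $\cI$; I therefore work with the rounded instance. Note also $|S^*\cap\text{Large}|\le\OPT(\cI)/T\le V/T=m/\eps$.

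Next I would enumerate all \emph{profiles} $(n_{b,c})_{b\in[m],\,c\in[L]}$ of nonnegative integers with $\sum_{b,c}n_{b,c}\le m/\eps$ and $\sum_b n_{b,c}$ at most the number of class‑$c$ items; there are at most $(m/\eps+1)^{mL}\le(m/\eps)^{O(m^2/\eps^2)}$ of them. A profile records how many large items of each rounded class the optimum puts in each bin. A standard exchange argument --- repeatedly replacing a used large item by an unused lighter one of the same rounded class, which never raises any bin's weight and leaves all counts and all small items intact --- produces an optimal solution that, for each class $c$, uses exactly the $N_c:=\sum_b n_{b,c}$ lightest items of that class, so the profile pins down the exact multiset of large items used. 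The only remaining freedom is how to split these lightest items among the bins consistently with the profile, and I would enumerate all $\prod_c\binom{N_c}{n_{1,c},\dots,n_{m,c}}\le m^{\sum_c N_c}\le m^{m/\eps}$ such splittings. Each (profile, splitting) pair gives a complete placement of the large items; I discard it if some bin then exceeds weight $1$ or cardinality $k$, and otherwise record each bin's residual budgets $W_b=1-\w(\text{large items in }b)\ge0$ and $K_b=k-\sum_c n_{b,c}\ge0$.

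For each surviving placement I would add small items by solving the linear program $\max\sum_{i\in\text{Small},\,b}\v(i)x_{i,b}$ over $x\ge0$ with $\sum_b x_{i,b}\le1$ for every small item $i$ and $\sum_i\w(i)x_{i,b}\le W_b$, $\sum_i x_{i,b}\le K_b$ for every bin $b$. Its feasible region is a partition‑matroid polytope (the partition classes being the copies of one small item) intersected with $2m$ additional inequalities, so by the result of Grandoni et al.~\cite{grandoni2010approximation} used already in \Cref{claim:grandoni} a basic optimum has total fractional mass at most $2m$; deleting the fractional assignments preserves feasibility and loses value at most $2mT=2\eps V\le 3\eps\,\OPT(\cI)$, since every small item has value at most $T$. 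When the guesses match the exchanged optimum, our large placement coincides item‑for‑item with that optimum's --- hence $W_b,K_b$ are exactly what it leaves vacant and its small items form a feasible LP solution --- while our large items' rounded value equals the exchanged optimum's rounded large value, which is at least $\v(S^*\cap\text{Large})/(1+\eps)$. So that run outputs a solution of value at least $\v(S^*\cap\text{Large})/(1+\eps)+\v(S^*\cap\text{Small})-3\eps\,\OPT(\cI)\ge(1-O(\eps))\OPT(\cI)$; returning the best output over all guesses and rescaling $\eps$ by a constant completes the proof, with running time $O(\eps^{-1}\log(mk))\cdot(m/\eps)^{O(m^2/\eps^2)}\cdot m^{m/\eps}\cdot|\cI|^{O(1)}=(m/\eps)^{O(m^2/\eps^2)}\cdot|\cI|^{O(1)}$ since each LP has polynomial size.

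The step I expect to be the main obstacle is arranging the enumeration so its size depends only on $m$ and $\eps$ and never on $|\cI|$: one cannot afford to guess which subset of items the optimum uses, which is precisely why the value rounding collapses the high‑value items into $O(\eps^{-1}\log(m/\eps))$ classes, why the ``lightest items per class'' exchange argument lets a bounded‑size profile determine the large items, and why the low‑value items must be handled by LP rounding rather than by further enumeration. A secondary technical point needing care is that, for the correct guesses, the exchanged optimum's small‑item assignment remains feasible for the residual LP instance --- which holds because matching its bin‑splitting makes our large placement identical to that optimum's, so the residual budgets are exactly the ones it leaves vacant.
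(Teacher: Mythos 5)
Your proof is correct and follows the paper's strategy at a high level: split items at a value threshold of order $\eps\cdot\OPT/m$, collect the high-value items into $O(\eps^{-1}\log(m/\eps))$ geometric value classes, exhaustively enumerate a per-bin class-count profile, pack low-value items by an LP over a partition-matroid polytope with $2m$ side constraints, and discard the $O(m)$ fractional mass via the Grandoni et al.\ bound. Two details differ. You obtain the threshold by guessing $\OPT$ over $(1+\eps)$-powers of $\max_i\v(i)$, while the paper derives it from the deterministic $\tfrac14$-approximation \textsf{Local-Search} (\Cref{lem:constApprox}); both are fine and your extra $O(\eps^{-1}\log(mk))$ guesses are absorbed by $|\cI|^{O(1)}$. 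The more substantive difference is how the chosen high-value items are distributed among bins. You take the $N_c=\sum_b n_{b,c}$ lightest items of class $c$ (justified by a lightest-item exchange) and then enumerate all $\leq m^{m/\eps}$ ways to split them across bins, a factor dominated by the $(m/\eps)^{O(m^2\eps^{-2})}$ profile count. The paper's \Cref{alg:EPTAS}, by contrast, gives \emph{each} bin $b$ the prefix $\textsf{first}_{\cI}(N_{r,b},r)$ of the $N_{r,b}$ lightest items of $G_r$ in \eqref{eq:first}--\eqref{eq:U}; these prefixes are nested across bins, so $\bigl(\bigcup_b U^{N}_b\bigr)\cap G_r$ contains only $\max_b N_{r,b}$ items rather than $\sum_b N_{r,b}$, and the inequality $\v(U^{N^*})\geq\sum_b\sum_r\v(\textsf{first}_{\cI}(N^*_{r,b},r))$ used in \eqref{eq:sB2} fails whenever the optimal bins use disjoint items from a common class (take $m=2$, $k=2$ and a single class with four zero-weight equal-value items: every feasible $N$ yields $\bigl|\bigcup_b U^N_b\bigr|\leq 2$). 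Your explicit splitting enumeration sidesteps this and gives the tighter argument; the claimed running time is unaffected.
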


The proof of \Cref{thm:fewBins} is given in \Cref{sec:fewBins}. For the second case, where the number of bins is large, we use \Cref{alg:randomized_rounding} presented earlier in the paper whose properties are formally given in \Cref{thm:simple}. Combined, we can give an EPTAS for CMK. %

\noindent{\bf Proof of \Cref{lem:reduction}:} Recall Algorithms $\cG$ and $\cB$ in \Cref{thm:fewBins} and \Cref{thm:simple}, respectively. Define algorithm $\cD$ that, given a CMK instance  $\cI= (I,\w,\v,m,k)$ and an error parameter $\eps \in (0,0.1)$, proceeds as follows.
\begin{enumerate}
	\item [(i)] If $m\leq  \exp\left(\exp\left({\eps}^{-160}\right)\right)+\left(\eps^{-1}\right)^3$, return the solution computed  by Algorithm $\mathcal{G}$ on $\cI$ and $\eps$. %
	\item [(ii)] If $m>   \exp\left(\exp\left({\eps}^{-160}\right)\right)+\left(\eps^{-1}\right)^3$, return the solution computed by Algorithm $\mathcal{B}$ on $\cI$ and $\eps$.%
\end{enumerate} Observe that by \Cref{thm:fewBins} and \Cref{thm:simple}, the running time of $\cD$ is bounded by $f(\frac{1}{\eps}) \cdot |\cI|^{O(1)}$ for some computable function $f:\mathbb{N} \rightarrow \mathbb{N}$ (a concrete bound on $f$ can be obtained by the statement of \Cref{thm:fewBins} and \Cref{thm:simple}, and the upper bound on $m$ described in case $(i)$ of  algorithm $\cD$). Moreover, by \Cref{thm:fewBins} and \Cref{thm:simple}, $\cD$ returns a solution for $\cI$ of value at least $(1-\eps) \cdot \OPT(\cI)$ with probability at least $1/2$. Thus, $\cD$ is a randomized EPTAS for CMK. \qed

	\subsection{An EPTAS for a Constant Number of Bins (Proof of \Cref{thm:fewBins})}
\label{sec:fewBins}
In this section we give an EPTAS for CMK under the restriction that $m$, the number of bins, is bounded by some function of the given error parameter $\eps$, which gives the proof of \Cref{thm:fewBins}. Our algorithm relies on enumerating the number of items of relatively high value that are assigned to each of the bins; then, extra low value items are added using a {\em linear program} (LP). To properly define the high value (or, {\em valuable}) items, we first need, as an auxiliary algorithm, a constant factor approximation algorithm for CMK; this can be easily obtained combining an FPTAS for {\em knapsack with a cardinality constraint} \cite{li2022faster,caprara2000approximation} along with the local search algorithm of \cite{fleischer2011tight}, where these algorithms are applied with the error parameters, e.g., $\eps' = \frac{1}{4}$ and $\eps'' = \frac{5}{28}$, respectively. This yields the next result. 
	\begin{lemma}
	\label{lem:constApprox}
	There is a (deterministic) polynomial-time algorithm \textnormal{\textsf{Local-Search}} that is a $\frac{1}{4}$-approximation for \textnormal{CMK}.  
\end{lemma}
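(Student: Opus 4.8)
The plan is to assemble \textsf{Local-Search} from two known components, exactly as hinted in the statement: an FPTAS for the single-bin subproblem of CMK, and the combinatorial local-search algorithm of Fleischer et al.~\cite{fleischer2011tight} for the general separable assignment problem (SAP), of which CMK is a special case. The single-bin problem here is: given nonnegative item profits $p_i$, find a configuration $C \in \cC$ (that is, with $\w(C) \le 1$ and $|C| \le k$) maximizing $\sum_{i \in C} p_i$ -- this is precisely \emph{knapsack with a cardinality constraint}, which admits an FPTAS~\cite{caprara2000approximation,li2022faster}. The first step is to run this FPTAS with error parameter $\eps' = \tfrac14$, obtaining in polynomial time a $\beta$-approximate oracle for the single-bin problem with $\beta = 1 - \eps' = \tfrac34$.

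The second step is to feed this oracle into the local-search algorithm of~\cite{fleischer2011tight}: given a $\beta$-approximate single-bin oracle and an error parameter $\eps'' > 0$, that algorithm runs in polynomial time and returns a solution for SAP of value at least $\left(\frac{\beta}{1+\beta} - \eps''\right)\cdot \OPT$. Taking $\eps'' = \tfrac{5}{28}$ and $\beta = \tfrac34$ yields a solution of value at least
$$
\left(\frac{3/4}{1+3/4} - \frac{5}{28}\right)\cdot \OPT(\cI) = \left(\frac{3}{7} - \frac{5}{28}\right)\cdot \OPT(\cI) = \frac14 \cdot \OPT(\cI).
$$
Since both subroutines are deterministic and run in polynomial time for these fixed constants, their composition is the desired deterministic polynomial-time $\tfrac14$-approximation for CMK.

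The only point requiring care -- and the one I would actually spell out -- is reconciling the interface of~\cite{fleischer2011tight} with our setting: namely, confirming that (i) the single-bin oracle their local search needs is exactly a maximizer of a nonnegative linear objective over configurations, so that the cardinality-constrained knapsack FPTAS suffices, and (ii) their guarantee degrades to an \emph{additive}-error form $\frac{\beta}{1+\beta} - \eps''$ under a $\beta$-approximate oracle (for $\beta \to 1$ this recovers the familiar $\tfrac12 - \eps$ bound for local search on SAP). Given that the arithmetic $\frac{3/4}{7/4} - \frac{5}{28} = \frac14$ closes exactly, I expect this to be routine verification rather than a genuine obstacle.
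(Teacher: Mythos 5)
Your proposal matches the paper's own argument, which is given only as a one-sentence remark before the lemma: combine a cardinality-constrained knapsack FPTAS (with $\eps' = \tfrac14$, giving a $\beta=\tfrac34$ single-bin oracle) with the $\left(\tfrac{\beta}{1+\beta}-\eps''\right)$-approximate local search of~\cite{fleischer2011tight} (with $\eps''=\tfrac{5}{28}$). Your arithmetic $\tfrac{3/4}{7/4}-\tfrac{5}{28}=\tfrac14$ confirms the stated constants, so the proof is correct and is essentially the paper's intended argument spelled out.
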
 We now define the valuable items w.r.t. the constant factor approximation given by \textsf{Local-Search}. Specifically, given a CMK instance $\cI= (I,\w,\v,m,k)$ and an error parameter $\eps>0$, let $$d(\cI,\eps) = \frac{\eps \cdot \v\left(\textsf{Local-Search}(\cI) \right)}{m}.$$ We say that an item $i \in I$ is a {\em valuable} item in $\cI$ if $\v(i) \geq d(\cI,\eps)$, and let $H(\cI,\eps) = \{i \in I~|~\v(i) \geq d(\cI,\eps)\}$ be the set of valuable items in $\cI$; when clear from the context, we simply use $H = H(\cI,\eps)$. By \Cref{lem:constApprox} it holds that $\textsf{Local-Search}(\cI)$ is a solution of $\cI$; thus, a valuable item $i \in I$ satisfies $\v(i) \geq \frac{\eps \cdot \OPT(\cI)}{m}$, implying that any solution for $\cI$ may contain only a small number of valuable items, motivating our enumeration-based algorithm. As a first step, we partition the valuable items into {\em value groups} as follows. Let $t_{\eps,m} = \left\{1,\ldots,\ceil{ \log_{1-\eps} \left( \frac{4 \cdot \eps}{m}\right)} \right\}$ for simplicity; for all $ r \in t_{\eps,m}$ define the $r$-th value group of $\cI$ and $\eps$ as \begin{equation}
	\label{eq:VG}
	G_r(\cI,\eps) = \left\{i \in H~\bigg|~ \frac{\v(i)}{4 \cdot \v(\textsf{Local-Search}(\cI))} \in \left((1-\eps)^r,(1-\eps)^{r-1}\right)\right\}. 
\end{equation}

Our algorithm enumerates over all vectors $U = (C_1,\ldots,C_m) \in H^m$, containing only subsets of valuable items, and tries to enrich each vector with non-valuable items via an LP. Our LP is defined as follows. For a vector $T$ and an entry $i$ in the vector, we use $T_i$ for the $i$-th entry of the vector $T$. Let $\cI= (I,\w,\v,m,k)$ be a CMK instance, let $\eps>0$ be an error parameter, and let $U \in H^m$. Define
\begin{equation}
	\label{eq:LP}
	\begin{aligned}
		&\textnormal{\textsf{LP}}(\cI,\eps,U):~~~~~~~~ \textnormal{ max }~~&& \sum_{b \in [m]~} \sum_{i \in I \setminus H} \bx_{i,b} \cdot \v(i) \\
		&&&\textnormal{ s.t. } && %
		\\
		&&& \w(U_b) + \sum_{i \in I \setminus H} \bx_{i,b} \cdot \w(i) \leq 1~~~~~~\forall b \in [m]\\
			&&& |U_b| + \sum_{i \in I \setminus H} \bx_{i,b} \leq k~~~~~~~~~~~~~~~~~\forall b \in [m]\\
		&&& \sum_{b \in [m]} \bx_{i,b} \leq 1~~~~~~~~~~~~~~~~~~~~~~~~~~\forall i \in I \setminus H\\
		&&& \bx_{i,b} \in [0,1]~~~~~~~~~~~~~~~~~~~~~~~~~~~~~\forall i \in I \setminus H,~\forall b \in [m].\\
	\end{aligned}
\end{equation}

In the next result, we show that the number of {\em fractional} entries in a basic feasible solution of \eqref{eq:LP} is relatively small.  
	\begin{lemma}
	\label{lem:BFS}
For any \textnormal{CMK} instance $\cI= (I,\w,\v,m,k)$, an error parameter $\eps>0$, a vector $U \in H^m$, and a basic feasible solution $\bx$ of $\textnormal{\textsf{LP}}(\cI,\eps,U)$ it holds that $$\left| \left\{ (i,b) \in \left(I \setminus H \right) \times [m] ~|~ \bx_{i,b} \in (0,1) \right\} \right| \leq 4 \cdot m.$$
\end{lemma}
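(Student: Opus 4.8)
\textbf{Proof plan for Lemma~\ref{lem:BFS}.} The plan is to invoke the standard rank-bound argument for basic feasible solutions of linear programs: in any LP in dimension $N$, a basic feasible solution is the unique solution of a subsystem of $N$ linearly independent tight constraints, so the number of variables that are not ``pinned'' by the box constraints $\bx_{i,b}\in\{0,1\}$ is bounded by the number of non-box constraints. Here the variables are indexed by pairs $(i,b)\in(I\setminus H)\times[m]$, and the non-box constraints of $\textsf{LP}(\cI,\eps,U)$ in \eqref{eq:LP} are: the $m$ weight (knapsack) constraints, the $m$ cardinality constraints, and the $|I\setminus H|$ assignment constraints $\sum_{b\in[m]}\bx_{i,b}\le 1$. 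The assignment constraints, however, are ``per-item'' and it is cleaner to handle them separately, as is done in \cite{grandoni2010approximation} and as was invoked earlier in the paper via \Cref{claim:grandoni}.

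First I would fix a basic feasible solution $\bx$ of \eqref{eq:LP} and let $F=\{(i,b)\,|\,\bx_{i,b}\in(0,1)\}$ be its set of strictly fractional entries. Since $\bx$ is basic (a vertex of the feasible polytope), there are $|(I\setminus H)\times[m]|$ linearly independent constraints that are tight at $\bx$; after removing the $|F^c|$ box constraints $\bx_{i,b}=0$ or $\bx_{i,b}=1$ that hold on the non-fractional coordinates, we are left with at least $|F|$ linearly independent tight constraints among the remaining ones, and these remaining constraints involve only the coordinates in $F$. The remaining candidate tight constraints are: the $m$ weight constraints, the $m$ cardinality constraints, and the assignment constraints $\sum_b \bx_{i,b}\le 1$. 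Crucially, an assignment constraint for item $i$ can be tight with at least one fractional term only if $i$ has at least two fractional entries (if $i$ had exactly one fractional entry and the rest integral, the row sum would be non-integral, impossible when the constraint is tight at value $1$; if $i$ had no fractional entry the constraint does not constrain $F$-coordinates). So each tight assignment constraint that ``touches'' $F$ consumes at least two coordinates of $F$. Counting: $|F|\le (\text{number of tight weight constraints touching }F) + (\text{number of tight cardinality constraints touching }F) + (\text{number of tight assignment constraints touching }F)$, and if we let $a$ be the number of the latter, the first two sum to at most $2m$ while the $a$ assignment constraints are supported on at least $2a$ coordinates of $F$ — I would make this precise via the partition-matroid polytope argument of \cite{grandoni2010approximation} exactly as \Cref{claim:grandoni} does, treating the $2m$ weight-and-cardinality constraints as the ``extra'' linear inequalities on top of the assignment (partition matroid) polytope. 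That argument gives $\sum_{(i,b)\in F}\bx_{i,b}\le 2m$, hence in particular $|F|\le$ (something), but we want a bound on $|F|$ itself, not on the sum of fractional values.

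The bound $\sum_{(i,b)\in F}\bx_{i,b}\le 2m$ alone does not immediately give $|F|\le 4m$, since fractional values can be tiny. So the second ingredient I would use is that within each partition class $T_i=\{(i,b):b\in[m]\}$, the values $\bx_{i,b}$ for $b$ ranging over the fractional entries in that class must sum to at most $1$ (the assignment constraint), and since they are strictly positive and the solution is a vertex, a standard argument on the partition-matroid side shows there cannot be ``too many'' small ones without creating a cycle in the bipartite support graph, contradicting basicness. Concretely: build the bipartite graph $G$ on vertex set $(I\setminus H)\cup[m]$ with an edge for each $(i,b)\in F$; basicness of $\bx$ forces $G$ to be a forest (after accounting for the $2m$ non-matroid rows, the support graph restricted to $F$ has at most $|F^c_{\text{within }F}|+2m$ edges — i.e., $G$ has at most (number of its vertices)$-$(number of components)$+2m$ edges). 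Since a forest on $p$ vertices has at most $p-1$ edges, and here the relevant vertex count is $|\{i : i\text{ has a fractional entry}\}| + |\{b:\text{bin }b\text{ has a fractional entry}\}|$, each item-vertex with a fractional entry has degree $\ge 1$ and each bin-vertex likewise, and an isolated-in-$F$ analysis plus the $+2m$ slack yields $|F| = |E(G)| \le 4m$. I would assemble this cleanly: $|E(G)| \le (\#\text{vertices of }G) - (\#\text{components}) + 2m$; every item vertex in $G$ has degree $\ge 2$ (a degree-$1$ item vertex would mean a single fractional entry summing with integral entries to $1$, impossible), so $\#\text{item vertices}\le |E(G)|/2$... wait, that over-counts; more carefully, $\sum_{\text{item vertices}}\deg \le |E(G)|$ and each such degree is $\ge 2$, giving $\#\text{item vertices}\le |E(G)|/2$, hence $\#\text{vertices of }G \le |E(G)|/2 + m$, and plugging into $|E(G)|\le \#\text{vertices} - 1 + 2m \le |E(G)|/2 + m + 2m$ yields $|E(G)|/2 \le 3m$, i.e.\ $|F|=|E(G)|\le 6m$ — slightly weaker than $4m$, so I would tighten the component/degree bookkeeping (e.g.\ using that each component of a forest with $\ge 2$ vertices contributes at least one item vertex and at least one bin vertex, and treating the $2m$ slack per-component) to recover the claimed $4m$.

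\textbf{Main obstacle.} The genuinely delicate step is getting the constant right: the crude rank bound gives $O(m)$ immediately, but pinning it to exactly $4m$ requires carefully combining (a) the forest structure of the fractional support after removing the $2m$ ``side'' constraints, (b) the degree-$\ge 2$ property of item vertices in the fractional support, and (c) accounting for connected components so that no slack is double-counted. I expect the cleanest route is to mirror the proof of \cite{grandoni2010approximation} / \Cref{claim:grandoni} verbatim but tracking $|F|$ rather than $\sum_{F}\bx_{i,b}$: since the LP is the intersection of a partition-matroid polytope with $2m$ linear inequalities, its vertices have at most $2m$ strictly-fractional coordinates \emph{per matroid structure} after the forest collapses, and reinflating across the $\le m$ partition classes (each class losing at most one ``free'' coordinate to its tight assignment equality) gives the $4m$. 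I would double-check the exact statement of the \cite{grandoni2010approximation} lemma to see whether it bounds the number of fractional variables directly (in which case the proof is a one-line citation: $2\cdot(2m)=4m$ accounting for the two families of side constraints), which I suspect is the intended short proof.
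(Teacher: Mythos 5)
Your proposal correctly identifies the essential structure: the feasible region of $\textnormal{\textsf{LP}}(\cI,\eps,U)$ is a partition-matroid polytope (with classes $\{i\}\times[m]$) intersected with $2m$ extra linear inequalities (the $m$ weight rows and $m$ cardinality rows), and the relevant tool is the vertex-structure bound of~\cite{grandoni2010approximation}. The paper's proof is precisely the one-line citation you suspect at the end: Theorem~3 of~\cite{grandoni2010approximation} bounds the \emph{number} of fractional coordinates of a vertex (not merely their sum) by twice the number of added inequalities, giving $2\cdot 2m = 4m$. You were thrown off by conflating this with the sum-of-fractional-values form that the paper invokes earlier in \Cref{claim:grandoni}, and the forest detour you then took costs a constant factor (you land at $6m$). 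For what it is worth, your \emph{first} observation already closes the gap without the citation: a tight assignment row $\sum_b \bx_{i,b}=1$ that meets the set $F$ of fractional coordinates must meet it in at least two places, since one fractional entry plus integral entries cannot sum to an integer. The non-box tight rows, restricted to $F$, must have rank $|F|$ for $\bx$ to be a vertex; there are at most $m + m + a$ of them where $a$ counts the tight assignment rows that meet $F$; and $a\le |F|/2$ since each such row occupies at least two coordinates of $F$ in its own partition class. Thus $|F|\le 2m + |F|/2$, i.e.\ $|F|\le 4m$. So the from-scratch route works; you abandoned it one step too early.
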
 

\begin{proof}
	We first show that $\textnormal{\textsf{LP}}(\cI,\eps,U)$ lies on a face of a {\em matroid polytope} (for more details on matroids and matroids polytopes we refer the reader to, e.g., \cite{schrijver2003combinatorial}). For all $i \in I \setminus H$ define $F_i = \{(i,b)~|~b \in [m]\}$; in addition, Let $E = (I \setminus H) \times [m]$ and define $$\cF = \{S \subseteq E~|~\forall i \in I \setminus H: ~|S \cap F_i| \leq 1\}.$$   
	It can be easily seen that $(E,\cF)$ is a {\em partition matroid}, which is in particular a matroid (see, e.g., \cite{lawler2001combinatorial}). The matroid polytope of  $(E,\cF)$ can be described as the set of all vectors $\by \in [0,1]^{E}$ satisfying \begin{equation}
		\label{eq:MP}
		\begin{aligned}
			&&& \sum_{b \in [m]} \by_{i,b} \leq 1~~~~~~~~\forall i \in I \setminus H\\
		\end{aligned}
	\end{equation} Note that the constraints of $\textnormal{\textsf{LP}}(\cI,\eps,U)$ are obtained from \eqref{eq:MP} by adding $2 \cdot m$ linear constraints; thus, $\bx$ (the basic feasible solution of $\textnormal{\textsf{LP}}(\cI,\eps,U)$) lies on a face of the matroid polytope \eqref{eq:MP}; the dimension of the face is at most $2 \cdot m$. Therefore, by Theorem 3 in \cite{grandoni2010approximation} there are at most $2 \cdot (2 \cdot m) = 4 \cdot m$ non-integral entries in $\bx$ and the proof follows. 
\end{proof}

Using the above, we can describe our algorithm. Let $\cI= (I,\w,\v,m,k)$ be a CMK instance and let $\eps'>0$ be the error parameter; we will apply the following scheme with a refined parameter $\eps = \min \left\{\frac{1}{2},\frac{\eps'}{4}\right\}$ to reach the desired approximation guarantee of $(1-\eps')$. %
The algorithm iterates over all vectors $N = (n_{r,b}~|~r \in t_{\eps,m}, b \in [m]) \in \left\{0,1,\ldots,4 \cdot \floor{\frac{m}{\eps}}\right\}^{t_{\eps,m} \times [m]}$; such a vector describes a {\em guess} for the number of valuable items, for each value group, taken by an optimal solution for $\cI$. It suffices to enumerate over vectors whose entries are bounded by $4 \cdot \floor{\frac{m}{\eps}}$ as the value of our guessed solution must be bounded by $\OPT(\cI)$ and we consider valuable items exclusively at this point.  

For each such guess $N \in \left\{0,1,\ldots,4 \cdot \floor{\frac{m}{\eps}}\right\}^{t_{\eps,m} \times [m]}$, value group index $r \in t_{\eps,m}$, and bin index $b \in [m]$, we will the take the $N_{r,b}$ (the corresponding entry for $r,b$ in $N$) items in $G_r$ with minimum weight to our solution, or all items in $G_r$ if $|G_r|< N_{r,b}$. Formally, for $r \in t_{\eps,m}$ let $i^r_1,\ldots, i^r_{|G_r|}$ be the items in $G_r$ according to a non decreasing order w.r.t. the weight function $\w$; w.l.o.g., we assume that tie breaks are settled according to some arbitrary predefined order of the items. Finally, for $r \in t_{\eps,m}$ and $n \in \mathbb{N}$ define the {\em first} items in $G_r$ according to the above order as: 
\begin{equation}
	\label{eq:first}
	\begin{aligned}
		\textsf{first}_{\cI}(n,r) ={} & \emptyset~~~~~~~~~~~~~~~~~~~~~~~~~~~~~~~~~~~~~~~~~~~~~~~~~~~~~n = 0\\
			\textsf{first}_{\cI}(n,r) ={} & \emptyset~~~~~~~~~~~~~~~~~~~~~~~~~~~~~~~~~~~~~~~~~~~~~~~~~~~~~G_r = \emptyset\\
				\textsf{first}_{\cI}(n,r) ={} &  \{i^r_1,\ldots,i^r_{q}\}, \text{where } q = \min \{n,|G_r|\}~~~~~~~~\text{else}.\\
	\end{aligned}
\end{equation} For each $N \in \left\{0,1,\ldots,4 \cdot \floor{\frac{m}{\eps}}\right\}^{t_{\eps,m} \times [m]}$ we define a corresponding vector of subsets of items (not necessarily a solution), constructed as follows. For each bin index $b \in [m]$, the $i$-th entry contains the union of the first $N_{r,b}$ items in $G_r$ over all value group indices $r \in t_{\eps,m}$. More concretely, for $N \in \left\{0,1,\ldots,4 \cdot \floor{\frac{m}{\eps}}\right\}^{t_{\eps,m} \times [m]}$ define \begin{equation}
\label{eq:U}
U^N = \left( \bigcup_{r \in t_{\eps,m}}	\textsf{first}_{\cI}\left(N_{r,b},r\right)~\bigg|~b \in [m] \right). 
\end{equation} Our algorithm takes the vector $U^N$ for some guess $N \in \left\{0,1,\ldots,4 \cdot \floor{\frac{m}{\eps}}\right\}^{t_{\eps,m} \times [m]}$, and tries to find a basic feasible solution $\bx^N$ for  $\textnormal{\textsf{LP}}(\cI,\eps,U^N)$ if exists; intuitively, the positive entries in $\bx^N$ represent additional items that can be added to the {\em initial} vector $U^N$. From here, items of positive-integral entries in $\bx^N$ are added to the initial vector $U^N$.  The algorithm returns the {\bf solution} with maximum value encountered through all iterations. The pseudocode of the algorithm is given in \Cref{alg:EPTAS}.  

 \begin{algorithm}[h]
	\caption{$\textsf{ConstantBins}(\cI,\eps')$}
	\label{alg:EPTAS}
	\SetKwInOut{Input}{input}
	\SetKwInOut{Output}{output}

	\Input{ $\cI= (I,\w,\v,m,k)$: a CMK instance, $\eps'>0$: an error parameter.}
	
	\Output{A solution $S$ for $\cI$ of value $\v(S) \geq (1-\eps') \cdot \OPT(\cI)$.}%

Define $\eps = \min \left\{\frac{1}{2},\frac{\eps'}{5}\right\}$.

Initialize an empty solution $S = \left(\emptyset~|~b \in [m]\right)$.

\For{$N \in \left\{0,1,\ldots,4 \cdot \floor{\frac{m}{\eps}}\right\}^{t_{\eps,m} \times [m]}$}{

Find a basic optimal feasible solution $\bx^N$ for $\textnormal{\textsf{LP}}\left(\cI,\eps,U^N\right)$ if exists; otherwise define $\bx^N = 0^{(I \setminus H) \times [m]}$.  

Define $S^N = \left( {U^N}_b \cup \bigcup_{i \in I \setminus H \text{ s.t. } \bx^N_{i,b} = 1} \{i\}~\bigg|~b \in [m] \right)$.

\If{$S^N$ is a solution for $\cI$ and $\v\left(S^N\right) > \v(S)$}{

Update $S \leftarrow S^N$. 

}
}

Return $S$. 

\end{algorithm}

 For the running time analysis of \Cref{alg:EPTAS}, observe that the running time is dominated by the number of vectors iterated by the algorithm, along with the $|\cI|^{O(1)}$ time computation of each iteration.  Hence, in the next result we focus on bounding the number of iterations. 
	\begin{lemma}
	\label{lem:time}
	For every \textnormal{CMK} instance $\cI= (I,\w,\v,m,k)$ and error parameter $\eps'>0$ \textnormal{\Cref{alg:EPTAS}} on input $\cI,\eps'$ runs in time $\left(\frac{m}{\eps'}\right)^{ O \left( m^2 \cdot {\eps'}^{-2}\right) }  \cdot |\cI|^{O(1)}$
\end{lemma}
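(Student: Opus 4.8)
The plan is to bound the number of vectors $N$ the algorithm iterates over, since the running time is dominated by this count times the polynomial $|\cI|^{O(1)}$ work per iteration (solving the LP, extracting a basic optimal solution, and testing feasibility). The number of vectors $N$ is exactly $\left(1+4\cdot\floor{\frac{m}{\eps}}\right)^{|t_{\eps,m}|\cdot m}$, so the task reduces to bounding $|t_{\eps,m}| = \ceil{\log_{1-\eps}\left(\frac{4\eps}{m}\right)}$, the number of value groups.

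First I would estimate $|t_{\eps,m}|$. Since $\log_{1-\eps}\left(\frac{4\eps}{m}\right) = \frac{\ln(4\eps/m)}{\ln(1-\eps)} = \frac{\ln(m/(4\eps))}{-\ln(1-\eps)}$ and $-\ln(1-\eps)\geq \eps$, we get $|t_{\eps,m}| \leq 1 + \frac{\ln(m/(4\eps))}{\eps} = O\!\left(\frac{\ln(m/\eps)}{\eps}\right)$. Hence the exponent $|t_{\eps,m}|\cdot m$ is $O\!\left(\frac{m\ln(m/\eps)}{\eps}\right)$, and the number of iterations is
$$
\left(1+4\floor{\tfrac{m}{\eps}}\right)^{|t_{\eps,m}|\cdot m} \;\leq\; \left(\tfrac{5m}{\eps}\right)^{O\left(\frac{m\ln(m/\eps)}{\eps}\right)} \;=\; \left(\tfrac{m}{\eps}\right)^{O\left(\frac{m\ln(m/\eps)}{\eps}\right)} \;\leq\; \left(\tfrac{m}{\eps}\right)^{O\left(m^2\eps^{-2}\right)},
$$
where the last inequality uses $\ln(m/\eps)\leq \frac{m}{\eps}$ (valid since $m\geq 1$, $\eps\leq \tfrac12$, so $m/\eps\geq 2$ and $\ln x\leq x$). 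Then I would recall that $\eps = \min\{\tfrac12,\tfrac{\eps'}{5}\}$, so $\eps = \Theta(\eps')$ and $\frac{1}{\eps}=O(\frac{1}{\eps'})$; substituting, the iteration count is $\left(\frac{m}{\eps'}\right)^{O(m^2\eps'^{-2})}$.

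Next I would account for the per-iteration cost: constructing $U^N$ via $\textsf{first}_{\cI}$ requires sorting each value group once (doable as preprocessing in $|\cI|^{O(1)}$), forming $\textnormal{\textsf{LP}}(\cI,\eps,U^N)$ has polynomially many variables and constraints, and a basic optimal solution is computable in $|\cI|^{O(1)}$ time by standard LP methods; testing whether $S^N$ is a solution and comparing values is also $|\cI|^{O(1)}$. Multiplying, the total running time is $\left(\frac{m}{\eps'}\right)^{O(m^2\eps'^{-2})}\cdot |\cI|^{O(1)}$, which is the claimed bound (absorbing the polynomial factor, or keeping it as a separate factor as in the statement of \Cref{thm:fewBins}).

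The only mildly delicate point — not really an obstacle — is the clean handling of the logarithm: one must be careful with signs since $\ln(1-\eps)<0$ and $\ln(4\eps/m)<0$, and one needs the elementary bounds $-\ln(1-\eps)\geq\eps$ and $\ln(m/\eps)\leq m/\eps$ to collapse the $\frac{\ln(m/\eps)}{\eps}$ factor in the exponent into the cruder $\frac{m}{\eps}$ appearing in the theorem statement. Everything else is a routine substitution of $\eps=\Theta(\eps')$ and bookkeeping of the polynomial-time subroutines, so no genuinely hard step arises.
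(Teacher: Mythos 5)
Your proof is correct and follows essentially the same route as the paper: both bound $|t_{\eps,m}|$ via $-\ln(1-\eps)\geq\eps$, collapse the logarithmic factor via $\ln y\leq y$ to reach the $m^{2}\eps^{-2}$ exponent, substitute $\eps=\Theta(\eps')$, and multiply by the $|\cI|^{O(1)}$ per-iteration cost. (Incidentally, your count $\left(1+4\floor{m/\eps}\right)^{|t_{\eps,m}|\cdot m}$ fixes a harmless off-by-one in the paper's displayed iteration count.)
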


\begin{proof}
	We first prove an auxiliary inequality. Let $\eps = \min \left\{\frac{1}{2},\frac{\eps'}{5}\right\}$. as defined by the algorithm. 
	\begin{equation}
		\label{eq:AI}
		\log_{1-\eps} \left( \frac{4 \cdot \eps}{m}\right) \leq \frac{\ln \left(\frac{m}{4 \cdot \eps} \right)}{-\ln(1-\eps)} \leq 4 \cdot m \cdot \eps^{-2}. 
	\end{equation} The second inequality holds since $x< -\ln(1-x)$ for all $x>-1, x \neq 0$, and since $\ln(y) < y$ for all $y>0$. 
	Observe that the number of vectors $N \in \left\{0,1,\ldots,4 \cdot \floor{\frac{m}{\eps}}\right\}^{t_{\eps,m} \times [m]}$  is equal to the number of entries multiplied by the number of possibilities for each entry; that is, \begin{equation}
		\label{eq:time}
		\begin{aligned}
			\left|  \left\{ N \in \left\{0,1,\ldots,4 \cdot \floor{\frac{m}{\eps}}\right\}^{t_{\eps,m} \times [m]} \right\} \right| ={} & {4 \cdot \floor{\frac{m}{\eps}}}^{ |t_{\eps}| \cdot m}\\
			\leq{} & {4 \cdot \left( \frac{m}{\eps}+1\right)}^{\ceil{ \log_{1-\eps} \left( \frac{4 \cdot \eps}{m}\right)} \cdot m}  \\ 
				\leq{} & \left( 4 \cdot \left(\frac{m}{\eps}+1\right)\right)^{ \left(\log_{1-\eps} \left( \frac{4 \cdot \eps}{m}\right)+1 \right) \cdot m} \\ 
					\leq{} & \left( 4 \cdot \left(\frac{m}{\eps}+1\right)\right)^{ \left(4 \cdot m \cdot \eps^{-2}+1 \right) \cdot m} \\ 
						\leq{} & \left(8 \cdot \frac{m}{\eps}\right)^{ 5 \cdot m^2 \cdot \eps^{-2}} \\ 
							={} & \left(\frac{m}{\eps}\right)^{ O \left( m^2 \cdot \eps^{-2}\right) } \\ 
								={} & \left(\frac{m}{\eps'}\right)^{ O \left( m^2 \cdot {\eps'}^{-2}\right) } \\ 
		\end{aligned}
	\end{equation} The second inequality follows from \eqref{eq:AI}. Thus, by \eqref{eq:time} the number of iterations of the {\bf for} loop of \Cref{alg:EPTAS} is bounded by $\left(\frac{m}{\eps'}\right)^{ O \left( m^2 \cdot {\eps'}^{-2}\right)}$. Moreover, for each $N \in \left\{0,1,\ldots,4 \cdot \floor{\frac{m}{\eps}}\right\}^{t_{\eps,m} \times [m]}$, computing a basic feasible solution $\bx^N$ for $\textnormal{\textsf{LP}}(\cI,\eps,U^N)$, or determining that there is no such solution can be computed in time $|\cI|$. Therefore, as the remaining steps in the loop can be trivially computed in time $|\cI|^{O(1)}$, the running time of \Cref{alg:EPTAS} can be bounded by  $\left(\frac{m}{\eps'}\right)^{ O \left( m^2 \cdot {\eps'}^{-2}\right) }  \cdot |\cI|^{O(1)}$
\end{proof}

In the next result we show that the approximation guarantee of \Cref{alg:EPTAS} is at least $(1-\eps')$, where $\eps'$ is the given error parameter. In high level, in the proof we consider an optimal solution $Q = (Q_1,\ldots,Q_m)$ for a CMK instance $\cI= (I,\w,\v,m,k)$. Let $\eps'>0$ be the error parameter and let $\eps$ be the refined error parameter defined in the course of the algorithm. In one of the iterations, the algorithm enumerates over a vector $N^* \in  \left\{0,1,\ldots,4 \cdot \floor{\frac{m}{\eps}}\right\}^{t_{\eps,m} \times [m]}$, such that for all $r \in t_{\eps,m}$ and $b \in [m]$ it holds that $N^*_{r,b}$ is equal to the number of items from value group $G_r$ in $Q_b$. Therefore, in this iteration we can show that $S^{N^*}$ (and as a result also the returned solution $S$) is almost as valuable as $Q$, since for  valuable items, we only {\em lose} the value from the (small) difference between any two items from the same value class; additionally, for non-valuable items we only lose the value from non-integral entries in the solution $\bx^{N^*}$ for the LP, where the number of non-integral entries is quite small by \Cref{lem:BFS}.  

\begin{lemma}
	\label{lem:approx}
	For every \textnormal{CMK} instance $\cI= (I,\w,\v,m,k)$ and error parameter $\eps'>0$, \textnormal{\Cref{alg:EPTAS}} on input $\cI,\eps'$ returns a solution $S$ for $\cI$ such that $\v(S) \geq (1-\eps') \cdot \OPT(\cI)$.  
\end{lemma}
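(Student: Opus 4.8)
The plan is to fix an optimal solution of $\cI$ and show that, among the guesses $N$ enumerated by \Cref{alg:EPTAS}, the one that mirrors this optimum at the level of value groups already produces a near-optimal solution. Throughout write $\eps=\min\{1/2,\eps'/5\}$ for the refined parameter and $H=H(\cI,\eps)$. I would first collect the easy facts: by \Cref{lem:constApprox} one has $\v(\textsf{Local-Search}(\cI))\le\OPT(\cI)\le 4\,\v(\textsf{Local-Search}(\cI))$, so $d(\cI,\eps)=\tfrac{\eps}{m}\v(\textsf{Local-Search}(\cI))$ satisfies $\tfrac{\eps}{4m}\OPT(\cI)\le d(\cI,\eps)\le\tfrac{\eps}{m}\OPT(\cI)$; hence every solution of $\cI$ contains at most $\tfrac{4m}{\eps}$ valuable items, every valuable item lies in exactly one value group $G_r$ ($r\in t_{\eps,m}$), and within a value group all item values agree up to a factor $(1-\eps)$.

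Next I would fix an optimal solution $Q=(Q_1,\dots,Q_m)$, delete repeated occurrences of items so that $Q_1,\dots,Q_m$ are pairwise disjoint with $S^*=\bigcup_b Q_b$ and $\v(S^*)=\OPT(\cI)$, and define the guess $N^*_{r,b}=|Q_b\cap G_r|$. Since the entries of $N^*$ sum to $|S^*\cap H|\le 4m/\eps$, the vector $N^*$ lies in the range enumerated by \Cref{alg:EPTAS}, so the loop reaches the iteration $N=N^*$ and it suffices to lower bound $\v(S^{N^*})$. I would handle valuable and non-valuable items separately. For the valuable ones, the configurations $U^{N^*}_b$ contain, for each $r$, the prescribed number $N^*_{r,b}$ of \emph{minimum-weight} items of $G_r$; the key step is an exchange argument showing that, inside each $G_r$, one may replace the $|S^*\cap G_r|$ items used by $Q$ with the $|S^*\cap G_r|$ lightest items of $G_r$, handing the bin that held the $j$-th lightest used item the $j$-th lightest replacement item. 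This replacement keeps every bin feasible (weights only decrease, cardinalities are unchanged), it uses exactly the valuable item set $\bigcup_b U^{N^*}_b$, and by the value-group estimate its valuable value is at least $(1-\eps)\v(S^*\cap H)$. In particular $\w(U^{N^*}_b)\le\w(Q_b\cap H)$ and $|U^{N^*}_b|=|Q_b\cap H|$ for every $b$, and $\v\big(\bigcup_b U^{N^*}_b\big)\ge(1-\eps)\v(S^*\cap H)$.

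For the non-valuable items I would plug the candidate point $z$ with $z_{i,b}=\one_{i\in Q_b}$ (for $i\in I\setminus H$, and $0$ otherwise) into $\textsf{LP}(\cI,\eps,U^{N^*})$: the bounds $\w(U^{N^*}_b)\le\w(Q_b\cap H)$ and $|U^{N^*}_b|=|Q_b\cap H|$ give $\w(U^{N^*}_b)+\sum_{i\in Q_b\setminus H}\w(i)\le\w(Q_b)\le1$ and $|U^{N^*}_b|+|Q_b\setminus H|=|Q_b|\le k$, while disjointness of the $Q_b$ gives $\sum_b z_{i,b}\le1$, so $z$ is feasible and the LP optimum is at least $\v(S^*\setminus H)$. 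The algorithm's basic optimum $\bx^{N^*}$ has at most $4m$ entries in $(0,1)$ by \Cref{lem:BFS}, and each such entry contributes at most $d(\cI,\eps)\le\tfrac{\eps}{m}\OPT(\cI)$ to the objective; hence the items integrally selected by $\bx^{N^*}$ (distinct, since $\sum_b\bx^{N^*}_{i,b}\le1$, and disjoint from $H$) have total value at least $\v(S^*\setminus H)-4m\cdot\tfrac{\eps}{m}\OPT(\cI)=\v(S^*\setminus H)-4\eps\OPT(\cI)$, and the LP constraints also ensure that $S^{N^*}$ is a valid solution of $\cI$, so the algorithm considers it. Combining the two parts (valuable and non-valuable items being disjoint),
$$\v(S)\ \ge\ \v(S^{N^*})\ \ge\ (1-\eps)\v(S^*\cap H)+\v(S^*\setminus H)-4\eps\OPT(\cI)\ \ge\ (1-5\eps)\OPT(\cI)\ \ge\ (1-\eps')\OPT(\cI),$$
the last step using $\eps\le\eps'/5$.

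The step I expect to be the main obstacle is the exchange argument for the valuable items: it must simultaneously preserve per-bin weight and cardinality feasibility and reproduce exactly the set of valuable items that $U^{N^*}$ selects, which is why the replacement has to be carried out rank-by-rank inside each value group (rather than by an arbitrary bijection) and why it is essential that $\textsf{first}_\cI$ returns minimum-weight items. Everything else — verifying that $N^*$ is in the enumeration range, the feasibility of $z$, and the arithmetic combining the two loss terms — is routine.
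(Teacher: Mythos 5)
Your proposal is correct and follows essentially the same route as the paper: define $N^*_{r,b}=|Q_b\cap G_r|$ from a disjoint optimal solution $Q$, verify the guess is in the enumeration range, show the indicator of $Q$ on non-valuable items is LP-feasible (so the LP optimum is at least the non-valuable part of $\OPT(\cI)$), lose $4\eps\cdot\OPT(\cI)$ from the at most $4m$ fractional entries via \Cref{lem:BFS}, lose a $(1-\eps)$ factor on the valuable part from the geometric value grouping, and combine to get $(1-5\eps)\OPT(\cI)\ge(1-\eps')\OPT(\cI)$. The rank-by-rank exchange argument you flag as the main obstacle is actually not needed: the paper obtains $\w(U^{N^*}_b)\le\w(Q_b\cap H)$ and the $(1-\eps)$ lower bound on the valuable value directly from the minimum-weight property of $\textsf{first}_{\cI}$ together with the value-group definition \eqref{eq:VG}, with no per-item reassignment.
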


\begin{proof}
	Observe that in \Cref{alg:EPTAS} we first initialize a feasible solution $S$. Any update of $S$ within the {\bf for} loop is also to a solution for $\cI$. Thus, we conclude that the algorithm indeed returns a solution for $\cI$. To show the desired approximation guarantee, let $Q = (Q_1,\ldots,Q_m) \in \cC^m$ be a solution for $\cI$ of value $\v(Q) = \OPT(\cI)$. Since the value of $Q$ is defined as $\v(Q) = \v\left(\bigcup_{b\in [m]} Q_b \right)$, we may assume w.l.o.g. that for all $b,b' \in [m], b \neq b'$ it holds that $Q_{b} \cap Q_{b'} = \emptyset$ (which does not reduce the total value of $Q$ w.r.t. any other optimal solution for $\cI$).  Therefore, 
	consider the following elementary auxiliary claim. 
	\begin{claim}
		\label{claim:fewH}
		For all $r \in t_{\eps,m}$ and $b \in [m]$ it holds that $0 \leq |Q_b \cap G_r| \leq 4 \cdot \floor{\frac{m}{\eps}}$. 
	\end{claim}
\begin{claimproof}
	Assume towards a contradiction that there are $r \in t_{\eps,m}$ and $b \in [m]$ such that $|Q_b \cap G_r| > 4 \cdot \floor{\frac{m}{\eps}}$. Then, $|Q_b \cap G_r| > 4 \cdot \frac{m}{\eps} $ and it follows that 
	\begin{equation}
		\label{eq:Cont}
		\v(Q) \geq \v(Q_b) \geq \v(Q_b \cap G_r) \geq |Q_b \cap G_r| \cdot \min_{i \in Q_b \cap G_r} \v(i) > 4 \cdot \frac{m}{\eps} \cdot \frac{\eps \cdot \v\left(\textsf{Local-Search}(\cI) \right)}{m} \geq \OPT(\cI). 
	\end{equation} The fourth inequality holds since $G_r \subseteq H$. The last inequality follows from \Cref{lem:constApprox}. By \eqref{eq:Cont} it holds that $Q$ is a solution for $\cI$ with value strictly larger than $\OPT(\cI)$, which is a contradiction. 
\end{claimproof}

Define a vector $N^* = (N^*_1,\ldots,N^*_m) \in \mathbb{N}^{t_{\eps,m} \times [m]}$ such that for all $r \in t_{\eps,m}$ and $b \in [m]$ we define $N^*_{r,b} = |Q_b \cap G_r|$. Clearly, by \Cref{claim:fewH} it holds that $N^* \in \left\{0,1,\ldots,4 \cdot \floor{\frac{m}{\eps}}\right\}^{t_{\eps,m} \times [m]}$; thus, there is an iteration of the {\bf for} loop in which the algorithm considers the vector $N = N^*$. Recall the vector $U^{N^*}$ defined in \eqref{eq:U}. Define $\by \in \{0,1\}^{(I \setminus H) \times [m]}$ such that for all $i \in I \setminus H$ and $b \in [m]$ define $\by_{i,b} = 1$ if $i \in Q_b$ and define $\by_{i,b} = 0$ if $i \notin Q_b$.             
	\begin{claim}
\label{claim:sol}
$\by$ is a feasible solution for $\textnormal{\textsf{LP}}\left(\cI,\eps,U^{N^*}\right)$. 
\end{claim}
\begin{claimproof}
	We show that $\by$ satisfies the constraints of $\textnormal{\textsf{LP}}\left(\cI,\eps,U^{N^*}\right)$.  By \eqref{eq:first}, for all $b \in [m]$ it holds that \begin{equation}
		\label{eq:W}
		\w\left(U^{N^*}_b\right) = \sum_{r \in t_{\eps,m}}	\w \left( \textsf{first}_{\cI}\left(N^*_{r,b},r\right) \right) \leq \sum_{r \in t_{\eps,m}}	\w \left( Q_b \cap G_r \right) = \w \left( Q_b \cap H\right).  
	\end{equation} Therefore, for all $b \in [m]$ it holds that 
\begin{equation}
	\label{eq:C1}
	\w\left(U^{N^*}_b\right) + \sum_{i \in I \setminus H} \by_{i,b} \cdot \w(i) \leq \w \left( Q_b \cap H\right)+\sum_{i \in Q_b \setminus H} \w(i)  = \w(Q_b) \leq 1. 
\end{equation} The first inequality follows from \eqref{eq:W}. The last inequality holds since $Q$ is a solution for $\cI$. Additionally, for all $b \in [m]$:	\begin{equation}
\label{eq:C2}
\left|U^{N^*}_b\right| + \sum_{i \in I \setminus H} \by_{i,b} = \left|\bigcup_{r \in t_{\eps,m}} N^*_{r,b}\right|+|Q_b \setminus H| = |Q_b \cap H|+|Q_b \setminus H| = |Q_b| \leq k. 
\end{equation} 
Furthermore, for all $ i \in I \setminus H$:
\begin{equation}
	\label{eq:C3}
	\sum_{b \in [m]} \by_{i,b} \leq 1.
\end{equation} The inequality holds since we assume that $b,b' \in [m], b \neq b'$ it holds that $Q_{b} \cap Q_{b'} = \emptyset$. 
Finally, by the definition of $\by$ it holds that $\by_{i,b} \in [0,1]$ for all $i \in I \setminus H$ and $b \in [m]$. Thus, by \eqref{eq:C1},\eqref{eq:C2},\eqref{eq:C3}, and \eqref{eq:LP} the proof follows. 
\end{claimproof}

By \Cref{claim:sol}, the algorithm is able to compute a basic feasible optimal solution $\bx^{N^*}$ in the iteration for $N = N^*$. %
Consider the vector $S^{N^*}$
as defined in the iteration where $N = N^*$. We use the next auxiliary claim. 
	\begin{claim}
	\label{claim:sol2}
	$S^{N^*}$ is a solution for $\cI$. 
\end{claim}
\begin{claimproof}
In the following, we rely on the fact that $\bx^{N^*}$ is a feasible solution for $\textnormal{\textsf{LP}}\left(\cI,\eps,U^{N^*}\right)$. First, for all $b \in [m]$:
\begin{equation}
	\label{eq:D1}
	 \w(S^{N^*}_b) = \w(U^{N^*}_b) + \sum_{i \in I \setminus H \text{ s.t. } \bx^{N^*}_{i,b} = 1} \w(i) \leq \w(U_b) + \sum_{i \in I \setminus H \text{ s.t. } \bx^{N^*}_{i,b} = 1} \bx^{N^*}_{i,b} \cdot \w(i) \leq 1. 
\end{equation} In addition, for all $b \in [m]$: 
\begin{equation}
	\label{eq:D2}
	\left|S^{N^*}_b\right| = \left|U^{N^*}_b\right| + \sum_{i \in I \setminus H \text{ s.t. } \bx^{N^*}_{i,b} = 1} 1 \leq \w(U_b) + \sum_{i \in I \setminus H} \bx^{N^*}_{i,b} \leq k.  
\end{equation}
By \eqref{eq:D1} and \eqref{eq:D2} it follows that $S^{N^*}$ is a solution for $\cI$. 
\end{claimproof} 

By \Cref{claim:sol2} our algorithm returns a solution $S$ with value at least as large as the value of $S^{N^*}$; thus, to conclude we give a lower bound on $\v\left(S^{N^*}\right)$. \begin{equation}
\label{eq:sB}
\begin{aligned}
\v\left(S^{N^*}\right) ={} & \v\left(U^{N^*}\right)+ \sum_{b \in [m]} \sum_{i \in I \setminus H \text{ s.t. } \bx^{N^*}_{i,b} = 1} \v(i) \\
={} & \v\left(U^{N^*}\right)+ \sum_{b \in [m]~} \sum_{i \in I \setminus H} \bx^{N^*}_{i,b} \cdot \v(i)- \sum_{b \in [m]~} \sum_{i \in I \setminus H \text{ s.t. } \bx^{N^*}_{i,b} \in (0,1)}  \bx^{N^*}_{i,b} \cdot \v(i) \\
\geq{} & \v\left(U^{N^*}\right)+ \sum_{b \in [m]~}\sum_{i \in I \setminus H} \bx^{N^*}_{i,b} \cdot \v(i)- \sum_{b \in [m]~}\sum_{i \in I \setminus H \text{ s.t. } \bx^{N^*}_{i,b} \in (0,1)} \v(i) \\
\geq{} & \v\left(U^{N^*}\right)+ \sum_{b \in [m]~}\sum_{i \in I \setminus H} \by^{N^*}_{i,b} \cdot \v(i)- 4 \cdot m \cdot \frac{\eps \cdot \v\left(\textsf{Local-Search}(\cI) \right)}{m}\\
\geq{} & \v\left(U^{N^*}\right)+ \v\left(\bigcup_{b\in [m]} (Q_b \setminus H)\right)- 4 \cdot \eps \cdot \OPT(\cI)\\
\end{aligned}
\end{equation} The second inequality holds since $\bx^{N^*}$ is an optimal solution for $\textnormal{\textsf{LP}}\left(\cI,\eps,U^{N^*}\right)$, since the number of non-integral entries in $\bx^{N^*}$ is at most $4 \cdot m$ by \Cref{lem:BFS}, and finally, since the values for items $i \in I \setminus H$ is bounded by $\frac{\eps \cdot \v\left(\textsf{Local-Search}(\cI) \right)}{m}$. The last inequality follows from the fact that $\v\left(\textsf{Local-Search}(\cI)\right) \leq \OPT(\cI)$ using \Cref{lem:constApprox}. Thus, by \eqref{eq:sB}
\begin{equation}
	\label{eq:sB2}
	\begin{aligned}
		\v\left(S^{N^*}\right) \geq{} & \v\left(U^{N^*}\right)+\v\left(\bigcup_{b\in [m]} (Q_b \setminus H)\right)- 4 \cdot \eps \cdot \OPT(\cI)\\
	\geq{} & \sum_{b \in [m]~}\sum_{r \in t_{\eps,m}} \v\left(\textsf{first}_{\cI}\left(N^*_{r,b},r\right)\right)+\v\left(\bigcup_{b\in [m]} (Q_b \setminus H)\right)- 4 \cdot \eps \cdot \OPT(\cI)\\
		\geq{} & \sum_{b \in [m]~}\sum_{r \in t_{\eps,m}} (1-\eps) \cdot N^*_{r,b}   \cdot  \v(\textsf{Local-Search}(\cI)) \cdot \left(1-\eps\right)^{r-1} +\\
		{} & \v\left(\bigcup_{b\in [m]} (Q_b \setminus H)\right)-4 \cdot \eps \cdot \OPT(\cI).
	\end{aligned}
\end{equation} The second and third inequalities hold by \eqref{eq:U} and \eqref{eq:first}, respectively. Thus, by \eqref{eq:sB2} and \eqref{eq:VG} we have 
\begin{equation}
	\label{eq:sB3}
	\begin{aligned}
		\v\left(S^{N^*}\right) \geq{} & (1-\eps) \cdot \sum_{b \in [m]~}\sum_{r \in t_{\eps,m}} \v(G_r \cap Q_b)+\v\left(\bigcup_{b\in [m]} (Q_b \setminus H)\right)-4 \cdot \eps \cdot \OPT(\cI)\\
		\geq{} & (1-\eps) \cdot \v\left(\bigcup_{b\in [m]} (Q_b \cap H)\right)+\v\left(\bigcup_{b\in [m]} (Q_b \setminus H)\right)-4 \cdot \eps \cdot \OPT(\cI)\\
		\geq{} & (1-\eps) \cdot \v(Q)-4 \cdot \eps \cdot \OPT(\cI)\\
		={} & (1-5 \cdot \eps) \cdot \OPT(\cI)\\
		\geq{} & (1 - \eps') \cdot \OPT(\cI).\\
	\end{aligned}
\end{equation}
 Overall, by \eqref{eq:sB3} the proof follows. 
\end{proof} We can now give the proof of \Cref{thm:fewBins}. 

\noindent {\bf Proof of \Cref{thm:fewBins}:} The statement of the lemma follows from \Cref{lem:time} and \Cref{lem:approx}.
\qed

		\bibliographystyle{splncs04}
	
		\bibliography{bibfile}
\appendix

\section{Applications}
\label{sec:appendix}

CMK naturally arises in the assignment of virtual machines (VMs) to hosts in the cloud computing environment~\cite{camati2014solving}. Consider a data center consisting of $m$ hosts (physical machines). Each host has a set of processors sharing a limited memory (bin capacity)~\cite{rajan2011cloud,othman2013survey,varghese2018next}. The data center administrator has a queue of client requests to assign {\em virtual machines} (VMs) to the hosts. Each VM is associated with a memory demand (weight) and requires the use of a processor; the VM also has a profit (value) that is gained from its execution. As each processor can be assigned to one VM at a time, this induces a bound on the number of VMs that can be assigned 
simultaneously to each host (cardinality constraint). The objective is to assign a subset of the VMs to the hosts subject to the memory and processor availability constraints, such that the total revenue from assigned VMs is maximized. 
Another application of CMK comes from 
frequency assignment for radio networks~\cite{song2008multiple}.

	\end{document}